\newcommand{\arxiv}[1]{#1}
\newcommand{\paper}[1]{#1}
\renewcommand{\paper}[1]{}
\lstdefinelanguage{absynth} {
	morekeywords={def,while,var,tick,prob,else}, sensitive=false, morecomment=[l]{//}, morecomment=[s]{/*}{*/}, morestring=[b]",
}
   \def\@citecolor{blue}%
   \def\@urlcolor{blue}%
   \def\@linkcolor{blue}%
\def\orcidID#1{\smash{\href{http://orcid.org/#1}{\protect\raisebox{-1.25pt}{\protect\includegraphics{orcid_color.eps}}}}}
\newcolumntype{L}{>{$}l<{$}} 
\crefname{definition}{Def.}{Def.}
\crefname{example}{Ex.}{Ex.}
\crefname{appendix}{App.}{App.}
\crefname{ex}{Ex.}{Ex.}
\crefname{theorem}{Thm.}{Thm.}
\crefname{lemma}{Lemma}{Lemma}
\crefname{figure}{Fig.}{Fig.}
\crefname{section}{Sect.}{Sect.}
\crefname{subsection}{Sect.}{Sect.}
\crefname{subsubsubappendix}{Sect.}{Sect.}
\crefname{algorithm}{Procedure}{Procs}
\crefname{algocf}{Alg.}{Algorithms}
\crefname{corollary}{Cor.}{Cor.}
\crefname{algorithm}{Alg.}{Alg.}
\crefname{equation}{}{}
\crefname{enumi}{}{}
\newcommand{\true}{\mathbf t}
\newcommand{\false}{\mathbf f}
\DeclareMathOperator{\powerset}{\mathfrak{P}}
\DeclareMathOperator{\expvsign}{{\mathbb E}}
\DeclareMathOperator{\pmeasure}{\mathbb P}
\DeclareMathOperator{\expvsignabs}{{\mathbb E_{abs}}}
\DeclareMathOperator{\BOUND}{\mathcal{B}}
\DeclareMathOperator{\POLYBOUND}{\ZZ[\VV]}
\DeclareMathOperator{\REALPOLYBOUNDPV}{\RR[\PV]}
\newcommand{\abs}[1]{\left |#1 \right|}
\newcommand{\DIST}{\mathcal D}
\newcommand{\expvdist}{\mathfrak E}
\newcommand{\CONSTR}{\mathcal C}
\newcommand{\LOC}{\mathcal L}
\newcommand{\T}{\mathcal{T}}
\newcommand{\GT}{\mathcal{G}\!\T}
\newcommand{\GTINIT}{\GT_0}
\newcommand{\PV}{\mathcal{PV}}
\newcommand{\VV}{\mathcal{V}}
\newcommand{\PIP}{(\PV, \LOC,\GT, \ell_0)}
\newcommand{\CONF}{\mathsf{Conf}}
\newcommand{\RUNS}{\mathsf{Runs}}
\newcommand{\FPATH}{\mathsf{FPath}}
\newcommand{\scheduler}{\mathfrak{S}}
\newcommand{\STATE}{\Sigma}
\newcommand{\state}{s}
\newcommand{\initstate}{{\state_0}}
\newcommand{\loc}{\ell}
\newcommand{\initloc}{\ell_0}
\newcommand{\pip}{\mathcal P}
\newcommand{\pipmeasure}[3]{\pmeasure_{#2,#3}}
\newcommand{\ptransition}[3]{pr_{#2,#3}}
\newcommand{\expv}[3]{\expvsign_{#2,#3}}
\newcommand{\expec}[1]{\expvsign \left( #1 \right)}
\newcommand{\IP}[1]{\pmeasure \left(#1 \right)}
\newcommand{\tin}{t_\text{in}}
\newcommand{\tout}{t_{\bot}}
\newcommand{\gout}{g_{\bot}}
\newcommand{\GRV}{\mathcal{GRV}}
\DeclareMathOperator{\ACTV}{actV}
\DeclareMathOperator{\pre}{pre}
\newcommand{\ZZ}{\mathbb{Z}}
\newcommand{\RR}{\mathbb{R}}
\newcommand{\NN}{\mathbb{N}}
\newcommand{\GTG}{\GT_>}
\newcommand{\GTNI}{{ \GT_{\mathrm{ni}} }}
\newcommand{\GTNIpr}{{ \GT'_{\mathrm{ni}} }}
\newcommand{\natclosure}{\overline {\mathbb{N}}}
\newcommand{\realclosure}{\overline {\mathbb{R}}_{\geq 0}}
\newcommand{\timervar}{\mathcal{R}}
\newcommand{\sizervar}{\mathcal{S}}
\newcommand{\sbound}{{\mathcal{SB}}}
\newcommand{\tbound}{{\mathcal{RB}}}
\newcommand{\sbounde}{{\sbound}_{\expvsign}}
\newcommand{\tbounde}{{\tbound}_{\expvsign}}
\newcommand{\chbounde}{{\mathcal{CB}_{\expvsign}}}
\newcommand{\chsum}{{ch_\Sigma}}
\DeclareMathOperator{\rank}{\mathfrak{r}}
\newcommand{\ENTRYLOC}{\mathcal{E}\!\mathcal{L}}
\newcommand{\ET}{\mathcal{E}\!\T}
\DeclareMathOperator{\incsize}{size}
\DeclareMathOperator{\maxTV}{tv}
\DeclareMathOperator{\ch}{ch}
\newcommand{\eff}[1]{\incsize^{#1}\left(\chbounde\left(#1\right)\right)}
\newcommand{\pr}[4]{\operatorname{q}_{#1,#2,#3} (#4)}
\DeclareMathOperator{\cuts}{cuts}
\newcommand{\CE}{\mathcal{E}}
\newcommand{\CF}{\mathcal{F}}
\newcommand{\CG}{\mathcal{G}}
\newcommand{\ind}{\mathbf{1}}
\newcommand{\run}{\vartheta}
\newcommand{\comment}[1]{\footnote{!!! #1}}
\renewcommand{\comment}[1]{}
\newcommand{\layout}[1]{#1}
\newcommand{\nolayout}[1]{#1}
\newcommand{\mypagebreak}[0]{\pagebreak}
\newcommand{\oldcomment}[1]{}
\newcommand{\eat}[1]{}
\newcommand{\eval}[2]{\ensuremath{\abs{#2}\left(#1\right)}}
\newcommand{\exact}[2]{\ensuremath{#2\left(#1\right)}}
\newcommand{\sigmagen}[1]{\ensuremath{\sigma}\left( #1 \right)}
\newcommand{\is}{=}
\DeclareMathOperator{\UNIFORM}{UNIF}
\DeclareMathOperator{\GEOMETRIC}{GEO}
\newcommand{\overapprox}[1]{\ensuremath{\left\lceil #1\right\rceil}}
\newcommand{\prefix}[1]{#1}
\renewcommand{\emptyset}{\varnothing}
\newcommand{\Inf}[1]{\mathrm{Inf}_{#1}}
\newcommand{\Pre}[1]{\mathrm{Pre}_{#1}}
\newcommand{\PrefC}[2]{\Inf{#1}^{#2}}
\DeclareMathOperator{\bigO}{\mathcal{O}}
\newcommand{\sourcecode}{\url{https://github.com/aprove-developers/KoAT2-Releases/tree/probabilistic}}
\newcommand{\result}[2]{
	\ifthenelse{\isempty{#1}}
	{\infty: \; \SI{#2}{\second}}
	{\bigO (#1): \; \SI{#2}{\second}}
}
\newcommand{\resultamber}[2]{
	\ifthenelse{\isempty{#1}}
        {\textsf{--}}
        {\textsf{#1}: \; \SI{#2}{\second}}
}
\newcommand{\reducedcenter}[1]{
\begin{center}
	\scalebox{0.97}{#1
	}
\end{center}
}
\newcommand{\Romannum}[1]{\uppercase\expandafter{\romannumeral#1\relax}}
\newcommand{\mypar}[1]{\paper{\smallskip\noindent{}\textit{#1}}\arxiv{\paragraph{#1}}}
\newcommand{\automaticcomplanalysis}[0]{\cite{dp-framework,aprove,koat,dblp:conf/rta/avanzinim13,dblp:conf/tacas/avanzinims16,dblp:journals/toplas/0002ah12,dblp:conf/cade/sinnz10,cofloco,cofloco2,costa-complexity,maxcore,campy,rank,pubs,c4b,pastis,pubs-upper-lower,ramlnat,ramlpopl17}}
\begin{document}
\usetikzlibrary{shapes,arrows}
\title{Inferring Expected Runtimes of Probabilistic Integer Programs Using Expected
  Sizes\thanks{funded by the Deutsche Forschungsgemeinschaft (DFG, German Research
    Foundation) - 235950644 (Project GI 274/6-2) \&  DFG Research Training Group 2236 UnRAVeL}}
\titlerunning{Inferring Expected Runtimes Using Expected Sizes}

\author{
  Fabian Meyer\paper{\orcidID{0000-0003-1038-4944}} \and Marcel Hark\paper{$^{(\mbox{\Letter})}$\orcidID{0000-0001-5111-3177}} \and J\"urgen Giesl\paper{$^{(\mbox{\Letter})}$\orcidID{0000-0003-0283-8520}}
}
\authorrunning{Fabian Meyer, Marcel Hark, and J\"urgen Giesl}

\institute{
	LuFG Informatik 2, RWTH Aachen University, Aachen, Germany \\
	\email{fabian.niklas.meyer@rwth-aachen.de},
	\email{\{marcel.hark,giesl\}@cs.rwth-aachen.de}
}
\maketitle
\begin{abstract}
	We present a novel modular approach to infer upper bounds on the expected runtimes of probabilistic integer programs automatically.
	To this end, it computes bounds on the runtimes of program parts and on the sizes of their variables in an alternating way.
	To evaluate its power,~we im\-ple\-men\-ted our approach in a new version of our open-source tool \textsf{KoAT}.\arxiv{\layout{\linebreak[4]
		}}
\end{abstract}

\section{Introduction}
\label{sec:Intro}
There exist several approaches and tools for automatic complexity analysis of non-probabilistic programs, e.g., \automaticcomplanalysis.
While most of them rely on basic techniques like \emph{ranking functions}
(see, e.g., \cite{dblp:conf/vmcai/podelskir04,rank,costa-rf,bradley05,dblp:conf/cav/ben-amramg17,dblp:conf/sas/ben-amramdg19}), they usually combine these basic techniques in sophisticated ways.
For example, in \cite{koat} we developed a modular approach for automated complexity analysis of integer programs, based on an alternation between finding symbolic runtime bounds for program parts and using them to infer bounds on the sizes of variables in such parts.
So each analysis step is restricted to a small part of the program.
The implementation of this approach in \textsf{KoAT} \cite{koat} (which is integrated in \textsf{AProVE} \cite{aprove}) is one of the leading tools for complexity analysis \cite{termcomp}.

While there exist several adaptions of basic techniques like ranking functions to \emph{probabilistic programs} (e.g., \cite{dblp:conf/rta/bournezg06,dblp:conf/popl/chatterjeenz17,rsmchatt,rsm,dblp:conf/pldi/wang0gcqs19,dblp:journals/scp/avanzinily20,lexrsm,dblp:conf/rta/bournezg05,dblp:conf/cav/chakarovs13,dblp:conf/vmcai/fuc19,cade19,dblp:conf/aplas/huangfc18,dblp:journals/pacmpl/huang0cg19,amber,FoundationsTerminationMartingale2020}), most\paper{\layout{\linebreak[2]}} of the sophisticated full approaches for complexity analysis have not been adapted to probabilistic programs yet, and there are only few powerful tools available which analyze the runtimes of probabilistic programs automatically \cite{absynth,dblp:conf/pldi/wang0gcqs19,hoffmannicfp2020,ecoimp}.

We study probabilistic integer programs (\cref{sec:Preliminaries}) and define suitable notions of non-probabilistic and expected runtime and size bounds (\cref{sec:Complexity_Bounds}).
Then, we adapt our modular approach for runtime and size analysis of \cite{koat} to probabilistic programs (\cref{sec:Computing_Runtime_Bounds,sec:Computing Size Bounds}).
So such an adaption is not only possible for \emph{basic techniques}\layout{\linebreak[4]
} like ranking functions, but also for \emph{full approaches} for complexity analysis.

For this adaption, several problems had to be solved.
When computing expected runtime or size bounds for new program parts, the main difficulty is to determine when it is sound to use \emph{expected} bounds on previous program parts and when one has to use \emph{non-probabilistic} bounds instead.
Moreover, the semantics of probabilistic programs is significantly different from classical integer programs.
Thus, \layout{\mypagebreak} the proofs of our techniques differ substantially from the ones in \cite{koat}, e.g., we have to use concepts from measure theory like ranking supermartingales.

In \cref{sec:evaluation}, we evaluate the implementation of our new approach in the tool \textsf{KoAT} \cite{koat,techreport} and compare with related work.
\paper{We refer to \cite{arxivereport} for an appendix of our paper containing}\arxiv{The appendix contains} all proofs, preliminaries from probability and measure theory, and an overview on the benchmark collection used in our evaluation.

\section{Probabilistic Integer Programs}
\label{sec:Preliminaries}

For any set $M\subseteq \overline{\RR}$ (with $\overline{\RR} = \RR \cup \{ \infty \}$)
and $w \in M$, let $M_{\geq w} = \{ v \in M \mid v \geq w \lor v = \infty \}$.
For a set $\PV$ of \emph{program variables}, 
we first introduce the kind of \emph{bounds} that our approach computes.
Similar to \cite{koat}, our bounds represent \emph{weakly monotonically increasing} functions from $\PV \to \realclosure$.
Such bounds have the advantage that they can easily be ``composed'', i.e.,  if $f$ and $g$ are both weakly monotonically increasing upper bounds, then so is $f \circ g$.

\begin{definition}[Bounds]
	\label{def:bounds}
	The set of bounds $\BOUND$ is the smallest set with $\PV \cup \overline{\RR}_{\geq 0}\layout{\linebreak[0]
		} \subseteq \BOUND$, and where $b_1, b_2 \in \BOUND$ and $v \in \RR_{\geq 1}$ imply $b_1 + b_2, \; b_1 \cdot b_2 \in \BOUND$ and $v^{b_1} \in \BOUND$.
\end{definition}

Our notion of probabilistic programs combines classical integer programs (as in,\paper{\layout{\linebreak[2]}} e.g., \cite{koat}) and probabilistic control flow graphs (see, e.g., \cite{lexrsm}).
A \emph{state} $\state$ is a variable assignment $\state \colon \VV \to \ZZ$ for the
(finite) set $\VV$ of all variables in the program, where $\PV \subseteq \VV$,
$\VV\setminus\PV$ is the set of \emph{temporary variables}, and
$\STATE$ is the set of all states.
For any $\state \in \STATE$, the state $\abs{\state}$ is defined by $\abs{\state}(x) = \abs{\state(x)}$ for all $x \in \VV$.
The set $\CONSTR$ of \emph{constraints} is the smallest set con\-taining $e_1 \leq e_2$
for all polynomials $e_1,e_2 \in \POLYBOUND$ and $c_1 \land c_2$ for all $c_1,c_2\in \CONSTR$.
In addition to ``$\leq$'', in examples we also use relations like ``$>$'', which can be simulated by constraints (e.g., $e_1 > e_2$ is equivalent to $e_2 +1 \leq e_1$ when regarding integers).
We also allow the application of states to arithmetic expressions $e$ and constraints $c$.
Then the number $\state(e)$ resp.\ $\state(c) \in \{\true, \false\}$ results from
evaluating the expression resp.\ the constraint when substituting every variable $x$ by
$\state(x)$.
So for bounds $b \in \BOUND$, we have $\abs{\state}(b) \in \realclosure$.

In the transitions of a program, a program variable $x \in \PV$ can also be
updated by adding a value according to a \emph{bounded distribution function}
$d: \STATE \to \mathrm{Dist}(\ZZ)$.
Here, for any state $\state$, $d(\state)$ is the probability distribution of the values that are added to $x$.
As usual, a \emph{probability distribution} on $\ZZ$ is a mapping $pr: \ZZ \to \RR$ with $pr(v) \in [0,1]$ for all $v \in \ZZ$ and $\sum_{v \in \ZZ} \; pr(v) = 1$.
Let $\mathrm{Dist}(\ZZ)$ be the set of distri\-butions $pr$ whose expected value $\expvsign(pr) = \sum_{v \in \ZZ} v \cdot pr(v)$ is well defined and finite, i.e., $\expvsignabs(pr) = \sum_{v \in \ZZ} \; |v| \cdot pr(v) < \infty$.
A distribution function $d:\STATE \to \mathrm{Dist}(\ZZ)$ is \emph{bounded} if there is a finite bound $\expvdist (d) \in \BOUND$ with $\expvsignabs(d(\state)) \leq \eval{\expvdist(d)}{\state}$ for all $\state \in \STATE$.
Let $\DIST$ denote the set of all bounded distribution functions (our implementation supports Bernoulli, uniform, geometric, hypergeometric, and binomial distributions, see \cite{techreport} for details).

\begin{definition}[PIP]
	\label{def:pip}
	$\PIP$ is a \emph{probabilistic integer program} with
	\begin{enumerate}
		\item a finite set of \emph{program variables} $\PV \subseteq \VV$
		\item a finite non-empty set of program \emph{locations} $\LOC$
		\item a finite non-empty set of \emph{general transitions} $\GT$.
		      A general transition $g$ is a finite non-empty set of transitions $t
	= (\loc,p,\tau,\eta,\loc')$, consisting \paper{\layout{\mypagebreak}} of
		      \begin{enumerate}
			      \item the \emph{start} and \emph{target locations} $\loc,\loc' \in \LOC$ of transition $t$,
			      \item the \emph{probability} $p\geq 0$ that transition $t$ is chosen when $g$ is executed,
			      \item the \emph{guard} $\tau\in\CONSTR $ of $t$, and
			      \item the \emph{update function}
			            $\eta\colon\PV\rightarrow \POLYBOUND \cup \;\DIST$ of
	$t$, mapping every program variable to an update polynomial or a bounded distribution function.
		      \end{enumerate}
		      All $t \in g$ must have the same start location $\loc$ and the same guard $\tau$.
		      Thus, we call them the start location and guard \emph{of $g$}, and denote them by $\loc_g$ and $\tau_g$.
		      Moreover, the probabilities $p$ of the transitions in $g$ must add up to $1$.
		\item an \emph{initial location} $\initloc \in \LOC$, where no transition has target location $\initloc$
	\end{enumerate}
\end{definition}

PIPs allow for both probabilistic and non-deterministic branching and sampling.
Probabilistic branching is modeled by selecting a transition out of a non-singleton general transition.
Non-deterministic branching is represented by several general transitions with the same start location and non-exclusive guards.
Probabilistic sampling is realized by update functions that map a program variable to a bounded distribution function.
Non-deterministic sampling is modeled by updating a program variable with an expression
containing temporary
variables from $\VV\setminus\PV$, whose values are non-deterministic (but can be restricted in the guard).
The set of \emph{initial} general transitions 
$\GTINIT\subseteq \GT$ consists of all general transitions with start
location $\ell_0$.

\begin{figure}[t]
	\centering
	\begin{tikzpicture}[shorten >=1pt,node distance=3.5cm,auto,main node/.style={circle,draw,font=\sffamily\Large\bfseries}]
		\node[main node] (l0) {$\ell_0$}; \node[main node] (l1) [right of = l0] {$\ell_1$}; \node[main node] (l2) [right of = l1] {$\ell_2$};

		\path[->] (l0) edge[align=left] node [below] {$t_0 \in g_0$}
		node [above] {$\eta(x)=x$ \\
				$\eta(y) = y$}
		(l1)

		(l1) edge[loop above, align=left] node [right,yshift=0.1cm] {$t_1 \in g_1$}
		node [above,xshift=0.5cm,yshift=0.25cm] {
                  $\begin{array}{l@{\;\;}l}
                  p=\tfrac{1}{2} & \eta(x)=x-1\\
		  \tau = (x > 0) & \eta(y) = y+x
                  \end{array}$}
		()

		(l1) edge[loop below, align=left] node [right,yshift=0.4cm,xshift=.1cm] {$t_2 \in g_1$}
		node [right,xshift=-1.45cm,yshift=-0.2cm] {
     $\begin{array}{l@{\;\;}l}
                  p=\tfrac{1}{2} & \eta(x)=x\\
		  \tau = (x > 0) &\eta (y) = y+x
                   \end{array}$}
		()

		(l1) edge[align=left] node [below] {$t_3 \in g_2$}
		node [above] {$\eta(x)=x$\\
				$\eta (y) = y$}
		(l2)

		(l2) edge[loop right, align=left] node [right] {$t_4 \in g_3$}
		node [above,xshift=0.5cm,yshift=0.25cm] {$\eta(x)=x$\\
				$\eta(y) = y-1$\\
				$\tau = (y > 0)$}
		();
	\end{tikzpicture}\vspace*{-.2cm}
	\caption{PIP with non-deterministic and probabilistic branching}
	\label{fig:simple_quadratic}
\end{figure}

\begin{example}[PIP]
	\label{PIP example}
	{\sl Consider the PIP in \cref{fig:simple_quadratic} with initial location $\initloc$ and the program variables $\PV = \{x, y\}$.
		Here, let $p=1$ and $\tau = \true$ if not stated ex\-plicitly.
		There are four general transitions: $g_0 = \{ t_0 \}$, $g_1 = \{t_1, t_2 \}$, $g_2= \{t_3\}$, and $g_3 = \{ t_4 \}$, where $g_1$ and $g_2$ represent a non-deterministic branching.
		When choosing the general transition $g_1$, the transitions $t_1$ and $t_2$ encode a probabilistic branching.
		If we modified the update $\eta$ and the guard $\tau$ of $t_0$ to $\eta(x) = u \in \VV \setminus \PV$ and $\tau = (u > 0)$, then $x$ would be updated to a non-deterministically chosen positive value.
		In contrast, if $\eta(x) = \GEOMETRIC(\tfrac{1}{2})$, then $t_0$ would update $x$ by adding a value sampled from the geometric distribution with parameter $\tfrac{1}{2}$.}
\end{example}

In the following, we regard a fixed PIP $\pip$ as in \cref{def:pip}.
A \emph{configuration}
is a tuple\paper{\layout{\linebreak[2]}} $(\loc,t,\state)$, with the current location $\loc\in \LOC$, the current state $\state \in \STATE$, and the transition $t$ that was evaluated last and led to the current configuration.
Let $\T = \bigcup_{g \in \GT}\, g$.
Then $\CONF = (\LOC\uplus\{\loc_\bot\}) \times (\T \uplus \{\tin,\tout\})\times \STATE$ is
the set of all configurations, with a special location $\loc_\bot$ indicating the
termination of a run, and special transitions $\tin$\paper{\layout{\linebreak[2]}} (used in the first configuration of a
run) \paper{\layout{\mypagebreak}} and $\tout$ (for the configurations of the run after termination).
The (virtual) general transition $\gout = \{\tout\}$ only contains $\tout$.

A \emph{run} of a PIP is an infinite sequence $\run = \prefix{c_0\, c_1\cdots} \in \CONF^{\omega}$.
Let $\RUNS = \CONF^{\omega}$ and let $\FPATH=\CONF^{*}$ be the set of all \emph{finite paths} of configurations.

In our setting, deterministic Markovian schedulers suffice to resolve all non-determinism (see, e.g.,\ \cite[Prop.\ 6.2.1]{putermanmdp}).
For $c = (\loc,t,\state)\in \CONF$, such a \emph{scheduler} $\scheduler$ yields a pair $\scheduler(c) = (g, \state')$ where $g$ is the next general transition to be taken (with $\loc = \loc_g$) and $\state'$ chooses values for the temporary variables where $\state'(\tau_g) = \true$ and $\state (x) = \state' (x)$ for all $x\in\PV$.
If $\GT$ contains no such $g$, we get $\scheduler(c) = (\gout,s)$.

For each scheduler $\scheduler$ and initial state $\initstate$, we first define a probability mass function $\ptransition{\pip}{\scheduler}{\initstate}$.
For all $c \in \CONF$, $\ptransition{\pip}{\scheduler}{\initstate}(c)$ is the probability that a run starts in $c$.
Thus, $\ptransition{\pip}{\scheduler}{\initstate}(\prefix{c}) = 1$ if $c = (\initloc, \tin,\initstate)$ and $\ptransition{\pip}{\scheduler}{\initstate}(\prefix{c}) = 0$, otherwise.
Moreover, for all $c', c \in \CONF$, $\ptransition{\pip}{\scheduler}{\initstate}(c' \to c)$ is the probability that the configuration $c'$ is followed by the configuration $c$ (see \paper{\cite{arxivereport}}\arxiv{\cref{appendix:prob_space_construction}}
for the formal definition of $\ptransition{\pip}{\scheduler}{\initstate}$).

For any $f = c_0 \cdots c_n \in \FPATH$, let $\ptransition{\pip}{\scheduler}{\initstate}(f) = \ptransition{\pip}{\scheduler}{\initstate}(c_0) \cdot \ptransition{\pip}{\scheduler}{\initstate}(c_0 \to c_1)\cdot \ldots \cdot \ptransition{\pip}{\scheduler}{\initstate}(c_{n-1}
	\to c_n)$. We say that $f$ is \emph{admissible} for $\scheduler$ and $\initstate$ if $\ptransition{\pip}{\scheduler}{\initstate}(f) > 0$.
A run $\run$ is admissible if all its finite prefixes are admissible.
A configuration $c \in \CONF$ is admissible if there is some admissible finite path ending in $c$.

The semantics of PIPs can now be defined by giving a corresponding probability space, which is obtained by a standard cylinder construction (see, e.g., \cite{cylindrical,cylindricalvardi}).
Let $\pipmeasure{\pip}{\scheduler}{\initstate}$ denote the corresponding probability measure which lifts $\ptransition{\pip}{\scheduler}{\initstate}$ %
to cylinder sets: For any $f \in \FPATH$, we have $\ptransition{\pip}{\scheduler}{\initstate}(f) = \pipmeasure{\pip}{\scheduler}{\initstate}(\Pre{f})$ for the set $\Pre{f}$ of all runs with prefix $f$.
So $\pipmeasure{\pip}{\scheduler}{\initstate}(\Theta)$ is the probability that a run from $\Theta \subseteq \RUNS$ is obtained when using the scheduler $\scheduler$ and starting in $\initstate$.

We denote the associated expected value operator by $\expv{\pip}{\scheduler}{\initstate}$.
So for any random variable $X:\RUNS \to \overline{\NN} = \NN \cup \{\infty \}$, we have $\expv{\pip}{\scheduler}{\initstate}(X) = \sum_{n \in \overline{\NN}} \; n \cdot \pipmeasure{\pip}{\scheduler}{\initstate}(X = n)$.
For details on the preliminaries from probability theory we refer to \paper{\cite{arxivereport}}\arxiv{\cref{app:prelim}}.

\section{Complexity Bounds}
\label{sec:Complexity_Bounds}
In \cref{Non-Probabilistic Bounds}, we first recapitulate the concepts of (non-probabilistic) runtime and size bounds from \cite{koat}.
Then we introduce \emph{expected} runtime and size bounds in \cref{Probabilistic Bounds}
and connect them to their non-probabilistic counterparts.
%

\subsection{Runtime and Size Bounds}
\label{Non-Probabilistic Bounds}

Again, let $\pip$ denote the PIP which we want to analyze.
\cref{def:time_bounds} recapitulates the notions of runtime and size bounds
from \cite{koat} in our setting.
Recall that bounds from $\BOUND$
do not contain temporary variables, i.e., we always try to infer bounds in
terms of the initial values of the \emph{program variables}.
Let $\sup \emptyset = 0$, as all occurring sets are subsets of $\realclosure$, whose minimal element is $0$.
\begin{definition}[Runtime and Size Bounds \protect{\textnormal{\cite{koat}}}]
	\label{def:time_bounds}
	$\tbound\colon\T\rightarrow \BOUND$ is a \emph{runtime bound}
	and	$\sbound\colon \T\times\VV\rightarrow\BOUND$ is a \emph{size bound}
	if for all transitions $t \in \T$, all variables $x\in\VV$, all schedulers $\scheduler$, and all states $\initstate\in \STATE$, we have
	\[
		\begin{array}{lcll}
			\eval{\tbound (t)}{\initstate}   & \geq                                                                                                                                   & \sup \left\{ \; \abs{\{i \mid t_i = t\}}
			\right.                          & \mid  \left. f
	= \prefix{(\_,t_0,\_) \cdots
	(\_,t_n,\_)} \, \land \, \ptransition{\pip}{\scheduler}{\initstate} (f) >
	0 \, \right\},                \\
			\eval{\sbound (t,x)}{\initstate} & \geq                                                                                                                                   & \sup \left\{ \; |\state (x)| \right.     & \mid  \left. f = \prefix{\cdots (\_,t,\state)} \, \land \, \ptransition{\pip}{\scheduler}{\initstate} (f) > 0 \, \right\}.
		\end{array}
	\]
\end{definition}

So $\tbound (t)$ is a bound on the number of executions of $t$ and $\sbound (t,x)$ over-approximates the greatest absolute value that $x\in \VV$ takes after the application of the transition $t$ in any admissible finite path.
Note that \cref{def:time_bounds} does not apply to $\tin$ and $\tout$, since they are not contained in $\T$.

We call a tuple $(\tbound,\sbound)$ a (non-probabilistic) \emph{bound pair}.
We will use such non-probabilistic bound pairs for an initialization of expected bounds (\cref{theorem:liftingofbounds}) and to compute improved expected runtime and size bounds in \cref{sec:Computing_Runtime_Bounds,sec:Computing Size Bounds}.

\begin{example}[Bound Pair]
	\label{ex:complexity_bounds_nonprob}
	{\sl	The technique of \cite{koat} computes the following bound pair for the PIP of \cref{fig:simple_quadratic} (by ignoring the probabilities of the transitions).

		\vspace*{-.3cm}

		\noindent
		\hspace*{-.3cm}\begin{minipage}{.5\textwidth}
			\[
				\tbound (t) =
				\begin{cases}
					1,      & \text{if $t = t_0$ or $t=t_3$}   \\
					x,      & \text{if $t = t_1$}              \\
					\infty, & \text{if $t = t_2$ or $t = t_4$} \\
				\end{cases}
			\]
		\end{minipage}
                \hspace*{.4cm}
		\begin{minipage}{.4\textwidth}
			\begin{align*}
				\sbound (t,x) & =
				\begin{cases}
					x,         & \text{if $t \in \{t_0, t_1, t_2\}$} \\
					3 \cdot x, & \text{if $t \in \{t_3, t_4 \}$}
				\end{cases}
				\\
				\sbound (t,y) & =
				\begin{cases}
					y,      & \text{if $t = t_0$}                       \\
					\infty, & \text{if $t \in \{t_1, t_2, t_3, t_4 \}$}
				\end{cases}
			\end{align*}
		\end{minipage}

		\vspace*{.1cm}

		\noindent
		Clearly, $t_0$ and $t_3$ can only be evaluated once.
		Since $t_1$ decrements $x$ and no transition increments it, $t_1$'s runtime is bounded by $\abs{\initstate}(x)$.
		However, $t_2$ can be executed arbitrarily often if $\initstate(x) > 0$.
		Thus, the runtimes of $t_2$ and $t_4$ are unbounded (i.e., $\pip$ is not terminating when regarding it as a non-probabilistic program).
		$\sbound (t,x)$ is finite for all transitions $t$, since $x$ is never increased.
		In contrast, the value of $y$ can be arbitrarily large after all transitions but $t_0$.}
\end{example}

\subsection{Expected Runtime and Size Bounds}
\label{Probabilistic Bounds}

We now define the \emph{expected} runtime and size complexity of a PIP $\pip$.

\begin{definition}[Expected Runtime Complexity, PAST \protect{\textnormal{\cite{dblp:conf/rta/bournezg05}}}]
	\label{def:expected_time_complexity}
	For $g \in \GT$, its \emph{runtime} is the random variable $\timervar (g)$ where $\timervar\colon \GT\rightarrow\RUNS\rightarrow \natclosure$ with
	\[
		\timervar (g) (\prefix{\, (\_,t_0,\_)\, (\_,t_1,\_) \, \cdots \,})\; = \; \abs{\, \{i \mid t_i\in g\} \,}.
	\]
	For a scheduler $\scheduler$ and $\initstate \in \STATE$, the \emph{expected runtime complexity} of $g \in \GT$ is $\expv{\pip}{\scheduler}{\initstate} (\timervar (g))$ and
	the \emph{expected runtime complexity} of $\pip$ is $\sum_{g \in \GT} \expv{\pip}{\scheduler}{\initstate}
		(\timervar (g))$.

	If $\pip$'s expected runtime complexity is finite for every scheduler $\scheduler$ and every initial state $\initstate$, then $\pip$ is called \emph{positively almost surely terminating} \emph{(PAST)}.
\end{definition}

\noindent
So $\timervar (g)(\run)$ is the number of executions of a transition from $g$ in the run $\run$.

While non-probabilistic size bounds refer to pairs $(t,x)$ of transitions $t \in \T$ and variables $x \in \VV$ (so-called \emph{result variables} in \cite{koat}), we now introduce expected size bounds for \emph{general result variables} $(g,\ell,x)$, which consist of a general transition $g$, one of its target locations $\loc$, and a program variable $x \in \PV$.
So $x$ must not be a temporary variable (which represents \emph{non-probabilistic} non-determinism), since general result variables are used for \emph{expected} size bounds.

\begin{definition}[Expected Size Complexity]
	\label{def:expected_size_complexity}
	The set of \emph{general result variables} is $\GRV = \{ \, (g,\loc,x) \mid g \in \GT, x \in \PV, (\_,\_,\_,\_,\loc) \in g \,\}$.
	The \emph{size} of $\alpha =$ $(g,\loc,x) \in \GRV$ is the random variable $\sizervar (\alpha)$ where $\sizervar\colon \GRV \rightarrow\RUNS\rightarrow\natclosure$ with
	\[
		\sizervar (g,\loc,x) \, (\prefix{\,(\initloc,t_0,\initstate) \, (\loc_1,t_1,\state_1)\cdots \, }) \; = \; \sup\left\{\, |\state_i (x)| \; \mid \; \loc_i = \loc\land t_i\in g \, \right\}.
	\]
	For a scheduler $\scheduler$ and $\initstate$, the \emph{expected size complexity} of $\alpha\!\in\!\GRV$ is ${\expv{\pip}{\scheduler}{\initstate}} (\sizervar (\alpha))$.
\end{definition}

So for any run $\run$, $\sizervar(g, \loc, x)(\run)$ is the greatest absolute value of $x$ in location $\loc$, whenever $\loc$ was entered with a transition from $g$.
We now define bounds for the expected runtime and size complexity which hold \emph{independent} of the scheduler.

\begin{definition}[Expected Runtime and Size Bounds]
	\label{def:expected_time_bounds}
	\label{def:expected_size_bounds}
	\begin{itemize}
		\item[$\bullet$] $\tbounde\colon \GT\rightarrow \BOUND$ is an \emph{expected runtime bound} if for all $g\in\GT$, all schedulers $\scheduler$, and all $\initstate \in \STATE$, we have $\eval{\tbounde (g)}{\initstate} \geq \expv{\pip}{\scheduler}{\initstate} (\timervar (g))$.
		      \smallskip
		\item[$\bullet$]	$\sbounde\colon \GRV\rightarrow\BOUND$ is an \emph{expected size bound}
		      if for all $\alpha\in\GRV$, all schedulers $\scheduler$, and all $\initstate \in \STATE$, we have $\eval{\sbounde (\alpha)}{\initstate} \geq \expv{\pip}{\scheduler}{\initstate} (\sizervar (\alpha))$.
		      \smallskip
		\item[$\bullet$] A pair $(\tbounde,\sbounde)$ is called an \emph{expected bound pair}.
	\end{itemize}
\end{definition}

\begin{example}[Expected Runtime and Size Bounds]
	\label{ex:probabilistic_bounds}
	\sl Our new techniques from \cref{sec:Computing_Runtime_Bounds,sec:Computing Size Bounds} will derive the following expected bounds for the PIP from \cref{fig:simple_quadratic}.

	\vspace*{-.4cm}

	\begin{align*}
		\tbounde (g)            & =
		\begin{cases}
			1,                      & \hspace*{-.2cm}\text{if
		$g\!\in\!\{g_0,g_2\}$} \\ 
			2\cdot x,               & \hspace*{-.2cm}\text{if $g = g_1$}           \\
			6\cdot x^2 + 2 \cdot y, & \hspace*{-.2cm}\text{if $g = g_3$}
		\end{cases}
		\hspace*{-.8cm}
		                        & \sbounde (g,\_,x) & =
		\begin{cases}
			x,         & \hspace*{-.2cm}\text{if $g = g_0$}              \\
			2\cdot x,  & \hspace*{-.2cm}\text{if $g = g_1$}              \\
			3 \cdot x, & \hspace*{-.2cm}\text{if $g\!\in\!\{ g_2, g_3 \}$}
		\end{cases}
		\\
		\sbounde (g_0,\loc_1,y) & = y               & \sbounde (g_2,\loc_2,y) & =6\cdot x^2 + 2 \cdot y  \\
		\sbounde (g_1,\loc_1,y) & = 	6\cdot x^2 + y  & \sbounde (g_3,\loc_2,y) & =	12\cdot x^2 + 4 \cdot y
	\end{align*}
	While the runtimes of $t_2$ and $t_4$ were unbounded in the non-probabilistic case (\cref{ex:complexity_bounds_nonprob}), we obtain finite bounds on the expected runtimes of $g_1 = \{t_1,t_2\}$ and $g_3 = \{t_4\}$.
	For example, we can expect $x$ to be non-positive after at most $\abs{\initstate}(2\cdot x)$ iterations of $g_1$.
	Based on the above expected runtime bounds, the expected runtime complexity of the PIP is at most $\abs{\initstate}(\tbounde(g_0) + \ldots + \tbounde(g_3)) = \abs{\initstate}(2 + 2 \cdot x + 2 \cdot y + 6 \cdot x ^2)$, i.e., it is in $\mathcal{O}(n^2)$ where $n$ is the maximal absolute value of the program variables at the start of the program.
\end{example}

The following theorem shows that non-probabilistic bounds can be lifted to expected bounds, since they do not only bound the expected value of $\timervar(g)$ resp.\ $\sizervar(\alpha)$, but the whole distribution.
\paper{As mentioned, all proofs can be found in \cite{arxivereport}.}

\begin{restatable}[Lifting Bounds]{theorem}{liftingofbounds}
	\label{theorem:liftingofbounds}
	For a bound pair $(\tbound,\sbound)$, $(\tbounde,\sbounde)$ with $\tbounde (g) = \sum_{t \in g} \tbound (t)$ and $\, \sbounde (g,\loc,x) = \sum_{t = (\_,\_,\_,\_,\loc) \in g} \sbound (t,x)$ is an expected bound pair.
\end{restatable}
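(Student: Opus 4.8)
The plan is to reduce the statement about expected values to a pointwise (per-run) inequality between random variables, and then take expectations. Concretely, for a fixed general transition $g = \{t^{(1)}, \ldots, t^{(k)}\}$, a fixed scheduler $\scheduler$, and a fixed initial state $\initstate$, I want to show that for every run $\run$,
\[
  \timervar(g)(\run) \;\leq\; \sum_{t \in g} \overline{\timervar}(t)(\run),
\]
where $\overline{\timervar}(t)(\run)$ denotes the number of times $t$ is taken along $\run$ (this is the ``per-run'' version of the quantity that $\tbound(t)$ bounds). The left side counts indices $i$ with $t_i \in g$; since $g$ is the disjoint union of its transitions, this count splits exactly as $\sum_{t \in g}\abs{\{i \mid t_i = t\}}$, so in fact equality holds run-wise. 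Similarly, for a general result variable $(g,\loc,x)$ I want
\[
  \sizervar(g,\loc,x)(\run)\;=\;\sup\{\abs{\state_i(x)}\mid \loc_i=\loc \wedge t_i\in g\}
  \;\leq\; \sum_{t=(\_,\_,\_,\_,\loc)\in g}\;\sup\{\abs{\state_i(x)}\mid t_i=t\},
\]
because a supremum over a union of index sets is bounded by the sum of the suprema over the pieces (each supremum being a nonnegative element of $\realclosure$, with $\sup\emptyset = 0$).

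**Next I would** connect the per-run suprema to the non-probabilistic bounds. By \cref{def:time_bounds}, for every admissible finite prefix $f = (\_,t_0,\_)\cdots(\_,t_n,\_)$ of $\run$ we have $\abs{\{i \le n\mid t_i = t\}}\le \eval{\tbound(t)}{\initstate}$; taking the supremum over all finite prefixes (which is how $\overline{\timervar}(t)(\run)$ is obtained as a limit) gives $\overline{\timervar}(t)(\run)\le\eval{\tbound(t)}{\initstate}$ for every admissible run, hence $\timervar(g)(\run)\le\sum_{t\in g}\eval{\tbound(t)}{\initstate}$. The same argument with the size clause of \cref{def:time_bounds} yields $\sizervar(g,\loc,x)(\run)\le\sum_{t=(\_,\_,\_,\_,\loc)\in g}\eval{\sbound(t,x)}{\initstate}$. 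For runs that are not admissible, they occur with probability zero under $\pipmeasure{\pip}{\scheduler}{\initstate}$, so they do not affect expectations (one can also simply note that every configuration appearing with positive probability lies on an admissible prefix).

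**Finally**, taking $\expv{\pip}{\scheduler}{\initstate}$ of the pointwise inequality and using monotonicity of expectation together with the fact that the right-hand side is a constant (it does not depend on $\run$), I get
\[
  \expv{\pip}{\scheduler}{\initstate}(\timervar(g))\;\le\;\sum_{t\in g}\eval{\tbound(t)}{\initstate}\;=\;\eval{\textstyle\sum_{t\in g}\tbound(t)}{\initstate}\;=\;\eval{\tbounde(g)}{\initstate},
\]
and analogously $\expv{\pip}{\scheduler}{\initstate}(\sizervar(g,\loc,x))\le\eval{\sbounde(g,\loc,x)}{\initstate}$, which is exactly what \cref{def:expected_time_bounds} requires. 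Since $\scheduler$ and $\initstate$ were arbitrary, $(\tbounde,\sbounde)$ is an expected bound pair.

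**The main obstacle** I anticipate is purely a matter of measure-theoretic bookkeeping rather than mathematical depth: one must be careful that $\timervar(g)$ and $\sizervar(g,\loc,x)$ are genuinely measurable $\natclosure$-valued random variables (they are monotone limits of measurable finite-prefix functions, so this is routine but should be acknowledged) and that the supremum-over-prefixes argument is valid even when some of the quantities equal $\infty$ — here the convention $\sup\emptyset = 0$ and the arithmetic in $\realclosure = \overline{\RR}_{\ge 0}$ make the inequalities hold trivially whenever a bound is $\infty$. The only genuinely substantive point is the run-wise inequality for $\sizervar$: that $\sup$ over a finite union of sets is at most the sum of the $\sup$s over the members, which holds because each term is nonnegative; this is where the restriction to program variables $x \in \PV$ and the structure of general result variables are used implicitly, though the inequality itself is elementary.
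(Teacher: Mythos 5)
Your proposal is correct and follows essentially the same route as the paper's proof: decompose over the transitions $t\in g$, bound the per-run counts resp.\ suprema by $\eval{\tbound(t)}{\initstate}$ resp.\ $\eval{\sbound(t,x)}{\initstate}$ using that every finite prefix of an admissible run is admissible, discard the non-admissible runs as a null set, and conclude by monotonicity of expectation since the right-hand side is a constant. The only cosmetic difference is that you phrase the size step as ``sup over a union is at most the sum of the sups'' where the paper bounds the maximum over $t\in g$ by the sum, which is the same estimate.
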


Here, we over-approximate the maximum of $\sbound (t,x)$ for $t = (\_,\_,\_,\_,\loc) \in g$ by their sum.
For asymptotic bounds, this does not affect precision, since $\max(f,g)$ and $f + g$ have the same asymptotic growth for any non-negative functions $f, g$.

\begin{example}[Lifting of Bounds]
	\label{ex:Lifting of Bounds}
	{\sl When lifting the bound pair of \cref{ex:complexity_bounds_nonprob} to expected bounds according to \cref{theorem:liftingofbounds}, one would obtain $\tbounde (g_0) = \tbounde (g_2) = 1$ and $\tbounde (g_1) =\tbounde (g_3) = \infty$.
		Moreover, $\sbounde(g_0,\loc_1,x) = x$, $\sbounde(g_1,\loc_1,x) = 2 \cdot x$, $\sbounde(g_2,\loc_2,x) = \sbounde(g_3,\loc_2,x) = 3 \cdot x$, $\sbounde(g_0,\loc_1,y) = y$, and $\sbounde(g,\_,y) = \infty$ whenever $g \neq g_0$.
		Thus, with these lifted bounds one cannot show that $\pip$'s expected runtime complexity is finite, i.e., they are substantially less precise than the finite expected bounds from \cref{ex:probabilistic_bounds}.
		Our approach will compute such finite expected bounds by repeatedly improving the lifted bounds of \cref{theorem:liftingofbounds}.}
\end{example}

\section{Computing Expected Runtime Bounds}
\label{sec:Computing_Runtime_Bounds}
We first present a new variant of probabilistic linear ranking functions in \cref{Probabilistic Polynomial Ranking Functions}.
Based on this, in \cref{Inferring Expected Time Bounds} we introduce our modular technique to infer expected runtime bounds by using expected size bounds.

\subsection{Probabilistic Linear Ranking Functions}
\label{Probabilistic Polynomial Ranking Functions}

For probabilistic programs, several techniques based on ranking supermartingales have been developed.
In this section, we define a class of probabilistic ranking functions that will be suitable for our modular analysis.

We restrict ourselves to ranking functions $\rank: \LOC \rightarrow \REALPOLYBOUNDPV_{\mathrm{lin}}$ that map every location to a \emph{linear polynomial} (i.e., of at most degree 1) without temporary variables.
The linearity restriction is common to ease the automated inference of ranking functions.
Moreover, this restriction will be needed for the soundness of our technique.
Nevertheless, our approach of course also infers non-linear expected runtimes (by combining the linear bounds obtained for different program parts).

Let $\exp_{\rank,g,\state}$ denote the expected value of $\rank$ after an execution of $g \in \GT$ in state $\state\in\STATE$. Here, $\state_\eta (x)$ is the expected value of $x \in \PV$ after performing the update $\eta$ in state $\state$. So if $\eta(x) \in \DIST$, then $x$'s expected value after the update results from adding the expected value of the probability distribution $\eta (x) (\state)$: \[\exp_{\rank,g,\state} = \hspace*{-.3cm}	\sum\limits_{(\loc,p,\tau,\eta,\loc') \in g} \hspace*{-.3cm}
	p\cdot \state_\eta(\rank(\loc')) \text{ with }
	\state_\eta (x) =
	\begin{cases}
		\state(\eta (x)),                          & \text{if $\eta (x)\in\POLYBOUND$} \\
		\state(x) + \expvsign (\eta (x) (\state)), & \text{if $\eta (x)\in\DIST$}
	\end{cases}
\]

\begin{definition}[PLRF]
	\label{def:Probabilistic Polynomial Ranking Functions}
	Let $\GTG \subseteq \GTNI \subseteq \GT$.
	Then $\rank \colon \LOC \rightarrow \REALPOLYBOUNDPV_{\mathrm{lin}}$ is a \emph{probabilistic linear ranking func\-tion (PLRF)}
	for $\GTG$ and $\GTNI$ if
for all  $g \in \GTNI \setminus \GTG$ and $c' \in \CONF$
there is a  $\;{\bowtie_{g,c'}} \in \{<,\geq\}$ such that
for all finite paths $\,\cdots \,c' \, c$
that are admissible for some $\scheduler$ and $\initstate \in \STATE$, and where
$c =
	(\loc,t,\state)$ (i.e., where $t$ is the transition that is used in the step from
	$c'$ to $c$), we have:\paper{\layout{\linebreak[4]
		}}
	\paper{\layout{\vspace*{-0.5cm}}}
	\begin{tabbing}
		\textbf{Boundedness (a):}
	 \=If $t \in g$ for a $g \in \GTNI \setminus \GTG$,
then $\state(\rank(\loc)) \bowtie_{g,c'} 0$.\\
	\textbf{Boundedness (b):}\>If $t \in g$ for a $g \in \GTG$,
then $\state(\rank(\loc)) \geq 0$.\\
		\textbf{Non-Increase:}
If $\loc = \loc_g$ for a
$g\in\GTNI$ and $\state(\tau_g) = \true$, then
$\state( \rank(\loc)) \geq \exp_{\rank,g,\state}$.\\
		\textbf{Decrease:}
If $\loc = \loc_g$ for a $g\in\GTG$ and $\state(\tau_g) = \true$, then
$\state( \rank(\loc)) -1 \geq \exp_{\rank,g,\state}$.
	\end{tabbing}
\end{definition}
So if one is restricted to the sub-program with the \underline{n}on-\underline{i}ncreasing transitions $\GTNI$, then $\rank(\loc)$ is an upper bound on the expected number of applications of transitions from $\GTG$ when starting in $\loc$.
Hence, a PLRF for $\GTG =\GTNI =\GT$ would imply that the program is PAST (see, e.g., \cite{dblp:conf/rta/bournezg06,rsmchatt,lexrsm,FoundationsTerminationMartingale2020}).
However, our PLRFs differ from the standard notion of probabilistic ranking functions by considering arbitrary subsets $\GTNI \subseteq \GT$.
This is needed for the modularity of our approach which allows us to analyze program parts separately (e.g., $\GT\setminus\GTNI$ is ignored when\paper{\layout{\linebreak[4]}} inferring a PLRF).
Thus, our ``Boundedness'' conditions differ slightly from the corresponding conditions in other definitions.
Condition (b) requires that $g \in \GTG$ never leads to a configuration where $\rank$ is negative.
Condition (a) states that in an admissible path where
$g=\{t_1,t_2,\ldots\} \in \GTNI\setminus\GTG$ is used for continuing in configuration
$c'$,  \paper{\layout{\mypagebreak}} if executing $t_1$ in $c'$ makes $\rank$ negative, then executing $t_2$ must make $\rank$ negative as well.
Thus,
such a $g$
can
never come before a general transition from $\GTG$ in an admissible path and hence,
$g$ can be ignored when
inferring upper bounds on the runtime.
This increases the power of our approach
and it allows us to consider only \emph{non-negative} random variables in our correctness proofs.

We use SMT solvers to generate PLRFs automatically.
Then for ``Boundedness'', we regard all $\state' \in \STATE$ with $\state'(\tau_g) = \true$ and require ``Boundedness'' for any state $\state$ that is reachable from $\state'$.
\begin{example}[PLRFs]
	\label{ex:PLRF}
	{\sl	Consider again the PIP in \cref{fig:simple_quadratic}
		and the sets $\GTG = \GTNI = \{g_1\}$ and $\GTG' = \GTNIpr = \{g_3\}$, which correspond to its two loops.

		The function $\rank$ with	$\rank (\loc_1) = 2 \cdot x$ and $\rank(\loc_0) = \rank(\loc_2)= 0$ is a PLRF for $\GTG = \GTNI$:
		For every admissible configuration $(\loc,t,\state)$ with $t \in g_1$ we have $\loc = \loc_1$ and $\state(\rank(\loc_1)) = 2 \cdot \state (x) \geq 0$, since $x$ was positive before (due to $g_1$'s guard) and it was either decreased by $1$ or not changed by the update of $t_1$ resp.\ $t_2$.
		Hence $\rank$ is bounded.
		Moreover, for $\state_1(x) = \state(x -1) = \state(x) -1$ and $\state_2(x) = \state(x)$ we have:
		\[
			\exp_{\rank, g, \state} \, = \, \tfrac{1}{2}\cdot \state_1(\rank(\loc_1)) + \tfrac{1}{2}\cdot \state_2(\rank(\loc_1)) \, = \, 2 \cdot \state(x) -1 \, = \, \state(\rank (\loc_1)) - 1
		\]
		So $\rank$ is decreasing on $g_1$ and as $\GTG=\GTNI$, also the non-increase property holds.

		\noindent
		Similarly, $\rank'$ with $\rank' (\loc_2) = y$ and $\rank' (\loc_0) = \rank' (\loc_1) = 0$ is a PLRF for $\GTG' = \GTNIpr$.}
\end{example}

In our implementation, $\GTG$ is always a singleton and we let $\GTNI \subseteq \GT$ be a cycle in the call graph where we find a PLRF for $\GTG \subseteq \GTNI$.
The next subsection shows how we can then obtain an expected runtime bound for the overall program by searching for suitable ranking functions repeatedly.


\subsection{Inferring Expected Runtime Bounds}
\label{Inferring Expected Time Bounds}

Our approach to infer expected runtime bounds is based on an underlying (non-probabilistic) bound pair $(\tbound,\sbound)$ which is computed by existing techniques (in our implementation, we use \cite{koat}).
To do so, we abstract the PIP to a standard integer transition system by ignoring the probabilities of transitions and replacing probabilistic with non-deterministic sampling (e.g., the update $\eta(x) = \GEOMETRIC(\tfrac{1}{2})$ would be replaced by $\eta(x) = x + u$ with $u \in \VV \setminus \PV$, where $u > 0$ is added to the guard).
Of course, we usually have $\tbound(t) = \infty$ for some transitions $t$.

We start with the expected bound pair $(\tbounde,\sbounde)$ that is obtained by lifting $(\tbound,\sbound)$ as in \cref{theorem:liftingofbounds}.
Afterwards, the expected runtime bound $\tbounde$ is improved repeatedly by applying the following \cref{theorem:exptimeboundsmeth} (and similarly, $\sbounde$ is improved repeatedly by applying \cref{thm:Inferring Expected Size Bounds for Trivial SCCs,theorem:expectednontrivialsizeboundsmeth} from \cref{sec:Computing Size Bounds}).
Our approach alternates the improvement of $\tbounde$ and $\sbounde$, and it uses expected size bounds on ``previous'' transitions to improve expected runtime bounds, and vice versa.

To improve $\tbounde$,
we generate a PLRF $\rank$ for a part of the program.
To obtain a bound for the \emph{full} program from $\rank$, one has to determine which transitions can enter the program part and from which locations it can be entered.

\begin{definition}[Entry Locations and Transitions]
	For $\GTNI \subseteq \GT$ and $\loc \in \LOC$, the \emph{entry transitions} are $\ET_{\GTNI}(\loc) = \{g \in \GT\setminus\GTNI \mid \exists t\in g\ldotp t = (\_,\_,\_,\_,\loc)\}$.
	Then the \emph{entry locations} \paper{\layout{\mypagebreak}}
        are all start locations of $\GTNI$ whose entry 
transitions are not empty, 
i.e., $\ENTRYLOC_{\GTNI} = \{\loc \mid \ET_{\GTNI}(\loc)\neq\emptyset \land
	(\loc,\_,\_,\_,\_) \in \bigcup\GTNI\}$.\footnote{For a set of sets like $\GTNI$,
	 $\bigcup\GTNI$ denotes their union,  i.e., $\bigcup\GTNI = \bigcup_{g \in \GTNI}
	g$.}
\end{definition}

\begin{example}[Entry Locations and Transitions]
	\label{ex:entry_transitions_and_locations}
	{\sl For the PIP from \cref{fig:simple_quadratic} and $\GTNI = \{g_1\}$, we have $\ENTRYLOC_{\GTNI} = \{\loc_1\}$ and $\ET_{\GTNI}(\loc_1) = \{ g_0 \}$.
		So the loop formed by $g_1$ is entered at location $\loc_1$ and the general transition $g_0$ has to be executed before.
		Similarly, for $\GTNIpr = \{g_3\}$ we have $\ENTRYLOC_{\GTNIpr} = \{\loc_2\}$ and $\ET_{\GTNIpr}(\loc_2) = \{ g_2 \}$.
	}
\end{example}

Recall that if $\rank$ is a PLRF for $\GTG \subseteq \GTNI$, then in a program that is restricted to $\GTNI$, $\rank(\loc)$ is an upper bound on the expected number of executions of transitions from $\GTG$ when starting in $\loc$.
Since $\rank(\loc)$ may contain negative coefficients, it is not weakly monotonically increasing in general.
To turn expressions $e \in \REALPOLYBOUNDPV$ into bounds from $\BOUND$, let the over-approximation $\overapprox{\cdot}$ replace all coefficients by their absolute value.
So for example, $\overapprox{x-y} =\overapprox{x + (-1)\cdot y} = x + y$.
Clearly, we have $\eval{\overapprox{e}}{\state} \geq \eval{e}{\state}$ for all $s \in \STATE$.
Moreover, if $e \in \REALPOLYBOUNDPV$ then $\overapprox{e} \in \BOUND$.

To turn $\overapprox{\rank(\loc)}$ into a bound for the full program, one has to take into account how often the sub-program with the transitions $\GTNI$ is reached via an entry transition $h \in \ET_{\GTNI}(\loc)$ for some $\loc \in \ENTRYLOC_{\GTNI}$.
This can be over-approximated by $\sum_{t=(\_,\_,\_,\_,\loc) \in h} \tbound(t)$, which is an upper bound on the number of times that transitions in $h$ to the entry location $\loc$ of $\GTNI$ are applied in a full program run.

The bound $\overapprox{\rank(\loc)}$ is expressed in terms of the program variables at the entry location $\loc$ of $\GTNI$.
To obtain a bound in terms of the variables at the start of the program, one has to take into account which value a program variable $x$ may have when the sub-program $\GTNI$ is reached.
For every entry transition $h \in \ET_{\GTNI}(\loc)$, this value can be over-approximated by $\sbounde (h,\loc,x)$.
Thus, we have to instantiate each variable $x$ in $\overapprox{\rank(\loc)}$ by $\sbounde(h,\loc,x)$.
Let $\sbounde(h,\loc,\cdot):\PV \to \BOUND$ be the mapping with $\sbounde(h,\loc,\cdot)(x) = \sbounde(h,\loc,x)$.
Hence, $\sbounde(h,\loc,\cdot)(\overapprox{\rank(\loc)})$ over-approximates the expected number of applications of $\GTG$ if $\GTNI$ is entered in location $\loc$, where this bound is expressed in terms of the input variables of the program.
Here, weak monotonic increase of $\overapprox{\rank(\loc)}$ ensures that instantiating its variables by an over-approximation of their size yields an over-approximation of the runtime.\paper{\layout{\linebreak[4]
		}}
	\paper{\layout{\vspace*{-0.5cm}}}

\begin{restatable}[Expected Runtime Bounds]{theorem}{expectedtimeboundsmethod}
	\label{theorem:exptimeboundsmeth}
	Let $(\tbounde,\sbounde)$ be an expected bound pair, $\tbound$ a (non-probabilistic) runtime bound, and $\rank$ a PLRF for $\GTG \subseteq \GTNI \subseteq \GT$.
	Then $\tbounde'\colon \GT \rightarrow \BOUND$ is an expected runtime bound where
	\reducedcenter{$ \tbounde' (g) =
				\begin{cases}
					\sum\limits_{\substack{\loc \in \ENTRYLOC_{\GTNI}                                                                                                                  \\h \in \ET_{\GTNI}(\loc)}}
					(\sum\limits_{t=(\_,\_,\_,\_,\loc) \in h} \tbound(t))\cdot \left(\exact{\overapprox{\rank(\loc)}}{\sbounde(h,\loc,\cdot)}\right), & \!\!\text{if } g \in \GTG      \\
					\tbounde (g),                                                                                                                     & \!\!\text{if } g \not \in \GTG
				\end{cases}
			$}
\end{restatable}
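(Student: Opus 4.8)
The plan is to fix a scheduler $\scheduler$ and an initial state $\initstate \in \STATE$, and to show that for every $g \in \GTG$ the value $\eval{\tbounde'(g)}{\initstate}$ dominates $\expv{\pip}{\scheduler}{\initstate}(\timervar(g))$; for $g \notin \GTG$ nothing is to be shown since $\tbounde'(g) = \tbounde(g)$ and $(\tbounde,\sbounde)$ is already an expected bound pair. The key idea is to decompose a run according to the \emph{first} moment it enters the sub-program $\GTNI$ through one of its entry locations, and then to invoke the supermartingale property of the PLRF $\rank$ on the sub-program restricted to $\GTNI$.

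\textbf{Step 1: The PLRF gives a local bound.} First I would establish that, for any admissible configuration $c = (\loc,h,\state)$ with $\loc \in \ENTRYLOC_{\GTNI}$ reached via an entry transition $h \in \ET_{\GTNI}(\loc)$, and for the sub-program obtained by keeping only the transitions in $\GTNI$ (so that runs that leave $\GTNI$ are "frozen"), the expected number of applications of transitions from $\GTG$, when continuing from $c$, is at most $\state(\overapprox{\rank(\loc)}) \geq \state(\rank(\loc))$. This is the standard ranking-supermartingale argument: "Non-Increase" makes $\rank$ evaluated along the restricted run a nonnegative supermartingale (nonnegativity is exactly why "Boundedness" (b) was imposed, and (a) handles the transitions in $\GTNI \setminus \GTG$), "Decrease" makes it decrease by at least $1$ in expectation on each step using a transition from $\GTG$, and an optional-stopping / monotone-convergence argument then bounds $\expv{}{}{}( \#\{\text{steps via }\GTG\})$ by the initial value $\state(\rank(\loc))$. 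This is essentially the classical result cited after \cref{def:Probabilistic Polynomial Ranking Functions}, adapted to the restricted transition set; I would either cite it or reprove it via the supermartingale convergence theorem. Crucially, steps of the \emph{full} program that use a transition \emph{not} in $\GTNI$ cannot contribute to $\timervar(g)$ for $g \in \GTG \subseteq \GTNI$ beyond what happens inside a maximal $\GTNI$-segment, so counting within segments suffices.

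\textbf{Step 2: Summing over entry events.} Next I would argue that every application of a transition from $\GTG$ in a full run happens inside some maximal contiguous block of $\GTNI$-steps, and each such block is opened by exactly one entry step: a step from configuration $c'$ to $c = (\loc,h,\state)$ with $h \in \ET_{\GTNI}(\loc)$, $\loc \in \ENTRYLOC_{\GTNI}$. Hence $\timervar(g)(\run) \le \sum_{\loc,h} \sum_{k} Z_{\loc,h,k}(\run)$, where $Z_{\loc,h,k}$ counts the $\GTG$-applications in the block opened by the $k$-th entry step through $(h,\loc)$. Taking expectations and using the tower property of conditional expectation (conditioning on the configuration $c = (\loc,h,\state)$ at the $k$-th such entry), Step 1 gives $\expv{}{}{}(Z_{\loc,h,k}) \le \expv{}{}{}( \state(\overapprox{\rank(\loc)}) \cdot \ind_{k\text{-th entry exists}})$. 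Because $\overapprox{\rank(\loc)}$ is a bound in $\BOUND$, hence weakly monotonically increasing, and because $\sbounde(h,\loc,x)$ over-approximates the expected size of $x$ at such entry configurations, substituting $\sbounde(h,\loc,\cdot)$ for the variables in $\overapprox{\rank(\loc)}$ and evaluating at $\abs{\initstate}$ over-approximates $\expv{}{}{}( \state(\overapprox{\rank(\loc)}))$; here I must be slightly careful to use $\eval{\sbounde(h,\loc,x)}{\initstate} \ge \expv{}{}{}(\sizervar(h,\loc,x))$ together with (weak) monotonicity and nonnegativity so that the composed bound still dominates the expectation of the composed quantity (Jensen in the "right" direction, which holds because each monomial of $\overapprox{\rank(\loc)}$ is degree $\le 1$ with nonnegative coefficients — this is exactly where linearity of $\rank$ is used). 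Finally, summing over $k$ contributes a factor equal to the expected number of entry steps through $(h,\loc)$, which is at most $\eval{\sum_{t=(\_,\_,\_,\_,\loc)\in h}\tbound(t)}{\initstate}$ since $\tbound$ is a (distribution-wise, hence in particular expected) non-probabilistic runtime bound for each such $t$. Putting the three factors together and summing over $\loc \in \ENTRYLOC_{\GTNI}$ and $h \in \ET_{\GTNI}(\loc)$ yields exactly $\eval{\tbounde'(g)}{\initstate}$.

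\textbf{Main obstacle.} The delicate point is Step 1 combined with the size-bound substitution in Step 2: I must justify interchanging expectation with the (nonlinear, but monotone) substitution, i.e.\ that $\expv{}{}{}\!\big(\state_{\text{entry}}(\overapprox{\rank(\loc)})\big) \le \eval{\sbounde(h,\loc,\cdot)(\overapprox{\rank(\loc)})}{\initstate}$. This is not automatic for arbitrary bounds in $\BOUND$ (which include products and exponentials), but $\overapprox{\rank(\loc)}$ is \emph{linear} with nonnegative coefficients, so it is the sum of a constant and nonnegative-coefficient multiples of single variables, and linearity of expectation together with $\expv{}{}{}(\abs{\state_{\text{entry}}(x)}) \le \expv{}{}{}(\sizervar(h,\loc,x)) \le \eval{\sbounde(h,\loc,x)}{\initstate}$ closes the gap. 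A secondary subtlety is handling runs in which $\GTNI$ is entered infinitely often or never, and making the "maximal block" decomposition measurable; this is routine via the cylinder construction but should be spelled out. I expect the ranking-supermartingale step (Step 1), including the optional-stopping argument over the restricted program and the reduction from full-program counting to per-segment counting, to be the technically heaviest part of the proof.
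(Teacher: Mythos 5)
Your overall route is the same as the paper's (a ranking-supermartingale argument per visit of $\GTNI$, a decomposition of $\timervar(g)$ over the entry events into $\GTNI$, and linearity of $\overapprox{\rank(\loc)}$ to justify substituting the expected size bounds $\sbounde(h,\loc,\cdot)$), but there is a genuine gap at the final summation step of your Step 2. You write that ``summing over $k$ contributes a factor equal to the expected number of entry steps through $(h,\loc)$'', and you justify using $\tbound$ there by saying it is ``in particular an expected'' runtime bound. That factoring is exactly the unsound step: the number of entries into $\GTNI$ and the value of $\overapprox{\rank(\loc)}$ (i.e.\ the variable sizes) at those entries are \emph{dependent} random variables, so $\expv{\pip}{\scheduler}{\initstate}\bigl(\sum_k \state_k(\overapprox{\rank(\loc)})\cdot\ind_{E_k}\bigr)$ cannot be bounded by the product of the expected entry count and an expected per-entry bound. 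If an expected bound on the entry count sufficed, one could replace $\sum_{t}\tbound(t)$ by $\tbounde(h)$ in the theorem, and \cref{ex:entry_transition_unsound} in the appendix is a concrete counterexample to that (outer loop sets $x\in\{1,2,3\}$ uniformly; expected count $2$ and expected entry size $2$ give $4$, but the true expected number of inner iterations is $\tfrac{14}{3}$).

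What makes the theorem sound — and what your proposal misses — is that $\tbound$ is a \emph{non-probabilistic} bound: for the fixed $\initstate$, the quantity $\eval{\sum_{t=(\_,\_,\_,\_,\loc)\in h}\tbound(t)}{\initstate}$ is a single deterministic number that caps the number of entries via $(h,\loc)$ on \emph{every admissible run}. This is used pointwise, not in expectation: on each run the sum of the rank values over all entries is at most that deterministic count times the single random variable $\exact{\overapprox{\rank(\loc)}}{\sizervar(h,\loc,\cdot)}$ (since $\sizervar$ is the supremum over all entries), and only then does one take expectations and use linearity of $\overapprox{\rank(\loc)}$ to pass to $\sbounde(h,\loc,\cdot)$. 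Equivalently, in your per-$k$ formulation, the deterministic cap shows that $\pipmeasure{\pip}{\scheduler}{\initstate}(E_k)=0$ for all $k$ beyond that cap, so the sum has only that many non-zero terms, each bounded by $\eval{\exact{\overapprox{\rank(\loc)}}{\sbounde(h,\loc,\cdot)}}{\initstate}$. This is how the paper's proof proceeds (via the partition into the sets $B_{N,M,(t_i)_{i\in N}}$ and the counts $\mathcal C_{(t_i)_{i\in N}}(t)\leq\eval{\tbound(t)}{\initstate}$); your Step 1 and your use of linearity are fine, but the summation step must be repaired along these lines, and the role of the non-probabilistic bound should be stated as a pointwise (per-run) cap rather than an expectation bound.
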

\begin{example}[Expected Runtime Bounds]
	\label{ex:time_bounds}
	{\sl For the PIP from \cref{fig:simple_quadratic}, our approach starts with $(\tbounde,\sbounde)$ from \cref{ex:Lifting of Bounds} which results from lifting the bound pair from \cref{ex:complexity_bounds_nonprob}.
		To improve the bound $\tbounde(g_1) = \infty$,
we use the PLRF $\rank$ for $\GTG = \GTNI = \{ g_1 \}$ from \cref{ex:PLRF}.
		By \cref{ex:entry_transitions_and_locations}, we have $\ENTRYLOC_{\GTNI} = \{ \loc_1 \}$ and $\ET_{\GTNI}(\loc_1) = \{ g_0 \}$ with $g_0 = \{ t_0 \}$, whose runtime bound is $\tbound(t_0) = 1$, see \cref{ex:complexity_bounds_nonprob}.
		Using the expected size bound $\sbounde (g_0,\loc_1,x) = x$ from \cref{ex:probabilistic_bounds}, \cref{theorem:exptimeboundsmeth} yields
		\[
			\tbounde' (g_1) = \tbound(t_0) \cdot \exact{\overapprox{\rank(\loc_1)}}{\sbounde(g_0,\loc_1,\cdot)}
			= 1 \cdot 2\cdot x = 2 \cdot x.
		\]
		To improve $\tbounde(g_3)$, 
                we use the PLRF $\rank'$ for $\GTG' = \GTNIpr = \{ g_3 \}$ from \cref{ex:PLRF}.
		As $\ENTRYLOC_{\GTNIpr} = \{ \loc_2 \}$ and $\ET_{\GTNIpr}(\loc_2) = \{
	g_2 \}$ by \cref{ex:entry_transitions_and_locations}, where $g_2 = \{t_3\}$ and
	$\tbound(t_3) = 1$ (\cref{ex:complexity_bounds_nonprob}),
with the bound $\sbounde (g_2,\loc_2,y) = 6\cdot x^2 + 2 \cdot y$ from \cref{ex:probabilistic_bounds}, \cref{theorem:exptimeboundsmeth} yields
		\[
			\tbounde' (g_3) \,
	= \, \tbound(t_3) \cdot \exact{\overapprox{\rank'(\loc_2)}}{\sbounde(g_2,\loc_2,\cdot)}
	= 1 \cdot \sbounde(g_2,\loc_2,y) = 6\cdot x^2 + 2 \cdot y. \]
		So based on the expected size bounds of \cref{ex:probabilistic_bounds}, we have shown how to compute the expected runtime bounds of \cref{ex:probabilistic_bounds} automatically.
	}
\end{example}

Similar to \cite{koat}, our approach relies on combining bounds that one has computed earlier in order to derive new bounds.
Here, bounds may be combined \emph{linearly}, bounds may be \emph{multiplied}, and bounds may even be \emph{substituted} into other bounds.
But in contrast to \cite{koat}, sometimes one may combine \emph{expected} bounds that were computed earlier and sometimes it is only sound to combine \emph{non-probabilistic} bounds: If a new bound is computed by \emph{linear combinations} of earlier bounds, then it is sound to use the ``expected versions'' of these earlier bounds.
However, if two bounds are \emph{multiplied}, then it is in general not sound to use their ``expected versions''.
Thus, it would be \emph{unsound} to use the \emph{expected} runtime bounds $\tbounde(h)$ instead of the \emph{non-probabilistic} bounds $\sum_{t=(\_,\_,\_,\_,\loc) \in h} \tbound(t)$ on the entry transitions in \cref{theorem:exptimeboundsmeth} (\paper{a counterexample is given in \cite{arxivereport}}\arxiv{a counterexample is \cref{ex:entry_transition_unsound} in \cref{app:additional_examples}}).%
\footnote{
	An exception is the special case where $\rank(\loc)$ is \emph{constant}.
	Then, our implementa\-tion indeed uses the expected bound $\tbounde(h)$ instead of $\sum_{t=(\_,\_,\_,\_,\loc) \in h} \tbound(t)$\paper{ \cite{arxivereport}}\arxiv{, see \cref{Expected Time Bounds Refined}}.
}

In general, if bounds $b_1,\ldots,b_n$ are \emph{substituted} into another bound $b$, then
it is sound to use ``expected versions'' of the bounds $b_1,\ldots,b_n$ if $b$
is \emph{concave}, see, e.g., \cite{kallenberg_foundations_2002,dblp:journals/scp/avanzinily20,ecoimp}.
Since bounds from $\BOUND$ do not contain negative coefficients, we obtain that a finite\footnote{A bound is \emph{finite} if it does not contain $\infty$.
	We always simplify expressions and thus, a bound like $0 \cdot \infty$ is also finite, because it simplifies to $0$, as usual in measure theory.} bound $b \in \BOUND$ is concave iff it is a linear polynomial (see \paper{\cite{arxivereport}}\arxiv{\cref{lemma:concavity_implies_linearity} in the appendix}).

Thus, in \cref{theorem:exptimeboundsmeth} we may substitute \emph{expected}
size bounds $\sbounde (h,\loc,x)$ into $\overapprox{\rank(\loc)}$, since we restricted ourselves to \emph{linear} ranking functions $\rank$ and hence, $\overapprox{\rank(\loc)}$ is also linear.
Note that in contrast to \cite{dblp:journals/scp/avanzinily20}, where a notion of concavity was used to analyze probabilistic term rewriting, a multilinear expression like $x \cdot y$ is not concave when regarding both arguments simultaneously.
Hence, it is unsound to use such ranking functions in \cref{theorem:exptimeboundsmeth}.
See \paper{\cite{arxivereport}}\arxiv{\cref{ex:concavity_in_expected_size_bounds}} for a counterexample to show why substituting expected bounds into a non-linear bound is incorrect in general.

\section{Computing Expected Size Bounds}
\label{sec:Computing Size Bounds}
We first compute \emph{local} bounds for one application of a transition (\cref{sec:Expected Local Change Bound}).
To turn them into \emph{global} bounds, we encode the data flow of a PIP in a graph.
\cref{subsec:expected_size_bounds_method} then presents our technique to compute expected size bounds.
\layout{\vspace*{-.1cm}}
\subsection{Local Change Bounds and General Result Variable Graph}
\label{sec:Expected Local Change Bound}

We first compute a bound on the expected change of a variable during an update.
More precisely, for every general result variable $(g,\loc,x)$ we define a bound
$\chbounde (g,\loc,x)$ \paper{\layout{\mypagebreak}}
on the change of the variable $x$ that we can expect in one
execution of the general transition $g$ when reaching location $\loc$. 
So we consider all $t = (\_,p,\_,\eta,\loc)\in g$ and the expected difference between the current value of $x$ and its update $\eta(x)$.
However, for $\eta(x) \in \POLYBOUND$,
$\eta(x) - x$ is not necessarily from $\BOUND$ because it may contain negative coefficients.
Thus, we use the over-approximation $\overapprox{\eta(x)-x}$ (where we always simplify expressions before applying $\overapprox{\cdot}$, e.g., $\overapprox{x-x}
	= \overapprox{0} = 0$).
Moreover, $\overapprox{\eta(x)-x}$ may contain temporary variables.
Let $\maxTV_t:\VV \to \BOUND$ instantiate all temporary variables by the largest possible value they can have after evaluating the transition $t$.
Hence, we then use $\maxTV_t(\overapprox{\eta(x)-x})$ instead.
For $\maxTV_t$, we have to use the underlying \emph{non-probabilistic} size bound $\sbound$ for the program (since the scheduler determines the values of temporary variables by non-deterministic (\emph{non-probabilistic}) choice).
If $x$ is updated according to a bounded distribution function $d \in \DIST$, 
then as in \cref{sec:Preliminaries},
let  $\expvdist(d) \in \BOUND$  denote a finite bound
on $d$, i.e.,  $\expvsignabs(d(\state)) \leq \eval{\expvdist(d)}{\state}$ for all $\state \in \STATE$.
\renewcommand{\arraystretch}{0.9}

\begin{definition}[Expected Local Change Bound]
	\label{def:Bound for Local Expected Change}
	Let $\sbound$ be a size bound.
	Then $\chbounde \colon \GRV \to \BOUND$ with $\chbounde (g,\loc,x) = \sum\limits_{t=(\_,p,\_,\eta,\loc)\in g}
		p \cdot \ch_t (\eta (x),x)$,\layout{\vspace{-.3cm}}
	where
	\[
		\ch_t (\eta(x),x) = \left\{
		\begin{array}{l}
			\expvdist(d), \quad \quad \; \text{if $\eta(x)=d \in\DIST$} \\
			\maxTV_t(\overapprox{ \eta(x)- x}), \; \text{otherwise}
		\end{array}
		\right. \; \text{and} \; \maxTV_t(y) = \left\{
		\begin{array}{ll}
			\sbound(t,y), & \text{if } y \notin \PV \\
			y,            & \text{if } y \in \PV
		\end{array}
		\right.
	\]
\end{definition}
\renewcommand{\arraystretch}{1.0}

\begin{example}[$\chbounde$]
	\label{ex:comp_size_elsb}
	{\sl For the PIP of \cref{fig:simple_quadratic}, we have $\chbounde(g_0,\_,\_) = \chbounde(g_2,\_,\_) = \chbounde(g_3,\loc_2,x) = 0$, since the respective updates are identities.
		Moreover,
		\[
			\chbounde (g_1,\loc_1,x) \; = \; \tfrac{1}{2} \cdot \overapprox{(x - 1) - x}
			+ \tfrac{1}{2} \cdot \overapprox{x - x}
			\;=\; \tfrac{1}{2} \cdot 1 + \tfrac{1}{2} \cdot 0 \; =\; \tfrac{1}{2}.
		\]
		In a similar way, we obtain $\chbounde (g_1,\loc_1,y) = x$ and $\chbounde (g_3,\loc_2,y) = 1$.}
\end{example}

The following theorem shows that for any admissible configuration in a state $\state'$, $\chbounde(g,\loc,x)$ is an upper bound on the expected value of $|\state(x) - \state'(x)|$ if $\state$ is the next state obtained when applying $g$ in state $\state'$ to reach location $\loc$.

\begin{restatable}[Soundness of $\chbounde$]{theorem}{chboundebasic}
	\label{theorem:elsb_basic}
	For any	 $(g,\loc,x) \in \GRV$, scheduler $\scheduler$, $\initstate \in \STATE$, and admissible configuration $c' = (\_,\_,\state')$, we have
	\[
		\eval{\chbounde (g,\loc,x)}{\state'} \;\; \geq \;\; \sum\nolimits_{c = (\loc,t,\state) \in \CONF, \; t\in g} \;\; \ptransition{\pip}{\scheduler}{\initstate} (c' \to c)\cdot |\state (x) - \state' (x)|.
	\]
\end{restatable}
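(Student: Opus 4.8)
The plan is to unfold the definitions on both sides and show the inequality transition-by-transition. Fix $(g,\loc,x) \in \GRV$, a scheduler $\scheduler$, an initial state $\initstate$, and an admissible configuration $c' = (\_,\_,\state')$. On the right-hand side, the sum ranges over all $c = (\loc,t,\state) \in \CONF$ with $t \in g$, and $\ptransition{\pip}{\scheduler}{\initstate}(c' \to c)$ is nonzero only when the scheduler picks $g$ in $c'$ (i.e.\ $\scheduler(c') = (g,\state'')$ for some $\state''$ with $\state''(\tau_g) = \true$ and agreeing with $\state'$ on $\PV$), then a particular transition $t = (\loc_g,p,\tau_g,\eta,\loc) \in g$ is chosen with probability $p$, and finally the update $\eta$ is applied: for the polynomial components $\state(y) = \state''(\eta(y))$, and for a distribution component $\eta(y) = d$ the probability of the increment $v$ is $d(\state'')(v)$. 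So I would first reduce the right-hand side, via the formal definition of $\ptransition{\pip}{\scheduler}{\initstate}(c' \to c)$ from the appendix, to $\sum_{t = (\_,p,\_,\eta,\loc) \in g} p \cdot \left( \sum_{\text{increments } v} \Pr[v \mid \eta, \state''] \cdot |\text{new value of } x - \state'(x)| \right)$, where the inner sum collapses to a single term when $\eta(x)$ is a polynomial.

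Next I would bound each inner expression by $\eval{\ch_t(\eta(x),x)}{\state'}$, splitting on the two cases in \cref{def:Bound for Local Expected Change}. If $\eta(x) = d \in \DIST$, then the new value of $x$ is $\state''(x) + v = \state'(x) + v$ (since $x \in \PV$ and $\state''$ agrees with $\state'$ on $\PV$), so $|\text{new value} - \state'(x)| = |v|$, and the inner sum is exactly $\expvsignabs(d(\state''))$; since $\state''$ and $\state'$ agree on $\PV$ and distribution functions depend only on the state, $\expvsignabs(d(\state'')) = \expvsignabs(d(\state')) \le \eval{\expvdist(d)}{\state'} = \eval{\ch_t(\eta(x),x)}{\state'}$ by boundedness of $d$. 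If $\eta(x)$ is a polynomial, the inner sum is the single term $|\state''(\eta(x)) - \state'(x)| = |\state''(\eta(x) - x)|$; here I use $\eval{\overapprox{e}}{\state} \ge |\state(e)|$ (a mild strengthening of the inequality $\eval{\overapprox{e}}{\state} \ge \eval{e}{\state}$ noted in the excerpt, which follows since $\overapprox{\cdot}$ replaces coefficients by absolute values and evaluates at absolute values) to get $|\state''(\eta(x) - x)| \le \eval{\overapprox{\eta(x)-x}}{\abs{\state''}}$, and then observe that $\overapprox{\eta(x)-x}$ is weakly monotonic, that $\state''$ agrees with $\state'$ on $\PV$, and that every temporary variable $y$ occurring in $\overapprox{\eta(x)-x}$ satisfies $|\state''(y)| \le \eval{\sbound(t,y)}{\initstate}$ because the path ending in $c'$ followed by the step to $c$ is admissible and $\sbound$ is a non-probabilistic size bound (here I would want to relate $\initstate$ to $\state'$; in the intended use $\state' = \abs{\state'}$-style bookkeeping is handled because the whole inequality will later be instantiated so that the outer quantification over $\initstate$ supplies the right reference state — more precisely, I treat $\state'$ itself as the relevant input, so $\sbound(t,y)$ evaluated appropriately dominates $|\state''(y)|$). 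Substituting $\maxTV_t$ into $\overapprox{\eta(x)-x}$ and using weak monotonicity then yields $|\state''(\eta(x)-x)| \le \eval{\maxTV_t(\overapprox{\eta(x)-x})}{\state'} = \eval{\ch_t(\eta(x),x)}{\state'}$.

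Finally, summing the per-transition bounds weighted by $p$ gives $\text{RHS} \le \sum_{t = (\_,p,\_,\eta,\loc) \in g} p \cdot \eval{\ch_t(\eta(x),x)}{\state'} = \eval{\chbounde(g,\loc,x)}{\state'}$, which is the claim; the case where $\scheduler$ does not select $g$ at $c'$ is trivial since then the right-hand side is $0$.

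The main obstacle I anticipate is the careful handling of temporary variables in the polynomial case: one must argue that the values the scheduler assigns to temporary variables in the step $c' \to c$ are indeed bounded by the non-probabilistic size bound $\sbound(t,\cdot)$ — this requires invoking admissibility of the extended path and the definition of $\sbound$, and being precise about which reference state the size bound is evaluated at. Getting the monotonicity argument to compose correctly (instantiate temporary variables by their $\maxTV_t$-bounds, keep program variables, then evaluate at $\state'$) is where the bookkeeping is most delicate; everything else is routine unfolding of the cylinder-construction definition of $\ptransition{\pip}{\scheduler}{\initstate}(c' \to c)$ and linearity of expectation.
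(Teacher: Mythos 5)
Your overall route is the same as the paper's proof: unfold $\ptransition{\pip}{\scheduler}{\initstate}(c'\to c)$ via its definition, split per transition $t=(\_,p,\_,\eta,\loc)\in g$ and per update type, bound each per-transition contribution by $\ch_t(\eta(x),x)$, and sum with the weights $p$. The distribution case and the final summation match the paper, up to one small repair: you should not assert $\expvsignabs(d(\state''))=\expvsignabs(d(\state'))$, since a distribution function may depend on the whole state; instead apply the boundedness condition at the scheduler's state $\state''$ to get $\expvsignabs(d(\state''))\leq\eval{\expvdist(d)}{\state''}$ and then use that $\expvdist(d)\in\BOUND$ mentions only program variables, on which $\state''$ and $\state'$ agree. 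This is exactly how the paper's proof passes from the scheduler's state back to $\state'$.

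The step that does not close as you wrote it is the temporary-variable bound in the polynomial case. From admissibility of the extended path and \cref{def:time_bounds} you correctly obtain $|\state''(y)|\leq\eval{\sbound(t,y)}{\initstate}$, a bound relative to the \emph{initial} state; but what you need is a bound that lets you evaluate the substituted $\sbound(t,y)$ at $\state'$, because the theorem's left-hand side is $\eval{\chbounde(g,\loc,x)}{\state'}$. There is no ``outer quantification over $\initstate$'' to appeal to: both $\initstate$ and $\state'$ are fixed in the statement, and $\state'$ is in general not the initial state of any admissible path, so \cref{def:time_bounds} cannot simply be re-instantiated with $\state'$ in place of $\initstate$. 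The paper's proof does not take this detour at all: it never invokes \cref{def:time_bounds} inside this proof, but passes directly from $\sum_t p\cdot\eval{\overapprox{\eta(x)-x}}{\state'_{\scheduler}}$ to $\eval{\chbounde(g,\loc,x)}{\state'_{\scheduler}}$, i.e.\ it treats the substitution $\maxTV_t$ (replacing each temporary variable by $\sbound(t,\cdot)$, read as the largest value it can take after evaluating $t$, cf.\ \cref{def:Bound for Local Expected Change}) as an over-approximation of the scheduler's choice evaluated in the \emph{current} state, and then concludes $\eval{\chbounde(g,\loc,x)}{\state'_{\scheduler}}=\eval{\chbounde(g,\loc,x)}{\state'}$ because bounds contain no temporary variables, so only the values on $\PV$ matter, on which $\state'_{\scheduler}$ and $\state'$ coincide. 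So to align with the paper you should drop the appeal to $\initstate$ and justify the instantiation of the temporary variables relative to the current state; as written, that bridge is the one genuine gap in your argument.
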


To obtain \emph{global} bounds from the local bounds $\chbounde (g, \loc, x)$, we construct a \emph{general result variable graph} which encodes the data flow between variables.
Let $\pre(g) = \ET_{\emptyset}(\loc_g)$ be the the set of \emph{pre-transitions} of $g$ which lead into $g$'s start location $\loc_g$.
Moreover, for $\alpha = (g,\loc,x) \in \GRV$ let its \emph{active variables} $\ACTV(\alpha)$ consist of all variables occurring in the bound $x + \chbounde (\alpha)$ for $\alpha$'s expected size.

\begin{definition}[General Result Variable Graph]
	\label{def:general_result_variable_graph}
	The \emph{general result variable graph} has the set of nodes $\GRV$ and the set of \underline{e}dges $\mathcal{GRV\underline{E}}$, where
	\[
		\mathcal{GRVE} = \{\, ((g',\loc',x'), \; (g,\loc,x)) \mid g'\in\pre (g) \land \loc' = \loc_g \land x'\in\ACTV (g,\loc,x) \,\}. \]

\end{definition}
\begin{example}[General Result Variable Graph]
	\label{grvg}
	\sl
	\nolayout{
		\nolayout{\begin{figure}}
\centering

  \begin{tikzpicture}[shorten >=1pt,node distance=\layout{0.95}\nolayout{1}cm,auto,main node/.style={rectangle,draw,font=\sffamily\small\bfseries},every
    loop/.style={max distance=3.5mm}]
    \node[main node] (g0l1x) {$(g_0,\loc_1,x)$};
    \node[main node] (g1l1x) [below of=g0l1x,xshift=1cm] {$(g_1,\loc_1,x)$};
    \node[main node] (g2l2x) [below left = 0.5cm of g1l1x,xshift=0.75cm] {$(g_2,\loc_2,x)$};
    \node[main node] (g3l2x) [below of = g2l2x] {$(g_3,\loc_2,x)$};
    \node[main node] (g1l1y) [below right = 0.5cm of g1l1x,xshift=-0.75cm] {$(g_1,\loc_1,y)$};
    \node[main node] (g2l2y) [below of = g1l1y,xshift=0.2cm] {$(g_2,\loc_2,y)$};
    \node[main node] (g3l2y) [below of = g2l2y] {$(g_3,\loc_2,y)$};
    \node[main node] (g0l1y) [right of = g0l1x,xshift=1.3cm] {$(g_0,\loc_1,y)$};

    \path [->]
      (g0l1x) edge (g1l1x)
      (g1l1x) edge (g2l2x)
      (g2l2x) edge (g3l2x)
      (g3l2x) edge [loop right] ()
      (g1l1x) edge (g1l1y)
      (g1l1y) edge (g2l2y)
      (g1l1y) edge [loop left] ()
      (g2l2y) edge (g3l2y)
      (g1l1x) edge [loop left] ()
      (g0l1x) edge [bend left=40] (g1l1y)
      (g0l1y) edge [bend right=10] (g1l1y)
      (g0l1y) edge [bend left=50] (g2l2y)
      (g0l1x) edge [bend right=35] (g2l2x)
      (g3l2y) edge [loop left] ();
  \end{tikzpicture}
 \nolayout{\caption{General result variable graph for the PIP in \cref{fig:simple_quadratic}}
   \label{fig:grvg_simple_quadratic}}
  \nolayout{\end{figure}}

	}
	\noindent
	The general result variable graph for the PIP of \cref{fig:simple_quadratic} is shown \layout{below}\nolayout{in \cref{fig:grvg_simple_quadratic}}.
	For $\chbounde$ from \cref{ex:comp_size_elsb}, we have $\ACTV (g_1,\loc_1,x)
	= \{x\}$, as $x + \chbounde(\alpha) = x +\tfrac{1}{2}$ contains
        \paper{\layout{\mypagebreak}} no variable except $x$.
	\layout{
		\InsertBoxR{0}{
			
		}
	}
	Similarly, $\ACTV (g_1,\loc_1,y) = \{x,y\}$, as $x$ and $y$ are contained in $y+ \chbounde(g_1,\loc_1,y) = y + x$.
	For all other $\alpha \in \GRV$, we have $\ACTV (\_,\_,x) = \{x\}$ and $\ACTV (\_,\_,y) = \{y\}$.
	As $\pre (g_1) = \{g_0,g_1\}$, the graph captures the dependence of $(g_1,\loc_1,x)$ on $(g_0,\loc_1,x)$ and $(g_1,\loc_1,x)$, and of $(g_1,\loc_1,y)$ on $(g_0,\loc_1,x)$, $(g_0,\loc_1,y)$, $(g_1,\loc_1,x)$, and $(g_1,\loc_1,y)$.
	The other edges are obtained in a similar way.
\end{example}

\subsection{Inferring Expected Size Bounds}
\label{subsec:expected_size_bounds_method}
We now compute global expected size bounds for the general result variables by considering the SCCs of the general result variable graph separately.
As usual, an SCC is a maximal subgraph with a path from each node to every other node.
An SCC is \emph{trivial} if it consists of a single node without an edge to itself.
We first handle trivial SCCs in \cref{subsubsec:Inferring Expected Size Bounds for Trivial SCCs} and consider non-trivial SCCs in \cref{subsubsec:Inferring Expected Size Bounds for Non-Trivial SCCs}.

\subsubsection{Inferring Expected Size Bounds for Trivial SCCs}
\label{subsubsec:Inferring Expected Size Bounds for Trivial SCCs}

By \cref{theorem:elsb_basic}, $x + \chbounde(g,\loc,x)$ is a \emph{local} bound on the expected value of $x$ after applying $g$ once in order to enter $\loc$.
However, this bound is formulated in terms of the values of the variables immediately before applying $g$.
We now want to compute \emph{global} bounds in terms of the \emph{initial} values of the variables at the start of the program.

If $g$ is \emph{initial} (i.e., $g \in \GTINIT$ since $g$ starts in the initial location $\loc_0$), then $x + \chbounde(g,\loc,x)$ is already a global bound,
as the values of the variables before the application of $g$ are the initial values of the variables at the program start.

Otherwise, the variables $y$ occurring in the local bound $x + \chbounde(g,\loc,x)$ have to be replaced by the values that they can take in a full program run before applying the transition $g$.
Thus, we have to consider all transitions $h \in \pre(g)$ and instantiate every variable $y$ by the maximum of the values that $y$ can have after applying $h$.
Here, we again over-approximate the maximum by the sum.

If $\chbounde(g,\loc,x)$ is \emph{concave} (i.e., a \emph{linear} polynomial), then we can instantiate its variables by \emph{expected} size bounds $\sbounde (h,\loc_g,y)$.
However, this is unsound if $\chbounde(g,\loc,x)$ is not linear, i.e., not concave (see \paper{\cite{arxivereport}}\arxiv{\cref{ex:concavity_in_expected_size_bounds}} for a counterexample).
So in this case, we have to use \emph{non-probabilistic}
bounds $\sbound (t,y)$ instead.

As in \cref{Inferring Expected Time Bounds}, we use an underlying non-probabilistic bound pair $(\tbound,\sbound)$ and start with the expected pair $(\tbounde,\sbounde)$ obtained by lifting $(\tbound,\sbound)$ according to \cref{theorem:liftingofbounds}.
While \cref{theorem:exptimeboundsmeth} improves $\tbounde$, we now improve $\sbounde$.
Here, the SCCs of the general result variable graph should be treated in topological order, since then one may first improve $\sbounde$ for result variables corresponding to $\pre(g)$, and use that when improving $\sbounde$ for result variables of the form $(g, \_, \_)$.

\begin{restatable}[Expected Size Bounds for Trivial SCCs]{theorem}{expectedtrivialsizeboundsmethod}
	\label{thm:Inferring Expected Size Bounds for Trivial SCCs}
	Let $\sbounde$ be an expected size bound, $\sbound$ a (non-probabilistic) size bound, and let $\alpha = (g,\loc,x)$ form a trivial SCC of the general result variable graph.
	Let $\incsize^\alpha_{\expvsign}$ and $\incsize^\alpha$ be mappings from $\PV \to \BOUND$ with $\incsize^\alpha_{\expvsign}(y) = \sum\nolimits_{h \in \pre (g)} \sbounde (h,\loc_g,y)$ and $\incsize^\alpha(y) = \sum\nolimits_{h \in \pre (g), \; t = (\_,\_,\_,\_,\loc_g) \in h} \sbound (t,y)$.
	\paper{\layout{\mypagebreak}}
	Then $\sbounde' : \GRV \to \BOUND$ is an expected size bound, where $\sbounde'(\beta) = \sbounde (\beta)$ for $\beta \neq \alpha$ and \reducedcenter{$ \sbounde' (\alpha) =
			\begin{cases}
				x + \chbounde (\alpha),                                               & \text{if $g\in\GTINIT$}                                         \\
				\incsize^\alpha_{\expvsign}(x + \chbounde (\alpha)),                  & \text{if $g\not\in\GTINIT$, $\chbounde (\alpha)$ is linear}     \\
				\incsize^\alpha_{\expvsign}(x) + \incsize^\alpha(\chbounde (\alpha)), & \text{if $g\not\in\GTINIT$, $\chbounde (\alpha)$ is not linear}
			\end{cases}
		$}
\end{restatable}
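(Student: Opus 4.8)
The plan is to fix a scheduler $\scheduler$ and an initial state $\initstate \in \STATE$ and show $\eval{\sbounde'(\alpha)}{\initstate} \geq \expv{\pip}{\scheduler}{\initstate}(\sizervar(\alpha))$ for $\alpha = (g,\loc,x)$. By assumption $\alpha$ forms a trivial SCC, so the general result variable graph has no edge from $\alpha$ to itself. Looking at \cref{def:general_result_variable_graph}, an edge $((g',\loc_g,x'),\alpha)$ exists exactly when $g' \in \pre(g)$ and $x' \in \ACTV(\alpha)$; the absence of a self-loop means $g \notin \pre(g)$ (i.e., $g$ is not its own pre-transition — the start location $\loc_g$ is not a target location of any transition in $g$), or else $\ACTV(\alpha) = \emptyset$. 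The key consequence is that whenever a run enters $\loc$ via $g$, the state immediately before applying $g$ was reached by some $h \in \pre(g)$ that is different from $g$, so the size bound $\sbounde(h,\loc_g,y)$ (resp.\ $\sbound(t,y)$ for $t \in h$ with target $\loc_g$) legitimately bounds the value of $y$ at that point, expressed in terms of $\initstate$.

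The first main step is to reduce $\sizervar(\alpha)$ to a one-step quantity. Since $\alpha$ is a trivial SCC, along any admissible run the configurations of the form $(\loc, t, \state_i)$ with $t \in g$ are all reached by a "fresh" entry; more precisely I would argue that $\sizervar(g,\loc,x)(\run) = \sup\{|\state_i(x)|\}$ over such indices $i$, and for each such index the predecessor configuration $c_{i-1} = (\_,\_,\state_{i-1})$ lies at location $\loc_g$ and was entered via some $h \in \pre(g)$. Then, conditioning on the predecessor configuration and using that expectation of a supremum is at most the sum of expectations (a standard union/monotonicity argument over $\CONF$, exactly as in the proofs of \cref{theorem:liftingofbounds,theorem:elsb_basic}), I would get
\[
  \expv{\pip}{\scheduler}{\initstate}(\sizervar(\alpha)) \;\leq\; \sum_{c' = (\loc_g,\_,\state')} \ptransition{\pip}{\scheduler}{\initstate}(\text{reach } c') \cdot \Big( \sup_{c=(\loc,t,\state)\to, t\in g} \ptransition{\pip}{\scheduler}{\initstate}(c'\to c)\cdot|\state(x)| \Big),
\]
and bound $|\state(x)| \leq |\state'(x)| + |\state(x)-\state'(x)|$ so that the inner expected value is at most $\eval{x + \chbounde(\alpha)}{|\state'|}$ by \cref{theorem:elsb_basic}. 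This handles the case $g \in \GTINIT$ directly, since then $\state' = \initstate$ (up to absolute values) and $x + \chbounde(\alpha)$ is already weakly monotonic, so $\eval{x+\chbounde(\alpha)}{|\initstate|} = \eval{x+\chbounde(\alpha)}{\initstate}$ as a bound in $\BOUND$.

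The second main step, for $g \notin \GTINIT$, is to replace each variable $y$ in the local bound $x + \chbounde(\alpha)$ by a global over-approximation of its value at $c'$. Collecting over all $h \in \pre(g)$ and over-approximating the max by the sum, $|\state'(y)|$ is bounded by $\incsize^\alpha_{\expvsign}(y)$ in the linear case (using \emph{expected} size bounds, justified because a linear, hence concave, bound composed with expected-value over-approximations is still a valid over-approximation — this is the concavity/Jensen argument via \cite[Lemma 3.5]{kallenberg_foundations_2002} already invoked for \cref{theorem:exptimeboundsmeth}) and by $\incsize^\alpha(y)$ in the general case (using \emph{non-probabilistic} bounds). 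Weak monotonic increase of $\overapprox{\cdot}$-style bounds then lets me substitute: $\eval{x+\chbounde(\alpha)}{|\state'|} \leq \eval{\incsize^\alpha_{\expvsign}(x+\chbounde(\alpha))}{\initstate}$ in the linear case, and $\leq \eval{\incsize^\alpha_{\expvsign}(x) + \incsize^\alpha(\chbounde(\alpha))}{\initstate}$ in the non-linear case, because there we separate the $x$-term (which is linear, so expected bounds are fine) from $\chbounde(\alpha)$ (which must get non-probabilistic bounds). Combining the two steps and summing the $\ptransition{\pip}{\scheduler}{\initstate}$ weights (which total at most $1$) gives the claim.

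The main obstacle I expect is the measure-theoretic bookkeeping in the first step: justifying that the expected supremum over the (random, possibly unboundedly many) entry points of $\alpha$ into $\loc$ can be bounded by a single conditional one-step expectation. The trivial-SCC hypothesis is precisely what makes this work — without a self-loop, distinct visits to $\alpha$ come from genuinely "earlier" result variables whose sizes are already globally bounded, so there is no circular dependency to untangle — but turning this intuition into a rigorous argument requires carefully setting up the conditioning on the predecessor configuration and appealing to the tower property together with the monotone convergence / union bound for the $\sup$, much as in the proof of \cref{theorem:liftingofbounds}. The concavity argument for substituting expected bounds into a linear $\chbounde(\alpha)$ is the second delicate point, but it is essentially a citation of \cite[Lemma 3.5]{kallenberg_foundations_2002} plus the observation that bounds in $\BOUND$ are monotone.
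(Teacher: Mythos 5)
Your overall plan does mirror the paper's: handle $g\in\GTINIT$ by applying \cref{theorem:elsb_basic} to the initial configuration, and for $g\notin\GTINIT$ bound the predecessor state by the (summed) size bounds of the pre-transitions, using Jensen/concavity to substitute \emph{expected} bounds when $\chbounde(\alpha)$ is linear and falling back to the non-probabilistic $\incsize^\alpha$ otherwise. However, there is a genuine gap in your first step. You explicitly plan for ``possibly unboundedly many entry points'' of $\alpha$ and propose to bound the expected supremum by a sum of per-visit expectations (a union bound over visits). That cannot yield the stated bound: if $g$ could enter $\loc$ several times in a run, summing the per-visit contributions would introduce a factor counting the number of visits, i.e.\ you would end up with a bound of the shape $\tbounde(g)\cdot\condeff{\alpha}$-style as in \cref{theorem:expectednontrivialsizeboundsmeth}, not the bound $\incsize^\alpha_{\expvsign}(x+\chbounde(\alpha))$ claimed here, which contains no runtime factor. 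The crucial missing fact is that the trivial-SCC hypothesis forces $g$ to enter $\loc$ \emph{at most once} in any admissible run: since $x\in\ACTV(\beta)$ for every $\beta=(\_,\_,x)$ (the bound $x+\chbounde(\beta)$ always contains $x$), a run that enters $\loc$ via $g$ twice induces, via the intermediate general transitions leading back to $\loc_g$, a cycle through $\alpha$ in the general result variable graph (\cref{def:general_result_variable_graph}), contradicting that $\alpha$ forms a trivial SCC. With this at-most-once property, $\sizervar(\alpha)$ is determined by a single application of $g$, the decomposition over finite paths ending immediately before that application is exact, and the expected value re-assembles into $\expv{\pip}{\scheduler}{\initstate}\bigl(\sizervar_{in}^\alpha(x)+\eval{\chbounde(\alpha)}{\sizervar_{in}^\alpha(\cdot)}\bigr)$ with $\sizervar_{in}^\alpha(y)=\sum_{h\in\pre(g)}\sizervar(h,\loc_g,y)$, after which Jensen (linear case) or the pathwise non-probabilistic bound (general case) finishes the proof.

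A related inaccuracy: you extract from ``trivial SCC'' only that $g\notin\pre(g)$ ``or else $\ACTV(\alpha)=\emptyset$''. First, $\ACTV(\alpha)$ is never empty (it always contains $x$), which is exactly what the paper exploits. Second, excluding an immediate self-loop of $g$ is not enough: a second entry of $\loc$ via $g$ could occur through a longer cycle of other general transitions, and it is the absence of \emph{any} cycle through $\alpha$ in the graph (i.e.\ the trivial-SCC assumption together with $x\in\ACTV$) that rules this out. Once you add this argument, the remaining steps of your sketch (including the concavity justification and the separation of the $x$-term from $\chbounde(\alpha)$ in the non-linear case) align with the paper's proof.
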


\begin{example}[$\sbounde$ for Trivial SCCs]
	\label{ex:Expected Size Bounds for Trivial SCCs}
	{\sl	The 
        general result variable graph in \layout{\cref{grvg}}
	\nolayout{\cref{fig:grvg_simple_quadratic}}
	contains
4 trivial SCCs formed by $\alpha_x = (g_0,\loc_1,x)$, $\alpha_y = (g_0,\loc_1,y)$, $\beta_x = (g_2,\loc_2,x)$, and $\beta_y = (g_2,\loc_2,y)$.
	For all these general result variables, the expected local change bound
	$\chbounde$ is $0$ (see \cref{ex:comp_size_elsb}). Thus, it is linear.
	Since $g_0\in \GTINIT$, \cref{thm:Inferring Expected Size Bounds for Trivial SCCs}
	yields $\sbounde' (\alpha_x) = x + \chbounde (\alpha_x) = x$ and $\sbounde' (\alpha_y) = y + \chbounde (\alpha_y) = y$.

	By treating SCCs in topological order, when handling $\beta_x$, $\beta_y$, we can assume that we already have $\sbounde (\alpha_x) = x$, $\sbounde (\alpha_y) = y$ and $\sbounde (g_1,\loc_1,x) = 2 \cdot x$, $\sbounde(g_1,\loc_1,y) = 6\cdot x^2 + y$ (see \cref{ex:probabilistic_bounds}) for the result variables corresponding to $\pre(g_2) = \{ g_0, g_1 \}$.
	We will explain in \cref{subsubsec:Inferring Expected Size Bounds for Non-Trivial SCCs} how to compute such expected size bounds for non-trivial SCCs.
	Hence, by \cref{thm:Inferring Expected Size Bounds for Trivial SCCs} we obtain $\sbounde' (\beta_x) = \incsize^{\beta_x}_{\expvsign} (x + \chbounde(\beta_x)) = \sbounde (\alpha_x) + \sbounde (g_1,\loc_1,x) = 3 \cdot x$ and $\sbounde' (\beta_y) = \incsize^{\beta_y}_{\expvsign} (y + \chbounde(\beta_y)) = \sbounde (\alpha_y) + \sbounde (g_1,\loc_1,y) = 6\cdot x^2 + 2 \cdot y$.
	}
\end{example}

\subsubsection{Inferring Expected Size Bounds for Non-Trivial SCCs}
\label{subsubsec:Inferring Expected Size Bounds for Non-Trivial SCCs}
Now we handle non-trivial SCCs $C$ of the general result variable graph.
An upper bound for the expected size of a variable $x$ when entering $C$ is obtained from $\sbounde (\beta)$ for all general result variables $\beta = (\_,\_,x)$ which have an edge to $C$.

To turn $\chbounde(g,\loc,x)$ into a global bound, as in \cref{thm:Inferring Expected Size Bounds for Trivial SCCs} its variables $y$ have to be instantiated by the values $\incsize^{(g,\loc,x)}(y)$ that they can take in a full program run before applying a transition from $g$.
Thus, $\incsize^{(g,\loc,x)}(\chbounde(g,\loc,x))$ is a global bound on the expected change resulting from \emph{one} application of $g$.
To obtain an upper bound for the whole SCC $C$, we add up these global bounds for all $(g,\_,x) \in C$ and take into account how often the general transitions in the SCC are expected to be executed, i.e., we multiply with their expected runtime bound $\tbounde(g)$.
So while in \cref{theorem:exptimeboundsmeth}
we improve $\tbounde$ using expected size bounds for previous transitions, we now improve $\sbounde(C)$ using expected runtime bounds for the transitions in $C$ and expected size bounds for previous transitions.

\begin{restatable}[Expected Size Bounds for Non-Trivial SCCs]{theorem}{expectednontrivialsizeboundsmethod}
	\label{theorem:expectednontrivialsizeboundsmeth}
	Let $(\tbounde,\allowbreak \sbounde)$ be an expected bound pair, $(\tbound,\sbound)$ a (non-probabilistic) bound pair, and let $C \subseteq \GRV$ form a non-trivial SCC of the general result variable graph where $\GT_C = \{g \in \GT \mid (g,\_,\_) \in C\}$.
	Then $\sbounde'$ is an expected size bound:
        		\reducedcenter{$ \sbounde' (\alpha) =
			\begin{cases}
				\sum_{(\beta,\alpha)\in \mathcal{GRVE}, \; \beta \notin C, \; \alpha \in C, \; \beta = (\_,\_,x)} \; \sbounde (\beta) \quad +                       \\
				\;\;\; \sum\limits_{g \in \GT_C}\tbounde (g) \cdot (\sum\limits_{\alpha' = (g,\_,x) \in C} \eff{\alpha'}), & \!\!\text{if $\alpha=(\_,\_,x) \in C$} \\
				\sbounde (\alpha),                                                                                         & \!\!\text{otherwise}
			\end{cases}
		$}
\end{restatable}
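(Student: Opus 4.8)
The plan is to bound, for a fixed scheduler $\scheduler$ and initial state $\initstate$, the expected value $\expv{\pip}{\scheduler}{\initstate}(\sizervar(\alpha))$ for each $\alpha = (g,\loc,x)\in C$. Since $\sizervar(g,\loc,x)(\run)$ is the supremum of $|\state_i(x)|$ over all positions $i$ where location $\loc$ is entered via a transition $t_i\in g$, I would over-approximate this supremum by the sum over all such positions of the absolute increments, plus the value present when the SCC $C$ is first entered. Concretely, along a run $\run$, whenever we take a step $c'\to c$ with $c=(\loc,t,\state)$ and $t\in g$ for some $(g,\_,x)\in C$, the value $|\state(x)|$ differs from the value of $x$ in the previous configuration of $\run$ by at most $|\state(x)-\state'(x)|$ where $\state'$ is the state of $c'$. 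Chaining these, $\sizervar(\alpha)(\run)$ is bounded by the value of $x$ at the last configuration \emph{before} the run enters $C$ (i.e. via some edge from a $\beta=(\_,\_,x)\notin C$ into $C$) plus $\sum_{g\in\GT_C}\sum_{(g,\_,x)\in C}$ (number of times $g$ is applied entering some location of $C$) times the per-step change.

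The key steps, in order: (1) Set up the pathwise inequality $\sizervar(\alpha)(\run) \le V_{\mathrm{entry}}(\run) + \sum_{g\in\GT_C}\sum_{\alpha'=(g,\_,x)\in C}\sum_{i:\,t_i\in g,\ \loc_i \text{ a loc of }C}|\state_i(x)-\state_{i-1}(x)|$, where $V_{\mathrm{entry}}(\run)$ is the absolute value of $x$ at the configuration from which $C$ is entered for the first time (and $0$ if $C$ is never entered). (2) Take $\expv{\pip}{\scheduler}{\initstate}$ of both sides and use linearity and monotonicity of expectation. (3) For the entry term, observe that the configuration entering $C$ arises by applying some $\beta=(\_,\_,x)$ with an edge into $C$; hence its expected absolute value of $x$ is bounded by $\sizervar(\beta)(\run)$'s expectation, which is $\le \eval{\sbounde(\beta)}{\initstate}$ by hypothesis on $\sbounde$; summing over the finitely many such $\beta$ gives the first summand of $\sbounde'(\alpha)$. (4) For the change term, I would first condition on the prefix $f$ ending in the configuration $c'=(\_,\_,\state')$ just before the step that applies $g$ to reach a location of $C$, use \cref{theorem:elsb_basic} to bound $\sum_{c}\ptransition{\pip}{\scheduler}{\initstate}(c'\to c)\cdot|\state(x)-\state'(x)|\le\eval{\chbounde(g,\loc,x)}{\state'}$, then replace $\eval{\chbounde(g,\loc,x)}{\state'}$ by the global bound $\eval{\incsize^{(g,\loc,x)}(\chbounde(g,\loc,x))}{\initstate}=\eval{\eff{(g,\loc,x)}}{\initstate}$. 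The latter replacement uses that the variables $y$ occurring in $\chbounde$ at state $\state'$ are bounded (in absolute value, in the non-probabilistic sense) by $\incsize^{(g,\loc,x)}(y)$ evaluated at $\initstate$, together with weak monotonicity of bounds in $\BOUND$ — this is why $\incsize^{\alpha}$ uses the \emph{non-probabilistic} $\sbound$, since we are bounding an actual reachable value $\state'(y)$ along a path, not an expectation. (5) Finally, sum the conditional bounds over all positions $i$ where $g$ enters $C$: the number of such positions in $\run$ is at most $\timervar(g)(\run)$, whose expectation is $\le\eval{\tbounde(g)}{\initstate}$; pulling the state-independent global bound $\eff{(g,\loc,x)}$ out of the sum yields $\eval{\tbounde(g)}{\initstate}\cdot\sum_{(g,\_,x)\in C}\eval{\eff{(g,\_,x)}}{\initstate}$, and summing over $g\in\GT_C$ gives the second summand of $\sbounde'(\alpha)$.

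The main obstacle I expect is step (5) — the interchange of summation and expectation when the number of relevant positions is itself a random variable. The clean way is to write $\sum_{i}|\state_i(x)-\state_{i-1}(x)|\cdot\ind_{[\text{step }i\text{ applies }g\text{ entering }C]}$ as a countable sum of non-negative measurable functions of $\run$, apply the monotone convergence theorem / Tonelli to swap $\expv{\pip}{\scheduler}{\initstate}$ with $\sum_i$, and then bound each term $\expv{\pip}{\scheduler}{\initstate}\!\big(|\state_i(x)-\state_{i-1}(x)|\cdot\ind_{[\cdots]}\big)$ by conditioning on the length-$i$ prefix and invoking \cref{theorem:elsb_basic}. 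A subtlety is that $\chbounde$ may be non-linear in the variables $y$, so when we pass from $\eval{\chbounde(g,\loc,x)}{\state'}$ to its global over-approximation we must only ever substitute the non-probabilistic size bounds $\sbound(t,y)$ — consistent with the definition of $\incsize^\alpha$ — and we may \emph{not} use expected size bounds there; this is exactly the concavity issue flagged after \cref{theorem:exptimeboundsmeth}, and it is the reason $\sbounde'(\alpha)$ is phrased with $\incsize^\alpha$ inside $\eff{\cdot}$ rather than an "expected" variant. A further routine check is that $V_{\mathrm{entry}}$ and all the indicator-weighted increments are genuinely measurable and that the total bound is finite whenever the ingredient bounds are, so that the expectation manipulations are justified; these follow from the standard cylinder construction and the fact that everything in sight is a non-negative random variable.
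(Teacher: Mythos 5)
Your proposal follows the same skeleton as the paper's proof in all essential ingredients: the split into an ``entry'' value bounded by $\sum \sbounde(\beta)$ over general result variables $\beta=(\_,\_,x)\notin C$ with edges into $C$, plus per-visit expected changes bounded locally via \cref{theorem:elsb_basic} and globalized with the \emph{non-probabilistic} $\incsize^{\alpha'}$, so that $\eval{\eff{\alpha'}}{\initstate}$ is a deterministic constant which may legitimately be multiplied with the expected number of visits $\expv{\pip}{\scheduler}{\initstate}(\timervar(g))\leq\eval{\tbounde(g)}{\initstate}$ (you give the right reason for why $\incsize^{\alpha'}$ rather than its expected version must be used). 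The difference is that the paper packages the telescoping into a stochastic process $S(x)_n$ adapted to a filtration built from the sets $\PrefC{c_0f}{C}$, and uses conditional expectations (\cref{lem:cond_exp_diff}) plus monotone convergence, while you propose a direct pathwise sum with a Tonelli swap; that difference is largely cosmetic.

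There is, however, one genuine gap: your step (1) is not valid as stated. The pathwise inequality $\sizervar(\alpha)(\run)\leq V_{\mathrm{entry}}(\run)+\sum(\text{counted increments})$ fails for runs that enter the configurations belonging to $C$, leave them, change $x$ via steps whose general result variables are \emph{not} in $C$ (these increments are not in your sum), and then re-enter $C$ --- the chaining then misses exactly those changes. The inequality only holds for admissible runs, and the reason is a non-trivial structural fact that the paper proves explicitly (in the discussion preceding its filtration construction and, via a contradiction argument on the SCC structure, in the case $n=0$ of \cref{lem:cond_exp_diff}): in an admissible run the configurations $(\loc,t,\_)$ with $t\in g$ and $(g,\loc,\_)\in C$ form a single contiguous infix, i.e.\ once such configurations are left they can never be reached again, because otherwise the intermediate general result variables would lie on a cycle through $C$ in the general result variable graph (cf.\ \cref{def:general_result_variable_graph}) and would hence belong to $C$ themselves. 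Your argument uses the SCC hypothesis nowhere else, so this is precisely the point where it must enter; without stating it (and without discarding the null set of non-admissible runs before invoking (1)), the telescoping has a hole. A related small check you should make explicit once contiguity is in place: every configuration in that contiguous block also has $(g,\loc,x)\in C$ for the \emph{fixed} variable $x$ (re-trace any witnessing cycle in the graph with variable $x$, which is always active), so each counted step is indeed covered by one of the terms $\eff{\alpha'}$ with $\alpha'=(g,\_,x)\in C$; the paper handles this implicitly through its definition of $\CONF_{C}$ and the sum over $\alpha=(g_{n+1},\_,x)\in C$.
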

\paper{\layout{\vspace*{-.1cm}}}
Here we really have to use the \emph{non-probabilistic} size bound $\incsize^{\alpha'}$ instead of $\incsize^{\alpha'}_{\mathbb{E}}$, even if $\chbounde(\alpha')$ is linear, i.e., concave.
Otherwise we would multiply the expected values of two random variables which are not independent.

\begin{example}[$\sbounde$ for Non-Trivial SCCs]
	{\sl The
	\paper{\layout{\mypagebreak}}
        general result variable graph in \layout{\cref{grvg}}\nolayout{\cref{fig:grvg_simple_quadratic}}
		contains 
                4 non-trivial SCCs formed by $\alpha_x' = (g_1,\loc_1,x)$, $\alpha_y' = (g_1,\loc_1,y)$, $\beta_x' = (g_3,\loc_2,x)$, and $\beta_y' = (g_3,\loc_2,y)$.
		By the results on $\sbounde$, $\tbounde$, $\chbounde$, and $\sbound$ from \cref{ex:Expected Size Bounds for Trivial SCCs}, \ref{ex:time_bounds}, \ref{ex:comp_size_elsb}, and \ref{ex:complexity_bounds_nonprob}, \cref{theorem:expectednontrivialsizeboundsmeth} yields the expected size bound in \cref{ex:probabilistic_bounds}:
		\paper{\layout{\vspace*{-0.3cm}}}
		\reducedcenter{$
				\begin{array}{rclcl}
					\sbounde'(\alpha_x') & = & \sbounde(\alpha_x) + \tbounde(g_1) \cdot \incsize^{\alpha_x'}(\chbounde(\alpha_x')) & = & x + 2 \cdot x \cdot \tfrac{1}{2} = 2 \cdot x    \\
					\sbounde'(\alpha_y') & = & \sbounde(\alpha_y) + \tbounde(g_1) \cdot \incsize^{\alpha_y'}(\chbounde(\alpha_y')) & = & y + 2 \cdot x \cdot \incsize^{\alpha_y'}(x)     \\
					                     & = & y + 2 \cdot x \cdot \sum\nolimits_{i \in \{0,1,2\}} \sbound(t_i,x)                  & = & 6 \cdot x^2 + y                                 \\
					\sbounde'(\beta_x')  & = & \sbounde(\beta_x) + \tbounde(g_3) \cdot \incsize^{\beta_x'}(\chbounde(\beta_x'))    & = & 3 \cdot x + (6 x^2 + 2 y) \cdot 0 = 3 \cdot x   \\
					\sbounde'(\beta_y')  & = & \sbounde(\beta_y) + \tbounde(g_3) \cdot \incsize^{\beta_y'}(\chbounde(\beta_y'))    & = & 6 \cdot x^2 + 2 \cdot y + (6 x^2 + 2 y) \cdot 1 \\
					                     &   &                                                                                     & = & 12 \cdot x^2 + 4 \cdot y
				\end{array}
			$}
	}
\end{example}

\section{Related Work, Implementation, and Conclusion}
\label{sec:evaluation}
\paragraph{Related Work}
Our approach adapts techniques from \cite{koat} to probabilistic programs.
As explained in \cref{sec:Intro}, this adaption is not at all trivial (see our proofs in \arxiv{the appendix}\paper{\cite{arxivereport}}).

There has been a lot of work on proving PAST and inferring bounds on expected runtimes using supermartingales, e.g., \cite{dblp:journals/scp/avanzinily20,dblp:conf/rta/bournezg05,dblp:conf/cav/chakarovs13,rsmchatt,dblp:conf/popl/chatterjeenz17,lexrsm,dblp:conf/vmcai/fuc19,dblp:conf/pldi/wang0gcqs19,dblp:conf/rta/bournezg06,cade19,amber,FoundationsTerminationMartingale2020}.
While these techniques infer one (lexicographic) ranking supermartingale to analyze the complete program, our approach deals with information flow between different program parts and analyzes them separately.

There is also work on modular analysis of almost sure termination (AST) \cite{lexrsm,rsm,dblp:journals/pacmpl/huang0cg19,dblp:conf/aplas/huangfc18,amber,FoundationsTerminationMartingale2020}, i.e., termination with probability $1$.
This differs from our results, since AST is compositional, in contrast to PAST (see, e.g., \cite{dblp:journals/jacm/kaminskikmo18,FoundationsExpectedRuntime2020}).

A fundamentally different approach to ranking supermartingales (i.e., to \emph{forward-reasoning}) is \emph{backward-reasoning} by so-called expectation transformers, see, e.g., \cite{absynth,dblp:journals/jacm/kaminskikmo18,dblp:conf/lics/olmedokkm16,dblp:journals/jcss/kozen81,dblp:journals/pacmpl/mcivermkk18,dblp:series/mcs/mciverm05,ecoimp,hoffmannicfp2020,FoundationsExpectedRuntime2020}.
In this orthogonal reasoning, \cite{dblp:journals/jacm/kaminskikmo18,dblp:conf/lics/olmedokkm16,ecoimp,FoundationsExpectedRuntime2020} consider the connection of the expected runtime and size.
While expectation transformers apply backward- instead of forward-reasoning, their correctness can also be justified using supermartingales.
More precisely, Park induction for upper bounds on the expected runtime via expectation transformers essentially ensures that a certain stochastic process is a supermartingale (see \cite{dblp:journals/pacmpl/harkkgk20} for details).

To the best of our knowledge, the only available tools for the inference of upper bounds on the expected runtimes of probabilistic programs are \cite{absynth,dblp:conf/pldi/wang0gcqs19,hoffmannicfp2020,ecoimp}.
The tool of \cite{hoffmannicfp2020} deals with data types and higher order functions in probabilistic \textsf{ML} programs and does not support programs whose complexity depends on (possibly negative) integers (see \cite{ramlweb}).
Furthermore, the tool of \cite{amber} focuses on proving or refuting (P)AST of probabilistic
programs for so-called \emph{Prob-solvable} loops, which do not allow for nested or
sequential loops or non-determinism. So both \cite{hoffmannicfp2020} and
 \cite{amber} are
orthogonal to our work.
We discuss \cite{absynth,dblp:conf/pldi/wang0gcqs19,ecoimp} below.

\mypar{Implementation}
We implemented our analysis in a new version of our tool \textsf{KoAT} \cite{koat}.
\textsf{KoAT} is an open-source tool written in \textsf{OCaml}, which can also be
downloaded as a Docker image
and accessed via a web interface \cite{techreport}.

\begin{algorithm}[t]
	\caption{Overall approach to infer bounds on expected runtimes}\label{alg:overall}
	\DontPrintSemicolon
	\KwIn{PIP $\PIP$}
	preprocess the PIP \label{alg:overall_preprocess}\;
	$(\tbound,\sbound) \gets$ perform non-probabilistic analysis using \cite{koat}
	\label{alg:overall_nonprobanalysis}\;
	$(\tbounde,\sbounde) \gets$ lift $(\tbound,\sbound)$ to an expected
	bound pair with \cref{theorem:liftingofbounds} \label{alg:overall_lifting}\;
	\Repeat{{\rm no bound is improved anymore}}{
		\For{{\rm \textbf{all} SCCs $C$ of the general result variable graph in topological order}}{
			\lIf{{\rm $C = \{\alpha\}$ is trivial}}
			{$\sbounde' \gets $ improve $\sbounde$ for $C$ by
				\cref{thm:Inferring Expected Size Bounds for Trivial SCCs}
			}
			\lElse{$\sbounde' \gets $ improve $\sbounde$ for $C$ by
				\cref{theorem:expectednontrivialsizeboundsmeth} \label{alg:nonTrivial}}
			\lFor{{\rm \textbf{all}} $\alpha \in C$}
			{$\sbounde(\alpha) \gets \min \{
					\sbounde(\alpha),\sbounde' (\alpha) \}$} \label{alg:update2}
		}
		\For{{\rm \textbf{all} general transitions $g\in \GT$}}
		{ $\tbounde' \gets $ improve $\tbounde$ for $\GTG = \{g\}$
		by \cref{theorem:exptimeboundsmeth} \;\label{alg:timebound}
			$\tbounde(g) \gets \min \{\tbounde(g),\tbounde'(g)\}$\; \label{alg:update3}
		}}
	\KwOut{$\sum_{g\in\GT} \tbounde (g)$}
	\vspace*{.2cm}
\end{algorithm}
Given a PIP, the analysis proceeds as in \cref{alg:overall}.
The preprocessing in \cref{alg:overall_preprocess} adds invariants to guards
 (using \textsf{APRON} \cite{apron} to generate (non-probabilistic) invariants), unfolds
 transitions \cite{bd77},
and removes unreachable locations, transitions with probability $0$, and transitions with
 unsatisfiable guards
\paper{\layout{\mypagebreak}}
(using \textsf{Z3} \cite{z3}).

We start by a non-probabilistic analysis and lift the resulting bounds
to an initial expected bound pair (\cref{alg:overall_nonprobanalysis,alg:overall_lifting}).
Afterwards, we first try to improve the expected size bounds using \cref{thm:Inferring Expected Size Bounds for Trivial SCCs,theorem:expectednontrivialsizeboundsmeth}, and then we attempt to improve the expected runtime bounds using \cref{theorem:exptimeboundsmeth} (if we find a PLRF using \textsf{Z3}).
To determine the ``minimum'' of the previous and the new bound, we use a heuristic which compares polynomial bounds by their degree.
While we over-approximated the maximum of expressions by their sum to ease readability in
this paper, \textsf{KoAT} also uses bounds containing
``min'' and ``max'' to increase precision.

This alternating modular computation of expected size and runtime bounds is repeated so that one can benefit from improved expected runtime bounds when com\-puting expected size bounds and vice versa.
We abort this improvement of expected bounds in \cref{alg:overall} if they are all finite (or when reaching a timeout).

To assess the power of our approach, we performed an experimental evaluation of our implementation in \textsf{KoAT}.
We did not compare with the tool of \cite{dblp:conf/pldi/wang0gcqs19}, since \cite{dblp:conf/pldi/wang0gcqs19} expects the program to be annotated with already computed invariants.
But for many of the examples in our experiments, the invariant generation tool \cite{stanfordinvariantgenerator} used by \cite{dblp:conf/pldi/wang0gcqs19} did not find invariants strong enough to enable a meaningful analysis (and we could not apply \textsf{APRON} \cite{apron} due to the different semantics of invariants).

Instead, we compare \textsf{KoAT} with the tools \textsf{Absynth} \cite{absynth}
and \textsf{eco-imp} \cite{ecoimp} which are both based on a conceptionally
different \emph{backward-reasoning} approach.
We ran the tools on all 39 examples from  \textsf{Absynth}'s evaluation in \cite{absynth} (except \texttt{recursive}, which contains non-tail-recursion and thus cannot be encoded as a PIP), and on the 8 additional examples from the artifact of \cite{absynth}.
Moreover, our collection has\layout{\linebreak[3]} 29 additional benchmarks:\ \paper{14 examples that illustrate different aspects of PIPs}\arxiv{the 4 examples presented in this work and 10 further examples that illustrate different aspects of PIPs}, 5 PIPs based on examples from \cite{absynth} where we removed as\-sump\-tions, and 10 PIPs based on benchmarks from the \emph{TPDB} \cite{tpdb} where some transitions were enriched with probabilistic behavior.
The \emph{TPDB} is a collection of typical programs used in the annual \emph{Termination and Complexity Competition} \cite{termcomp}.
We ran the experiments on an iMac with an Intel i5-2500S CPU
and $\SI{12}{\giga\byte}$ of
RAM under macOS Sierra
\paper{\layout{\mypagebreak}}
for \textsf{Absynth} and NixOS 20.03 for \textsf{KoAT}
and \textsf{eco-imp}.
A timeout of 5 minutes per example was applied for all tools.
The average runtime of successful runs was $\SI{4.26}{\second}$
for  \textsf{KoAT},
$\SI{3.53}{\second}$ for \textsf{Absynth}, and just $\SI{0.93}{\second}$
for
\textsf{eco-imp}.

	\begin{figure}[t]
		\noindent
		\begin{minipage}{0.5\textwidth}
			\begin{center}

				\arxiv{\vspace*{.3cm}}

				\begin{tabular}{c|c|c|c|}
					\textbf{Bound}          & \textsf{KoAT} & \textsf{Absynth} & \textsf{eco-imp} \\
					\hline $\bigO(1)$       & 6             & 6                & 6               \\
					\hline $\bigO(n)$       & 32            & 32               & 29              \\
					\hline $\bigO(n^2)$     & 3             & 8                & 9               \\
					\hline $\bigO(n^{> 2})$ & 0             & 0                & 0               \\
					\hline \textsf{EXP}     & 0             & 0                & 0               \\
					\hline $\infty$         & 5             & 0                & 2               \\
					\hline \textsf{TO}      & 0             & 0                & 0
				\end{tabular}

				\layout{\vspace*{-.6cm}}
			\end{center}
			\captionof{figure}{{\small Results on benchmarks from \cite{absynth}}}
			\label{fig:comparison_absynth}
		\end{minipage}
		\begin{minipage}{0.5\textwidth}

			\arxiv{\vspace*{.3cm}}

			\begin{center}
				\begin{tabular}{c|c|c|c|}
					\textbf{Bound}          & \textsf{KoAT} & \textsf{Absynth} & \textsf{eco-imp} \\
					\hline $\bigO(1)$       & 2             & 1                & 2               \\
					\hline $\bigO(n)$       & 10            & 3                & 6               \\
					\hline $\bigO(n^2)$     & 12            & 1                & 6               \\
					\hline $\bigO(n^{> 2})$ & 2             & 0                & 0               \\
					\hline \textsf{EXP}     & 1             & 0                & 0               \\
					\hline $\infty$         & 2             & 15               & 12              \\
					\hline \textsf{TO}      & 0             & 9                & 3
				\end{tabular}

				\layout{\vspace*{-.6cm}}
			\end{center}
			\captionof{figure}{{\small Results on our new benchmarks}}
			\label{fig:comparison_ours}
		\end{minipage}
	\end{figure}

\cref{fig:comparison_absynth,fig:comparison_ours} show the generated asymptotic bounds, where $n$ is the maximal absolute value of the program variables at the program start.
Here, ``$\infty$'' indicates that no finite time bound could be computed and ``\textsf{TO}'' means ``timeout''.
The detailed asymptotic results of all tools on all examples can be found in \arxiv{\cref{app:examples}}\paper{\cite{arxivereport,techreport}}.

\textsf{Absynth} and \textsf{eco-imp} slightly outperform \textsf{KoAT} on the examples
from \textsf{Absynth}'s collection, while \textsf{KoAT} is considerably stronger than both tools on the additional benchmarks.
In particular, \textsf{Absynth} and \textsf{eco-imp} outperform our approach on examples with nested probabilistic loops.
While our modular approach can analyze inner loops separately when searching for probabilistic ranking functions, \cref{theorem:exptimeboundsmeth} then requires \emph{non-probabilistic} time bounds for all transitions entering the inner loop.
But these bounds may be infinite if the outer loop has probabilistic behavior itself.
Moreover, in contrast to our work and \cite{ecoimp}, the approach of \cite{absynth} does
not require weakly monotonic bounds.

On the other hand, \textsf{KoAT} is superior to \textsf{Absynth} and \textsf{eco-imp} on large examples with many loops, where only a few transitions have probabilistic behavior (this might correspond to the typical application of randomization in practical programming).
Here, we benefit from the modularity of our approach which treats loops independently and combines their bounds afterwards.
\textsf{Absynth} and \textsf{eco-imp} also fail for our leading example of \cref{fig:simple_quadratic}, while \textsf{KoAT} infers a quadratic bound.
Hence, the tools have particular strengths on orthogonal kinds of examples.

	\paper{\textsf{KoAT}'s source code is available at \sourcecode{}.}
\arxiv{The source code of
\textsf{KoAT} is available at the following website:
	\begin{center}\sourcecode{}\end{center}}
To obtain
a \textsf{KoAT} artifact, see
\begin{center}
\url{https://aprove-developers.github.io/ExpectedUpperBounds/}
\end{center}
for  a static binary and  
Dock\-er image. 
	This web site also provides  all examples from our evaluation\arxiv{ (including graphical
	representations of the corresponding transition systems)}, detailed outputs of our experiments, and a \emph{web interface} to run \textsf{KoAT} directly online.

\mypar{Conclusion}
We presented a new modular approach to infer upper bounds on the expected runtimes of probabilistic integer programs.
To this end, non-probabilistic and expected runtime and size bounds on parts of the 
program are computed in an alternating fashion and then combined to an overall expected
runtime bound. 
In the evaluation, our tool \textsf{KoAT} succeeded on $91\%$ of all examples, while the
main other related tools (\textsf{Absynth} and \textsf{eco-imp}) only inferred
finite bounds for $68\%$ resp.\ $77\%$ of the examples.
In future work,
it would be interesting to consider a modular combination of 
these tools (resp.\ of their underlying approaches).
%

\mypar{Acknowledgements} We thank Carsten Fuhs for discussions on
initial ideas.

\bibliographystyle{splncs04}
\bibliography{references}
\paper{
 
\vfill

{\small\medskip\noindent{\bf Open Access}
  This chapter is licensed under the terms of the Creative Commons\break Attribution 4.0
  International License
  (\url{https://creativecommons.org/licenses/by/4.0/}),
which permits use, sharing, adaptation, distribution and
reproduction in any medium or format, as long as you give
appropriate credit to the original author(s) and the
source, provide a link to the Creative Commons license
and indicate if changes were made.}

{\small \spaceskip .28em plus .1em minus .1em The images or other third party material in
  this chapter are included in the chapter's Creative Commons license, unless indicated
  otherwise in a credit line to the material.~If material is not included in the chapter's
  Creative Commons license and your intended\break use is not permitted by statutory
  regulation or exceeds the permitted use, you will need to obtain permission directly
  from the copyright holder.}

\medskip\noindent\includegraphics{cc_by_4-0.eps}
  }
\arxiv{
\clearpage
\begin{appendix}

\section*{Appendix}

\cref{app:all_examples}
presents the detailed results of the tools in our experimental
        evaluation and the collection of (additional) examples that we used for our experiments.
	Afterwards, \cref{app:prelim} recapitulates preliminaries from probability theory
        which are needed for the proofs of our theorems.
	These proofs are given in \cref{app:proofs}.

	\section{Examples}\label{app:all_examples}
We first present additional examples to illustrate certain aspects of our approach in
\cref{app:additional_examples}. Afterwards, we discuss the examples used in our evaluation in \cref{app:examples_evaluation}.

\subsection{Additional Examples}
\label{app:additional_examples}

In the following, we present several examples to illustrate our approach.
We start with a small example with features both non-deterministic and probabilistic branching and both non-deterministic and probabilistic sampling.

\begin{figure}[t]
\begin{center}\begin{tikzpicture}[shorten >=1pt,node distance=3.5cm,auto,main node/.style={circle,draw,font=\sffamily\Large\bfseries}]
		\node[main node] (l0) {$\ell_0$}; \node[main node] (l1) [right of = l0] {$\ell_1$}; \node[main node] (l2) [above of = l0,yshift=-1.5cm] {$\ell_2$};

		\path[->] (l0) edge[align=left] node [right,yshift=0.2cm] {$t_0 \in g_0$}
		node [left,yshift=0.2cm] {$\eta(x)=u$ \\
				$\tau = (u > 0)$} (l2) (l0) edge[align=left, bend left = 20] node [above,xshift=0.5cm]
			{$p=\tfrac{3}{4}$\\
				$\eta(x)=x$} node [below] {$t_1\in g_1$} (l1) (l0) edge[align=left, bend right=20] node [below,xshift=0.5cm]
			{$p=\tfrac{1}{4}$\\
				$\eta(x)=x-1$} node [above] {$t_2\in g_1$} (l1) (l1) edge[align=left,loop right] node [below,xshift=1.2cm,yshift=-0.3cm]
			{$\eta(x)=\GEOMETRIC (\tfrac{1}{2})$\\
				$\tau=(x<10)$} node [right,yshift=0.2cm] {$t_3\in g_2$} ();
	\end{tikzpicture}\end{center}
	\caption{PIP with both forms of branching and both forms of sampling}\label{fig:preliminaries_pip_demo}
\end{figure}

\begin{example}[PIP with all Forms of Branching and Sampling]
	\label{PIP example}
	{\sl Consider
		the PIP in
		\cref{fig:preliminaries_pip_demo} with initial location $\initloc$ and the program variable $\PV = \{x\}$.
		Again, we assume $p=1$ and $\tau = \true$ if not stated explicitly.
		The program has three general transitions: $g_0 = \{ t_0 \}$, $g_1 = \{t_1, t_2 \}$, and $g_2 = \{ t_3 \}$.
		The first two represent a non-deterministic branching.
		If one chooses the general transition $g_1$, then the transitions $t_1$ and $t_2$ encode a probabilistic branching.
		The transition $t_0$ of $g_0$ updates $x$ to a non-deterministically chosen positive value $u$.
		The transition $t_3$ of $g_2$ updates the variable $x$ by sampling from the geometric distribution with parameter $\tfrac{1}{2}$.
		So in state $\state$, $x$ is updated by adding a value sampled from the probability distribution $d(s)\in \mathrm{Dist}(\ZZ)$ where for any $v \in \ZZ$ we have $d(\state)(v) = 0$ if $v \leq 0$, and $d(\state)(v) = (\tfrac{1}{2})^m$ if $v = m$ for some $m \geq 1$.}
\end{example}

The next example illustrates why one must not use \emph{expected} runtime bounds $\tbounde(\widetilde{g})$ instead of the \emph{non-probabilistic} bounds $\sum_{t=(\_,\_,\_,\_,\loc) \in \widetilde{g}} \tbound(t)$ on the entry transitions in \cref{theorem:exptimeboundsmeth}.
Essentially, the reason is that for two random variables $X_1$ and $X_2$, the expected value of $X_1 \cdot X_2$ is not equal to the product of the expected values of $X_1$ and $X_2$ if $X_1$ and $X_2$ are not independent.

\begin{example}[Expected Runtime Bounds for Entry Transitions]
	\label{ex:entry_transition_unsound}
	\begin{figure}[t]
		\centering
		\begin{tikzpicture}[shorten >=1pt,node distance=3.5cm,auto,main node/.style={circle,draw,font=\sffamily\Large\bfseries}]
			\node[main node] (l0) {$\ell_0$}; \node[main node] (l1) [right of = l0] {$\ell_1$}; \node[main node] (l2) [right of = l1,xshift=-1cm]
			{$\ell_2$}; \node[main node] (l3) [right of = l2,xshift=-.5cm] {$\ell_3$};

			\path[->] (l0) edge[align=left, bend left = 45] node [above] {$t_0 \in g_0$ \\
					$\eta(x)=1$\\
					$p = \tfrac{1}{3}$}
			(l1) (l0) edge[align=left] node [above] {$t_1 \in g_0$}
			node [below] {$\eta(x)=2$\\
					$p = \tfrac{1}{3}$}
			(l1) (l0) edge[align=left, bend right = 45] node [below] {$t_2 \in g_0$\\
					$\eta(x)=3$\\
					$p = \tfrac{1}{3}$}
			(l1)

			(l1) edge[align=left] node [below] {$t_3 \in g_1$}
			node [above] {$\eta (z) = x$}
			(l2)

			(l2) edge[align=left, bend right] node [above] {$t_4 \in g_2$}
			node [below] {$\eta(y)=z$\\
					$\tau = (x > 0)$}
			(l3)

			(l3) edge[loop right, align=left] node [right] {$t_5 \in g_3$}
			node [above,xshift=0.5cm,yshift=0.25cm] {$\eta(y) = y-1$\\
					$\tau = (y > 0)$}
			()

			(l3) edge[align=left, bend right] node [below] {$t_6 \in g_4$}
			node [above] {$\eta(x)=x-1$\\
					$\tau = (y \leq 0)$}
			(l2);
		\end{tikzpicture}
		\caption{PIP illustrating the incorrectness of expected time bounds on entry transitions.}
		\label{fig:incorrectness}
	\end{figure}
	{\sl Consider the PIP given in \cref{fig:incorrectness}, where we omitted updates of the form $\eta(x)=x$ to ease readability.
	We consider $\GTNI=\{g_3\}$ and the PLRF $\rank$ with $\rank(\loc_3)=y$.
	Thus, $\GTG=\{g_3\}$.
	We directly obtain $\sbounde(g_0,\loc_1,x) = \tfrac{1}{3}\cdot (1 + 2 + 3) = 2 = \sbounde(g_1,\loc_2,z) =\sbounde(g_2,\loc_3,y)$.
	So for the only entry location $\loc_3$ of $\GTNI$ (i.e., $\ENTRYLOC_{\GTNI} = \{ \loc_3 \}$), we have $\sbounde(g_2,\loc_3,y) = 2$.
	Moreover, $\ET_{\GTNI}(\loc_3) = \{ g_2 \}$ and let us assume that we have already inferred $\tbounde(g_2) = 2$ for the only entry transition $g_2$.
	(Note that this ``outer loop'' decreases $x$ deterministically by $1$ and the expected size of $x$ is $2$).
	If \cref{theorem:exptimeboundsmeth} allowed us to use \emph{expected}
	runtime bounds for the entry transitions on all entry locations, we would get
	\begin{align*}
		\tbounde' (g_3) & = \sum_{\loc \in \ENTRYLOC_{\GTNI},\widetilde{g} \in \ET_{\GTNI}(\loc)} \tbounde(\widetilde{g})\cdot \left(\exact{\overapprox{\rank(\loc)}}{\sbounde(\widetilde{g},\loc,\cdot)}\right) \\
		                & = \tbounde(g_2) \cdot \exact{\overapprox{\rank(\loc_3)}}{\sbounde(g_2,\loc_3,\cdot)}                                                                                                   \\
		                & = 2 \cdot 2                                                                                                                                                                            \\
		                & = 4.
	\end{align*}
	However, if $x$ is set to the value $v \in \{1,2,3\}$ in $g_0$, then $g_3$ is executed $v^2$ times. But this means that the expected number of executions of $g_3$ is \[\tfrac{1}{3} \cdot \left(1^2 + 2^2 + 3^2\right) = \tfrac{14}{3}
		= 4 + \tfrac{2}{3} > 4.
	\]
	Thus, using the \emph{expected} runtime bound $\tbounde(g_2) = 2$ does not lead to an upper bound on the expected runtime of $g_3$, i.e., \cref{theorem:exptimeboundsmeth}
	would become unsound.
	Intuitively, the reason is that the number of executions of $g_2 = \{ t_4 \}$ in a run and the maximal size of $x$ after executing $g_2$ are not independent.
	On the other hand, when using the non-probabilistic runtime bound $\tbound(t_4) = 3$ instead, which arises from the maximal value that $x$ can take, \cref{theorem:exptimeboundsmeth} infers the sound expected runtime bound
	\[
		\tbounde' (g_3) \; = \; \tbound(t_4) \, \cdot \, \exact{\overapprox{\rank(\loc_3)}}{\sbounde(g_2,\loc_3,\cdot)} \; = \; 3 \cdot 2 \; =\; 6 \;>\; 4 + \tfrac{2}{3}. \]}
\end{example}

The next example shows that when improving expected size bounds in \cref{thm:Inferring Expected Size Bounds for Trivial SCCs}, \emph{expected} size bounds must not be substituted into the local change bound $\chbounde(\alpha)$ if $\chbounde(\alpha)$ is not linear, i.e., not concave.

\begin{figure}[t]
  \centering
  \begin{tikzpicture}[shorten >=1pt,node distance=3.5cm,auto,main node/.style={circle,draw,font=\sffamily\Large\bfseries}]
    \node[main node] (l0) {$\ell_0$};
    \node[main node] (l1) [right of = l0,xshift=0.25cm] {$\ell_1$};
    \node[main node] (l2) [right of = l1] {$\ell_2$};

    \path [->]
      (l0) edge
        node [above] {$\eta (x) = \UNIFORM (1,15)$}
        node [below] {$t_0 \in g_0$}
        (l1)

      (l1) edge
        node [above] {$\eta(x) = x^2$}
        node [below] {$t_1\in g_1$}
        (l2);
  \end{tikzpicture}
  \caption{PIP to illustrate the need for concave change bounds in \cref{thm:Inferring Expected Size Bounds for Trivial SCCs} when substituting expected size bounds}
  \label{fig:comp_size_concavity_demo}
\end{figure}

\begin{example}[Concavity for Expected Size Bounds of Trivial SCCs]
	\sl \label{ex:concavity_in_expected_size_bounds}
	Consider the PIP $\pip$ in \cref{fig:comp_size_concavity_demo}.
	$\pip$ increases $x$ by a random number between $1$ and $15$ (each with probability $\tfrac{1}{15}$) before computing its square.
	So for $\alpha = (g_0,\loc_1,x)$, we have $\sbounde (g_0,\loc_1,x) = x + \chbounde(\alpha) = x + 8$, whereas $\sbound (t_0,x) = x + 15$ is its non-probabilistic counterpart, which is indeed an upper bound on all possible values of $x$ after the update.
	Now we consider $\beta = (g_1,\loc_2,x)$.
	Depending on whether one uses the expected or the non-probabilistic size bound for $g_1$'s pre-transition $g_0$, we get $\incsize^\beta_{\mathbb{E}}(x) = \sbounde (g_0,\loc_1,x) = x+8$ and $\incsize^\beta(x) = \sbound (t_0,x) = x + 15$.
	Note that $\chbounde(\beta) = \overapprox{x^2 - x} = x^2 + x$, which is clearly not concave.
	If we were allowed to use the expected size bound for the pre-transition $g_0$, we would obtain \[\sbounde'(\beta) = \incsize^\beta_{\mathbb{E}}(x + \chbounde(\beta)) = \incsize^\beta_{\mathbb{E}}(x^2 + 2 \cdot x) = (x+8)^2 + 2 \cdot (x + 8) = x^2 + 18 \cdot x + 80.\]
	However, then $\sbounde'$ would not be a correct expected size bound.
	To see this, consider the initial state $\initstate$ with $\initstate(x) = 0$.
	Here, we have $\eval{\sbounde'(\beta)}{\initstate} = 80$, which is smaller than $\expv{\pip}{\scheduler}{\initstate} (\sizervar(\beta)) = \tfrac{1}{15} \cdot (1^2 + 2^2 + \ldots + 15^2) = 82 + \tfrac{2}{3}$.

	In contrast, applying \cref{thm:Inferring Expected Size Bounds for Trivial SCCs} yields
	\[\sbounde'(\beta) =\incsize^\beta_{\mathbb{E}}(x) +\incsize^\beta(\chbounde(\beta)) = x + 8 + (x + 15)^2 + (x + 15) = x^2 + 32 \cdot x + 248.
	\]
	This is a correct expected size bound and indeed, we now have $\eval{\sbounde'(\beta)}{\initstate}
		= 248\geq 82 + \tfrac{2}{3}= \expv{\pip}{\scheduler}{\initstate} (\sizervar(\beta))$.
\end{example}

\subsection{Examples for the Evaluation}
\label{app:examples_evaluation}

	\label{app:examples}

As mentioned in \cref{sec:evaluation}, we evaluated our approach using our implementation in \textsf{KoAT} and the tools \textsf{Absynth} \cite{absynth} and \textsf{eco-imp} \cite{ecoimp}.

While we could not compare with the tool of \cite{dblp:conf/pldi/wang0gcqs19}, in the evaluation of \cite{dblp:conf/pldi/wang0gcqs19}
it obtained similar bounds as \textsf{Absynth}
when running it on several benchmarks from \cite{absynth}.
Thus, the comparison of \textsf{KoAT} with \textsf{Absynth} may also give an idea on the general relationship between \textsf{KoAT} and the tool of \cite{dblp:conf/pldi/wang0gcqs19} on the benchmarks from \cite{absynth}.

\begin{figure}
	\centering
	\begin{tabular}{l|c|c|c|}
		\textbf{Example}                                            & \textsf{KoAT}        & \textsf{Absynth}     & \textsf{eco-imp}     \\
		\hline \texttt{2drwalk}                                     & $\result{n}{7.69}$   & $\result{n}{1.55}$   & $\result{n}{0.05}$   \\
		\hline \texttt{bayesian}                                    & $\result{n}{6.90}$   & $\result{n}{0.22}$   & $\result{n}{0.01}$   \\
		\hline \texttt{ber}                                         & $\result{n}{0.19}$   & $\result{n}{0.02}$   & $\result{n}{0.01}$   \\
		\hline \texttt{bin}                                         & $\result{n}{0.20}$   & $\result{n}{0.23}$   & $\result{n}{0.02}$   \\
		\hline \texttt{C4B\_t09}                                    & $\result{n}{0.46}$   & $\result{n}{0.05}$   & $\result{n}{0.02}$   \\
		\hline \texttt{C4B\_t13}                                    & $\result{n}{1.13}$   & $\result{n}{0.03}$   & $\result{n}{0.02}$   \\
		\hline \texttt{C4B\_t15}                                    & $\result{n}{0.43}$   & $\result{n}{0.04}$   & $\result{}{0.34}$    \\
		\hline \texttt{C4B\_t19}                                    & $\result{n}{0.49}$   & $\result{n}{0.04}$   & $\result{n}{0.01}$   \\
		\hline \texttt{C4B\_t30}                                    & $\result{n}{27.49}$  & $\result{n}{0.03}$   & $\result{}{0.03}$    \\
		\hline \texttt{C4B\_t61}                                    & $\result{n}{0.26}$   & $\result{n}{0.03}$   & $\result{n}{0.02}$   \\
		\hline \texttt{complex}                                     & $\result{n^2}{1.52}$ & $\result{n^2}{2.49}$ & $\result{n^2}{0.05}$ \\
		\hline \texttt{condand}                                     & $\result{n}{0.27}$   & $\result{n}{0.01}$   & $\result{n}{0.01}$   \\
		\hline \texttt{cooling}                                     & $\result{n}{0.79}$   & $\result{n}{0.06}$   & $\result{n}{0.03}$   \\
		\hline \texttt{coupon}                                      & $\result{1}{1.66}$   & $\result{1}{0.08}$   & $\result{1}{0.01}$   \\
		\hline \texttt{cowboy\_duel}                                & $\result{1}{0.38}$   & $\result{1}{0.03}$   & $\result{1}{0.01}$   \\
		\hline \texttt{cowboy\_duel\_3way}                          & $\result{1}{2.27}$   & $\result{1}{0.16}$   & $\result{1}{0.01}$   \\
		\hline \texttt{fcall}                                       & $\result{n}{0.64}$   & $\result{n}{0.02}$   & $\result{n}{0.01}$   \\
		\hline \texttt{filling}                                     & $\result{n}{2.67}$   & $\result{n}{0.41}$   & $\result{n}{0.02}$   \\
		\hline \texttt{geo}                                         & $\result{1}{0.35}$   & $\result{1}{0.02}$   & $\result{1}{0.01}$   \\
		\hline \texttt{hyper}                                       & $\result{n}{0.19}$   & $\result{n}{0.03}$   & $\result{n}{0.01}$   \\
		\hline \texttt{linear01}                                    & $\result{n}{0.22}$   & $\result{n}{0.02}$   & $\result{n}{0.01}$   \\
		\hline \texttt{miner}                                       & $\result{n}{1.91}$   & $\result{n}{0.06}$   & $\result{n}{0.01}$   \\
		\hline \texttt{multirace} (\cref{fig:evaluation_multirace}) & $\result{n^2}{0.73}$ & $\result{n^2}{6.72}$ & $\result{n^2}{0.03}$ \\
		\hline \texttt{no\_loop}                                    & $\result{1}{0.33}$   & $\result{1}{0.02}$   & $\result{1}{0.00}$   \\
		\hline \texttt{pol04}                                       & $\result{}{0.87}$    & $\result{n^2}{0.04}$ & $\result{n^2}{0.02}$ \\
		\hline \texttt{pol05}                                       & $\result{}{0.63}$    & $\result{n^2}{1.11}$ & $\result{n^2}{0.02}$ \\
		\hline \texttt{pol06}                                       & $\result{}{2.95}$    & $\result{n^2}{5.72}$ & $\result{n^2}{0.06}$ \\
		\hline \texttt{pol07}                                       & $\result{n^2}{0.66}$ & $\result{n^2}{3.16}$ & $\result{n^2}{0.03}$ \\
		\hline \texttt{prdwalk}                                     & $\result{n}{0.33}$   & $\result{n}{0.05}$   & $\result{n}{0.01}$   \\
		\hline \texttt{prnes}                                       & $\result{n}{1.56}$   & $\result{n}{0.06}$   & $\result{n^2}{0.24}$ \\
		\hline \texttt{prseq}                                       & $\result{n}{0.46}$   & $\result{n}{0.05}$   & $\result{n}{0.02}$   \\
		\hline \texttt{prseq\_bin}                                  & $\result{n}{1.54}$   & $\result{n}{0.09}$   & $\result{n}{0.03}$   \\
		\hline \texttt{prspeed}                                     & $\result{n}{1.43}$   & $\result{n}{0.05}$   & $\result{n}{0.03}$   \\
		\hline \texttt{race}                                        & $\result{n}{0.31}$   & $\result{n}{0.19}$   & $\result{n}{0.05}$   \\
		\hline \texttt{rdbub}                                       & $\result{}{0.59}$    & $\result{n^2}{0.16}$ & $\result{n^2}{0.02}$ \\
		\hline \texttt{rdseql}                                      & $\result{n}{0.44}$   & $\result{n}{0.03}$   & $\result{n}{0.02}$   \\
		\hline \texttt{rdspeed}                                     & $\result{n}{0.58}$   & $\result{n}{0.04}$   & $\result{n}{0.03}$   \\
		\hline \texttt{rdwalk}                                      & $\result{n}{0.39}$   & $\result{n}{0.02}$   & $\result{n}{0.01}$   \\
		\hline \texttt{rfind\_lv}                                   & $\result{1}{0.26}$   & $\result{1}{0.02}$   & $\result{1}{0.01}$   \\
		\hline \texttt{rfind\_mc}                                   & $\result{n}{0.45}$   & $\result{n}{0.04}$   & $\result{n}{0.01}$   \\
		\hline \texttt{robot}                                       & $\result{n}{27.93}$  & $\result{n}{1.76}$   & $\result{n}{0.02}$   \\
		\hline \texttt{roulette}                                    & $\result{n}{1.21}$   & $\result{n}{0.65}$   & $\result{n}{0.09}$   \\
		\hline \texttt{sampling}                                    & $\result{n}{1.78}$   & $\result{n}{2.29}$   & $\result{n}{0.01}$   \\
		\hline \texttt{simple\_recursive}                           & $\result{n}{0.36}$   & $\result{n}{0.02}$   & $\result{n}{0.01}$   \\
		\hline \texttt{sprdwalk}                                    & $\result{n}{0.19}$   & $\result{n}{0.02}$   & $\result{n}{0.01}$   \\
		\hline \texttt{trader}                                      & $\result{}{1.29}$    & $\result{n^2}{0.92}$ & $\result{n^2}{0.87}$
	\end{tabular}
	\caption{Results on examples from \cite{absynth}}
	\label{fig:evaluation_results_absynth_linear}
\end{figure}

\begin{figure}
	\centering \bgroup \def\arraystretch{1.5}
	\begin{tabular}{l|c|c|c|}
		\textbf{Example}                                                               & \textsf{KoAT}
		                                                                               & \textsf{Absynth}                      & \textsf{eco-imp}                             \\
		\hline \cref{fig:preliminaries_pip_demo}                                       & $\result{n}{0.22}$                    & $\result{}{0.46}$     & $\result{}{0.44}$    \\
		\hline \cref{fig:incorrectness}                                                & $\result{1}{0.70}$                    & $\result{}{0.10}$     & $\result{1}{0.03}$   \\
		\hline \cref{fig:comp_size_concavity_demo}                                     & $\result{1}{0.17}$                    & $\result{1}{0.01}$    & $\result{1}{0.00}$   \\
		\hline \texttt{alain.c}                                                        & $\result{n^3}{35.47}$                 & \textsf{TO}           & $\result{}{0.33}$    \\
		\hline \texttt{C4B\_t132}                                                      & $\result{n}{1.03}$                    & $\result{}{0.10}$     & $\result{n}{0.02}$   \\
		\hline \texttt{complex2}                                                       & $\result{n^2}{1.21}$                  & $\result{}{85.78}$    & $\result{n^2}{0.05}$ \\
		\hline \texttt{cousot9}                                                        & $\result{n^2}{1.79}$                  & \textsf{TO}           & \textsf{TO}          \\
		\hline \texttt{ex\_paper1.c}                                                   & $\result{n^2}{28.01}$                 & \textsf{TO}           & $\result{n^2}{0.22}$ \\
		\hline \texttt{fib\_exp\_size} (\cref{fig:evaluation_fib_exp_size})            & $\textsf{EXP}: \; \SI{0.69}{\second}$ & $\result{}{15.32}$    & $\result{}{0.01}$    \\
		\hline \texttt{geo\_race}                                                      & $\result{n}{0.25}$                    & $\result{}{0.14}$     & $\result{}{23.29}$   \\
		\hline \texttt{knuth\_morris\_pratt.c}                                         & $\result{n}{20.79}$                   & $\result{}{0.40}$     & \textsf{TO}          \\
		\hline \texttt{leading} (\cref{fig:simple_quadratic,fig:evaluation_quad_size}) & $\result{n^2}{0.49}$                  & $\result{}{0.74}$     & $\result{}{30.64}$   \\
		\hline \texttt{leading.1}                                                      & $\result{n^2}{0.56}$                  & $\result{}{0.62}$     & $\result{}{1.25}$    \\
		\hline \texttt{multirace2}                                                     & $\result{n^2}{0.68}$                  & \textsf{TO}           & $\result{n^2}{0.03}$ \\
		\hline \texttt{neg\_init\_upd}                                                 & $\result{n}{0.40}$                    & $\result{}{0.09}$     & $\result{}{0.03}$    \\
		\hline \texttt{nested\_break}                                                  & $\result{n^2}{1.17}$                  & $\result{}{0.12}$     & $\result{}{0.14}$    \\
		\hline \texttt{nested\_rdwalk}                                                 & $\result{n}{0.55}$                    & $\result{n}{0.03}$    & $\result{n}{0.02}$   \\
		\hline \texttt{nested\_size}                                                   & $\result{n^5}{1.53}$                  & \textsf{TO}           & $\result{}{3.25}$    \\
		\hline \texttt{nondet\_countdown}
		(\cref{fig:nondet_countdown})                                                  & $\result{n}{0.41}$                    & $\result{}{0.31}$     & $\result{n}{0.02}$   \\
		\hline \texttt{prob\_loop}                                                     & $\result{}{0.72}$                     & $\result{n}{0.43}$    & $\result{}{0.03}$    \\
		\hline \texttt{prseq2}                                                         & $\result{n}{0.64}$                    & $\result{}{0.16}$     & $\result{n}{0.02}$   \\
		\hline \texttt{rank3.c}                                                        & $\result{n^2}{55.88}$                 & \textsf{TO}           & $\result{}{0.14}$    \\
		\hline \texttt{rdseql2}                                                        & $\result{n}{0.64}$                    & $\result{}{0.12}$     & $\result{n}{0.02}$   \\
		\hline \texttt{realheapsort}                                                   & $\result{n^2}{21.47}$                 & \textsf{TO}           & $\result{}{0.81}$    \\
		\hline \texttt{selectsort}                                                     & $\result{n^2}{4.46}$                  & $\result{n^2}{93.77}$ & $\result{n^2}{1.75}$ \\
		\hline \texttt{simple\_nested}                                                 & $\result{}{0.72}$                     & $\result{}{1.85}$     & $\result{n^2}{0.05}$ \\
		\hline \texttt{spctrm}                                                         & $\result{n^2}{21.47}$                 & \textsf{TO}           & \textsf{TO}          \\
		\hline \texttt{trunc\_selectsort}                                              & $\result{n^2}{3.99}$                  & \textsf{TO}           & $\result{n^2}{1.74}$ \\
		\hline \texttt{two\_arrays2}                                                   & $\result{n}{7.76}$                    & $\result{n}{0.61}$    & $\result{n}{0.03}$
	\end{tabular}
	\egroup \caption{Results on our examples}
	\label{fig:evaluation_results_ours}
\end{figure}

In our experiments, we used all examples from the evaluation of \textsf{Absynth} \cite{absynth} except the non-tail-recursive program \texttt{recursive}.
\cref{fig:evaluation_results_absynth_linear} gives the detailed results of the tools on these examples.
Moreover, we used an additional collection of typical PIPs and the results of the tools on these examples are shown in \cref{fig:evaluation_results_ours}.
Here, \texttt{leading} is our leading example from \cref{fig:simple_quadratic}, \cref{fig:preliminaries_pip_demo,fig:incorrectness,fig:comp_size_concavity_demo}
are the examples from \cref{app:additional_examples}, and \texttt{geo\_race}, \texttt{leading.1}, \texttt{neg\_init\_upd}, \texttt{nested\_break}, \texttt{nested\_size}, \texttt{nondet\_countdown}, \texttt{simple\_nested}, \texttt{prob\_loop}, \texttt{nested\_rdwalk}, and \texttt{fib\_exp\_size}
are additional PIPs illustrating different aspects, such as probabilistic loops, nesting of probabilistic, non-probabilistic, and non-deterministic behavior, and expected exponential bounds due to exponential non-probabilistic bounds.
The examples \texttt{complex2}, \texttt{C4B\_t132}, \texttt{multirace2}, \texttt{prseq2}, and \texttt{rdseql2} are based on corresponding examples from the collection of \cite{absynth}, but the original examples contain additional assumptions of the form \texttt{assume x > 0}, which were removed in these variants.
These examples demonstrate that in contrast to \textsf{KoAT} and \textsf{eco-imp}, \textsf{Absynth} relies on such additional assertions in order to find suitable base functions and hence, to compute a finite bound.

The examples \texttt{alain.c}, \texttt{cousot9}, \texttt{ex\_paper1.c}, \texttt{knuth\_morris\_pratt.c},\layout{\linebreak[4]
} \texttt{rank3.c}, \texttt{realheapsort}, \texttt{selectsort}, \texttt{spctrm}, and \texttt{two\_arrays2} are taken from the \emph{Termination Problem Database (TPDB)} \cite{tpdb} and enriched with probabilistic behavior.
Each of these examples contains between 10 and 59 general transitions.
The \emph{TPDB} contains a collection of typical integer programs whose complexity is analyzed by the tools participating in the annual \emph{Termination and Complexity Competition} \cite{termcomp}.
So as mentioned in \cref{sec:evaluation}, these examples illustrate that our approach is particularly suitable for larger programs with several loops or procedures, where only some of the transitions are probabilistic.

The example \texttt{trunc\_selectsort} is almost identical to \texttt{selectsort} except for one removed loop which results in \textsf{Absynth} not being able to derive an expected runtime bound.
This shows the robustness of our method.

\textsf{Absynth}'s heuristic for finding suitable base functions needs a bound on the degree of the overall time bound in advance.
Therefore, in our experiments, whenever \textsf{KoAT} found a polynomial bound of degree $k$, then \textsf{Absynth} was called with the maximum degree set to $k+1$.
Otherwise, when \textsf{KoAT} did not succeed in finding a polynomial bound, this parameter was set to 3.

All example programs were manually translated between the input format of \textsf{KoAT} and the format of \textsf{Absynth} and \textsf{eco-imp}.
\textsf{Absynth} programs can contain statements of the form ``\texttt{tick m}'' indicating that \texttt{m} ``resources'' have been consumed.
In our approach this is encoded by assigning a constant cost to general transitions.
This cost is then incorporated into the analysis by multiplying the expected runtime bound for each general transition with its assigned cost before adding up all these bounds to obtain a bound for the complete program.

In the original formulation of the examples \texttt{pol06} and \texttt{trader}
from \cite{absynth}, the goal was to compute the expected value of the variable $z$ (i.e., these \textsf{Absynth}
programs do not contain ``\texttt{tick}s'' to model the cost of the execution).
For that reason, when running \textsf{KoAT} on these examples we return the sum of all size bounds for $z$, i.e., $\sum_{\alpha = (\_, \_, z)}\sbounde(\alpha)$.

In \cref{fig:evaluation_quad_size,fig:evaluation_multirace,fig:evaluation_fib_exp_size}, we illustrate the different formats for probabilistic programs of \textsf{KoAT} and \textsf{Absynth}
with 3 example programs from our collection and we demonstrate our translation between these formats.
In these \textsf{KoAT} programs, the cost $m$ of a general transition is indicated by adding ``$\{ m \}$''.
Trivial updates, i.e., $\eta(x) = x$, as well as costs of zero are omitted.
To ease readability, we colored all transitions with non-zero cost in red.

\begin{figure}[h]
    \begin{minipage}{0.4\textwidth}
      \begin{lstlisting}
        def f():
            var x, y

            while x > 0:
                if random:
                    y = y + x
                    tick 1
                    prob(1,1):
                        x = x - 1
                    else:
                        x = x
                else:
                    break

            while y > 0:
                y = y - 1
                tick 1
      \end{lstlisting}
    \end{minipage}
  \begin{minipage}{0.6\textwidth}
      \begin{tikzpicture}[shorten >=1pt,node distance=3.5cm,auto,main node/.style={circle,draw,font=\sffamily\Large\bfseries}]
        \node[main node] (f) {$f$};
        \node[main node] (g) [right of = f] {$g$};
        \node[main node] (h) [below of = g] {$h$};

        \path[->] (f) edge[align=left] node [below] {$t_0 \in g_0$}
        (g)

        (g) edge[loop above, align=left, color=red]
          node [right,yshift=0.2cm,xshift=-0.5cm] {$t_1 \in g_1$}
          node [above,xshift=0.2cm,yshift=0.35cm] {$p=\tfrac{1}{2}$\\ $\eta(x)=x-1$\\ $\eta(y) = y+x$\\ $\tau = (x > 0)$\\ $\{1\}$}
        ()

        (g) edge[loop right, align=left, color=red]
          node [right,xshift=0.2cm,yshift=-0.4cm] {$t_2 \in g_1$\\$p=\tfrac{1}{2}$\\ $\eta (y) = y+x$
            \\ $\tau = (x > 0)$\\ $\{1\}$}
        ()

        (g) edge[align=left]
          node [right] {$t_3 \in g_2$}
        (h)

        (h) edge[loop left, align=left, color=red]
          node [left] {$t_4 \in g_3$}
          node [above,xshift=-0.1cm,yshift=0.25cm] {$\eta(y) = y-1$\\ $\tau = (y > 0)$\\ $\{1\}$}
        ();
      \end{tikzpicture}
  \end{minipage}
  \caption{The \texttt{leading} example from \cref{fig:simple_quadratic}}
  \label{fig:evaluation_quad_size}
\end{figure}

\cref{fig:evaluation_quad_size} corresponds to the previously introduced \texttt{leading}
example from \cref{fig:simple_quadratic}.
When formulating it in \textsf{KoAT}'s input format, the general transition $g_0$ is required since our PIPs are not allowed to have a loop containing the initial location $f$.
As this restriction is not imposed for \textsf{Absynth}'s programs, the general transition $g_0$ is missing in the input format of \textsf{Absynth}.
To make the programs in the two formats equivalent, we assigned the cost 0 to $g_0$ in the \textsf{KoAT} format (and similarly, $g_3$ also got the cost 0).
The transition $g_1$ is encoded in the \textsf{Absynth} format as a \texttt{while} loop.
The cost of the general transition $g_1$ is expressed by the aforementioned \texttt{tick} statement, whereas the probabilistic choice between $t_1$ and $t_2$ is encoded as a probabilistic branching.
Note that if the guard $x > 0$ of transition $g_1$ is satisfied, then the program makes a non-deterministic choice between $g_1$ and $g_2$.
This is reflected in \textsf{Absynth}'s format by the ``\texttt{if random}'' construct.

\begin{figure}
    \begin{minipage}{0.4\textwidth}
      \begin{lstlisting}
        def f():
            var x1,x2,tmp,i,y
            x1 = 0
            x2 = 1
            i = 0

            while i < y:
                tmp = x2
                x2 = x1 + x2
                x1 = tmp
                i = i + 1

            while x2 > 0:
                prob(1,1):
                    x2 = x2 - 2
                else:
                    x2 = x2 + 1
                tick 1
      \end{lstlisting}
    \end{minipage}
  \begin{minipage}{0.6\textwidth}
      \begin{tikzpicture}[shorten >=1pt,node distance=3.5cm,auto,main node/.style={circle,draw,font=\sffamily\Large\bfseries}]
        \node[main node] (f) {$f$};
        \node[main node] (g) [right of = f] {$g$};
        \node[main node] (h) [below of = g] {$h$};

        \path[->] (f)
        edge[align=left]
          node [above] {$\eta(x_1) = 0$\\ $\eta (x_2) = 1$\\ $\eta (i) = 0$}
          node [below] {$t_0 \in g_0$}
        (g)

        (g) edge[loop above, align=left]
          node [right,yshift=0.1cm] {$t_1 \in g_1$}
          node [above,xshift=0.2cm,yshift=0.35cm] {$\eta(x_1)=x_2$\\ $\eta(x_2) = x_1+x_2$\\ $\eta(i) = i+1$\\  $\tau = (i < y)$}
        ()

        (g) edge[align=left]
          node [right] {$t_2 \in g_2$}
        (h)

        (h) edge[loop left, align=left, color=red]
          node [left] {$t_3 \in g_3$}
          node [above,xshift=-0.25cm,yshift=0.25cm] {$p=\tfrac{1}{2}$\\ $\eta(x_2) = x_2-2$\\ $\tau = (x_2 > 0)$\\ $\{1\}$}
        ()

        (h) edge[loop below, align=left, color=red]
          node [left,yshift=0.2cm,xshift=-0.1cm] {$t_4 \in g_3$}
          node [below,xshift=0.1cm] {$p=\tfrac{1}{2}$\\ $\eta(x_2) = x_2+1$\\ $\tau = (x_2 > 0)$\\ $\{1\}$}
        ();
      \end{tikzpicture}
  \end{minipage}
  \caption{The \texttt{fib\_exp\_size} example}
  \label{fig:evaluation_fib_exp_size}
\end{figure}

\cref{fig:evaluation_fib_exp_size} shows an example with expected exponential runtime.
In \textsf{KoAT}'s input format, the transition $t_1$ updates $x_1$ and $x_2$ simultaneously while referring to their previous values.
This cannot be expressed directly in the input format of \textsf{Absynth}.
To this end, we introduced a temporary variable \texttt{tmp} that stores the value of \texttt{x2}
before \texttt{x2} is updated.

\medskip

\begin{figure}[h]
    \begin{minipage}{0.4\textwidth}
      \begin{lstlisting}
        def f():
            var h, t, m, z
            assume m >= 0

            while n > 0:
                h = 0
                t = m
                while h <= t:
                    prob(3,1):
                        h = h + unif(1,3)
                    t = t + 1
                    tick 1
                n = n - 1
      \end{lstlisting}
    \end{minipage}
  \begin{minipage}{0.6\textwidth}
      \begin{tikzpicture}[shorten >=1pt,node distance=3.5cm,auto,main node/.style={circle,draw,font=\sffamily\Large\bfseries}]
        \node[main node] (f) {$f$};
        \node[main node] (g) [below of = f] {$g$};
        \node[main node] (h) [right of = g] {$h$};

        \path[->] (f) edge[align=left]
          node[yshift=1cm,left] {$\tau = (m \geq 0)$}
          node[yshift=1cm,right] {$t_0 \in g_0$}
        (g)

        (g) edge[bend left=30,align=left]
          node [below] {$t_1 \in g_1$}
          node [above] {$\eta(h)=0$\\ $\eta(t) = m$\\ $\tau = (n > 0)$}
        (h)

        (h) edge[loop above, align=left, color=red]
          node [right,yshift=0.1cm] {$t_2 \in g_2$}
          node [above,xshift=0.2cm,yshift=0.35cm] {$p=\tfrac{1}{4}$\\ $\eta(t)=t+1$\\ $\tau = (h \leq t)$\\$\{1\}$}
        (h)

        (h) edge[loop below, align=left, color=red]
          node [right,yshift=-0.1cm] {$t_3 \in g_2$}
          node [below,yshift=-0.35cm,xshift=0.4cm] {$p=\tfrac{3}{4}$\\ $\eta(h)=\UNIFORM(1,3)$\\ $\eta(t) = t+1$\\ $\tau = (h \leq t)$\\$\{1\}$}
        (h)

        (h) edge[bend left=30,align=left]
          node [above] {$t_4 \in g_3$}
          node [below] {$\eta(n)=n-1$}
        (g);

      \end{tikzpicture}
  \end{minipage}
  \caption{The \texttt{multirace} example from \cite{absynth}}
  \label{fig:evaluation_multirace}
\end{figure}

The next example depicted in \cref{fig:evaluation_multirace}
is from the collection of \textsf{Absynth}.
Again, in the \textsf{KoAT} format, the general transition $g_0$ is required since the initial location $f$ cannot be part of any loop.
Here, the general transition $g_0$ is also used to encode the assumption ``\texttt{assume m>=0}''.
Line 10 of the \textsf{Absynth} version, respectively transition $t_3$ of the \textsf{KoAT} version show how to express probabilistic sampling in the two input formats.
By our semantics of updates with distribution functions, \texttt{h = h + unif(1,3)} must be encoded as $\eta (h) = \UNIFORM(1,3)$.

\medskip

The input format of \textsf{eco-imp} is almost identical to the one of \textsf{Absynth}. 
The only difference is that \textsf{eco-imp} neither supports a \texttt{break} statement nor recursion.
Thus, in the input for \textsf{eco-imp} we modeled \texttt{break} via an additional Boolean flag.
We only used recursion for the input of \textsf{Absynth} for very large examples.
Their translation to \textsf{Absynth} and \textsf{eco-imp} is explained below.

Moreover, \textsf{eco-imp} has a built-in timeout of 1 minute and does not output the degree of the polynomial bounds it computes. 
Hence, we applied two patches to the implementation of \textsf{eco-imp} to increase the timeout to 5 minutes and to make \textsf{eco-imp} output the degree of the bounds.

\medskip

\begin{figure}[h]
\centering
\begin{minipage}{0.4\textwidth}
      \begin{lstlisting}
        while x<10:
            temp = 1
            while temp > 0:
                x = x + temp
                temp = unif(0,1)
            tick 1
      \end{lstlisting}
    \end{minipage}
\caption{Modeling geometric distributions for \textsf{Absynth} and \textsf{eco-imp}}
\label{fig:geometric_translation}
\end{figure}

As already mentioned, \textsf{KoAT} supports Bernoulli, uniform, hypergeometric, and binomial distributions (which are finite distributions) as well as geometric distributions (which are infinite). 
While \textsf{Absynth} and \textsf{eco-imp} support the same finite distributions, they both do \emph{not} support geometric distributions.
Thus, we encoded geometric distributions via loops. 
For example, to realize $t_3$ in \cref{fig:preliminaries_pip_demo} where $x$ is updated by adding a value drawn from a geometric distribution with parameter $\tfrac{1}{2}$ we used the code presented in \cref{fig:geometric_translation}. 
This is the standard formulation of geometric distributions in probabilistic programs that only support finite distributions (see e.g., \cite{dblp:journals/jacm/kaminskikmo18}).

To keep the programs in \textsf{Absynth}'s input format readable for the larger examples from the TPDB, for \textsf{Absynth} we made use of tail-recursion.
We represent locations as functions and transitions as function calls.
This concerns the examples \texttt{alain.c}, \texttt{cousot9}, \texttt{ex\_paper1.c}, \texttt{knuth\_morris\_pratt.c}, \texttt{rank3.c}, \texttt{realheapsort}, \texttt{spctrm}, and \texttt{two\_arrays2}.
In this way, the input of \textsf{Absynth} is as close as possible to the PIP which \textsf{KoAT} receives as input.
For \textsf{eco-imp}, we used loops instead, since \textsf{eco-imp} does not support recursion.

The modularity of our approach is a significant advantage which allows us to handle larger examples.
However, a consequence of this modularity is that the obtained bounds often contain many nested subterms.
To ease the readability of these bounds, our implementation already performs a rudimentary set of arithmetic simplifications.
However, in particular bounds containing ``min'' and ``max'' terms that may be obtained from the underlying non-probabilistic analysis are difficult to simplify.
Therefore, \textsf{KoAT} also contains an option to use an SMT solver for the simplification of such expressions.
We did not use this option in our experiments since it increases computation time.
But it may sometimes result in asymptotically tighter bounds, since the asymptotic bounds are computed by a naive syntactic over-approximation of the obtained concrete bounds.
The non-probabilistic bounds of \textsf{KoAT} may also be exponential and they are also not necessarily weakly monotonic increasing.
Therefore, before using the non-probabilistic bounds computed by \textsf{KoAT} for our analysis, we apply $\overapprox{\cdot}$ to these bounds to make them weakly monotonically increasing (in case of ``$\min$'' and ``$\max$'' we apply $\overapprox{\cdot}$ to each argument).

\clearpage

	\section{Preliminaries from Probability Theory}
	\label{app:prelim}
	In this section, we present preliminaries needed for the proofs of our theorems.
	\cref{app:probtheo} recapitulates basic concepts from probability theory and in \cref{app:probSpace} we recapitulate the (standard) cylinder construction that we use to construct the probability spaces for PIPs.

	\subsection{Basic Concepts of Probability Theory}
	\label{app:probtheo}
	\label{appendix:probtheo}
\label{app:prelim_probability}
\noindent
This section recapitulates the concepts from probability theory that we use in this work.
We first introduce measurable spaces and $\sigma$-fields in \cref{sigma-Fields} and present probability spaces in \cref{Probability Spaces}.
Integrals of arbitrary measures are defined in \cref{sec:integral_arbitrary_measure}.
Finally, we define random variables and expected values in \cref{Random Variables} and conditional expected values in \cref{Conditional Expected Values}.
\subsubsection{$\sigma$-Fields}
\label{sigma-Fields}
When setting up a probability space over some sample space $\Omega$, which can be any non-empty set, we have to distinguish the sets whose probabilities we want to be able to measure.
The collection of these measurable sets is called a \emph{$\sigma$-field}.
\begin{definition}[$\sigma$-Field, Filtration]
	Let $\Omega$ be an arbitrary set and $\CF\subseteq \powerset(\Omega)$.
	$\CF$ is called a \emph{$\sigma$-field over $\Omega$} if the following three conditions are satisfied.
	\begin{enumerate}
		\item $\Omega \in \CF$,
		\item $A \in \CF \Rightarrow \Omega \setminus A \in \CF$, i.e., $\CF$ is closed under taking the complement,
		\item $A_i \in \CF \Rightarrow \bigcup\limits_{i \in \NN}A_i \in \CF$, i.e., $\CF$ is closed under countable union.
	\end{enumerate}
	The pair $(\Omega, \CF)$ is called a \emph{measurable space}.
	The elements of $\CF$ are called \emph{measurable} sets.

	For a $\sigma$-field $\CF$, a \emph{filtration} is a sequence $(\CF_i)_{i \in \NN}$ of subsets of $\CF$, such that for all $i \in \NN$, $\CF_i$ is a $\sigma$-field over $\Omega$ with $\CF_i \subseteq \CF_{i+1}$.
\end{definition}

To get an intuition for filtrations, consider $\Omega$ to be the set $\RUNS$ of all (infinite) runs and $\CF$ to be the $\sigma$-field created by all cylinder sets.
But after the first $i$ evaluation steps, one only knows the first $i+1$ configurations of the run.
Hence, it also makes sense to consider the $\sigma$-fields $\CF_i$ created by all cylinder sets $\Pre{f}$ where the length of $f$ is $i+1$.
Indeed, $(\CF_i)_{i\in\NN}$ is a filtration.
In this case, we additionally have $\bigcup_{i \in \NN} \CF_i = \CF$, i.e., the filtration ``approximates $\CF$ from below''.

Let $\CF$ and $\mathcal{B}$ be $\sigma$-fields over $\Omega$.
Then $\CF \cap \mathcal{B}$ is a $\sigma$-field over $\Omega$.
Furthermore, if $(\CF_i)_{i \in I}$ is a family of $\sigma$-fields over $\Omega$ then so is $\bigcap\limits_{i \in I}\CF_i$.

For any set $\mathcal{E}$ of subsets of $\Omega$, let $\sigmagen{\mathcal{E}} \subseteq \CF$ consist of all elements that are contained in all $\sigma$-fields that are supersets of $\mathcal{E}$.
The mapping from $\mathcal{E}$ to $\sigmagen{\mathcal{E}}$ is also called \emph{$\sigma$-operator}.

\begin{definition}[Generating $\sigma$-Fields]
	Let $\mathcal{E}\subseteq \powerset(\Omega)$. Then the smallest $\sigma$-field over $\Omega$ containing $\mathcal{E}$ is \[\sigmagen{\mathcal{E}} \is \bigcap\limits_{\mathcal{E}\subseteq\CF,~\CF~\sigma\textnormal{-field over } \Omega} \CF.
	\]
\end{definition}
In our setting, we regard $\sigma$-fields $\CF \subseteq \powerset(\Omega)$ of the form $\sigmagen{\mathcal{E}}$, where $\mathcal{E}$ is a collection of cylinder sets.

It turns out that there is a special case in which the generated $\sigma$-field is easy to describe, namely in the case where a countable covering of the space $\Omega$ is given.

\begin{lemma}[Generating $\sigma$-Fields for Covering of $\Omega$]
	\label{lemma:sigma_field_countable_cover}
	If $\Omega = \biguplus\limits_{i=1}^{\infty} A_i$ for a sequence $A_i \in \powerset(\Omega)$ and $\mathcal{H} \is \sigmagen{\{A_i \mid i \in \NN\}}$ then \[\mathcal{H}=\underbrace{\left\{\biguplus\limits_{i \in J} A_i \mid J \subseteq \NN\right\}}_{ \eqqcolon \mathcal{E}}.
	\]
\end{lemma}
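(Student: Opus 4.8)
The plan is to establish the two inclusions $\mathcal{E} \subseteq \mathcal{H}$ and $\mathcal{H} \subseteq \mathcal{E}$ separately. The first is immediate: every element $\biguplus_{i \in J} A_i$ of $\mathcal{E}$ with $J \subseteq \NN$ is a countable union of sets from $\{A_i \mid i \in \NN\} \subseteq \mathcal{H}$, and since $\mathcal{H} = \sigmagen{\{A_i \mid i \in \NN\}}$ is a $\sigma$-field it is closed under countable unions; hence $\biguplus_{i \in J} A_i \in \mathcal{H}$.

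For the reverse inclusion I would show that $\mathcal{E}$ is itself a $\sigma$-field over $\Omega$ that contains every $A_i$ (take $J = \{i\}$). Since $\mathcal{H}$ is by definition the smallest such $\sigma$-field, this yields $\mathcal{H} \subseteq \mathcal{E}$. Verifying that $\mathcal{E}$ is a $\sigma$-field amounts to checking the three defining properties. First, $\Omega = \biguplus_{i \in \NN} A_i \in \mathcal{E}$ by taking $J = \NN$. Second, for closure under countable unions, note that for any countable family $(J_k)_{k \in \NN}$ of subsets of $\NN$ we have $\bigcup_{k \in \NN} \biguplus_{i \in J_k} A_i = \biguplus_{i \in \bigcup_{k} J_k} A_i$, where the right-hand side is again a disjoint union because the $A_i$ are pairwise disjoint, so it lies in $\mathcal{E}$.

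The step that needs a little care is closure under complementation, and it is here that the hypothesis that $\Omega = \biguplus_i A_i$ is a \emph{disjoint} cover is essential. The claim is that $\Omega \setminus \biguplus_{i \in J} A_i = \biguplus_{i \in \NN \setminus J} A_i$ for every $J \subseteq \NN$. I would argue pointwise: since the $A_i$ partition $\Omega$, every $\omega \in \Omega$ lies in $A_{i_\omega}$ for exactly one index $i_\omega \in \NN$, so $\omega \notin \biguplus_{i \in J} A_i$ iff $i_\omega \notin J$ iff $\omega \in \biguplus_{i \in \NN \setminus J} A_i$. Possibly empty sets $A_i$ cause no difficulty here, as they contribute nothing to either side. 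This gives the third $\sigma$-field axiom, so $\mathcal{E}$ is a $\sigma$-field containing all $A_i$, and the proof is complete.

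I do not anticipate any real obstacle: the lemma is elementary, and the only thing requiring attention is to use the disjointness of the cover consistently, in particular when rewriting the complement of $\biguplus_{i \in J} A_i$ and when re-expressing a countable union of members of $\mathcal{E}$ as a single disjoint union.
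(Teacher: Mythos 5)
Your proposal is correct and follows essentially the same route as the paper: both inclusions are obtained by verifying that $\mathcal{E}$ is itself a $\sigma$-field containing all the $A_i$ (with $\Omega$ via $J=\NN$, complements via $\NN\setminus J$ using disjointness, and countable unions via $\bigcup_k J_k$), combined with minimality of $\sigmagen{\{A_i \mid i \in \NN\}}$. Your pointwise justification of the complement identity is merely a more detailed spelling-out of the same step the paper states directly.
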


\begin{proof}
	Showing that $\mathcal{E}$ is a $\sigma$-field is enough to prove the desired result: it contains all the sets $A_i$ and every $\sigma$-field containing all the $A_i$ has to contain all their countable unions.
	\begin{item}
	      \item $\Omega \in \mathcal{E}$ by choosing $J=\NN$.
	      \item Let $J\subseteq \NN$.
	      Then $\Omega\setminus\left(\biguplus\limits_{i \in J}A_i\right) = \biguplus\limits_{i \in \NN \setminus J}A_i \in \mathcal{E}.$
	      \item Let $J_n \subseteq \NN$.
	      Then $\bigcup\limits_{n \in \NN} \biguplus\limits_{i \in J_n} A_i = \biguplus\limits_{i \in \bigcup\limits_{n \in \NN}J_n} A_i\in \mathcal{E}.$
	\end{item}
\end{proof}

\noindent
This special type of a $\sigma$-field can be used to describe the elements of the $\sigma$-fields $\CF_{i,j}$ in \cref{appendix:details_lprf}.

The following definition of atoms (similar to the definition of atoms found in the literature on measure theory, e.g.,
\cite[Lemma 5.5.8]{cylindrical}) and its related lemmas will help us to simplify the proofs later on.

\begin{definition}[Atoms of $\sigma$-Fields]
	Let $\CF$ be a $\sigma$-field and $A \in \CF$.
	If $X\in \CF$ and $X\subseteq A$ implies $X = \varnothing$ or $X=A$, then we call $A$ an \emph{atom}.
\end{definition}

This means that atoms are the \emph{smallest} $\CF$-measurable sets.
In general, this does \emph{not} imply that they also generate $\CF$.
However, this holds in the case of \cref{lemma:sigma_field_countable_cover}.

\begin{lemma}[Atoms for Covering of $\Omega$]
	\label{lemma:countable_cover_atoms}
	Let $\mathcal{H}$ be as in \cref{lemma:sigma_field_countable_cover}.
	The atoms of $\mathcal{H}$ are exactly the sets $A_i$, $i \in \NN$.
\end{lemma}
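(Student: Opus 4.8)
The plan is to prove the two directions separately, relying throughout on the explicit description $\mathcal{H} = \{\, \biguplus_{i \in J} A_i \mid J \subseteq \NN \,\}$ established in \cref{lemma:sigma_field_countable_cover} together with the pairwise disjointness of the blocks $A_i$. I will tacitly restrict attention to non-empty sets, since the empty set is never a genuine atom (and if some $A_i$ happens to be empty it is simply discarded from the list).

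First I would check that every block $A_i$ is an atom of $\mathcal{H}$. Membership $A_i \in \mathcal{H}$ is immediate by taking $J = \{i\}$. For the atom property, take any $X \in \mathcal{H}$ with $X \subseteq A_i$ and write $X = \biguplus_{j \in J} A_j$. Every $j \in J$ then satisfies $A_j \subseteq X \subseteq A_i$; since $A_j \cap A_i = \varnothing$ for $j \neq i$, this forces $A_j = \varnothing$ for all such $j$. Hence $X = A_i$ if $i \in J$, and $X = \varnothing$ otherwise, so $X \in \{\varnothing, A_i\}$, which is exactly what the definition of an atom requires.

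For the converse, let $A$ be an atom of $\mathcal{H}$; by the description of $\mathcal{H}$ we have $A = \biguplus_{j \in J} A_j$ for some $J \subseteq \NN$, and since $A \neq \varnothing$ there is some $j_0 \in J$ with $A_{j_0} \neq \varnothing$. Now $A_{j_0} \in \mathcal{H}$, $A_{j_0} \subseteq A$, and $A_{j_0} \neq \varnothing$, so the atom property of $A$ forces $A_{j_0} = A$. Thus $A$ coincides with the block $A_{j_0}$, which proves that every atom is one of the $A_i$ and completes the argument.

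I do not expect a real obstacle here: the only thing that needs genuine care is the bookkeeping with disjointness — namely that a member of $\mathcal{H}$ contained in $A_i$ can only ``use'' the single block $A_i$ — together with being explicit about the convention that the empty set is excluded from the notion of atom; everything else is a direct consequence of \cref{lemma:sigma_field_countable_cover}.
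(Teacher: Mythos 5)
Your proof is correct and follows essentially the same route as the paper: the key step — that any $X \in \mathcal{H}$ contained in $A_i$, being a disjoint union of blocks, must collapse to $\varnothing$ or to $A_i$ by disjointness — is exactly the paper's argument. The only difference is that you also spell out the converse direction (every non-empty atom equals some $A_{j_0}$, obtained by picking a non-empty block inside it and invoking the atom property), which the paper's proof leaves implicit, and your explicit convention of discarding empty sets matches the paper's usage elsewhere.
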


\begin{proof}

	Fix some $j_0 \in \NN$ and let $X \in \mathcal{H}$, i.e., $X$ is a disjoint union of the $A_i$ by \cref{lemma:sigma_field_countable_cover}.
	Then $X\subseteq A_{j_0}$ iff $X\subsetneq A_{j_0}$ or $X = A_{j_0}$.
	But $X\subsetneq A_{j_0}$ implies that $X$ is the empty union $\varnothing$ due to the disjointness of the sets $A_i$.
	So, $A_{j_0}$ is indeed an atom.
\end{proof}
\begin{definition}[Borel-Field]
	If $\Omega = \realclosure$ we use its $\sigma$-field $\mathfrak{B}=\mathfrak{B}\left(\realclosure\right)$, the \emph{Borel-field} with \[\mathfrak{B}\left(\realclosure\right) \is \sigmagen{\{(a,b)\subseteq \realclosure \mid a < b \in \realclosure\}}.
	\]
\end{definition}

\noindent
In this work, we use the concept of measurable maps (or ``measurable mappings'').
Measurable maps are the structure-preserving maps between measurable spaces.
They are defined as follows.

\begin{definition}[Measurable Map]
	\label{defQQQmeasurable_map}
	Let $(\Omega_1,\CF_1)$ and $(\Omega_2,\CF_2)$ be measurable spaces. A function $f\colon \Omega_1 \to \Omega_2$ is called an \emph{$\CF_1$-$\CF_2$ measurable map} or just \emph{measurable} if for all $A \in \CF_2$ \[f^{-1}(A) = \{\run \in \Omega_1 \mid f(\run) \in A\} \in \CF_1.
	\]
	$\sigmagen{f} \is f^{-1}(\CF_2) \is \{f^{-1}(A) \mid A \in \CF_2\}$ is the smallest $\sigma$-field $\CF$ such that $f$ is $\CF_1$-$\CF_2$ measurable.
	Similarly, $\sigmagen{f_0,\ldots,f_n} = \sigmagen{f_0^{-1}(\CF_2) \cup \ldots \cup f_n^{-1}(\CF_2)}$.
	This will become important when talking about random variables and conditional expectations.
\end{definition}

\bigskip

\subsubsection{Probability Spaces}
\label{Probability Spaces}
So far we have only introduced the concept of measurable spaces.
Intuitively, a measurable space provides the structure for defining a \emph{measure}.
\emph{Probability spaces} are measurable spaces combined with a certain measure where the measure of the sample space is $1$.

\begin{definition}[Probability Measure, Probability Space]
	Let $(\Omega, \CF)$ be a measurable space.
	A map $\mu \colon \CF \to \RR_{\geq 0}$ is called a \emph{measure} if
	\begin{enumerate}
		\item $\mu(\emptyset)=0$
		\item $\mu(\biguplus\limits_{i\geq 0} A_i) = \sum\limits_{i\geq 0}\mu(A_i)$.
	\end{enumerate}
	A \emph{probability measure} is a measure $\mathbb{P} \colon \CF \to \RR_{\geq 0}$ with $\mathbb{P}(\Omega)=1$.
	This implies that $\mathbb{P}(A)\in [0,1]$ for every $A \in \CF$.
	If $\mathbb{P}$ is a probability measure, then $(\Omega, \CF, \mathbb{P})$ is called a \emph{probability space}.
	In this setting, a set in $\CF$ is called an \emph{event}.
\end{definition}

\noindent
Here, the intuition for a probability measure $\mathbb{P}$ is that for any set $A \in \CF$, $\mathbb{P}(A)$ is the probability that an element chosen from $\Omega$ is contained in $A$.
Since there is no probability measure on $\emptyset$, we always assume $\Omega \neq \emptyset$.

In the special case where $\Omega$ is countable and $\CF$ is the powerset of $\Omega$, then a probability measure is uniquely determined by the probabilities $\IP{\{a\}}$ for $a \in \Omega$.
Therefore, we call the mapping $pr \colon \Omega \to \RR_{\geq 0}, a \mapsto \IP{\{a\}}$ the \emph{probability mass function} of $\mathbb{P}$.
On the other hand,
if $\Omega$ is countable and $pr \colon \Omega \to \RR_{\geq 0}$ is a mapping such that $\sum_{a \in \Omega} pr(a)=1$, then $pr$ uniquely determines a probability measure on the powerset of $\Omega$.

For a probability space, we can define when a property holds \emph{almost surely}.

\begin{definition}[Almost Sure Properties]
	Let $(\Omega, \CF, \mathbb{P})$ be a probability space and $\alpha$ some property, e.g., a logical formula.
	Let $A_\alpha \subseteq \Omega$ be the set of all elements from $\Omega$ that satisfy the property $\alpha$.
	If $A_\alpha \in \CF$ (i.e., it is measurable) and $\IP{A_\alpha}=1$ then $\alpha$ holds \emph{almost surely}.
\end{definition}

\noindent
If a property $\alpha$ holds almost surely it does not need to hold for all elements of $\Omega$.
However, the measure $\mathbb{P}$ cannot distinguish $A_\alpha=\Omega$ and $A_\alpha\neq \Omega$ if $\IP{A_\alpha}=1$, so w.r.t.\ $\mathbb{P}$, properties that hold almost surely can be considered as holding globally.

\subsubsection{Integrals of Arbitrary Measures}
\label{sec:integral_arbitrary_measure}
We now introduce a notion of an integral with respect to an arbitrary measure.
Therefore, we fix a measurable space $(\Omega, \CF)$ and a measure $\mu$.
The objective is to define a ``mean'' of a measurable function $f$.
The basic idea is to partition the image of $f$ into sets $A_i$ on which $f$ has a constant value $\alpha_i$.
Then we compute the weighted average of the $\alpha_i$, where the weights are the measures of the $A_i$.
This definition is fine if $f$ takes only finitely many values (these functions are called \emph{elementary}).
If $f$ takes infinitely (countable or even uncountable) many values, we have to approximate $f$ step by step by such functions with finite image.
This yields a limit process.
In this work we will only consider the cases where $\mu=\mathbb{P}$ is a probability measure.

\begin{definition}[Elementary Function, \protect{\cite[Def.~2.2.1]{bauer71measure}}]
	An \emph{elementary function} is a non-negative measurable function $f\colon \Omega \to \realclosure$ that takes only finitely many finite values, i.e., there exist $A_1, \dots, A_n \in \CF$ and $\alpha_1, \dots, \alpha_n \geq 0$ such that \[f= \sum \limits_{i=1}^n \alpha_i \cdot \ind_{A_i}.
	\]
\end{definition}

\noindent
Given an elementary function, the decomposition into a linear combination of indicator functions is not unique.
But to define an integral we have to guarantee that its value does not depend on the chosen decomposition.
Fortunately, this can be proved:
\begin{lemma}[Decomposition of Elementary Functions \protect{\cite[Lemma 2.2.2]{bauer71measure}}]
	\label{lemma:elementary_functions_decompositions}
	Let $f=\sum \limits_{i=1}^n \alpha_i \cdot \ind_{A_i}=\sum \limits_{j=1}^n \beta_j \cdot \ind_{B_j}$ be an elementary function, where $\alpha_i, \beta_j \geq 0$, $A_i, B_j \in \CF$ for all $i$ and $j$. Then \[\sum \limits_{i=1}^n \alpha_i \cdot \mu\left(A_i\right)=\sum \limits_{j=1}^n \beta_j \cdot \mu\left(B_j\right).
	\]
\end{lemma}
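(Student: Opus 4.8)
The plan is to reduce the identity to a single common partition of $\Omega$ into ``atoms'' on which $f$ is constant, and then read off the value of $f$ on each atom in two ways. First I would pass to the finite Boolean subalgebra of $\CF$ generated by the $2n$ sets $A_1,\dots,A_n,B_1,\dots,B_n$, and let $C_1,\dots,C_m\in\CF$ be its non-empty atoms; equivalently, the $C_k$ are the non-empty sets of the form $\bigcap_i A_i^{\varepsilon_i}\cap\bigcap_j B_j^{\delta_j}$ with $\varepsilon,\delta\in\{0,1\}$, where $A^1=A$ and $A^0=\Omega\setminus A$. A purely set-theoretic argument about finite Boolean algebras shows that the $C_k$ are pairwise disjoint, that $\Omega=\biguplus_{k=1}^m C_k$, that each $A_i$ (resp.\ $B_j$) is the disjoint union of those $C_k$ it contains, and that for every $k$ and $i$ we have either $C_k\subseteq A_i$ or $C_k\cap A_i=\emptyset$ (and likewise for the $B_j$).

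Next I would invoke finite additivity of $\mu$ — which follows from $\mu(\emptyset)=0$ together with the countable-additivity axiom by padding with empty sets — to write $\mu(A_i)=\sum_{k:\,C_k\subseteq A_i}\mu(C_k)$ in $[0,\infty]$, and similarly $\mu(B_j)=\sum_{k:\,C_k\subseteq B_j}\mu(C_k)$. Substituting these into the two sums and reordering the (finitely many) terms gives
\[
\sum_{i=1}^n \alpha_i\,\mu(A_i)=\sum_{k=1}^m\Bigl(\sum_{i:\,C_k\subseteq A_i}\alpha_i\Bigr)\mu(C_k),\qquad
\sum_{j=1}^n \beta_j\,\mu(B_j)=\sum_{k=1}^m\Bigl(\sum_{j:\,C_k\subseteq B_j}\beta_j\Bigr)\mu(C_k).
\]
The key observation is that for a fixed $k$ and any $\omega\in C_k$ (and $C_k\neq\emptyset$ by construction), we have $\sum_{i:\,C_k\subseteq A_i}\alpha_i=\sum_{i=1}^n\alpha_i\,\ind_{A_i}(\omega)=f(\omega)$ and, by the same reasoning, $\sum_{j:\,C_k\subseteq B_j}\beta_j=\sum_{j=1}^n\beta_j\,\ind_{B_j}(\omega)=f(\omega)$. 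Hence both inner sums equal the common constant value $\gamma_k:=f(\omega)$ for $\omega\in C_k$, so both sides of the claimed identity equal $\sum_{k=1}^m\gamma_k\,\mu(C_k)$, which finishes the proof.

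I do not expect a genuine obstacle: the argument is routine, and the only points needing a little care are (i) checking that the finitely many non-empty atoms partition $\Omega$ and refine every $A_i$ and $B_j$, (ii) the finite additivity of $\mu$, and (iii) bookkeeping in $[0,\infty]$ — each $\gamma_k$ is a finite sum of the finite non-negative numbers $\alpha_i$ (resp.\ $\beta_j$) and so is finite, whereas $\mu(C_k)$ may be $\infty$; since all displayed sums are finite sums of terms in $[0,\infty]$, reordering is harmless, and a product $0\cdot\infty$ is read as $0$ as usual.
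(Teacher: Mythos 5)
Your proof is correct. The paper itself does not prove this statement at all---it is imported verbatim from the literature (Lemma 2.2.2 in Bauer's measure theory book), so there is no in-paper argument to compare against; your common-refinement argument (pass to the non-empty atoms $C_k$ of the finite Boolean algebra generated by $A_1,\dots,A_n,B_1,\dots,B_n$, use finite additivity to expand each $\mu(A_i)$ and $\mu(B_j)$ over the atoms, and observe that on each atom both inner coefficient sums equal the constant value of $f$ there) is exactly the standard textbook proof underlying that citation. One small remark: in this paper a measure is by definition a map $\mu\colon\CF\to\RR_{\geq 0}$ and the lemma is only ever applied with $\mu=\pmeasure$ a probability measure, so your bookkeeping in $[0,\infty]$ and the convention $0\cdot\infty=0$ are unnecessary here, though harmless and in fact make your argument slightly more general than what is needed.
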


\begin{definition}[Integral of Elementary Function, \protect{\cite[Def.~2.2.3]{bauer71measure}}]
	\label{def:integral_elementary}
	Let $f= \sum \limits_{i=1}^n \alpha_i \cdot \ind_{A_i}$ be an elementary function. Then we define its integral w.r.t.~$\mu$ as \[\int_{\Omega} f\, d\mu \is \int f \, d\mu \is \sum \limits_{i=1}^n \alpha_i \cdot \mu\left(A_i\right).
	\]
	The well-definedness is justified by \textnormal{\cref{lemma:elementary_functions_decompositions}} as it shows that the chosen decomposition is independent of the elementary function.
\end{definition}

\noindent
However, the measurable functions we use are not elementary.
They take arbitrary (countably or even uncountably) many values.
So we have to generalize \cref{def:integral_elementary}.
It can be shown that any non-negative measurable function is the limit of a monotonic increasing sequence of elementary functions.

\begin{theorem}[Representation by Elementary Functions]
	Let $f\colon \Omega \to \realclosure$ be a non-negative measurable function. Then there exists a monotonic sequence $f_0 \leq f_1 \leq \dots$ of elementary functions such that \[f = \sup_{n \in \NN} f_n = \lim \limits_{n \to \infty} f_n.
	\]
	Furthermore, for any two such sequences $(f_n)_{n \in \NN}$ and $(g_n)_{n \in \NN}$ we have $\sup\limits_{n \in \NN} \int f_n \, d \mu=\sup\limits_{n \in \NN} \int g_n \, d \mu$.
\end{theorem}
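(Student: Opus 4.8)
The plan is to use the classical dyadic staircase approximation. For each $n\in\NN$ I would set
\[
  f_n \;=\; \sum_{k=0}^{n\cdot 2^n - 1} \tfrac{k}{2^n}\cdot \ind_{A_{n,k}} \;+\; n\cdot \ind_{B_n},
\]
where $A_{n,k} = f^{-1}([\tfrac{k}{2^n},\tfrac{k+1}{2^n}))$ and $B_n = f^{-1}([n,\infty])$. Since $f$ is measurable and the half-open intervals as well as $[n,\infty]$ are Borel subsets of $\realclosure$, all the sets $A_{n,k}$ and $B_n$ lie in $\CF$; hence each $f_n$ is a non-negative measurable function attaining only the finitely many finite values $0,\tfrac{1}{2^n},\dots,n$, i.e., it is elementary. (For $n=0$ the sum is empty, so $f_0\equiv 0$.)

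Next I would verify the three required properties pointwise. For \emph{monotonicity} $f_n\le f_{n+1}$, note that the level-$(n+1)$ partition refines the level-$n$ one: each interval $[\tfrac{k}{2^n},\tfrac{k+1}{2^n})$ splits into $[\tfrac{2k}{2^{n+1}},\tfrac{2k+1}{2^{n+1}})$ and $[\tfrac{2k+1}{2^{n+1}},\tfrac{2k+2}{2^{n+1}})$, on both of which $f_{n+1}\ge\tfrac{2k}{2^{n+1}}=\tfrac{k}{2^n}$, while on $B_n$ we have $f_{n+1}\ge n$ because $n+1>n$. For \emph{convergence}, fix $\omega$. If $f(\omega)=\infty$, then $f_n(\omega)=n\to\infty=f(\omega)$; if $f(\omega)<\infty$, then for every $n>f(\omega)$ we have $0\le f(\omega)-f_n(\omega)<2^{-n}$, so again $f_n(\omega)\to f(\omega)$. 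Combining this with monotonicity yields $\sup_{n\in\NN} f_n(\omega)=\lim_{n\to\infty} f_n(\omega)=f(\omega)$, which is the first assertion.

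For the ``furthermore'' part, the crucial auxiliary statement is a continuity-from-below property for elementary functions: if $(h_n)_n$ is an increasing sequence of elementary functions whose pointwise supremum $h$ is again elementary, then $\sup_n\int h_n\,d\mu=\int h\,d\mu$. I would prove this by writing $h=\sum_{i=1}^m\alpha_i\cdot\ind_{C_i}$ with $\alpha_i\ge 0$ and $C_i\in\CF$, fixing $\eps\in(0,1)$, and putting $D_n=\{\omega\mid h_n(\omega)\ge(1-\eps)\,h(\omega)\}\in\CF$. Since $h$ is finite-valued and $(h_n)_n$ increases to $h$, the $D_n$ increase to $\Omega$, so
\[
  \int h_n\,d\mu \;\ge\; (1-\eps)\sum_{i=1}^m \alpha_i\cdot\mu(C_i\cap D_n),
\]
and the right-hand side converges to $(1-\eps)\int h\,d\mu$ as $n\to\infty$ by continuity of $\mu$ from below. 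Letting $\eps\to 0$ and combining with the trivial bound $\int h_n\,d\mu\le\int h\,d\mu$ (monotonicity of the integral on elementary functions) gives the claim. Finally, given two monotone sequences $(f_n)$, $(g_n)$ of elementary functions with $\sup_n f_n=\sup_n g_n=f$, I would apply this with $h_n=\min(g_m,f_n)$ for fixed $m$: this is elementary, increasing in $n$, and since $g_m\le f$ its supremum over $n$ equals $\min(g_m,f)=g_m$. Hence $\int g_m\,d\mu=\sup_n\int\min(g_m,f_n)\,d\mu\le\sup_n\int f_n\,d\mu$ for every $m$, so $\sup_m\int g_m\,d\mu\le\sup_n\int f_n\,d\mu$; by symmetry the reverse inequality holds too, and the two suprema coincide.

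The main obstacle is the continuity-from-below lemma for elementary functions: this is the only place where the genuine measure-theoretic input — countable additivity of $\mu$, used via continuity from below — enters, and the $(1-\eps)$-trick with the sets $D_n$ is the standard device to get around the fact that one cannot compare $h_n$ with $h$ on a single fixed set. Everything else reduces to routine bookkeeping with the explicit staircase $f_n$.
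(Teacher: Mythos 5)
Your proof is correct: the dyadic staircase construction together with the $(1-\eps)$/continuity-from-below lemma for elementary functions and the $\min(g_m,f_n)$ comparison is exactly the standard argument behind this theorem, which the paper itself does not prove but merely cites from Bauer's textbook (Cor.\ 2.3.2 and Thm.\ 2.3.6). Since your route coincides with that classical proof, there is nothing to add beyond noting that the monotonicity and additivity of the integral for elementary functions that you invoke are the routine consequences of the paper's decomposition lemma.
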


\begin{proof}
	See \cite[Cor.
		2.3.2., Thm.
		2.3.6]{bauer71measure}.
\end{proof}

\noindent
This theorem justifies the following definition of an integral for an arbitrary non-negative function.

\begin{definition}[Integral of Arbitrary Functions]
	Let $f\colon \Omega \to \realclosure$ be a non-negative measurable function and $(f_n)_{n \in \NN}$ a monotonic sequence of elementary functions such that~$f= \sup_{n \in \NN} f_n$. Then we define the integral of $f$ w.r.t.~$\mu$ by \[\int_{\Omega} f \, d\mu \is \int f \, d\mu \is \sup_{n \in \NN} \int f_n \, d\mu.
	\]
\end{definition}

\noindent
Before we state the properties of the integral used in this work, we will define the integral on a measurable subset of $\Omega$.

\begin{definition}[Integral on Measurable Subset]
	Let $f\colon \Omega \to \realclosure$ be a non-negative measurable function and $A\in \CF$. Then $f \cdot \ind_{A}$ is non-negative and measurable, and we define \[\int_{A} f \, d\mu \is \int f \cdot \ind_{A} \, d\mu.
	\]
\end{definition}

\begin{lemma}[Properties of the Integral]
	\label{lemma:properties_of_integral}
	Let $a,b \geq 0$, $f,g \colon \Omega \to \realclosure$ be measurable functions, and $A,A_i \in \CF$.
	Then
	\begin{align*}
		\int_{A} a\cdot f + b \cdot g \, d \mu             & = a\cdot\int_{A} f \, d \mu + b \cdot \int_{A} g \, d \mu. & \text{(Linearity)}  \\
		\int_{A} 1 \, d \mu                                & = \mu(A).                                                  & \text{(Measure)}    \\
		\int_{\biguplus\limits_{i \in \NN} A_i} f \, d \mu & = \sum\limits_{i \in \NN}\int_{ A_i} f \, d \mu.           & \text{(Additivity)} \\
	\end{align*}
\end{lemma}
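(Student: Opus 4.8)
The plan is to establish each of the three identities first for elementary functions, where it follows directly from \cref{def:integral_elementary} together with the well-definedness guaranteed by \cref{lemma:elementary_functions_decompositions}, and then to lift it to arbitrary non-negative measurable functions by approximation through monotone sequences of elementary functions.

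For the \emph{Measure} identity I would simply note that the constant function $1$ is the elementary function $\ind_\Omega$, so that $\int_A 1\,d\mu = \int \ind_\Omega\cdot\ind_A\,d\mu = \int \ind_A\,d\mu = \mu(A)$ straight from the definition of the integral of an elementary function. For \emph{Linearity}, given elementary $f = \sum_{i=1}^n \alpha_i\cdot\ind_{A_i}$ and $g = \sum_{j=1}^m \beta_j\cdot\ind_{B_j}$, the function $(a\cdot f + b\cdot g)\cdot\ind_A$ is again elementary with the explicit representation $\sum_i a\alpha_i\cdot\ind_{A_i\cap A} + \sum_j b\beta_j\cdot\ind_{B_j\cap A}$; \cref{lemma:elementary_functions_decompositions} makes its integral independent of the representation, and evaluating on this one gives $a\sum_i\alpha_i\,\mu(A_i\cap A) + b\sum_j\beta_j\,\mu(B_j\cap A) = a\int_A f\,d\mu + b\int_A g\,d\mu$. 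For general non-negative measurable $f,g$ I would pick monotone sequences $(f_k)_k$ and $(g_k)_k$ of elementary functions with $f = \sup_k f_k$ and $g = \sup_k g_k$; then $(a f_k + b g_k)_k$ is a monotone sequence of elementary functions with supremum $af + bg$, so $\int_A(af+bg)\,d\mu = \sup_k \int_A(af_k+bg_k)\,d\mu = \sup_k\bigl(a\int_A f_k\,d\mu + b\int_A g_k\,d\mu\bigr) = a\int_A f\,d\mu + b\int_A g\,d\mu$, using that the supremum of a sum of two non-decreasing non-negative sequences is the sum of their suprema.

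For \emph{Additivity}, write $A = \biguplus_{i\in\NN} A_i$, so that $\ind_A = \sum_{i\in\NN}\ind_{A_i}$ pointwise. For an elementary $f = \sum_{k=1}^n\alpha_k\cdot\ind_{E_k}$ I would compute $\int_A f\,d\mu = \sum_k\alpha_k\,\mu(E_k\cap A) = \sum_k\alpha_k\sum_i\mu(E_k\cap A_i) = \sum_i\sum_k\alpha_k\,\mu(E_k\cap A_i) = \sum_i\int_{A_i}f\,d\mu$, where the second step uses $\sigma$-additivity of $\mu$ and the reordering of the (finite-by-countable) double sum is justified since all terms are non-negative. For general non-negative measurable $f$ I would again take a monotone sequence $(f_k)_k$ of elementary functions with $f = \sup_k f_k$ and conclude $\int_A f\,d\mu = \sup_k\int_A f_k\,d\mu = \sup_k\sum_i\int_{A_i}f_k\,d\mu = \sum_i\sup_k\int_{A_i}f_k\,d\mu = \sum_i\int_{A_i}f\,d\mu$.

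The hard part---really the only step that is not a mechanical unwinding of definitions---is the interchange of the monotone limit $\sup_k$ with the countable sum $\sum_i$ in the last display (and, to a lesser extent, the reorderings of non-negative double series elsewhere). This rests on the elementary observation that for a doubly-indexed family $(a_{k,i})$ in $\realclosure$ that is non-decreasing in $k$ one has $\sup_k\sum_{i}a_{k,i} = \sum_{i}\sup_k a_{k,i}$, proved by bounding each side by the other; here one must be slightly careful because the values lie in $\realclosure$ and may be $\infty$. Alternatively, this lemma (and indeed all three properties) can simply be cited from \cite{bauer71measure}.
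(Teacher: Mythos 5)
Your proof is correct, but it takes a different route from the paper: the paper does not prove this lemma at all, it simply cites the corresponding facts from the textbook of Bauer, \cite[(2.2.4.), (2.3.6.), (2.3.7.), Cor.~2.3.5]{bauer71measure} --- exactly the fallback you mention in your last sentence. What you do instead is re-derive the three identities from the definitions that the appendix itself sets up: verify them for elementary functions (where well-definedness is exactly \cref{lemma:elementary_functions_decompositions}) and lift them to arbitrary non-negative measurable functions via monotone approximation by elementary functions. The individual steps are sound: for Linearity the sequence $(a f_k + b g_k)_k$ is indeed an admissible monotone elementary approximation of $af+bg$ (here $a,b\geq 0$ is needed), and the supremum of a sum of two non-decreasing sequences in $\realclosure$ is the sum of the suprema; for Additivity you correctly isolate the only non-mechanical step, the interchange $\sup_k\sum_i a_{k,i}=\sum_i\sup_k a_{k,i}$ for families non-decreasing in $k$ with values in $\realclosure$, which follows by the two-inequality argument you indicate (one direction is immediate, the other by first truncating the sum to finitely many $i$ and then letting the truncation grow). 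The trade-off is the usual one: the paper's citation is shorter and appropriate for textbook material, while your self-contained derivation makes the appendix independent of the external reference at the cost of reproducing standard arguments.
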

\begin{proof}
	See \cite[(2.2.4.), (2.3.6.), (2.3.7.), Cor.\ 2.3.5.]{bauer71measure}.
\end{proof}

\subsubsection{Random Variables}
\label{Random Variables}
A \emph{random variable} $X$ maps elements of one set $\Omega$ to another set $\Omega'$.
If $\mathbb{P}$ is a probability measure for $\Omega$ (i.e., for $A\subseteq \Omega$, $\mathbb{P}(A)$ is the probability that an element chosen from $\Omega$ is contained in $A$), then one obtains a corresponding probability measure $\mathbb{P}^X$ for $\Omega'$.
For $A' \subseteq \Omega'$, $\mathbb{P}^X(A')$ is the probability that an element chosen from $\Omega$ is mapped by $X$ to an element contained in $A'$.
In other words, instead of regarding the probabilities for choosing elements from $A$, one now regards the probabilities for the values of the random variable $X$.

\begin{definition}[Random Variable]
	\label{defQQQrandom_variable}
	Let $(\Omega, \CF, \mathbb{P})$ be a probability space.
	An $\CF$-$\mathfrak{B}(\realclosure)$ measurable map $X \colon \Omega \to \realclosure$ is a \emph{random variable}.
	Instead of saying ``$\CF$-$\mathfrak{B}\left(\realclosure\right)$ measurable'' we simply use the notion ``$\CF$-measurable''.
	It is called \emph{discrete random variable}, if its image is a \emph{countable} set.

	$\mathbb{P}^X \colon \CF' \to [0,1], A' \mapsto \mathbb{P}(X^{-1}(A'))$ is the probability measure induced by $X$ on $(\realclosure,\mathfrak{B}(\realclosure))$.
	Instead of $\mathbb{P}^X(A)$ the notation $\mathbb{P}(X \in A)$ is common.
	If $A = \{i\}$ is a singleton set, we also write $\mathbb{P}(X = i)$ instead of $\mathbb{P}^X(\{i\})$.
\end{definition}

\begin{definition}[Expected Value]
	Let $X \colon \Omega \to \realclosure$ be a random variable.
	Then its \emph{expected value} is $\expec{X} \is \int X d\mathbb{P}$.
\end{definition}

\begin{lemma}[Expected Value as Sum]
	\label{alternative expected value}
	If $X$ is a discrete random variable we have $\expec{X}=\sum_{r \in \realclosure} r \cdot \IP{X=r}$.
	Note that this series has only countably many non-zero non-negative summands.
	Hence, it either converges or it diverges to infinity.
\end{lemma}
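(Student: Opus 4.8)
The plan is to deduce the identity from the countable additivity of the integral applied to the partition of $\Omega$ into the fibers of $X$, together with the definition of the integral of elementary functions. Since $X$ is a discrete random variable, its image $X(\Omega)$ is a countable subset of $\realclosure$, so I would enumerate it as $\{r_i\}_{i \in I}$ with $I \subseteq \NN$ and let $A_i = X^{-1}(\{r_i\})$. Each $A_i$ lies in $\CF$ because $X$ is $\CF$-measurable and every singleton in $\realclosure$ is Borel, the $A_i$ are pairwise disjoint, and $\biguplus_{i \in I} A_i = \Omega$ (if $I$ is finite one pads the family with copies of $\varnothing$, which changes nothing since $\int_\varnothing X \, d\mathbb{P} = 0$). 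By the Additivity part of the lemma on properties of the integral, $\expec{X} = \int_\Omega X \, d\mathbb{P} = \sum_{i \in I} \int_{A_i} X \, d\mathbb{P}$.

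Next I would compute each summand. On $A_i$ the function $X$ is constant equal to $r_i$, hence $X \cdot \ind_{A_i} = r_i \cdot \ind_{A_i}$ and therefore $\int_{A_i} X \, d\mathbb{P} = \int r_i \cdot \ind_{A_i} \, d\mathbb{P}$. If $r_i$ is finite, then $r_i \cdot \ind_{A_i}$ is an elementary function and its integral equals $r_i \cdot \IP{A_i} = r_i \cdot \IP{X = r_i}$ by definition. If $r_i = \infty$, I would instead write $\infty \cdot \ind_{A_i}$ as the pointwise supremum of the monotonically increasing sequence of elementary functions $n \cdot \ind_{A_i}$, $n \in \NN$, so that $\int \infty \cdot \ind_{A_i} \, d\mathbb{P} = \sup_{n \in \NN} n \cdot \IP{A_i} = \infty \cdot \IP{X = \infty}$, using the convention $\infty \cdot 0 = 0$ to cover the degenerate case $\IP{A_i} = 0$. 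Summing over $i \in I$ gives $\expec{X} = \sum_{i \in I} r_i \cdot \IP{X = r_i}$; and since $\IP{X = r} = \IP{\varnothing} = 0$ for every $r \in \realclosure \setminus X(\Omega)$, I may adjoin all the missing (zero) summands to conclude $\expec{X} = \sum_{r \in \realclosure} r \cdot \IP{X = r}$.

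For the closing remark, I would observe that at most countably many of the summands $r \cdot \IP{X = r}$ are nonzero, namely those with $r \in X(\Omega)$, and each is a non-negative element of $\realclosure$, so the partial sums over finite index sets form a monotonically increasing net in $\realclosure$, whose supremum is either finite or $\infty$; this supremum is the value of the series. I do not anticipate a genuine obstacle: the only point that needs care is the bookkeeping around the value $\infty$ — checking that the summand $\infty \cdot \IP{X = \infty}$ is reproduced correctly, including when $\IP{X = \infty} = 0$, and that invoking countable additivity of the integral is legitimate for the possibly finite partition $\{A_i\}_{i \in I}$ — while everything else is a direct unwinding of the definitions of the integral of elementary functions and of an arbitrary non-negative measurable function via monotone approximation.
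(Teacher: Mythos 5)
Your proof is correct and follows essentially the same route as the paper's: partition $\Omega$ into the fibers $X^{-1}(\{r_i\})$, apply the Additivity property of the integral, and evaluate each piece using that $X$ is constant there. The only difference is that you handle the summand with $r_i=\infty$ by explicit monotone approximation with $n\cdot\ind_{A_i}$ (a point the paper's proof glosses over by pulling out $r_i$ via Linearity), which is a minor refinement rather than a different argument.
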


\begin{proof}
	Let $X(\Omega)=\{r_1,r_2,\dots\}$.
	Then $\Omega = \biguplus\limits_{i \in \NN} X^{-1}(\{r_i\})$.
	Hence
	\begin{align*}
		\expec{X} & =\int X d\mathbb{P}                                                                                                                  \\
		          & =\int_{\biguplus\limits_{i \in \NN} X^{-1}(\{r_i\})} X \, d\mathbb{P}                                                                \\
		          & =\sum\limits_{i \in \NN}\int_{X^{-1}(\{r_i\})} X \, d\mathbb{P}          & \tag{by \cref{lemma:properties_of_integral} (Additivity)} \\
		          & = \sum\limits_{i \in \NN}\int_{ X^{-1}(\{r_i\})} r_i \, d\mathbb{P}      & \tag{as $X$ is constant on $X^{-1}(\{r_i\})$}             \\
		          & =\sum\limits_{i \in \NN}r_i\cdot \int_{X^{-1}(\{r_i\})} 1 \, d\mathbb{P} & \tag{by \cref{lemma:properties_of_integral} (Linearity)}  \\
		          & =\sum\limits_{i \in \NN}r_i\cdot \IP{X^{-1}(\{r_i\})}                    & \tag{by \cref{lemma:properties_of_integral} (Measure)}    \\
		          & =\sum\limits_{i \in \NN}r_i\cdot \IP{X = r_i}                                                                                        \\
		          & =\sum\limits_{r \in \realclosure} r \cdot \IP{X = r}.
	\end{align*}
\end{proof}

\begin{definition}[Stochastic Process, Adaptedness to Filtration]
	A family $(X_i)_{i \in \NN}$ of random variables for a probability space $(\Omega, \CF, \mathbb{P})$ is called a \emph{stochastic process}.
	If $(\CF_i)_{i \in \NN}$ is a filtration of $\CF$ and $X_i$ is $\CF_i$-measurable for all $i \in \NN$, then we say that the stochastic process $(X_i)_{i \in \NN}$ is \emph{adapted}
	to the filtration $(\CF_i)_{i \in \NN}$.
\end{definition}

As an example, consider again $\Omega$ to be the set $\RUNS$ of all (infinite) runs, $\CF$ to be the $\sigma$-field created by all cylinder sets, and $\CF_i$ to be the $\sigma$-field created by all cylinder sets $\Pre{f}$ where the length of $f$ is $i+1$.
Let $X_i:\RUNS\to \realclosure$ return the value of a certain program variable in the $i$-th configuration of the run.
To determine this value, one indeed only has to regard the first $i+1$ configurations, i.e., $X_i$ is $\CF_i$-measurable.
Hence, the stochastic process $(X_i)_{i \in \NN}$ is adapted to the filtration $(\CF_i)_{i \in \NN}$.

\subsubsection{Conditional Expected Values w.r.t.\ $\sigma$-Fields}
\label{Conditional Expected Values}
We introduce the notion of \emph{conditional expected value} w.r.t.~a sub-$\sigma$-field on a fixed probability space $(\Omega, \CF,\mathbb{P})$.
The idea is that given a random variable $X$ and a subfield $\CG \subseteq \CF$ we would like to approximate $X$ by another $\CG$-measurable random variable w.r.t.~expectation.
Intuitively, this means that we want to construct a (possibly infinite) non-negative linear combination of the functions $\ind_{G}, G \in \CG$, in such a way that restricted to a set $G \in \CG$, the random variable $X$ and this linear combination have the same average value w.r.t.~$\mathbb{P}$.

\begin{definition}[Conditional Expected Value, \protect{\cite[Def.~10.1.2]{bauer71measure}}]
	\label{defQQQconditional_expectation}
	Let $X \colon \Omega \to \realclosure$ be a random variable and $\CG$ be a sub-$\sigma$-field of $\CF$.
	A random variable $Y \colon \Omega \to \realclosure$ is called a \emph{conditional expected value} of $X$ w.r.t.~$\CG$ if
	\begin{enumerate}
		\item $Y$ is $\CG$-measurable
		\item $\int_G Y \, d \mathbb{P}=\expec{Y \cdot \ind_{G}}= \expec{X \cdot \ind_{G}}=\int_G X \, d \mathbb{P}$ for every $G \in \CG$.
	\end{enumerate}
	If two such random variables $Y$ and $Y'$ exist, the just stated properties already ensure that $\mathbb{P}(Y=Y')=1$.
	Therefore a conditional expected value is almost surely unique which justifies the notation $\expec{X \mid \CG} \is Y$.
\end{definition}

\noindent
If the sub-$\sigma$-field has the special structure described in \cref{lemma:sigma_field_countable_cover} then the just stated property just needs to be checked on the generators.

\begin{lemma}[Conditional Expected Value]
	\label{app_lemma:conditional_expectation_countable_cover}
	Let $X$ be a random variable. If $\Omega = \biguplus\limits_{i=1}^{\infty} A_i$ for a sequence $A_i \in \CF$ and $\CG \is \sigmagen{\{A_i \mid i \in \NN\}}$ then a $\CG$-measurable function $Y$ is a conditional expected value of $X$ iff \[\expec{X\cdot\ind_{A_i}}=\expec{Y\cdot\ind_{A_i}}, \text{ for all } i \in \NN.
	\]
\end{lemma}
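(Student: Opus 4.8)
The plan is to prove the two implications separately, the whole argument resting on the structural description of $\CG$ supplied by \cref{lemma:sigma_field_countable_cover} together with countable additivity of the integral from \cref{lemma:properties_of_integral}. The ``only if'' direction is immediate: if $Y$ is a conditional expected value of $X$ w.r.t.\ $\CG$, then since each $A_i$ belongs to $\{A_j \mid j \in \NN\} \subseteq \CG$, condition~2 of \cref{defQQQconditional_expectation} applied to the particular set $G = A_i$ gives $\expec{Y \cdot \ind_{A_i}} = \expec{X \cdot \ind_{A_i}}$ for every $i \in \NN$, which is exactly the asserted equation.

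For the converse, assume $Y$ is $\CG$-measurable and that $\expec{X \cdot \ind_{A_i}} = \expec{Y \cdot \ind_{A_i}}$ holds for all $i \in \NN$. Condition~1 of \cref{defQQQconditional_expectation} is then part of the hypothesis, so it remains to establish condition~2 for an arbitrary $G \in \CG$. By \cref{lemma:sigma_field_countable_cover} we may write $G = \biguplus_{i \in J} A_i$ for some $J \subseteq \NN$. Recalling that $\expec{Z \cdot \ind_{A}} = \int_{A} Z \, d\mathbb{P}$ by the definitions of expected value and of the integral on a measurable subset, and applying the additivity property of \cref{lemma:properties_of_integral} (legitimate because the $A_i$ are pairwise disjoint and $\CF$-measurable), I compute
\[
  \int_G X\, d\mathbb{P} \;=\; \sum_{i \in J} \int_{A_i} X\, d\mathbb{P} \;=\; \sum_{i \in J} \expec{X \cdot \ind_{A_i}} \;=\; \sum_{i \in J} \expec{Y \cdot \ind_{A_i}} \;=\; \sum_{i \in J} \int_{A_i} Y\, d\mathbb{P} \;=\; \int_G Y\, d\mathbb{P},
\]
where the middle equality is the hypothesis applied term by term. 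Hence $\int_G Y\, d\mathbb{P} = \int_G X\, d\mathbb{P}$ for every $G \in \CG$, so $Y$ satisfies both requirements of \cref{defQQQconditional_expectation}.

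I do not expect any genuine obstacle here: the statement is essentially a ``check the generators suffices'' observation for the special covering situation, and the only points requiring (minor) care are that additivity of the integral may be invoked precisely because $\{A_i\}_{i\in\NN}$ is a measurable partition of $\Omega$, and that every element of $\CG$ is such a countable disjoint sub-union. Almost-sure uniqueness of the conditional expected value is already built into \cref{defQQQconditional_expectation} and needs no separate treatment.
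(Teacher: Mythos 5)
Your proof is correct and follows essentially the same route as the paper: the ``only if'' direction is immediate from the definition, and the ``if'' direction uses the description of $\CG$ from \cref{lemma:sigma_field_countable_cover} as countable disjoint unions of the $A_i$ together with countable additivity of the integral (\cref{lemma:properties_of_integral}). The paper phrases the additivity step via $\ind_{\biguplus_{i \in J} A_i} = \sum_{i \in J} \ind_{A_i}$ rather than integrating over the disjoint union directly, but this is only a cosmetic difference.
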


\begin{proof}
	By definition, it is left to show that $\expec{X\cdot\ind_{A_i}}=\expec{Y\cdot\ind_{A_i}}$ for all $i \in \NN$ implies $\expec{X\cdot\ind_{G}}=\expec{Y\cdot\ind_{G}}$ for any $G \in \CG$.
	But due to \cref{lemma:sigma_field_countable_cover} it is enough to show this for any disjoint union $\biguplus\limits_{i \in J}A_i$.
	We have $\ind_{\biguplus\limits_{i \in J}A_i}=\sum\limits_{i \in J}\ind_{A_i}$ and this series always converges point-wise as at most one of the summands is non-zero for any $\alpha \in \Omega$.
	Hence we have
	\begin{align*}
		\expec{X\cdot\ind_{\biguplus\limits_{i \in J}A_i}} & =\expec{X\cdot\sum\limits_{i \in J}\ind_{A_i}}                                                             \\
		                                                   & \overset{\textnormal{\cref{lemma:properties_of_integral}}}{=}\sum\limits_{i \in J}\expec{X\cdot\ind_{A_i}} \\
		                                                   & =\sum\limits_{i \in J}\expec{Y\cdot\ind_{A_i}}=\expec{Y\cdot\ind_{\biguplus\limits_{i \in J}A_i}}.
	\end{align*}
\end{proof}

The conditional expected value also has a very special behavior on atoms.

\begin{lemma}[Conditional Expectation is Constant on Atoms]
	\label{lemma:rsm_condexpconstatoms}
	Let $(\Omega, \CF,\layout{\linebreak} \pmeasure)$ be a probability space, $X$ is a $\CF$-measurable function,
	and $\CG \subseteq \CF$ a sub-$\sigma$-field of $\CF$.
	If $A \in\CG$ is an atom, then $\expvsign (X \mid\CG)$ is constant on $A$.
	Moreover, if $\pmeasure(A) > 0$, then for all $a \in A$ we have

	\[\expvsign (X \mid\CG)(a) = \frac {\expvsign (\ind_A \cdot X)}{\pmeasure (A)}
	\]
\end{lemma}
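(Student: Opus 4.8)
The plan is to prove the two assertions separately. The first — that $Y:=\expvsign(X\mid\CG)$ is constant on the atom $A$ — will rest only on the $\CG$-measurability of $Y$ granted by \cref{defQQQconditional_expectation}; the second — the explicit value $\expvsign(\ind_A\cdot X)/\pmeasure(A)$ — will come from condition~(2) of that definition applied to the event $G=A$, together with the integral rules of \cref{lemma:properties_of_integral}.

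For the first step I would argue, more generally, that every $\CG$-measurable map $Z\colon\Omega\to\realclosure$ is constant on $A$. For any Borel set $C$ we have $A\cap Z^{-1}(C)\in\CG$ and $A\cap Z^{-1}(C)\subseteq A$, so by the atom property $A\cap Z^{-1}(C)$ is either $\varnothing$ or $A$; that is, for every Borel $C$, either no point of $A$ is mapped into $C$ or all of them are. If $Z$ took two distinct values $Z(a_1)\neq Z(a_2)$ with $a_1,a_2\in A$, we could pick a Borel set $C$ with $Z(a_1)\in C$ and $Z(a_2)\notin C$ (the Borel $\sigma$-field on $\realclosure$ separates points), contradicting the previous dichotomy. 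Hence $Z$, and in particular every version $Y$ of $\expvsign(X\mid\CG)$, is constant on $A$; write $y\in\realclosure$ for its value there. Since this conclusion concerns only $\CG$-measurability and not $\pmeasure$, the merely almost-sure uniqueness in \cref{defQQQconditional_expectation} is irrelevant.

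For the second step, assume $\pmeasure(A)>0$. Condition~(2) of \cref{defQQQconditional_expectation} with $G=A\in\CG$ gives $\int_A Y\,d\pmeasure=\int_A X\,d\pmeasure$. On the left, $Y\cdot\ind_A=y\cdot\ind_A$ holds pointwise, so the Linearity and Measure parts of \cref{lemma:properties_of_integral} give $\int_A Y\,d\pmeasure=y\cdot\pmeasure(A)$. On the right, $\int_A X\,d\pmeasure=\int X\cdot\ind_A\,d\pmeasure=\expvsign(\ind_A\cdot X)$ by the definition of the expected value. Therefore $y\cdot\pmeasure(A)=\expvsign(\ind_A\cdot X)$, and dividing by $\pmeasure(A)>0$ yields $y=\expvsign(\ind_A\cdot X)/\pmeasure(A)$, which is exactly the claim (when $y=\infty$ both sides of $y\cdot\pmeasure(A)=\expvsign(\ind_A\cdot X)$ are $\infty$, matching the conventions $c\cdot\infty=\infty$ and $\infty/c=\infty$ for $c>0$).

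I expect the only delicate point — and hence the main obstacle — to be the first step: making the passage from $\CG$-measurability to constancy on $A$ fully rigorous for the extended-reals codomain $\realclosure$ (selecting a point-separating Borel set, and being careful with the value $\infty$), and noting that it applies to every representative of the conditional expectation so that almost-sure uniqueness causes no trouble. The rest is a routine application of the already-recorded properties of the integral.
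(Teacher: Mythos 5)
Your proof is correct and follows essentially the same route as the paper: constancy on $A$ is obtained from the atom dichotomy applied to sets of the form $A\cap Y^{-1}(\cdot)$ (the paper uses the preimage of the single value $Y(a)$, you use Borel sets separating two putative values, which is the same idea), and the explicit formula then follows from condition (2) of \cref{defQQQconditional_expectation} with $G=A$ exactly as in the paper. No gaps; your extra care with the value $\infty$ is a harmless refinement.
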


\begin{proof}
	First of all, let $a \in A$ and consider the set
	\[
		M_a\is A \cap \expvsign (X \mid\CG)^{-1}(\expvsign (X \mid\CG)(a)). \]
	Then $a \in M_a \subseteq A$ and as $\expvsign (X \mid\CG)$ is $\CG$-measurable by \cref{defQQQconditional_expectation} we also have $M_a \in\CG$.
	Since $A$ is an atom and $\emptyset \neq M_a \subseteq A$ we already have $M_a =A$.
	So, $\expvsign (X \mid\CG)$ is constant on $\CG$.
	Finally we have by \cref{defQQQconditional_expectation}
	\[
		\expvsign(\ind_A \cdot X) = \expvsign(\ind_A \cdot \expvsign (X \mid\CG)) = \expvsign(\ind_A \cdot const) = const \cdot \pmeasure(A),
	\]
	i.e., if $\pmeasure(A)>0$ we have
	\[
		const = \frac{\expvsign(\ind_A \cdot X)}{\pmeasure(A)}.
	\]
\end{proof}

It turns out that in our setting a conditional expectation always exists as it is almost surely finite.
\begin{theorem}
	[Existence of Conditional Expected Values \protect{\cite[Prop.\ 3.1.]{lexrsm}}]Let \label{thm:existence_cond_exp_nonnegative}
	$X \colon \Omega \to \realclosure$ be a random variable such that~$\IP{X=\infty}=0$ and let $\CG$ be a sub-$\sigma$-field of $\CF$.
	Then $\expec{X \mid \CG}$ exists.
\end{theorem}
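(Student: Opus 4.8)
The plan is to reduce to the classical existence of conditional expectations for \emph{bounded} (hence integrable) random variables and then pass to a monotone limit. First I would truncate $X$: for $n \in \NN$ set $X_n \is \min(X, n)$. Each $X_n$ is a non-negative, bounded, $\CF$-measurable random variable, the sequence $(X_n)_{n \in \NN}$ is monotonically increasing, and since $\IP{X = \infty} = 0$ we have $\sup_{n \in \NN} X_n = X$ almost surely. For each bounded $X_n$, the conditional expected value $Y_n \is \expec{X_n \mid \CG}$ exists by the standard argument: the set function $G \mapsto \int_G X_n \, d\mathbb{P}$ on $(\Omega, \CG)$ is a finite measure (countable additivity is \cref{lemma:properties_of_integral}, Additivity) that is absolutely continuous with respect to the restriction of $\mathbb{P}$ to $\CG$, so the Radon--Nikodym theorem for finite measures yields a $\CG$-measurable density $Y_n$ with $\int_G Y_n \, d\mathbb{P} = \int_G X_n \, d\mathbb{P}$ for all $G \in \CG$ --- which is precisely conditions (1) and (2) of \cref{defQQQconditional_expectation} for $X_n$.

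Next I would arrange the $Y_n$ into a monotone sequence. Monotonicity of conditional expectation follows from the defining property alone: since $X_n \le X_{n+1}$, for every $G \in \CG$ we have $\int_G Y_n \, d\mathbb{P} = \int_G X_n \, d\mathbb{P} \le \int_G X_{n+1} \, d\mathbb{P} = \int_G Y_{n+1} \, d\mathbb{P}$, and instantiating $G \is \{\run \mid Y_n(\run) > Y_{n+1}(\run)\} \in \CG$ forces this set to be $\mathbb{P}$-null. Hence, after modifying each $Y_n$ on a null set (which does not affect its defining properties), I may assume $Y_0 \le Y_1 \le \cdots$ pointwise and put $Y \is \sup_{n \in \NN} Y_n = \lim_{n \to \infty} Y_n$. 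As a pointwise supremum of $\CG$-measurable functions, $Y$ is $\CG$-measurable with values in $\realclosure$, so property~(1) of \cref{defQQQconditional_expectation} holds. For property~(2), for every $G \in \CG$ two applications of the monotone convergence theorem give
\[
	\int_G Y \, d\mathbb{P} \;=\; \sup_{n \in \NN} \int_G Y_n \, d\mathbb{P} \;=\; \sup_{n \in \NN} \int_G X_n \, d\mathbb{P} \;=\; \int_G X \, d\mathbb{P},
\]
where the middle equality is the defining property of the $Y_n$. Thus $Y$ satisfies both conditions of \cref{defQQQconditional_expectation}, i.e.\ $\expec{X \mid \CG}$ exists (and equals $Y$ almost surely).

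The step that requires care --- and the reason this warrants its own lemma rather than a one-line appeal to Radon--Nikodym --- is that one cannot apply Radon--Nikodym directly to the set function $\nu(G) \is \int_G X \, d\mathbb{P}$ on $(\Omega, \CG)$: although $\nu$ is a measure absolutely continuous with respect to $\mathbb{P}$, it need not be $\sigma$-finite on the \emph{sub}-$\sigma$-field $\CG$ (for instance when $\CG = \{\emptyset, \Omega\}$ and $\expec{X} = \infty$), so the usual $\sigma$-finite Radon--Nikodym theorem is not applicable as stated. The truncation and monotone-limit argument circumvents this entirely: the hypothesis $\IP{X = \infty} = 0$ guarantees $X_n \uparrow X$, and each truncation $X_n$ is bounded and thus unproblematic. (Alternatively one could use a version of Radon--Nikodym allowing $\realclosure$-valued densities together with an exhaustion argument decomposing $\nu$ into a $\sigma$-finite part and a purely infinite part, but the limit argument is cleaner.)
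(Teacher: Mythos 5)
Your argument is correct, but there is nothing in the paper to compare it against: the paper does not prove this statement at all, it simply imports it as \cite[Prop.~3.1]{LexRSM}. Your proposal is the standard self-contained construction --- truncate to $X_n = \min(X,n)$, obtain $\expec{X_n \mid \CG}$ for each bounded truncation via the Radon--Nikodym theorem applied to the finite measure $G \mapsto \int_G X_n \, d\mathbb{P}$ on $(\Omega,\CG)$, arrange the densities to be pointwise monotone after discarding null sets, and pass to the limit with two applications of monotone convergence to verify both conditions of \cref{defQQQconditional_expectation} --- and each step goes through; in particular your monotonicity argument is sound because each $Y_n$ is the density of a \emph{finite} measure, so the integrals over $\{Y_n > Y_{n+1}\}$ are finite and the usual contradiction applies. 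Your closing remark correctly identifies the one point of substance, namely that $\nu(G) = \int_G X\,d\mathbb{P}$ need not be $\sigma$-finite on the sub-$\sigma$-field, so a direct appeal to Radon--Nikodym is not legitimate and the truncation detour is genuinely needed. Two small observations: the Radon--Nikodym theorem is the only ingredient not contained in the paper's own preliminaries (consistent with the paper outsourcing the whole statement to the literature); and $\sup_n \min(X,n) = X$ holds pointwise everywhere, including on $\{X = \infty\}$, so your ``almost surely'' qualifier --- and indeed the hypothesis $\IP{X=\infty}=0$ --- is not actually needed for this construction, which establishes existence of $\expec{X \mid \CG}$ for every non-negative $\realclosure$-valued random variable. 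What the hypothesis buys in the cited formulation is only that the resulting conditional expectation is almost surely finite-valued in the situations where the paper later uses it, not its existence.
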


This theorem helps us to find bounds on the conditional expected value if we are unable to determine it exactly.

\begin{lemma}
	\label{lem:upper_bound_cond_exp}
	Let $X$ be a random variable with $\IP{X=\infty}=0$ and $\Omega = \biguplus\limits_{i=1}^{\infty} A_i$ for a sequence $A_i \in \CF$, $\CG \is \sigmagen{\{A_i \mid i \in \NN\}}$, and $Y$ a $\CG$-measurable function. We have $\expec{X \mid \CG} \leq Y$ iff \[\expec{X\cdot\ind_{A_i}}\leq\expec{Y\cdot\ind_{A_i}}, \text{ for all } i \in \NN.
	\]
\end{lemma}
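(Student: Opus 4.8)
The plan is to prove the two implications separately, using that $\expec{X\mid\CG}$ exists by \cref{thm:existence_cond_exp_nonnegative} (which applies since $\IP{X=\infty}=0$); write $Z \is \expec{X\mid\CG}$, which is $\CG$-measurable and almost surely finite.

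For the direction ``$\Rightarrow$'', assume $Z\leq Y$ almost surely. For every $i\in\NN$ we have $A_i\in\CG$, so the defining property of the conditional expectation (\cref{defQQQconditional_expectation}, condition~2) yields $\expec{X\cdot\ind_{A_i}}=\expec{Z\cdot\ind_{A_i}}$. From $Z\leq Y$ a.s.\ we get $Z\cdot\ind_{A_i}\leq Y\cdot\ind_{A_i}$ a.s., hence $\expec{Z\cdot\ind_{A_i}}\leq\expec{Y\cdot\ind_{A_i}}$ by monotonicity of the integral (a consequence of linearity and non-negativity in \cref{lemma:properties_of_integral}, together with the fact that a null set contributes nothing to an integral). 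Combining these gives the claimed inequality for every $i$. This direction does not use the special structure of $\CG$.

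For the direction ``$\Leftarrow$'', I would exploit that $\CG$ is generated by the countable covering $(A_i)_i$: by \cref{lemma:sigma_field_countable_cover} every element of $\CG$ is a disjoint union $\biguplus_{i\in J}A_i$, and by \cref{lemma:countable_cover_atoms} the atoms of $\CG$ are exactly the sets $A_i$. A short argument then shows that any $\CG$-measurable function is constant on each atom $A_i$: if it took two distinct values on $A_i$, the preimage of one of the corresponding singletons (which lies in $\CG$, hence is a union of some $A_j$'s) would meet $A_i$ without containing it, contradicting that $A_i$ is an atom. In particular both $Y$ and $Z$ are constant on each $A_i$. For an index $i$ with $\pmeasure(A_i)>0$, \cref{lemma:rsm_condexpconstatoms} identifies the constant value of $Z$ on $A_i$ as $\expec{\ind_{A_i}\cdot X}/\pmeasure(A_i)$, while the constant value $c_i$ of $Y$ on $A_i$ satisfies $\expec{Y\cdot\ind_{A_i}}=c_i\cdot\pmeasure(A_i)$; dividing the hypothesis $\expec{X\cdot\ind_{A_i}}\leq\expec{Y\cdot\ind_{A_i}}$ by $\pmeasure(A_i)$ then gives $Z\leq Y$ pointwise on $A_i$ (the case $c_i=\infty$ being trivial). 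Finally, the union of those $A_i$ with $\pmeasure(A_i)=0$ is a $\pmeasure$-null set, so $Z\leq Y$ holds on a set of measure $1$, i.e., $\expec{X\mid\CG}=Z\leq Y$ almost surely.

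The routine parts are the monotonicity of the integral, the handling of the value $\infty$, and the null-set bookkeeping; the one genuinely structural ingredient is that on the atoms $A_i$ of $\CG$ both the conditional expectation (via \cref{lemma:rsm_condexpconstatoms}) and the $\CG$-measurable function $Y$ reduce to constants, which turns the integral inequalities into the desired pointwise comparison --- this mirrors the proof of the analogous equality statement \cref{app_lemma:conditional_expectation_countable_cover}. I expect the only mild subtlety to be making sure the backward conclusion is stated and used as an \emph{almost sure} inequality, since atoms of measure $0$ genuinely cannot be controlled.
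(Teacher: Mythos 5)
Your proposal is correct. The ``$\Rightarrow$'' direction is exactly the paper's: apply the defining property of the conditional expectation to $A_i\in\CG$ and use monotonicity of the integral. For ``$\Leftarrow$'', however, you take a genuinely different route. The paper argues by contradiction: it sets $Z=\max(\expec{X\mid\CG}-Y,0)$, notes that $M=Z^{-1}((0,\infty))\in\CG$ is a disjoint union of atoms $A_{i_0}$, invokes the fact that a strictly positive measurable function has strictly positive integral over any set of positive measure, and then contradicts the hypothesis via $\expec{\expec{X\mid\CG}\cdot\ind_{A_{i_0}}}=\expec{X\cdot\ind_{A_{i_0}}}\leq\expec{Y\cdot\ind_{A_{i_0}}}$, concluding $\IP{A_{i_0}}=0$ for every such atom. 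You instead argue directly and constructively on the atoms: every $\CG$-measurable function is constant on each $A_i$ (your preimage-of-a-value argument is the same device the paper itself uses inside \cref{lemma:rsm_condexpconstatoms}), so on an atom of positive measure \cref{lemma:rsm_condexpconstatoms} pins down $\expec{X\mid\CG}$ as $\expec{X\cdot\ind_{A_i}}/\pmeasure(A_i)$ while $\expec{Y\cdot\ind_{A_i}}=c_i\cdot\pmeasure(A_i)$, and dividing the hypothesis turns the integral inequality into the pointwise one; atoms of measure zero are swept into a null set. Your version buys a more explicit, computation-style proof that avoids the ``deep result'' on positive integrals and any subtraction of possibly infinite expectations, at the cost of the (routine) singleton-preimage measurability step; the paper's version avoids identifying the constants explicitly but needs the positivity-of-the-integral fact and is phrased as a contradiction. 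Both correctly conclude the almost-sure inequality, and you rightly flag that atoms of measure zero cannot be controlled pointwise.
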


\begin{proof}
	We prove the two directions separately.
	\begin{itemize}
		\item[``$\Rightarrow$''] Let $i \in \NN$. Then we have by definition of the conditional expectation \[\expec{X \cdot \ind_{A_i}} = \expec{\expec{X \mid \CG} \cdot \ind_{A_i}} \leq \expec{Y \cdot \ind_{A_i}},
		      \]
		      by monotonicity of the integral (\cref{lemma:properties_of_integral}).
		\item[``$\Leftarrow$''] By \cref{thm:existence_cond_exp_nonnegative} the conditional expected value of $X$ exists and is itself a non-negative $\CG$-measurable random variable.
		      Consider the random variable $Z=\max(\expec{X \mid \CG}-Y,0):\Omega \to \realclosure$.
		      It is a result of measure theory (see e.g., \cite{bauer71measure}) that $Z$ is a $\CG$-measurable random variable.
		      So consider the set $M=Z^{-1}((0,\infty))\in \CG$.
		      Then we know that $M$ is a disjoint union of some of the $A_i$, so w.l.o.g.~let us assume that $A_{i_0}\subseteq M$.
		      Again, a deep result from measure theory shows that $\expec{Z\cdot \ind_{A_{i_0}}}>0$ (see e.g., \cite{bauer71measure}) if $\IP{A_{i_0}}>0$.
		      Then we have
		      \[
			      \begin{array}{rcl}
				      0 & <     & \expec{Z\cdot \ind_{A_{i_0}}}                                                   \\
				        & {}={} & \expec{(\expec{X \mid \CG}-Y)\cdot \ind_{A_{i_0}}}                              \\
				        & {}={} & \expec{\expec{X \mid \CG} \cdot \ind_{A_{i_0}}}-\expec{Y \cdot \ind_{A_{i_0}}},
			      \end{array}
		      \]
		      i.e., $\expec{Y \cdot \ind_{A_{i_0}}} < \expec{\expec{X \mid \CG} \cdot \ind_{A_{i_0}}}$, a contradiction.
		      So, we must have $\IP{A_{i_0}}=0$.
		      As $i_0$ was chosen arbitrarily we almost surely have $\expec{X \mid \CG} \leq Y$.
	\end{itemize}
\end{proof}

\noindent
Recall that $\expec{X \mid \CG}$ is a random variable that is like $X$, but for those elements that are not distinguishable in the sub-$\sigma$-field $\CG$, it ``distributes the value of $X$ equally''.
This statement is formulated by the following lemma.
\begin{lemma}[Expected Value Does Not Change When Regarding Conditional Expected Values]
	\label{lemma:expected_value_does_not_change}
	Let $X$ be a random variable on $(\Omega, \CF, \mathbb{P})$ and let $\CG$ be a sub-$\sigma$-field of $\CF$. Then \[\expec{X}=\expec{\expec{X \mid \CG}}.
	\]
\end{lemma}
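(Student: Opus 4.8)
The plan is to read off the statement directly from the defining property of the conditional expected value, with $G = \Omega$. Recall from \cref{defQQQconditional_expectation} that any conditional expected value $Y = \expec{X \mid \CG}$ is a $\CG$-measurable random variable satisfying $\int_G Y \, d\mathbb{P} = \expec{Y \cdot \ind_G} = \expec{X \cdot \ind_G} = \int_G X \, d\mathbb{P}$ for every $G \in \CG$. Since $\CG$ is a $\sigma$-field over $\Omega$, we have $\Omega \in \CG$, so this identity may be instantiated at $G = \Omega$.

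First I would invoke \cref{thm:existence_cond_exp_nonnegative} to guarantee that $\expec{X \mid \CG}$ actually exists, so that the statement is not vacuous; this is the only place where the mild hypothesis $\IP{X = \infty} = 0$ is relevant, and in our setting the random variables of interest are almost surely finite. Then, taking $G = \Omega$ and noting $\ind_\Omega \equiv 1$ (so $Y \cdot \ind_\Omega = Y$ and $X \cdot \ind_\Omega = X$ pointwise), the second defining property gives $\expec{\expec{X \mid \CG}} = \expec{\expec{X \mid \CG} \cdot \ind_\Omega} = \expec{X \cdot \ind_\Omega} = \expec{X}$. Equivalently, one may phrase this via the integral: $\expec{\expec{X \mid \CG}} = \int_\Omega \expec{X \mid \CG}\, d\mathbb{P} = \int_\Omega X \, d\mathbb{P} = \expec{X}$, where the outer equalities are just the definition of the expected value as an integral.

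There is essentially no obstacle here: the lemma is a one-line consequence of \cref{defQQQconditional_expectation}. The only point worth a remark is that $\expec{X \mid \CG}$ is only almost surely unique, but since \emph{every} version satisfies the integral identity on all of $\CG$ — in particular on $\Omega$ — the equality $\expec{\expec{X \mid \CG}} = \expec{X}$ holds for any such version, so no ambiguity arises.
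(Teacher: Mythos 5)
Your proof is correct and coincides with the paper's own argument: both instantiate the defining property of the conditional expected value at $G = \Omega \in \CG$ and use $\ind_\Omega \equiv 1$. The extra remarks on existence and almost-sure uniqueness are harmless but not needed for the identity itself.
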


\begin{proof}
	\[
		\begin{array}{rcll}
			\expec{\expec{X \mid \CG}} & {}={} & \expec{\expec{X \mid \CG} \cdot \ind_{\Omega}} & \text{as $\ind_{\Omega}\equiv 1$}                                   \\
			                           & {}={} & \expec{X \cdot \ind_{\Omega}}                  & \text{by \cref{defQQQconditional_expectation}, as $\Omega \in \CG$} \\
			                           & {}={} & \expec{X}                                      & \text{as $\ind_{\Omega}\equiv 1$}
		\end{array}
	\]
\end{proof}

\noindent
The following theorem shows (a) that linear operations carry over to conditional expected values w.r.t.\ sub-$\sigma$-fields, (b) that every random variable approximates itself if it is already measurable w.r.t.\ the sub-$\sigma$-field $\CG$, and (c) it allows to simplify multiplications with $\CG$-measurable random variables.

\begin{theorem}[Properties of Conditional Expected Value \protect{\cite[p.~443]{grimmett2001probability}}]
	Let \label{app_thm:propert_conditional_expectation}
	$X,Y$ be random variables on $(\Omega, \CF,\mathbb{P})$ and let $\CG$ be a sub-$\sigma$-field of $\CF$.
	Then the following properties hold.
	\begin{enumerate}[label=(\alph*)]
		\item $\expec{a\cdot X+b \cdot Y \mid \CG}=a \cdot \expec{X\mid \CG}+b \cdot \expec{Y\mid \CG}$.
		\item If $X$ is itself $\CG$-measurable then $\expec{X \mid \CG}=X$.
		\item If $X$ is $\CG$-measurable then $\expec{X \cdot Y\mid \CG}=X \cdot \expec{Y\mid \CG}$.\label{it:cond_exp_multiplicativity}
	\end{enumerate}
\end{theorem}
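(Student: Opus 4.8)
The plan is to verify, in each of the three cases, that the proposed right-hand side satisfies the two defining conditions of a conditional expected value from \cref{defQQQconditional_expectation} — $\CG$-measurability and the ``averaging'' identity $\int_G(\cdot)\,d\mathbb{P} = \int_G X\,d\mathbb{P}$ for every $G\in\CG$ — and then to invoke the almost-sure uniqueness of conditional expectations guaranteed by that definition. Since all random variables in our framework take values in $\realclosure$, I will prove (a) for $a,b\geq 0$ only (which is all we need), and I rely on \cref{thm:existence_cond_exp_nonnegative} to ensure the conditional expectations in question actually exist.

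For (a): the function $a\cdot\expec{X\mid\CG}+b\cdot\expec{Y\mid\CG}$ is $\CG$-measurable as a non-negative linear combination of $\CG$-measurable functions. For any $G\in\CG$, linearity of the integral (\cref{lemma:properties_of_integral}) together with the defining property of $\expec{X\mid\CG}$ and $\expec{Y\mid\CG}$ gives
\[
\int_G\left(a\cdot\expec{X\mid\CG}+b\cdot\expec{Y\mid\CG}\right)d\mathbb{P} \;=\; a\int_G X\,d\mathbb{P}+b\int_G Y\,d\mathbb{P} \;=\; \int_G(aX+bY)\,d\mathbb{P},
\]
so $a\cdot\expec{X\mid\CG}+b\cdot\expec{Y\mid\CG}$ is a version of $\expec{aX+bY\mid\CG}$. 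For (b): if $X$ is itself $\CG$-measurable, then $X$ trivially satisfies both defining conditions for $\expec{X\mid\CG}$, hence $\expec{X\mid\CG}=X$ almost surely.

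For (c) I would use the standard ``measure-theoretic machine'', building up $X$ in three stages. First, for $X=\ind_A$ with $A\in\CG$: the function $\ind_A\cdot\expec{Y\mid\CG}$ is $\CG$-measurable, and since $G\cap A\in\CG$ for every $G\in\CG$, we get
\[
\int_G\ind_A\cdot\expec{Y\mid\CG}\,d\mathbb{P} \;=\; \int_{G\cap A}\expec{Y\mid\CG}\,d\mathbb{P} \;=\; \int_{G\cap A}Y\,d\mathbb{P} \;=\; \int_G\ind_A\cdot Y\,d\mathbb{P},
\]
so $\expec{\ind_A\cdot Y\mid\CG}=\ind_A\cdot\expec{Y\mid\CG}$. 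Extending by part (a) to an elementary $\CG$-measurable function $X=\sum_{i=1}^n\alpha_i\ind_{A_i}$ with $A_i\in\CG$ yields $\expec{XY\mid\CG}=X\cdot\expec{Y\mid\CG}$. Finally, for an arbitrary non-negative $\CG$-measurable $X$, I would choose a monotonically increasing sequence of elementary \emph{$\CG$-measurable} functions $X_n\uparrow X$ (the representation-by-elementary-functions theorem produces such functions when $X$ is $\CG$-measurable); then $X_n\cdot\expec{Y\mid\CG}\uparrow X\cdot\expec{Y\mid\CG}$ and $X_nY\uparrow XY$, so two applications of monotone convergence to the ordinary integral over each $G\in\CG$ give $\int_G X\cdot\expec{Y\mid\CG}\,d\mathbb{P}=\lim_n\int_G X_n\cdot\expec{Y\mid\CG}\,d\mathbb{P}=\lim_n\int_G X_nY\,d\mathbb{P}=\int_G XY\,d\mathbb{P}$, which identifies $X\cdot\expec{Y\mid\CG}$ as a version of $\expec{XY\mid\CG}$.

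The main obstacle is this last step of (c): one must ensure the approximating elementary functions can be taken $\CG$-measurable (exactly what the representation theorem yields for $\CG$-measurable $X$), and one must be slightly careful with the value $+\infty$ — on the event $\{X=0\}$ the product $X\cdot\expec{Y\mid\CG}$ is $0$ even where $\expec{Y\mid\CG}=\infty$, consistent with the convention $0\cdot\infty=0$ used in the paper, and \cref{thm:existence_cond_exp_nonnegative} is needed so that $\expec{XY\mid\CG}$ is well defined in the first place. Parts (a) and (b) are routine once the defining property and its uniqueness are invoked.
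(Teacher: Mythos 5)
The paper offers no proof of this theorem at all --- it is imported verbatim from the literature with the citation to Grimmett--Stirzaker, so there is nothing to compare your argument against except the standard textbook treatment. Your proof is that standard treatment, and it is correct: in each part you verify $\CG$-measurability and the averaging identity $\int_G(\cdot)\,d\mathbb{P}=\int_G X\,d\mathbb{P}$ from \cref{defQQQconditional_expectation} and then appeal to almost-sure uniqueness. The points you flag as potential obstacles are exactly the right ones and you resolve them correctly: the restriction to $a,b\geq 0$ is forced (and harmless) because the paper's random variables take values in $\realclosure$; for part (c) the dyadic approximants $X_n=\sum_k \tfrac{k}{2^n}\ind_{\{k/2^n\leq X<(k+1)/2^n\}}+n\cdot\ind_{\{X\geq n\}}$ are indeed $\CG$-measurable whenever $X$ is, so the monotone-convergence step goes through on each $G\in\CG$; and the convention $0\cdot\infty=0$ makes $X_n\cdot\expec{Y\mid\CG}\uparrow X\cdot\expec{Y\mid\CG}$ pointwise even on $\{\expec{Y\mid\CG}=\infty\}$, since $X(\omega)>0$ forces $X_n(\omega)>0$ eventually. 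No gaps.
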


	\subsection{Cylindrical Construction of Probability Spaces}
	\label{app:probSpace}
	\label{appendix:prob_space_construction}

As mentioned in \cref{sec:Preliminaries}, we use \emph{schedulers} to resolve non-deterministic branching and sampling.

\begin{definition}[Scheduler]
	\label{def:Scheduler}
	A function $\scheduler \colon \CONF \to \left(\GT \uplus \{\gout\}\right)\times \STATE$ is a \emph{scheduler} if for every configuration $c = (\loc,t,\state) \in\CONF$, $\scheduler (c) = (g,\state')$ implies:
	\begin{enumerate}[label=(\alph*)]
		\item \label{scheduler:req1}
		      $\state (x) = \state' (x)$ for all $x\in\PV$.
		\item \label{scheduler:req2}
		      $\loc$ is the start location $\loc_g$ of the general transition $g$.
		\item \label{scheduler:req3}
		      $\state'(\tau_g) = \true$ for the guard $\tau_g$ of the general transition $g$.
		\item \label{scheduler:req4}
		      $g = \gout$ and $\state' = \state$, if $\loc = \loc_\bot$, $t = \tout$, or no $g'\in \GT,\state'\in \STATE$ satisfy \cref{scheduler:req1,scheduler:req2,scheduler:req3}.
	\end{enumerate}
\end{definition}
So to continue an evaluation in state $\state$, \cref{scheduler:req1} the scheduler chooses a state $\state'$ that agrees with $\state$ on all program variables, but where the values for temporary variables are chosen non-deterministically.
After instantiating the temporary variables, \cref{scheduler:req2} the scheduler selects a general transition that starts in the current location $\loc$ and \cref{scheduler:req3} whose guard is satisfied, if such a general transition exists.
Otherwise, \cref{scheduler:req4} $\scheduler$ chooses $\gout = \{ \tout \}$ and leaves the state $\state$ unchanged.

To every scheduler $\scheduler$ and initial state $\initstate$, one can assign a unique probability space, which captures the probabilistic evaluation behavior of PIPs.
Such an infinite product probability space $(\RUNS, \CF, \pipmeasure{\pip}{\scheduler}{\initstate})$ can be obtained using a standard cylinder construction (see, e.g., \cite[Theorem, 2.7.2]{cylindrical}), better known under the name \emph{Kolmogorov's Extension Theorem}.
We will present its version for a collection of measurable spaces $(\Omega_i,\powerset(\Omega_i))$ where $\Omega_i$ is \emph{countable} for all $i \in \NN$ as this suits our needs.
For each $(c_0 \cdots c_n) \in \Omega_0\times\dots\times\Omega_n$ with $n\in\NN$ let $pr (\prefix{c_0 \cdots c_n \to \bullet})$ be a probability mass function on $\Omega_{n+1}$.
So in particular, $pr(\prefix{\to \bullet}) = pr(\prefix{\bullet})$ is a probability mass function on $\Omega_0$.
Let $\Omega=\prod_{i=0}^{\infty} \Omega_i$ and let $\CF \subseteq \powerset(\Omega)$ be the $\sigma$-field created by all cylinder sets.
Here, for any $c_0 \in \Omega_0, \ldots, c_n \in \Omega_n$, the set $\Pre{\prefix{c_0 \cdots c_n}} = \{ \prefix{c_0 \cdots c_n,a_{n+1},a_{n+2},\ldots} \mid a_j \in \Omega_j$ for all $j \geq n+1 \, \}$ is the cylinder set of $c_0\cdots c_n$.
So $\CF$ is the smallest set containing $\emptyset$ and all cylinder sets, which is closed under complement and countable unions.

Then, there exists a unique probability measure $\pmeasure$ on $(\Omega,\CF)$ such that for all $B\subseteq \Omega_0 \times \ldots \times \Omega_n$:
\begin{align*}
	      & \sum_{c_0 \in \Omega_0}
	pr(\prefix{c_0}) \cdot \left( \sum_{c_1 \in \Omega_1} pr(\prefix{c_0 \to c_1})\cdots \left( \quad\ \sum_{{c_{n-1} \in \Omega_{n-1}}}
	pr(\prefix{c_0 \cdots c_{n-2} \to c_{n-1}}) \right. \right.                                                                                             \\
	\cdot & \left.\left. \left(\sum_{c_n \in \Omega_n} \ind_{B}(c_0 \cdots c_n) \cdot pr(\prefix{c_0 \cdots c_{n-1} \to c_n}) \right) \right)\cdots \right) \\
	{}={} & \pmeasure (\{\prefix{c_0 \cdots c_n \cdots} \in \Omega \mid c_0 \cdots c_n \in B\})
\end{align*}
Here, $\ind_{B}(c_0 \cdots c_n) = 1$ if $c_0 \cdots c_n \in B$ and $\ind_{B}(c_0 \cdots c_n) = 0$ if $c_0 \cdots c_n \notin B$.
For further details we refer to \cite[Theorem 2.7.2]{cylindrical}.

Let $\scheduler$ be a scheduler and $\initstate$ an initial state.
To use the cylinder construction in our setting, let $\Omega_i = \CONF$ for all $i \in \NN$, which is indeed countable.
We define
\begin{align*}
	\ptransition{\pip}{\scheduler}{\initstate} (\prefix{c}) & =
	\begin{cases}
		1, & \text{if } c = (\initloc, \tin, \initstate) \\
		0, & \text{otherwise}
	\end{cases}
\end{align*}
Let $f = \prefix{\cdots (\loc',t',\state')} \in \FPATH$ and $\scheduler ((\loc',t',\state')) = (g,\tilde{\state})$.
In the following we always have to consider the state $\tilde{\state}$ instead of the state $\state'$ to determine the probability of the next configurations, since the temporary variables are updated by the scheduler.
If $g \in \GT$ and $\loc'$ is the start location of $g$, then for every $t = (\loc',p,\tau,\eta,\loc)\in g$ with $\tilde{\state}(\tau) = \true$ we define:
\[
	\ptransition{\pip}{\scheduler}{\initstate} (\prefix{f \to (\loc,t,\state)}) = p \; \cdot\; \displaystyle\prod_{{x\in\PV}} \pr{\tilde{s}}{t}{s(x)}{x} \; \cdot \; \prod_{{u \in \VV \setminus \PV}} \delta_{\tilde{\state}(u),\state(u)},
\]
where for any transition $t$, $\tilde{s} \in \STATE$, $x \in \PV$, and $v \in \ZZ$, $\pr{\tilde{\state}}{t}{v}{x}$ expresses the probability that during the evaluation of $t$ in state $\tilde{\state}$ the variable $x$ gets updated to the value $v$:
\[
	\pr{\tilde{\state}}{(\_,\_,\_,\eta,\_)}{v}{x}
	=
	\begin{cases}
		1,  & \text{if } \eta (x) \in\POLYBOUND \text{ and } \tilde{\state}(\eta (x)) = v \\
		p', & \text{if } \eta (x) \in\DIST \text{ and
		} \eta(x)(\tilde{\state})(\tilde{\state}(x) - v) = p'                             \\
		0,  & \text{otherwise}
	\end{cases}
\]
Note that if $\eta(x)$ is a distribution function $d$, then the value sampled according to $d(\tilde{s})$ must be added to the current value $\tilde{\state}(x)$ of $x$.

Moreover, for any $z_1,z_2 \in \ZZ$ we have: \[\delta_{z_1,z_2} =
	\begin{cases}
		1, & \text{if } z_1 = z_2    \\
		0, & \text{if } z_1 \neq z_2
	\end{cases}
\]
So we require that the values of the temporary variables which the scheduler chooses in $\tilde{s}$ are ``saved'' by $\state$.
For all other configurations $c$, we define $\ptransition{\pip}{\scheduler}{\initstate}
	(\prefix{f} \to c) = 0$.
If the program has already \emph{terminated} (i.e., $g = g_{\bot}$), we define:

\[
	\ptransition{\pip}{\scheduler}{\initstate} (\prefix{f} \to c) =
	\begin{cases}
		1, & \text{if } \loc = \loc_{\bot},\ c = (\loc_{\bot},\tout,\state) \\
		0, & \text{otherwise}
	\end{cases}
\]

Note that for any non-empty finite path $f=\cdots c'$ and any configuration $c$ the value $\ptransition{\pip}{\scheduler}{\initstate} (\prefix{f} \to c)$ just depends on $c'$ and $c$, i.e., instead of $\ptransition{\pip}{\scheduler}{\initstate} (\prefix{f} \to c)$ we will write $\ptransition{\pip}{\scheduler}{\initstate} (c' \to c)$.

To illustrate $\ptransition{\pip}{\scheduler}{\initstate}$, consider the PIP from \cref{fig:simple_quadratic}, but where the update $\eta(x) = x$ of $t_0$ is changed to $\eta(x) = u \in \VV \setminus \PV$.
Let $\scheduler$ be the scheduler that selects $u = 5$ and that always chooses $g_1$ over $g_2$ (when possible), and let $\state_1 \in \STATE$ with $\state_1(x) = 5$, $\state_1(u) = 5$, and $\state_1(y) = \state_0(y)$.
Then we have:
\begin{align*}
	\ptransition{\pip}{\scheduler}{\initstate} (\, (\initloc,\tin,\initstate) \to (\loc,t,\state)\,) & =
	\begin{cases}
		1, & \text{if } \loc = \loc_1, t = t_0, \state = \state_1 \\
		0, & \text{otherwise}
	\end{cases}
	\\
	\ptransition{\pip}{\scheduler'}{\initstate} (\, (\loc_1,t_0,\state_1) \to (\loc,t,\state) \,)    & =
	\begin{cases}
		\tfrac{1}{2}, & \text{if } \loc = \loc_1, t = t_1, \state (x) = 4, \state (y) = \state_1(y) + 5 \\
		\tfrac{1}{2}, & \text{if } \loc = \loc_1, t = t_2, \state (x) = 5, \state (y) = \state_1(y) + 5 \\
		0,            & \text{otherwise}
	\end{cases}
\end{align*}

If $c'$ has the form $(\loc_2,t,\state)$ with $\state(y) \leq 0$ or $(\loc_\bot,\_,\state)$ for some state $\state$, then the scheduler cannot choose a suitable outgoing transition.
Then $\ptransition{\pip}{\scheduler}{\initstate} (c' \to c) = 1$ if $c = (\loc_\bot, \tout,\state)$ and $\ptransition{\pip}{\scheduler}{\initstate}
	(\prefix{c' \to c}) = 0$, otherwise. So the virtual transition $\tout$ is used for the infinitely many configurations of the run after termination.

Now let $\CF \subseteq \powerset(\RUNS)$ again be the $\sigma$-field created by all cylinder sets (i.e., $\CF$ is the smallest set containing $\emptyset$ and all sets $\Pre{f}$ for $f \in \FPATH$, which is closed under complement and countable unions).
Then, for a fixed scheduler $\scheduler$ and fixed initial state $\state_0$ we get a \emph{unique} probability measure $\pipmeasure{\pip}{\scheduler}{\initstate}\colon \mathcal{F} \to [0,1]$ on $(\RUNS, \CF)$ such that for all $B\subseteq \FPATH$:
\begin{align*}
	      & \sum_{c_0 \in \CONF} \ptransition{\pip}{\scheduler}{\initstate}(\prefix{c_0}) \cdot \left( \sum_{c_1 \in \CONF} \ptransition{\pip}{\scheduler}{\initstate}(\prefix{c_0} \to c_1)\cdots \left( \quad\;\; \sum_{{c_{n-1} \in \CONF}} \ptransition{\pip}{\scheduler}{\initstate}
	(\prefix{c_0 \cdots c_{n-2}} \to c_{n-1}) \right.\right.                                                                                                                                                                                                                              \\
	\cdot & \left.\left. \left(\sum_{c_n \in \CONF} \ind_{B} (c_0 \cdots c_n) \cdot \ptransition{\pip}{\scheduler}{\initstate} (\prefix{ c_0 \cdots c_{n-1}} \to c_n) \right) \right)\cdots \right)                                                                                       \\
	{}={} & \sum_{c_0 \in \CONF} \ptransition{\pip}{\scheduler}{\initstate}(\prefix{c_0}) \cdot \left( \sum_{c_1 \in \CONF} \ptransition{\pip}{\scheduler}{\initstate}(\prefix{c_0} \to c_1)\cdots \left( \quad\;\; \sum_{{c_{n-1} \in \CONF}} \ptransition{\pip}{\scheduler}{\initstate}
	(c_{n-2} \to c_{n-1}) \right.\right.                                                                                                                                                                                                                                                  \\
	\cdot & \left.\left. \left(\sum_{c_n \in \CONF} \ind_{B} (c_0 \cdots c_n) \cdot \ptransition{\pip}{\scheduler}{\initstate} (c_{n-1} \to c_n) \right) \right)\cdots \right)                                                                                                            \\
	{}={} & \pipmeasure{\pip}{\scheduler}{\initstate} (\{\prefix{c_0 \cdots c_n \cdots} \in \Omega \mid \prefix{c_0 \cdots c_n } \in B\})
\end{align*}
by the cylinder set construction.
To ease notation, for a finite path $f=c_0\cdots c_n$ we write

\[
	\ptransition{\pip}{\scheduler}{\initstate}(f) = \ptransition{\pip}{\scheduler}{\initstate}(c_0) \cdot \prod_{i=1}^n \ptransition{\pip}{\scheduler}{\initstate} (c_{i-1} \to c_i). \]

This then yields

\[
	\pipmeasure{\pip}{\scheduler}{\initstate}\left(\Pre{\prefix{f}}\right) = \ptransition{\pip}{\scheduler}{\initstate}(f). \]

Hence, the probability space for $\pip$ is $(\RUNS, \mathcal F, \pipmeasure{\pip}{\scheduler}{\initstate})$.

	\section{Proofs}
	\label{app:proofs}
	This section contains the proofs for all theorems and lemmas of our paper.
	The proofs for \cref{sec:Complexity_Bounds}-\ref{sec:Computing Size Bounds} are presented in \cref{app:Proofsforsec:Complexity_Bounds}-\ref{app:Proofsforsec:Computing Size Bounds}, respectively.

	\subsection{Proofs for \cref{sec:Complexity_Bounds}}
	\label{app:Proofsforsec:Complexity_Bounds}

	\liftingofbounds*
\begin{proof}
	Let $(\tbound,\sbound)$ be a (non-probabilistic) bound pair.
	Furthermore, let $g\in\GT$ be some general transition, let $\initstate \in \STATE$, and let $\scheduler$ be a scheduler.
	Finally, let $\run = \prefix{(\loc_0,t_0,\state_0) (\loc_1,t_1,\state_1) \cdots} \in \RUNS$ be an admissible run
	under the chosen scheduler $\scheduler$ and the chosen initial state $\initstate$.
	In particular, this means that $t_0 = t_{in}$.
	Then we have
	\begin{align*}
		 & \eval{\tbounde (g)}{\initstate}                                                                                                                                             \\
		 & = \sum_{t \in g} \eval{\tbound (t)}{\initstate}                                                                                                                             \\
		 & \geq \sum_{t \in g} \sup\left\{\abs{\{i \mid t'_i = t\}} \mid f' = \prefix{(\_,t'_0,\_) \cdots (\_,t'_n,\_)} \land \ptransition{\pip}{\scheduler}{\initstate}(f')>0\right\}
		\\
		 & \geq \abs{\left\{i \in \NN \mid t_i \in g\right\}} \tag{since every prefix of $\run$ is admissible}                                                                         \\
		 & = \timervar(g)(\run).
	\end{align*}

	Non-admissible runs do not influence the expected value $\expv{\pip}{\scheduler}{\initstate} (\timervar(g))$, since the set of all non-admissible runs has probability zero: if a run is non-admissible, by definition it has a non-admissible prefix.
	There are only countably many such non-admissible prefixes $f$.
	By the construction of $\pipmeasure{\pip}{\scheduler}{\initstate}$ in \cref{appendix:prob_space_construction}, we have $\pipmeasure{\pip}{\scheduler}{\initstate}\left(\Pre{f}\right) = \ptransition{\pip}{\scheduler}{\initstate}(f)=0$ and therefore, the countable union 
	\[\bigcup_{f \in \FPATH, f \text{ is non-admissible}}\Pre{f}\] has probability zero.
	But this set is precisely the set of non-admissible runs.
	Note that $\eval{\tbounde(g)}{\initstate} \in \overline{\NN}$ can be viewed as a constant function, i.e., its expected value is again $\eval{\tbounde(g)}{\initstate}$.
	As $\run$ is an arbitrary admissible run, we get by monotonicity of the expected value operator
	\[
		\eval{\tbounde(g)}{\initstate} = \expv{\pip}{\scheduler}{\initstate}(\eval{\tbounde(g)}{\initstate}) \geq \expv{\pip}{\scheduler}{\initstate} (\timervar(g)). \]

	Now, let in addition $\alpha = (g,\loc,x) \in \GRV$ and let $\initstate$, $\scheduler$, and $\run$ be as above.
	Then we have
	\begin{align*}
		 & \eval{\sbounde (\alpha)}{\initstate}                                                                                                                                                                            \\
		 & = \sum_{t = (\_,\_,\_,\_,\loc) \in g} \eval{\sbound (t,x)}{\initstate}                                                                                                                                          \\
		 & \geq \max \left\{\eval{\sbound (t,x)}{\initstate} \mid t = (\_,\_,\_,\_,\loc) \in g \right\}                                                                                                                    \\
		 & = \max \left\{\sup \left\{|\state'_i (x)| \mid f' = \prefix{(\_,\_,\_) \cdots (\_,t'_i,\state'_i)} \land \ptransition{\pip}{\initstate}{\scheduler}(f')>0\right\} \mid t'_i = (\_,\_,\_,\_,\loc) \in g \right\}
		\\
		 & \geq \sup\left\{|\state_i (x)| \mid i \in \NN \land t_i = (\_,\_,\_,\_,\loc) \in g\right\} \tag{since every prefix of $\run$ is admissible}                                                                     \\
		 & = \sizervar(\alpha)(\run).
	\end{align*}

	As $\run$ was an arbitrary admissible run, similar as above we have
	\[
		\eval{\sbounde (\alpha)}{\initstate} = \expv{\pip}{\scheduler}{\initstate} (\eval{\sbounde (\alpha)}{\initstate}) \geq \expv{\pip}{\scheduler}{\initstate} (\sizervar(\alpha)).
	\]
\end{proof}

	\subsection{Proofs for \cref{sec:Computing_Runtime_Bounds}}
	\label{appendix:details_lprf}
At the end of this subsection, we will prove (a refinement of) \cref{theorem:exptimeboundsmeth}.
However, the proof of this theorem is very involved and needs some preparation involving the theory of so-called ranking supermartingales (see \cref{def:rsm}).

Note that our definition of PLRFs differs from the standard definition of probabilistic ranking functions, since we regard distinguished subsets $\GTG \subseteq \GTNI \subseteq \GT$.
For that reason, we have to prove that a stochastic process defined according to our PLRFs is indeed a ranking supermartingale (this will be shown in \cref{ranking corollary}).
Then we can use existing results on ranking supermartingales (\cref{thm:chatt_lexrsm}) to show that our PLRFs yield bounds on a stopping time defined according to $\GTG$ (\cref{def:rsm_stoptime}).
However, in the parts of a run belonging to executions of general transitions from $\GTNI$, we have to take into account that there can occur transitions from $\GTNI\setminus \GTG$ before the first general transition from $\GTG$ is executed.
Thus, we also have to show that a stochastic process corresponding to the part of the run before the first application of a transition from $\GTG$ corresponds to a martingale (this is shown implicitly in \cref{lemma:entry_supermartingale}).

Let $\pip$ be a PIP and $\rank$ be a PLRF for $\GTG \subseteq \GTNI$ as in \cref{def:Probabilistic Polynomial Ranking Functions}.
We define functions $I_0,I_1, \dots \allowbreak\colon \RUNS\rightarrow \powerset (\NN)$ that map a run to the set of indices that correspond to the evaluation steps upon entering $\GTNI$ for the $(i+1)$-th time.
These functions will be used to define random variables $R_{i,j}\colon\RUNS\rightarrow\RR_{\geq 0}$ where $R_{i,j}$ maps a run to $\rank$ evaluated in the configuration before the $(j+1)$-th execution of $g\in\GTG$ upon entering $\GTNI$ for the $(i+1)$-th time.
We start by defining $I_i (\run)$, which is the set of indices of the run $\run$ belonging to the $(i+1)$-th time in $\GTNI$.

\begin{definition}[Index Sets for $\GTNI$]
	$I_0 (\run)$ for a run $\run = \prefix{(\initloc,t_0,\initstate)(\loc_1,t_1,\state_1)\cdots}$ is the smallest set of natural numbers such that:

	\begin{enumerate}
		\item $\min \{n\in \NN \mid t_n\in \bigcup \GTNI \} \in I_0 (\run)$
		\item $n\in I_0 (\run) \land t_{n+1} \in \bigcup \GTNI \implies n+1 \in I_0 (\run)$
	\end{enumerate}
	Moreover, $I_{i+1} (\run)$ for $i \in \NN$ is the smallest set of natural numbers such that:
	\begin{enumerate}
		\item $\min \{n\in \NN \setminus \bigcup_{k=0}^{i} I_k(\run) \mid t_n\in \bigcup \GTNI \} \in I_{i+1} (\run)$
		\item $n\in I_{i+1} (\run) \land t_{n+1} \in \bigcup \GTNI \implies n+1 \in I_{i+1} (\run)$
	\end{enumerate}
\end{definition}

Furthermore, we define mappings $I^{\succ}_{i,j}: \RUNS \to \NN$ for $i, j \in \NN$ such that for a fixed $\run = \prefix{(\initloc,t_0,\initstate)(\loc_1,t_1,\state_1)\cdots}\in \RUNS$ the value $I^{\succ}_{i,j}(\run)$ is the index where some transition from $\GTG$ is evaluated for the $(j+1)$-th time upon entering $\GTNI$ for the $(i+1)$-th time, i.e., $I^{\succ}_{i,j}(\run)$ is the $(j+1)$-th smallest element of $I_i(\run) \cap \{k\in \NN\mid t_k\in\bigcup\GTG\}$.

\begin{definition}[Indices for $\GTG$]
	Let $i, j \in \NN$.
	The mappings $I^{\succ}_{i,j}\colon \RUNS\rightarrow \NN$ are defined as follows:

	\[
		I^{\succ}_{i,j} (\run) =
		\begin{cases}
			\min \left( \cuts^j \left (I_i(\run) \cap \left\{k\in \NN \mid t_{k} \in \bigcup \GTG\right\}\right)\right),                           \\
			\multicolumn{2}{r}{\text{if $\cuts^j \left (I_i(\run) \cap \left\{k\in \NN \mid t_{k} \in \bigcup \GTG\right\}\right) \neq\emptyset$}} \\
			0, & \text{otherwise}
		\end{cases}
	\]
	This means that if the desired configuration does not exist, then the index defaults to $0$.
	Here, $\cuts (X) = X \setminus \{\min X\}$ removes the set's smallest element, i.e.,
	the $j$-th composition $\cuts^j (X)$ is the set $X$ without the first $j$ smallest element(s).

	For $f\in\FPATH$ and $j\in \NN$ we define:

	\[I^{\succ}_{i,j}(f) =
		\begin{cases}
			m,    & \text{if for all } \run \in \Pre{f} \text{ we have } I^{\succ}_{i,j} (\run) = m \\
			\bot, & \text{otherwise}.
		\end{cases}
	\]
\end{definition}

So, the mapping $I^{\succ}_{i,j}$ is defined on a finite path $f$ whenever $f$ already determines the value of $I^{\succ}_{i,j}$ uniquely for every run with prefix $f$.

\begin{definition}[Stochastic Processes $(R_{i,j})_{i,j\in \NN}$]
	\label{definition:rsm_stproc}
	For $i, j \in \NN$ we define the map $R_{i,j}: \RUNS \to \RR_{\geq 0}$ on a run $\run = \prefix{(\initloc,t_0,\initstate)(\loc_1,t_1,\state_1)\cdots} \in \RUNS$ by

	\begin{align*}
		R_{i,j} (\run) & =
		\begin{cases}
			\max \left\{ \exact{\rank (\loc_{I^{\succ}_{i,j} (\run) - 1})}{\state_{I^{\succ}_{i,j}(\run) - 1}}, 0\right\}, & \text{if } I^{\succ}_{i,j} (\run) > 0 \\
			0,                                                                                                             & \text{otherwise}
		\end{cases}
	\end{align*}

	\noindent{}
	For $f\in\FPATH$ and $i,j\in \NN$ we define:

	\begin{align*}
		R_{i,j}(f) & =
		\begin{cases}
			r,    & \text{if for all } \run \in \Pre{f} \text{ we have } I^{\succ}_{i,j}(\run) > 0 \land R_{i,j} (\run) = r \\
			\bot, & \text{otherwise}
		\end{cases}
	\end{align*}
\end{definition}
Hence, $R_{i,j}(f) \in \RR_{\geq 0}$ if $f$ is long enough to determine the value of $R_{i,j}$ uniquely.
Moreover, for any run $\run$, the value $R_{i,j} (\run)$ is the value that $\rank$ takes \emph{before} executing $\GTG$ for the $(j+1)$-th time within the indices $I_i(\run)$ or zero.
We will now prove this ``already determined'' behavior.

\begin{lemma}[$R_{i,j}$ is Already Determined by Long Enough Prefixes]
	\label{lemma:rsm_stprocesslongprefix}
	Let $\run\in\RUNS$.
	If $R_{i,j} (\run) = r > 0$ for some $i, j \in \NN$, then there is a prefix $f\in\FPATH$ such that for all $0 \leq k \leq j$
	\[
		R_{i,k} (f) = R_{i,k}(\run).
	\]
\end{lemma}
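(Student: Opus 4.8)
The plan is to show that a sufficiently long prefix of $\run$ already determines, on its whole cylinder, all of the quantities $I^{\succ}_{i,k}$ and $R_{i,k}$ for $k \le j$. Concretely I will take $f$ to be the prefix of $\run$ of length $n+1$, where $n := I^{\succ}_{i,j}(\run)$, writing $\run = c_0 c_1 \cdots$ with $c_m = (\loc_m, t_m, \state_m)$, so that $f = c_0 \cdots c_n$, and then verify $R_{i,k}(f) = R_{i,k}(\run)$ for all $0 \le k \le j$ directly from the definition of $R_{i,k}(f)$.

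I would first collect a few elementary structural facts. Since $t_0 = \tin \notin \bigcup \GTNI$ and every index placed into some $I_m(\run)$ carries a transition from $\bigcup\GTNI$, no index set $I_m(\run)$ contains $0$. Each $I_m(\run)$ is an interval $[a_m, b_m]$ of consecutive natural numbers (possibly empty, or with $b_m$ infinite), these intervals are pairwise disjoint and ordered so that $b_l + 1 < a_{l+1}$ whenever both are non-empty (the index right after $b_l$ carries no $\bigcup\GTNI$-transition), and $I_m(\run) \ne \emptyset$ forces $I_l(\run) \ne \emptyset$ for all $l < m$. From $R_{i,j}(\run) = r > 0$ we get $n = I^{\succ}_{i,j}(\run) > 0$, hence $I_i(\run) \ne \emptyset$, $n \in I_i(\run)$, and the set $S := I_i(\run) \cap \{m \mid t_m \in \bigcup\GTG\}$ has at least $j+1$ elements, whose first $j+1$ are $I^{\succ}_{i,0}(\run) < \cdots < I^{\succ}_{i,j}(\run) = n$; in particular each $I^{\succ}_{i,k}(\run)$ with $k \le j$ lies in $\{1,\dots,n\}$, and $[a_i, n] \subseteq I_i(\run)$. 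For $l < i$ we additionally have $b_l < a_i \le n$, so $I_l(\run) \subseteq \{0,\dots,n-1\}$.

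The core step is the determinacy claim: for every $\run' = c'_0 c'_1 \cdots \in \Pre{f}$ one has $I_l(\run') = I_l(\run)$ for all $l < i$ and $I_i(\run') \cap \{0,\dots,n\} = I_i(\run) \cap \{0,\dots,n\} = [a_i,n]$. I would prove this by induction on $l = 0,1,\dots,i$, unfolding the recursive definition of $I_l$. The first index $a_l$ is the least index outside $\bigcup_{m<l} I_m$ carrying a $\bigcup\GTNI$-transition; by the induction hypothesis $\bigcup_{m<l} I_m(\run') = \bigcup_{m<l} I_m(\run) \subseteq \{0,\dots,n-1\}$, and since $a_l \le n$ and $t'_m = t_m$ for all $m \le n$, the two minimizations agree, so $a_l$ is the same for $\run$ and $\run'$. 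The interval then grows by the rule ``$m \in I_l \wedge t_{m+1} \in \bigcup\GTNI \Rightarrow m+1 \in I_l$'', which fires identically for $\run$ and $\run'$ as long as the indices involved are $\le n$; for $l < i$ the growth halts at $b_l$, witnessed by $b_l + 1 \le n$ and $t'_{b_l+1} = t_{b_l+1} \notin \bigcup\GTNI$, giving $I_l(\run') = [a_l,b_l] = I_l(\run)$, while for $l = i$ we only claim agreement on $\{0,\dots,n\}$, which equals $[a_i,n]$ because $a_i, n \in I_i(\run)$.

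It remains to cash this out. Since $t'_m = t_m$ for $m \le n$, the $\bigcup\GTG$-indices of $I_i(\run')$ that are $\le n$ are exactly the elements of $[a_i,n] \cap \{m \mid t_m \in \bigcup\GTG\} = \{I^{\succ}_{i,0}(\run),\dots,I^{\succ}_{i,j}(\run)\}$, and any further $\bigcup\GTG$-index of $I_i(\run')$ exceeds $n$, hence all of these; therefore $I^{\succ}_{i,k}(\run') = I^{\succ}_{i,k}(\run) \ge 1 > 0$ for all $0 \le k \le j$. As $I^{\succ}_{i,k}(\run) - 1 \le n - 1$, the configuration $c_{I^{\succ}_{i,k}(\run)-1}$ lies in the shared prefix $f$, so $\loc$ and $\state$ at that index agree for $\run$ and $\run'$, whence $R_{i,k}(\run') = R_{i,k}(\run)$. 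As $\run' \in \Pre{f}$ was arbitrary and $I^{\succ}_{i,k}(\run') > 0$ throughout, the definitions of $I^{\succ}_{i,k}(f)$ and $R_{i,k}(f)$ give $R_{i,k}(f) = R_{i,k}(\run)$ for every $0 \le k \le j$. The main obstacle is the bookkeeping in the determinacy claim — convincing oneself that the block structure $I_l(\run) \cap \{0,\dots,n\}$ for $l \le i$, and hence $I^{\succ}_{i,k}(\run)$ and $R_{i,k}(\run)$ for $k \le j$, depends only on $c_0,\dots,c_n$; everything afterwards is a routine check against the ``$\bot$ versus value'' case distinctions in the definitions.
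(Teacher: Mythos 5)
Your proof is correct and takes essentially the same route as the paper: both choose the prefix $f = \prefix{c_0\cdots c_{I^{\succ}_{i,j}(\run)}}$ and conclude $R_{i,k}(f)=R_{i,k}(\run)$ for all $k\leq j$ from the fact that this prefix already fixes $I^{\succ}_{i,k}$ and the relevant configuration on every run in $\Pre{f}$. The only difference is that the paper dismisses this determinacy as ``trivially'' true, whereas you verify it via the block structure of the $I_l$ (using $t_0=\tin$, which matches the intended setting of the definitions), so your argument is a fleshed-out version of the paper's one-line proof rather than a different approach.
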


\begin{proof}
	Let $\run = \prefix{c_0\cdots}$ be a run with $R_{i,j} (\run) = r > 0$.
	Then $0 < I^{\succ}_{i,j}(\run)$ by definition.
	We choose $f=\prefix{c_0\cdots c_{I^{\succ}_{i,j}(\run)}}$.
	Trivially, we then have $I^{\succ}_{i,k} (\run) = I^{\succ}_{i,k} (f)$ for each $0 \leq k \leq j$.
	Therefore by definition of $R_{i,k}$ we get $R_{i,k} (f) = R_{i,k} (\run)$ for every $0 \leq k \leq j$.
\end{proof}
\cref{lemma:rsm_stprocesslongprefix} eases our reasoning.
First of all, it is clear that $I^{\succ}_{i,j}$ is $\CF$-measurable: If a run has a \emph{positive} value under $I^{\succ}_{i,j}$ then this value only depends on a prefix of a run.
Moreover, since we know that the value of some run $\run \in \RUNS$ under $R_{i,j}$ is either zero or already determined by some prefix of $\run$, this directly implies that $R_{i,j}$ is $\CF$-measurable.

In the next step we construct a suitable filtration $(\CF_{i,j})_{j\in \NN}$ of $\CF$ such that the sequence $(R_{i,j})_{j\in \NN}$ becomes a stochastic process adapted to this filtration for every fixed $i \in \NN$.
Intuitively every $\CF_{i,j}$ should contain all sets associated with a finite path $f$ ending in a configuration with index $I^{\succ}_{i,j}$, such that $R_{i,j}$ evaluates to a value $r \in \RR_{\geq 0}$.
Formally, these sets are defined as follows:
\begin{align}
	S_{i,j,f} (r) = \Pre{f} \cap R_{i,j}^{-1}(\{r\}) \cap \left(I^{\succ}_{i,j}\right)^{-1}(\abs{f}), \label{eq:def_sijfr}
\end{align}
where $\abs{f}$ denotes the length of the finite path $f$.
For any $n \in \NN$, let $\run[n]$ denote the $n$-th configuration in the run $\run$.
So, a run $\run \in \RUNS$ satisfies $\run \in S_{i,j,f} (r)$ whenever

\begin{itemize}
	\item the finite path $f$ is a prefix of $\run$,
	\item $R_{i,j}(\run) = r$, and
	\item $f= \prefix{\run[0]\run[1]\cdots\run[I^\succ_{i,j} (\run) - 1]}$.
\end{itemize}
In the case of $r=0$ we get the runs $\run$ with prefix $f$ where $R_{i,j}$ evaluates to $0$ due to the corresponding configuration.
However, if for some run $\run$ we have $I^{\succ}_{i,j} (\run) = 0$, then also $R_{i,j} (\run) = 0$ but in general there is \emph{no} prefix $f'$ of $\run$ with $R_{i,j} (f') = 0$, i.e., those runs are not covered by the just defined sets.
As a result of this observation only sets $S_{i,j,f} (r)$ with $r > 0$ will be regarded.
We will consider the remaining runs afterwards.
However, a run $\run$ which has $f$ as a prefix and satisfies $R_{i,j}(\run) > 0$ is not necessarily contained in $S_{i,j,f}(R_{i,j}(\run))$.
The reason is that $f$ does \emph{not} necessarily determine the value of $I^{\succ}_{i,j}(\run)$.
To deal with this situation in the following reasoning, we first define two subsets of the configurations, which will increase readability of our reasoning later on.
\begin{align*}
	\CONF_{\GTG}  & = \{c=(\_,t,\_) \in \CONF \mid t \in \bigcup \GTG\},  \\
	\CONF_{\GTNI} & = \{c=(\_,t,\_) \in \CONF \mid t \in \bigcup \GTNI\}.
\end{align*}
We observe the following lemma.
\begin{lemma}[Observations on $S_{i,j,f}$]
	\label{lemma:decomposition}
	Let $f \in \FPATH$ with last configuration $(\loc_f,\_,\state_f)$, $r>0$, and $i,j \in \NN$.
	If $S_{i,j,f} (r) \neq \emptyset$ then
	\begin{enumerate}[a)]
		\item $r = \exact{\rank(\loc_f)}{\state_f}$.
		\item
		      \[
			      S_{i,j,f}(r) = \biguplus_{c \in \CONF_{\GTG}} \Pre{fc}.
		      \]
	\end{enumerate}
\end{lemma}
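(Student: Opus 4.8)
The plan is to unfold the definition $S_{i,j,f}(r)=\Pre{f}\cap R_{i,j}^{-1}(\{r\})\cap(I^{\succ}_{i,j})^{-1}(\abs{f})$, prove (a) directly, and then use (a) inside the proof of (b). For (a), I would pick any $\run\in S_{i,j,f}(r)$ (which exists by hypothesis); then $f$ is a prefix of $\run$, $I^{\succ}_{i,j}(\run)=\abs{f}$, and $R_{i,j}(\run)=r>0$. Since $r>0$, the defining case split of $R_{i,j}$ forces $I^{\succ}_{i,j}(\run)>0$ and $r=\max\{\exact{\rank(\loc_{\abs{f}-1})}{\state_{\abs{f}-1}},0\}$, where the configuration of $\run$ at position $\abs{f}-1$ is precisely the last configuration of $f$, namely $(\loc_f,\_,\state_f)$. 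Hence $r=\max\{\exact{\rank(\loc_f)}{\state_f},0\}$, and since $r>0$ this yields $r=\exact{\rank(\loc_f)}{\state_f}$. (No measure-theoretic subtlety arises, since $\rank(\loc_f)$ is a polynomial over $\PV$ evaluated at an integer state, hence a finite nonnegative real.)

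\textbf{Part (b), inclusion ``$\subseteq$''.} For $\run\in S_{i,j,f}(r)$ one has, as in (a), $I^{\succ}_{i,j}(\run)=\abs{f}>0$, and by the definition of $I^{\succ}_{i,j}$ the index $\abs{f}$ lies in $I_i(\run)\cap\{k\mid t_k\in\bigcup\GTG\}$. Thus the transition of the configuration $c$ of $\run$ at position $\abs{f}$ is in $\bigcup\GTG$, i.e.\ $c\in\CONF_{\GTG}$, and $fc$ is a prefix of $\run$, so $\run\in\Pre{fc}\subseteq\biguplus_{c'\in\CONF_{\GTG}}\Pre{fc'}$. The union is disjoint because a run has a unique configuration at each position, so $\Pre{fc}\cap\Pre{fc'}=\emptyset$ for $c\neq c'$.

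\textbf{Part (b), inclusion ``$\supseteq$''.} This is the crux. I would fix a witness $\run^{*}\in S_{i,j,f}(r)$ and take an arbitrary $\run\in\Pre{fc}$ with $c\in\CONF_{\GTG}$, aiming to show $I^{\succ}_{i,j}(\run)=\abs{f}$ and $R_{i,j}(\run)=r$. The runs $\run$ and $\run^{*}$ both extend $f$, hence agree on the transitions at positions $0,\dots,\abs{f}-1$, and both carry a $\GTG$-transition at position $\abs{f}$ (for $\run$ since $c\in\CONF_{\GTG}$, for $\run^{*}$ by the ``$\subseteq$'' direction just proved); since $\GTG\subseteq\GTNI$, both also carry a $\GTNI$-transition there. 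The core locality observation is that $\run^{*}\in S_{i,j,f}(r)$ pins $\abs{f}$ inside the $(i+1)$-th maximal contiguous $\GTNI$-block $I_i(\run^{*})$ as its $(j+1)$-th $\GTG$-index; from this, a short case distinction (whether the $(i+1)$-th block starts strictly before $\abs{f}$ or exactly at $\abs{f}$) shows $I_0(\run^{*}),\dots,I_{i-1}(\run^{*})\subseteq\{0,\dots,\abs{f}-1\}$ and that everything entering the computation of $I^{\succ}_{i,j}$—the sets $I_k$ for $k<i$, the stretch $I_i\cap\{0,\dots,\abs{f}\}$, and the $\GTG$-indices inside it—depends only on which of the transitions at positions $0,\dots,\abs{f}$ belong to $\bigcup\GTNI$ resp.\ $\bigcup\GTG$. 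As $\run$ and $\run^{*}$ have the same such membership pattern on $\{0,\dots,\abs{f}\}$, these data coincide, so $I^{\succ}_{i,j}(\run)=I^{\succ}_{i,j}(\run^{*})=\abs{f}$, and then $R_{i,j}(\run)=\max\{\exact{\rank(\loc_f)}{\state_f},0\}=r$ by part (a). Hence $\run\in S_{i,j,f}(r)$, and combining the two inclusions with part (a) proves the lemma.

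\textbf{Main obstacle.} The one genuinely delicate step is the locality claim in ``$\supseteq$'': I must argue carefully that the recursive definitions of the index sets $I_k$ and of $I^{\succ}_{i,j}$ never need to inspect a transition of index beyond $\abs{f}$—which is exactly what the assumption $S_{i,j,f}(r)\neq\emptyset$ secures, since it guarantees that position $\abs{f}$ already completes the $(i+1)$-th block far enough to be its $(j+1)$-th $\GTG$-occurrence. The remaining steps are routine unfolding of the definitions.
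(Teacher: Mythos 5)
Your proof is correct and follows essentially the same route as the paper: part (a) by unfolding the definition of $R_{i,j}$ and using $r>0$ to discard the maximum with $0$, and part (b) by noting that $I^{\succ}_{i,j}(\run)=\abs{f}$ forces a $\GTG$-transition at position $\abs{f}$, while conversely the index data entering $I^{\succ}_{i,j}$ up to position $\abs{f}$ is determined by the prefix $f$ together with the fact that position $\abs{f}$ carries a $\GTG$-transition. Your witness-run comparison in the ``$\supseteq$'' direction merely spells out the locality argument that the paper states tersely (``$f$ determines $I^{\succ}_{i,j-1}$ but not $I^{\succ}_{i,j}$''), so there is no substantive difference.
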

\begin{proof}

	\begin{enumerate}[a)]
		\item Let $\run \in S_{i,j,f}(r)$.
		      Then $I^{\succ}_{i,j}(\run) = |f|$.
		      By \cref{eq:def_sijfr} we have $r = R_{i,j}(\run)$ and this value is uniquely determined by the configuration $\run[|f|-1]$.
		      But this is the last configuration of $f$, so, $r = \exact{\rank(\loc_f)}{\state_f}$.

		\item Let $\run \in S_{i,j,f}(r)$.
		      By \cref{eq:def_sijfr}, $\run[|f|] = (\_,t,\_)$ such that $t \in \bigcup \GTG$, i.e., $\run[|f|] \in \CONF_{\GTG}$, since $I^{\succ}_{i,j}(\run) = |f|$.

		      For the other direction, since $ S_{i,j,f}(r) \neq \emptyset$ this means that for every run with prefix $f$ we have $I^{\succ}_{i,j-1}(f) = I^{\succ}_{i,j-1}(\run)$.
		      However, $f$ does not determine the value $I^{\succ}_{i,j}$, i.e., $I^{\succ}_{i,j}(f) = 0$.
		      Thus, if $\run \in \Pre{fc}$ for some $c \in \CONF_{\GTG}$, then $I^{\succ}_{i,j}(\run) = |f|$, i.e., $\run \in S_{i,j,f}(r)$.
	\end{enumerate}
\end{proof}
Nevertheless, as already mentioned, not all runs $\run$ with $R_{i,j} (\run) = 0$ can be merged into a single set: in this way we ignore the information that $R_{i,j'}(\run) \neq 0$ is still possible for some $j' < j$.
Therefore we decompose these runs into more fine-grained sets $S^0_{i,j,j',f} (r)$.
For $j'\in\NN$ with $j'<j$ we define
\[
	S^0_{i,j,j',f} (r) =
	\begin{cases}
		\emptyset, \quad \text{if } j' > 0 \land r = 0                                                             \\
		\Pre{f} \cap \left(\bigcap_{k = j' + 1}^{j}
		R_{i,k}^{-1}(\{0\})\right) \cap R_{i,j'}^{-1}(\{r\}) \cap \left(I^{\succ}_{i,j'}\right)^{-1}(\abs{f} - 1), \\
		\qquad \text{otherwise}                                                                                    \\
	\end{cases}
\]
Intuitively, this set contains all runs on which $R_{i,j}$ evaluates to zero but either a previous value at time $j'$ (given by $R_{i,j'}$) evaluates to $r > 0$ (if $j' > 0$ then we must have $r > 0$), or all previous values evaluate to $0$ (if $j' = 0$ and $r = 0$).

The above defined sets can now be used as generating sets for a sequence of $\sigma$-fields $(\CF_{i,j})_{j\in\NN}$ which we will prove to be a filtration in \cref{lemma:rsm_validfiltration}.

\begin{definition}[Filtration for $R_{i,j}$]
	For $i, j \in \NN$, let $\CF_{i,j}$ be the smallest $\sigma$-field containing
	\begin{equation}
		\mbox{\small $
				\begin{cases}
					\left\{S_{i,j,f} (r)\mid f \in \FPATH, r \geq 0\right\},                                                                          & \text{if } j = 0  \\
					\left\{S_{i,j,f} (r)\mid f \in \FPATH, r > 0\right\} \cup \left\{S^0 _{i,j,j',f} (r) \mid f \in \FPATH, j'< j, r \geq 0 \right\}, & \text{if } j > 0.
				\end{cases}
			$}
	\end{equation}
	We denote this generating system of $\CF_{i,j}$ by $\CE_{i,j}$.
\end{definition}

A $\sigma$-field which is generated by a countable partition of the sample space is easier to handle, e.g., when it comes to computing conditional expected values (see \cref{app_lemma:conditional_expectation_countable_cover,lem:upper_bound_cond_exp}).
Hence, we prove that our just defined sequence of $\sigma$-fields has this property w.r.t.\ the respective generating system $\CE_{i,j}$.

\begin{lemma}[$\CF_{i,j}$ is Generated by a Countable Partition of $\RUNS$]
	\label{lemma:rsm_gensetscountable}
	For every $i,j\in\NN$ the generating system $\CE_{i,j}$ of $\CF_{i,j}$ is a countable partition of $\RUNS$.\footnote{Here, we only take non-empty sets into account.}
\end{lemma}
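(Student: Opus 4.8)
The proof naturally breaks into three parts: countability, pairwise disjointness, and covering of $\RUNS$. For countability, first observe that $\CONF = (\LOC\uplus\{\loc_\bot\})\times(\T\uplus\{\tin,\tout\})\times\STATE$ is countable, since $\LOC$ and $\T$ are finite and $\STATE$ is the (countable) set of maps from the finite set $\VV$ to $\ZZ$; hence $\FPATH = \CONF^{*}$ is countable. The generators in $\CE_{i,j}$ are indexed by a finite path $f$, a value $r$, and, in the case $j>0$, an index $j'<j$ together with a flag distinguishing the $S$-family from the $S^{0}$-family. A priori $r$ ranges over the uncountable set $\RR_{\geq 0}$, so the key point is that for every \emph{non-empty} generator $r$ is already determined by $f$. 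For $S_{i,j,f}(r)\neq\emptyset$ this is exactly part (a) of \cref{lemma:decomposition}, which gives $r=\exact{\rank(\loc_f)}{\state_f}$ for the last configuration $(\loc_f,\_,\state_f)$ of $f$. For $S^{0}_{i,j,j',f}(r)$ an analogous remark applies: every $\run\in S^{0}_{i,j,j',f}(r)$ satisfies $I^{\succ}_{i,j'}(\run)=|f|-1$, hence $r=R_{i,j'}(\run)$ is read off from $f$ (namely $r=\max\{\exact{\rank(\loc)}{\state},0\}$ for the second-to-last configuration $(\loc,\_,\state)$ of $f$ when $|f|>1$, and $r=0$ when $|f|=1$). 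Thus there are only countably many non-empty generators.

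For disjointness I would show that every run $\run$ lies in \emph{at most} one generator, by arguing that all of its index data is forced by $\run$. If $\run\in S_{i,j,f}(r)$, then $|f|=I^{\succ}_{i,j}(\run)$, so $f$ is the unique prefix of $\run$ of that length, and $r=R_{i,j}(\run)$; for $j>0$ we additionally have $R_{i,j}(\run)=r>0$. If $\run\in S^{0}_{i,j,j',f}(r)$, then $R_{i,j}(\run)=0$ (since $R_{i,j}^{-1}(\{0\})$ is among the sets intersected in the definition, as $j'<j$), which already separates the $S^{0}$-family from the $S$-family when $j>0$; moreover $R_{i,j'}(\run)=r$ and $R_{i,k}(\run)=0$ for $j'<k\leq j$, so $j'$ is forced --- it is the largest index below $j$ on which $R_{i,\cdot}(\run)$ is positive when $r>0$, and $j'=0$ when $r=0$ --- and then $|f|=I^{\succ}_{i,j'}(\run)+1$ fixes $f$.

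For covering I would exhibit, for each $\run\in\RUNS$, a generator containing it, by case distinction on $R_{i,j}(\run)$. If $R_{i,j}(\run)=r>0$, then $I^{\succ}_{i,j}(\run)>0$ and $\run\in S_{i,j,f}(r)$ for $f$ the prefix of $\run$ of length $I^{\succ}_{i,j}(\run)$. If $R_{i,j}(\run)=0$ and $j=0$, then $\run\in S_{i,0,f}(0)$ with $f$ the prefix of length $I^{\succ}_{i,0}(\run)$ (the empty path when $I^{\succ}_{i,0}(\run)=0$). If $R_{i,j}(\run)=0$ and $j>0$, let $j^{*}$ be the largest index $j'<j$ with $R_{i,j'}(\run)>0$, if it exists; then $R_{i,k}(\run)=0$ for $j^{*}<k\leq j$ by maximality and the assumption, and $\run\in S^{0}_{i,j,j^{*},f}(R_{i,j^{*}}(\run))$ for $f$ the prefix of length $I^{\succ}_{i,j^{*}}(\run)+1$; if no such $j^{*}$ exists, then $R_{i,k}(\run)=0$ for all $k\leq j$ and $\run\in S^{0}_{i,j,0,f}(0)$ for $f$ the prefix of length $I^{\succ}_{i,0}(\run)+1$ (a genuine, non-empty path). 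Combined with the previous step, every run lies in exactly one non-empty generator, and together with countability this gives the claim.

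The one point requiring care --- and the main obstacle --- is the bookkeeping in these case distinctions: one must consistently track the ``defaults to $0$'' convention for $I^{\succ}_{i,j}$ (so that $I^{\succ}_{i,j}(\run)=0$ really behaves like ``does not occur''), keep straight when a positive value of $\rank$ is actually recorded by $R_{i,j}$, and treat the degenerate prefix lengths ($|f|\in\{0,1\}$). I expect no genuine difficulty beyond this routine verification, with \cref{lemma:decomposition} doing the real work for countability.
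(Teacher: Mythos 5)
Your proof is correct and follows essentially the same route as the paper: the covering argument is the identical case distinction on $R_{i,j}(\run)$ (positive value; $j=0$; some $j'<j$ with $R_{i,j'}(\run)>0$; all zero), and your more explicit disjointness check only spells out what the paper dismisses as "by definition". The only cosmetic difference is that for countability you determine $r$ from $f$ via \cref{lemma:decomposition}, whereas the paper counts the values of the $R_{i,j'}$ using \cref{lemma:rsm_stprocesslongprefix}; both yield the same conclusion.
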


\begin{proof}
	Let $i, j \in \NN$ be fixed.
	By definition, the sets in $\CE_{i,j}$ are pairwise disjoint.
	Moreover, using \cref{lemma:rsm_stprocesslongprefix}, and the countability of $\FPATH$ we already have that the functions $R_{i,j'}$ only take countably many values, for any $j' \in \NN$.
	So, $\CE_{i,j}$ contains only countably many non-empty sets.

	Let $\run \in \RUNS$.
	Whenever $R_{i,j}(\run) > 0$, then \cref{lemma:rsm_stprocesslongprefix} ensures that $\run \in S_{i,j,f}(R_{i,j}(\run))$ for some prefix $f$ of $\run$.

	Now assume $R_{i,j}(\run) = 0$.
	If $j = 0$, then $\run \in S_{i,j,f}(0) = S_{i,0,f}(0)$ for $f = \prefix{\run[0]\cdots\run[I_{i,0}(\run)-1]}$ if $I_{i,0}(\run) > 0$ and $\run \in S_{i,j,\varepsilon}(0) = S_{i,0,\varepsilon}(0)$ if $I_{i,0}(\run) = 0$.
	So, assume $j > 0$.
	Whenever $R_{i,j'}(\run) > 0$ for some $0 \leq j' < j$, then again \cref{lemma:rsm_stprocesslongprefix} ensures that $\run \in S^0_{i,j,j',f}(R_{i,j'}(\run))$ for some prefix $f$ of $\run$.
	So, the only remaining case is $R_{i,j'}(\run) = 0$ for all $0 \leq j' < j$.
	Then, $\run \in S^0_{i,j,0,f}(0)$ for $f = \prefix{\run[0]\cdots\run[I_{i,0}(\run)]}$.
	So, $\CE_{i,j}$ is indeed a partition of $\RUNS$.
\end{proof}
Note that \cref{lemma:rsm_gensetscountable} yields that all elements of $\CF_{i,j}$ are just countable disjoint unions of the just described generators (see
\cref{lemma:sigma_field_countable_cover}).

We will now prove that $(\CF_{i,j})_{j\in \NN}$ indeed forms a filtration of $\CF$.
\begin{lemma}[$(\CF_{i,j})_{j\in \NN}$ is a Filtration of $\CF$]
	\label{lemma:rsm_validfiltration}
	For all $i, j \in \NN$ we have $\CF_{i,j}\subseteq \CF_{i,j+1}$ and $\CF_{i,j}\subseteq\mathcal{F}$, i.e., $(\CF_{i,j})_{j\in\NN}$ is a filtration of $\mathcal{F}$.\footnote{See \cref{appendix:prob_space_construction} for the definition and construction of $\mathcal{F}$.}
\end{lemma}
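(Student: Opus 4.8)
The plan is to prove, for an arbitrary fixed $i, j \in \NN$, the two inclusions $\CF_{i,j} \subseteq \CF$ and $\CF_{i,j} \subseteq \CF_{i,j+1}$ separately. For $\CF_{i,j} \subseteq \CF$ it suffices to check that every generator in $\CE_{i,j}$ is $\CF$-measurable, since $\CF_{i,j} = \sigmagen{\CE_{i,j}}$ and $\CF$ is a $\sigma$-field. Unfolding the definitions of $S_{i,j,f}(r)$ (see \cref{eq:def_sijfr}) and of $S^0_{i,j,j',f}(r)$, each such set is a finite intersection of a cylinder set $\Pre{f}$ — which lies in $\CF$ by the construction in \cref{appendix:prob_space_construction} — with preimages of the form $R_{i,k}^{-1}(\{r\})$ and $(I^{\succ}_{i,k})^{-1}(\{m\})$; as noted below \cref{lemma:rsm_stprocesslongprefix}, the maps $R_{i,k}$ and $I^{\succ}_{i,k}$ are $\CF$-measurable, so all these preimages and hence all generators lie in $\CF$. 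This step is routine.

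For $\CF_{i,j} \subseteq \CF_{i,j+1}$, I would first invoke \cref{lemma:rsm_gensetscountable}: both $\CE_{i,j}$ and $\CE_{i,j+1}$ are countable partitions of $\RUNS$. By \cref{lemma:sigma_field_countable_cover}, $\CF_{i,j}$ then consists precisely of the countable disjoint unions of cells of $\CE_{i,j}$, and likewise $\CF_{i,j+1}$ of cells of $\CE_{i,j+1}$. Hence it is enough to show that $\CE_{i,j+1}$ \emph{refines} $\CE_{i,j}$, i.e., that every nonempty cell $B \in \CE_{i,j+1}$ is contained in a (necessarily unique) cell of $\CE_{i,j}$. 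Once this is known, each $A \in \CE_{i,j}$ equals the disjoint union of the countably many $\CE_{i,j+1}$-cells it contains, so $A \in \CF_{i,j+1}$, whence $\CE_{i,j} \subseteq \CF_{i,j+1}$ and therefore $\CF_{i,j} = \sigmagen{\CE_{i,j}} \subseteq \CF_{i,j+1}$.

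The refinement is proved by a case distinction on the form of $B$, guided by the intuition that $B$ records a run's behaviour up to and including its $(j+2)$-th execution of a transition from $\GTG$ within the $(i+1)$-th visit to $\GTNI$, whereas its $\CE_{i,j}$-cell records the behaviour only up to the $(j+1)$-th such execution. If $B = S^0_{i,j+1,j',g}(r)$ with $j' < j$, then comparing the definitions gives $B = S^0_{i,j,j',g}(r) \cap R_{i,j+1}^{-1}(\{0\}) \subseteq S^0_{i,j,j',g}(r) \in \CE_{i,j}$. If $B = S^0_{i,j+1,j,g}(r)$, write $g = fc$ with last configuration $c \in \CONF_{\GTG}$; since all runs of $B$ have prefix $g$ (hence prefix $f$), satisfy $R_{i,j} = r$, and have $I^{\succ}_{i,j} = |g| - 1 = |f|$, we get $B \subseteq S_{i,j,f}(r) \in \CE_{i,j}$ (using \cref{lemma:decomposition}(b) to see that $S_{i,j,f}(r)$ is nonempty and contains $\Pre{g}$). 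Finally, if $B = S_{i,j+1,f}(r)$ with $r > 0$, every run $\run \in B$ has $I^{\succ}_{i,j}(\run) > 0$ (the $(j+1)$-th $\GTG$-step exists because the $(j+2)$-th does) realised at some index $m < |f|$; this index $m$, the value $R_{i,j}(\run)$, and — when $R_{i,j}(\run) = 0$ — the largest earlier index $j'$ with $R_{i,j'}(\run) > 0$ together with $R_{i,j'}(\run)$ and $I^{\succ}_{i,j'}(\run)$ are all already fixed by the prefix $f$ (plus membership in $B$); hence $B$ is contained in the corresponding single cell $S_{i,j,f'}(R_{i,j}(\run))$ or $S^0_{i,j,j',f'}(\cdot)$ for the appropriate prefix $f'$ of $f$. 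Combining the cases finishes the proof.

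The step I expect to be the main obstacle is this last case: one has to argue that \emph{all} runs in $S_{i,j+1,f}(r)$ share one and the same $\CE_{i,j}$-cell, which requires carefully verifying that the prefix $f$ already pins down $I^{\succ}_{i,j}$, the value $R_{i,j}$, and, when $R_{i,j}$ vanishes, the last earlier $R_{i,j'} > 0$ and its index. This bookkeeping is exactly what the auxiliary sets $S^0$ were introduced for, and \cref{lemma:rsm_stprocesslongprefix} and \cref{lemma:decomposition} will carry most of the load.
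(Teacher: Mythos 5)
Your proposal is correct and takes essentially the same approach as the paper: both proofs rest on the fact (Lemma~\ref{lemma:rsm_gensetscountable} together with Lemma~\ref{lemma:sigma_field_countable_cover}) that the generators form countable partitions, and both establish that $\CE_{i,j+1}$ refines $\CE_{i,j}$ by the same case analysis on the three kinds of cells, splitting according to whether $R_{i,j+1}$ vanishes. The only difference is one of direction — the paper writes each coarse generator explicitly as a countable union of fine generators, whereas you embed each fine cell into its unique coarse cell and then invoke the partition structure — which is the same argument read the other way around.
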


\begin{proof}
	First of all, we have already seen that for any $i, j \in \NN$ both $R_{i,j}$ and $I^{\succ}_{i,j}$ are $\CF$-measurable.
	Hence, by definition we have $S_{i,j,f}(r) \in \CF$ and $S^0_{i,j,j',f}(r) \in \CF$, i.e., $\CF_{i,j} \subseteq \CF$.

	Now we show that for any $f\in\FPATH$ and $j \in \NN$ such that $S_{i,j,f}(r) \neq \emptyset$ for some $r>0$ we have $S_{i,j,f}(r) \in \CF_{i,j+1}$.
	By \cref{lemma:decomposition} b) we have
	\[
		S_{i,j,f} (r) = \biguplus_{c \in \CONF_{\GTG}} \Pre{fc}. \]
	We partition $S_{i,j,f}(r)$ into $S_{i,j,f}(r)=A \uplus B$ such that $R_{i,j+1}(\run_a) > 0$ for every $\run_a \in A$ and $R_{i,j+1}(\run_b) = 0$ for every $\run_b \in B$.
	Then we have
	\[
		A=\bigcup_{f' \in \FPATH} \bigcup_{c \in \CONF_{\GTG}} \bigcup_{r' > 0} S_{i,j+1,\prefix{fcf'}} (r') \in \CF_{i,j+1},
	\]
	where the innermost union leads to only countably many non-empty sets and
	\[
		B = \bigcup_{c \in \CONF_{\GTG}}S^0_{i,j+1,j,fc} (r) \in \CF_{i,j+1}.
	\]
	So we conclude $S_{i,j,f} (r) = A \uplus B \in \CF_{i,j+1}$.

	Now we consider the sets $S^0_{i,j,j',f} (r)$.
	We fix $j,j'\in\NN$, $f\in\FPATH$, and $r\in\RR_{\geq 0}$.
	Again, for every run $\run$ we either have $R_{i,j+1} (\run) > 0$ or $R_{i,j+1} (\run) = 0$.
	Consider the partition $S^0_{i,j,j',f} (r) = D \uplus E$ where $R_{i,j+1}(\run_d) > 0$ for every $\run_d \in D$ and $R_{i,j+1}(\run_e) = 0$ for every $\run_e \in E$.
	Since $R_{i,j+1} (\run_d) > 0$ for runs $\run_d \in D$, by \cref{lemma:rsm_stprocesslongprefix}
	there is a finite prefix of $\run_d$ that is long enough to determine the value of $R_{i,k}(\run_d)$ for any $k \leq j+1$.
	Moreover, by definition of $S^0_{i,j,j',f} (r)$ this prefix has $f$ as a prefix itself and it has to ensure that $R_{i,k}$ evaluates to zero for $j'+1 \leq k \leq j$.
	Using all these requirements, we can write $D$ as follows.
	\[
		D = \bigcup_{\substack{f' \in \FPATH \\
				R_{i,k}(ff')=0, j' < k \leq j}}\bigcup_{r' > 0} S_{i,j+1,\prefix{ff'}} (r'),\]
	i.e., $D \in \CF_{i,j+1}.$
	On the other hand we know that for all runs $\run_e\in E$ we have $R_{i,j+1} (\run_e) = 0$ and therefore \[E=S^0_{i,j+1,j',f} (r) \in \CF_{i,j+1}.\]
	Finally $S^0_{i,j,j',f} = D \uplus E\in \CF_{i,j+1}$.

	There still remains the case of showing that $S_{i,0,f}(0) \in \CF_{i,1}$, since this set is not covered by the cases above.
	However, the reasoning required for this case is similar to the reasoning above:
	We partition $S_{i,0,f}(0)$ such that on the first part $R_{i,1}$ takes strictly positive values and on the other part $R_{i,1}$ is constant zero.

	\[
		S_{i,0,f}(0) = \bigcup_{\substack{f' \in \FPATH \\
				R_{i,0}(ff')=0}}\bigcup_{r' > 0} S_{i,1,\prefix{ff'}} (r') \cup \bigcup_{c \in \CONF} S^0_{i,1,0,fc}(0) \in \CF_{i,1}. \]

	Therefore, $\CF_{i,j} \subseteq \CF_{i,j+1} \subseteq \CF$ which concludes the proof.
\end{proof}

For our filtration to be useful in combination with the stochastic process $(R_{i,j})_{j\in \NN}$, the process has to be adapted to $(\CF_{i,j})_{j\in \NN}$.

\begin{lemma}[$(R_{i,j})_{j\in \NN}$ is Adapted to $(\CF_{i,j})_{j\in \NN}$]
	\label{lem:adaptedness}
	For every $i, j\in \NN$ the function $R_{i,j}$ is measurable w.r.t.
	the $\sigma$--field $\CF_{i,j}$.
\end{lemma}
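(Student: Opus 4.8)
The plan is to exploit the structural description of $\CF_{i,j}$ established just above. By \cref{lemma:rsm_gensetscountable}, the generating system $\CE_{i,j}$ of $\CF_{i,j}$ is a countable partition of $\RUNS$ (into nonempty sets), so by \cref{lemma:sigma_field_countable_cover} every element of $\CF_{i,j}$ is a countable disjoint union of members of $\CE_{i,j}$ (equivalently, by \cref{lemma:countable_cover_atoms}, the atoms of $\CF_{i,j}$ are exactly the nonempty members of $\CE_{i,j}$). Hence a function $\RUNS \to \realclosure$ is $\CF_{i,j}$-measurable as soon as it is constant on each member of $\CE_{i,j}$: if $R$ is such a function with $R$ constantly equal to $c_A$ on the block $A$, then for any Borel set $B$ the preimage $R^{-1}(B)$ equals the disjoint union of those (countably many) blocks $A$ with $c_A \in B$, which lies in $\CF_{i,j}$. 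So the whole proof reduces to showing that $R_{i,j}$ is constant on every nonempty set of $\CE_{i,j}$.

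Then I would verify this block by block, using only the definitions of the generators. For a block $S_{i,j,f}(r) = \Pre{f} \cap R_{i,j}^{-1}(\{r\}) \cap (I^{\succ}_{i,j})^{-1}(\abs{f})$ (see \cref{eq:def_sijfr}) there is nothing to prove, since by construction this set is contained in $R_{i,j}^{-1}(\{r\})$, so $R_{i,j}$ is constantly $r$ on it; this already settles the case $j = 0$ in full (all generators have this shape, with $r \geq 0$, including the degenerate one for the empty prefix) and the "$r > 0$" generators in the case $j > 0$. For a block $S^0_{i,j,j',f}(r)$ with $j' < j$ (which only occurs when $j > 0$), its definition intersects with $\bigcap_{k=j'+1}^{j} R_{i,k}^{-1}(\{0\})$, and since the index $j$ itself lies in the range $j'+1 \le k \le j$, the block is contained in $R_{i,j}^{-1}(\{0\})$; thus $R_{i,j}$ is constantly $0$ there. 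The remaining subcase $S^0_{i,j,j',f}(r) = \emptyset$ is vacuous. This exhausts the shapes of generators listed in the definition of $\CE_{i,j}$.

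Combining these observations with the reduction of the first paragraph yields that $R_{i,j}$ is $\CF_{i,j}$-measurable, which is the claim; together with \cref{lemma:rsm_validfiltration} this shows that the stochastic process $(R_{i,j})_{j \in \NN}$ is adapted to the filtration $(\CF_{i,j})_{j \in \NN}$. I do not expect any genuine obstacle: the essential — and essentially only — point is the near-tautological observation that each generator of $\CF_{i,j}$ was defined precisely as an intersection with a level set of $R_{i,j}$, so adaptedness is built into the construction. The only mild care needed is to handle the boundary cases ($r = 0$ and the empty prefix) uniformly and to rely on \cref{lemma:rsm_gensetscountable} for the fact that these finitely many shapes of generators really do cover all of $\RUNS$.
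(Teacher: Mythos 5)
Your proof is correct and follows essentially the same route as the paper: both arguments rest on the facts that $R_{i,j}$ is constant on every generator of $\CF_{i,j}$ (by construction for the sets $S_{i,j,f}(r)$, and forced to be $0$ on the sets $S^0_{i,j,j',f}(r)$ since $k=j$ lies in the range $j'+1\leq k\leq j$) and that the nonempty generators form a countable partition of $\RUNS$ (\cref{lemma:rsm_gensetscountable}). The only cosmetic difference is that the paper writes out the level sets explicitly --- $R_{i,j}^{-1}(\{r\})=\bigcup_{f\in\FPATH}S_{i,j,f}(r)$ for $r>0$ and the complement of the countable union of these for $r=0$ --- whereas you invoke the general ``constant on the atoms of a countably generated partition'' principle, which handles $r=0$ a bit more directly.
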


\begin{proof}
	For the proof we fix $i, j \in \NN$.
	We have already seen that $R_{i,j}$ takes only countably many values.
	The measurabilty is shown by proving $\{\run \in \RUNS \mid R_{i,j}(\run) = r\} \in \CF_{i,j}$ for every $r \geq 0$.
	We begin with the case $r > 0$:

	\[
		\{\run\in\RUNS \mid R_{i,j} (\run) = r\} = \bigcup_{f \in \FPATH} S_{i,j,f}(r) \in \CF_{i,j}
	\]
	Now using the above statement we can also show the remaining case $R_{i,j}^{-1} (\{0\}) \in \CF_{i,j}$.
	\begin{align*}
		R_{i,j}^{-1} (\{0\}) & = \{\run\in\RUNS \mid R_{i,j} (\run) = 0\}                                                                 \\
		                     & =\RUNS\setminus \bigcup_{r'\in \RR_{> 0}} \underbrace{R_{i,j}^{-1}(\{r'\})}_{\in \CF_{i,j}} \in \CF_{i,j}.
	\end{align*}
	Hence every $R_{i,j}$ is measurable w.r.t.
	$\CF_{i,j}$ and so, the process $(R_{i,j})_{j \in \NN}$ is adapted to $(\CF_{i,j})_{j \in \NN}$.
\end{proof}

The point of interest in the above defined stochastic process $(R_{i,j})_{j\in \NN}$ is the first index where it takes the value zero.
At this point no more transition $t\in\bigcup\GTG$ can be executed by the properties of $\rank$ (see \cref{def:Probabilistic Polynomial Ranking Functions}).
To introduce this ``point of interest'' more formally, we first present the notion of a stopping time.

\begin{definition}[Stopping Time]
	\label{def:stopping_time}
	Let $(\Omega, \mathcal{F}, \pmeasure)$ be a probability space with a filtration $(\mathcal{F}_j)_{j \in \NN}$ and $T: \Omega \to \natclosure$.
	We call $T$ a \emph{stopping time}
	for $(\mathcal{F}_j)_{j \in \NN}$ if $T^{-1}(\{j\}) \in \mathcal{F}_j$ holds for all $j \in \NN$.
\end{definition}

\begin{definition}[Stopping Time for $(R_{i,j})_{j\in \NN}$]
	\label{def:rsm_stoptime}
	For every $i \in \NN$, let $T'_i:\RUNS \to \overline{\NN}$ be defined by $T'_i (\run) = \min \{j\in \NN\mid R_{i,j} (\run) = 0\}$.
\end{definition}
\noindent
As usual, we define $\min \emptyset = \infty$.

Due to the adaptedness of $(R_{i,j})_{j\in \NN}$ to the filtration, $T'_i$ is indeed a stopping time for the filtration $(\CF_{i,j})_{j\in \NN}$.

The next lemma states that as long as our process has not terminated it is expected to decrease by at least $1$.
This will allow us to prove that $(R_{i,j})_{j \in \NN}$ forms a ranking supermartingale.
So let us first define this notion.

\begin{definition}[Ranking Supermartingale \protect{\textnormal{\cite{lexrsm}}}]
	\label{def:rsm}
	Let $(\Omega, \mathcal{F}, \pmeasure)$ be a probability space with a filtration $(\mathcal{F}_j)_{j \in \NN}$ and $T$ a stopping time for $(\CF_{i,j})_{j\in \NN}$.
	A stochastic process $(X_j)_{j \in \NN}$ is a (strict) \emph{ranking supermartingale for $T$} iff for all $j \in \NN$
	\begin{enumerate}
		\item $X_j$ is adapted to $\mathcal{F}_j$
		\item $X_j(\Omega) \subseteq \RR_{\geq 0}$
		\item We have
		      \[
			      \expvsign(X_{j+1} \cdot \ind_{T^{-1}((j, \infty))} \mid \mathcal{F}_j) \leq (X_{j} - 1) \cdot \ind_{T^{-1}((j, \infty))}
		      \]
	\end{enumerate}
\end{definition}

We now have to prove that the process $(R_{i,j})_{j\in \NN}$ indeed satisfies the (third) ranking condition of \cref{def:rsm}.
However, the proof of \cref{lemma:rs_rankcond} is very intricate.
To increase readability we will formulate it as a corollary and split its proof into several lemmas (\cref{lemma:rs_rankcond,lemma:rank_cond_on_generators,lem:ranking_condition_A_m}) which we prove separately.
Combining these lemmas then yields the complete proof for the statement.

\begin{lemma}[Ranking Condition for $(R_{i,j})_{j\in \NN}$]
	\label{lemma:rs_rankcond}
	Let $\scheduler$ be an arbitrary scheduler, $\initstate$ an arbitrary initial state, and $i,j \in \NN$.
	Then on the set $\left(T'_i\right)^{-1}((j,\infty))$ we have:

	\[
		\expv{\pip}{\scheduler}{\initstate} (R_{i,j+1}\mid \CF_{i,j}) \leq R_{i,j} - 1
	\]
\end{lemma}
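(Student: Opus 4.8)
The plan is to reduce the desired inequality to a per-atom estimate on $\CF_{i,j}$ and then feed in the three defining conditions of the PLRF $\rank$. Since $\CF_{i,j}$ is generated by the countable partition $\CE_{i,j}$ of $\RUNS$ (\cref{lemma:rsm_gensetscountable}), \cref{lemma:sigma_field_countable_cover,lemma:countable_cover_atoms} show that its atoms are exactly the non-empty members of $\CE_{i,j}$ and that every $\CF_{i,j}$-measurable set is a disjoint union of atoms. As $T'_i$ is a stopping time for $(\CF_{i,j})_{j}$, the set $\left(T'_i\right)^{-1}((j,\infty)) = \RUNS\setminus\bigcup_{k\le j}(T'_i)^{-1}(\{k\})$ is $\CF_{i,j}$-measurable, hence a disjoint union of atoms; since $R_{i,j}\equiv 0$ on each $S^0_{i,j,j',f}(r)$, the only atoms it can contain are of the form $A = S_{i,j,f}(r)$ with $r>0$, and on such an $A$ the function $R_{i,j}$ is the constant $r$. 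By \cref{lemma:rsm_condexpconstatoms} it therefore suffices to show, for every such atom $A$ with $\pipmeasure{\pip}{\scheduler}{\initstate}(A)>0$, that
\[ \expv{\pip}{\scheduler}{\initstate}(\ind_A\cdot R_{i,j+1}) \;\le\; (r-1)\cdot\pipmeasure{\pip}{\scheduler}{\initstate}(A). \]

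So I would fix $A = S_{i,j,f}(r)$ with $r>0$ and positive measure, and let $c_f = (\loc_f,\_,\state_f)$ be the (admissible) last configuration of $f$. By \cref{lemma:decomposition}, $r = \exact{\rank(\loc_f)}{\state_f}$ and $A = \biguplus_{c\in\CONF_{\GTG}}\Pre{fc}$. The deterministic Markovian scheduler selects a fixed $\scheduler(c_f) = (g,\tilde\state)$ with $g\in\GTG$, $\loc_g = \loc_f$, $\tilde\state(\tau_g)=\true$, and $\tilde\state|_{\PV} = \state_f|_{\PV}$ (it must pick a transition of $\GTG$, since every run of $A$ takes its $(j{+}1)$-th $\GTG$-step from $c_f$). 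Hence $\pipmeasure{\pip}{\scheduler}{\initstate}(\Pre{fc}) = \pipmeasure{\pip}{\scheduler}{\initstate}(\Pre{f})\cdot\ptransition{\pip}{\scheduler}{\initstate}(c_f\to c)$ for every $c$, these one-step probabilities sum to $1$ over $c\in\CONF_{\GTG}$, and thus $\pipmeasure{\pip}{\scheduler}{\initstate}(A) = \pipmeasure{\pip}{\scheduler}{\initstate}(\Pre{f})$.

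The core step, which I expect to be the main obstacle, is an auxiliary lemma: for each $c = (\loc_c,t_c,\state_c)\in\CONF_{\GTG}$ with $\ptransition{\pip}{\scheduler}{\initstate}(c_f\to c)>0$,
\[ \expv{\pip}{\scheduler}{\initstate}\!\bigl(\ind_{\Pre{fc}}\cdot R_{i,j+1}\bigr) \;\le\; \exact{\rank(\loc_c)}{\state_c}\cdot\pipmeasure{\pip}{\scheduler}{\initstate}(\Pre{fc}). \]
Conditioned on $\Pre{fc}$, the variable $R_{i,j+1}$ equals $\max\{\rank,0\}$ evaluated at the configuration just before the next $\GTG$-execution inside the visit $I_i$, or $0$ if that visit leaves $\GTNI$ or never executes $\GTG$ again; the transitions taken in between all lie in $\GTNI\setminus\GTG$. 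I would regard this stretch of the run as a stochastic process adapted to the natural filtration and prove that $\max\{\rank,0\}$ along it is a non-negative supermartingale: Non-Increase gives $\exact{\rank(\loc)}{\state}\ge\exp_{\rank,g',\state}$ for a $\GTNI\setminus\GTG$-step using $g'$, and — because $\max\{\cdot,0\}$ is convex and so this does not survive truncation automatically — Boundedness~(a) is invoked to fix the sign of $\rank$ after the step given the current configuration: if all post-step outcomes are $\ge 0$ then the conditional expectation of $\max\{\rank(\loc'),0\}$ equals that of $\rank(\loc')$, i.e.\ $\exp_{\rank,g',\state}\le\exact{\rank(\loc)}{\state}\le\max\{\exact{\rank(\loc)}{\state},0\}$, while if all are $<0$ that conditional expectation is $0$. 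Visits that terminate or do not return to $\GTG$ make $R_{i,j+1}=0$, which is handled via an optional-stopping/Fatou argument, and the limiting coefficient $\exact{\rank(\loc_c)}{\state_c}$ is non-negative by Boundedness~(b) since $c$ is reached via a $\GTG$-transition. Making this stretch precise, handling the truncation, and controlling non-returning visits using the machinery of ranking supermartingales is considerably more delicate than the bookkeeping in the other steps, and this is exactly where the restriction to \emph{linear} $\rank$ and the path-dependent Boundedness conditions of \cref{def:Probabilistic Polynomial Ranking Functions} are genuinely used.

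Granting the auxiliary lemma, I would sum over $c\in\CONF_{\GTG}$ and substitute the definitions of $\ptransition{\pip}{\scheduler}{\initstate}$ and of $\state_\eta$; since each $\rank(\loc')$ is linear and free of temporary variables, expectation passes through the update, and
\[ \sum_{c\in\CONF_{\GTG}}\ptransition{\pip}{\scheduler}{\initstate}(c_f\to c)\cdot\exact{\rank(\loc_c)}{\state_c} \;=\; \exp_{\rank,g,\tilde\state}. \]
Then the Decrease condition applied to $\tilde\state$ (which satisfies $\tilde\state(\tau_g)=\true$) yields $\exp_{\rank,g,\tilde\state}\le\tilde\state(\rank(\loc_f))-1 = r-1$, so $\expv{\pip}{\scheduler}{\initstate}(\ind_A R_{i,j+1}) \le \pipmeasure{\pip}{\scheduler}{\initstate}(\Pre{f})\cdot(r-1) = (r-1)\cdot\pipmeasure{\pip}{\scheduler}{\initstate}(A)$, which is the required estimate; by \cref{lemma:rsm_condexpconstatoms} this gives $\expv{\pip}{\scheduler}{\initstate}(R_{i,j+1}\mid\CF_{i,j}) \le r-1 = R_{i,j}-1$ on $A$, and hence on $\left(T'_i\right)^{-1}((j,\infty))$. (The same computation, now using Boundedness~(b) to get $\exp_{\rank,g,\tilde\state}\ge 0$, also shows $r\ge 1$, so that $r-1\ge 0$ and the estimate is never vacuous.)
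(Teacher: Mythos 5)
Your proposal is correct, and its skeleton coincides with the paper's: both reduce the claim to the atoms/generators $S_{i,j,f}(r)$ with $r>0$ and positive measure (the $S^0$-generators and the case $r=j=0$ are trivial since $T'_i\le j$ there), both use the countable-partition machinery (\cref{lem:upper_bound_cond_exp}, \cref{lemma:rsm_condexpconstatoms}) to turn the conditional inequality into the per-atom estimate $\expv{\pip}{\scheduler}{\initstate}(\ind_A\cdot R_{i,j+1})\le (r-1)\cdot\pipmeasure{\pip}{\scheduler}{\initstate}(A)$ with $A=S_{i,j,f}(r)=\biguplus_{c\in\CONF_{\GTG}}\Pre{fc}$ (\cref{lemma:decomposition}), and both finish by applying ``Decrease'' at the step $c_f\to c$, using linearity of $\rank$ to identify the probability-weighted sum of successor values with $\exp_{\rank,g,\tilde\state}$. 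Where you genuinely differ is in how the per-atom estimate is established, i.e.\ the content of \cref{lemma:rank_cond_on_generators,lem:ranking_condition_A_m}: the paper truncates to the sets $A_m$ on which the next $\GTG$-step occurs within $m$ steps, writes them as finite unions of cylinders, propagates $\rank$ backwards through the $\GTNI\setminus\GTG$ stretch by explicit sum manipulations (``Non-Increase'', with ``Boundedness (a)/(b)'' guaranteeing non-negativity of all intermediate terms), and then lets $m\to\infty$ by continuity of the measure; you instead bound each branch $\Pre{fc}$ by an optional-stopping/Fatou argument for the truncated process $\max\{\rank,0\}$, using ``Boundedness (a)'' to fix the sign after each step so that truncation is compatible with ``Non-Increase'', and ``Boundedness (b)'' for the non-negativity of the starting value $\exact{\rank(\loc_c)}{\state_c}$. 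The two devices invoke exactly the same PLRF conditions and yield the same bound; the paper's version avoids introducing a per-branch filtration and stopping time at the price of heavy cylinder-sum bookkeeping, whereas yours is conceptually cleaner but the part you yourself flag as delicate --- formalizing the stretch as an adapted process with its stopping time, and the admissibility/almost-sure caveats (the containment of the atoms $S_{i,j,f}(r)$ in $(T'_i)^{-1}((j,\infty))$ and the sign conditions hold only up to null sets, and it is the positive measure of $A$ that forces $\scheduler(c_f)$ to choose some $g\in\GTG$) --- is precisely where the paper's explicit sums do the work, so that auxiliary lemma would still need to be written out at the same level of detail to obtain a complete proof.
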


\begin{proof}
	The inequation given above is equivalent to:

	\[
		\ind_{T'_i > j} \cdot \expv{\pip}{\scheduler}{\initstate} (R_{i,j+1} \mid \CF_{i,j}) \leq \ind_{T'_i > j} \cdot (R_{i,j} - 1), \]
	where $T'_i > j$ stands for the set $\left(T'_i\right)^{-1}((j,\infty))$.

	Since $\left(T'_i\right)^{-1}((j,\infty)) \in \CF_{i,j}$ (i.e., $\ind_{T'_i > j}$ is $\CF_{i,j}$-measurable), by \cref{app_thm:propert_conditional_expectation} \ref{it:cond_exp_multiplicativity}, this inequation is itself equivalent to
	\begin{align}
		\expv{\pip}{\scheduler}{\initstate} \left(\ind_{T'_i > j} \cdot R_{i,j+1} \mid \CF_{i,j}\right) \leq \ind_{T'_i > j} \cdot (R_{i,j} - 1),\label{eq:proof_rankcond}
	\end{align}

	First of all, due to the properties ``Decrease'' and ``Boundedness'' of $\rank$ in \cref{def:Probabilistic Polynomial Ranking Functions}, we have $R_{i,j}(\run) \geq 1$ for any admissible run $\run \in \left(T'_i\right)^{-1}((j,\infty))$.

	By \cref{lem:upper_bound_cond_exp}, it suffices to only consider the generators $\CE_{i,j}$ of $\CF_{i,j}$ to prove this statement.
	Note that on the generators of the form $S^0_{i,j,j',f}(r)$, by \cref{def:rsm_stoptime} we have $T'_i \leq j$.
	So in this case, both sides of \cref{eq:proof_rankcond} become zero and the inequation holds trivially.
	Similarly, if $r=j=0$, then $T'_i = 0 = j$ on the sets $S_{i,0,f}(0)$, i.e., both sides of \cref{eq:proof_rankcond} become zero, too.
	Therefore, we do not need to consider those generators here.
	So, we can assume $r > 0$ for the generators $S_{i,j,f}(r)$.
	Moreover, if a generator is in fact a null-set, the left-hand side becomes zero and we can omit it as well.
	Furthermore, for any admissible run $\run$ with $\run \in S_{i,j,f}(r)$ we have by definition of $\rank$ that $ \run \in \left(T'_i\right)^{-1}((j,\infty))$.
	So by \cref{lemma:rsm_condexpconstatoms}, it is sufficient to prove
	\[
		\expv{\pip}{\scheduler}{\initstate} \left(\ind_{S_{i,j,f}(r)} \cdot \ind_{T'_i > j} \cdot R_{i,j+1} \right) \leq \expv{\pip}{\scheduler}{\initstate} \left(\ind_{S_{i,j,f}(r)}\cdot \ind_{T'_i > j} \cdot (R_{i,j} - 1)\right) \]
	whenever $\pipmeasure{\pip}{\scheduler}{\initstate} (S_{i,j,f}(r))>0$.
	Moreover, as discussed above we have
	\[
		\ind_{T'_i > j} \cdot \ind_{S_{i,j,f}(r)}
		= \ind_{S_{i,j,f}(r)} \qquad \text{almost surely.} \]
	Note that this equation only holds for admissible runs and hence, it only holds almost surely.

	So the above inequation is equivalent to
	\[
		\expv{\pip}{\scheduler}{\initstate} \left(\ind_{S_{i,j,f}(r)} \cdot R_{i,j+1} \right) \leq \expv{\pip}{\scheduler}{\initstate} \left(\ind_{S_{i,j,f}(r)} \cdot (R_{i,j} - 1)\right) = \pipmeasure{\pip}{\scheduler}{\initstate}(S_{i,j,f}(r)) \cdot (r-1),\]
	since $R_{i,j}$ is constant with value $r$ on $S_{i,j,f}(r)$.
	Note that this means, as $R_{i,j} \geq 1$ on the set $S_{i,j,f}(r) \subseteq (T'_i > j) = \left(T'_i\right)^{-1}((j,\infty))$, that $\pipmeasure{\pip}{\scheduler}{\initstate}(S_{i,j,f}(r)) > 0$ already implies $r \geq 1$.
	That this inequation holds is exactly the statement of \cref{lemma:rank_cond_on_generators}.
\end{proof}

\begin{lemma}
	\label{lemma:rank_cond_on_generators}
	Let $\scheduler$ be an arbitrary scheduler, $\initstate$ an arbitrary initial state, and $i,j \in \NN$.
	Let $f \in \FPATH$ and $r \geq 1$ such that $\pipmeasure{\pip}{\scheduler}{\initstate}\left( S_{i,j,f}(r)\right) > 0$.
	Then we have:

	\[
		\expv{\pip}{\scheduler}{\initstate} (\ind_{S_{i,j,f}(r)} \cdot R_{i,j+1}) \leq \pipmeasure{\pip}{\scheduler}{\initstate}\left( S_{i,j,f}(r)\right) \cdot (r - 1)
	\]
\end{lemma}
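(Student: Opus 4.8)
The plan is to follow the value of $\rank$ along the part of a run that comes after the prefix $f$, to split it according to how many transitions from $\bigcup(\GTNI\setminus\GTG)$ are executed after the ``decreasing'' step, and then to combine one use of the \emph{Decrease} property of $\rank$ with iterated uses of the \emph{Non-Increase} property, while the \emph{Boundedness} conditions are used to control the truncation at $0$ built into $R_{i,j+1}$.

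First I would record the shape of $S_{i,j,f}(r)$. Writing the last configuration of $f$ as $(\loc_f,\_,\state_f)$, \cref{lemma:decomposition} gives $r = \exact{\rank(\loc_f)}{\state_f}$ and $S_{i,j,f}(r) = \biguplus_{c\in\CONF_{\GTG}}\Pre{fc}$. Since $\scheduler$ is deterministic and $\pipmeasure{\pip}{\scheduler}{\initstate}(S_{i,j,f}(r)) > 0$, the general transition $g$ with $\scheduler((\loc_f,\_,\state_f)) = (g,\tilde\state)$ must lie in $\GTG$, while $\tilde\state(\tau_g) = \true$ and $\tilde\state$ agrees with $\state_f$ on $\PV$, so in particular $\tilde\state(\rank(\loc_f)) = r$. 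Moreover every admissible run with prefix $f$ lies in $S_{i,j,f}(r)$, so $\pipmeasure{\pip}{\scheduler}{\initstate}(\Pre{f}\setminus S_{i,j,f}(r)) = 0$ and it suffices to show $\expv{\pip}{\scheduler}{\initstate}(\ind_{\Pre f}\cdot R_{i,j+1}) \le (r-1)\cdot\pipmeasure{\pip}{\scheduler}{\initstate}(\Pre f)$.

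Next, for an admissible run extending $f$, write $d_0 = (\loc_f,\_,\state_f), d_1, d_2,\dots$ for the configurations it visits from $d_0$ onwards, say $d_k = (\loc_k,\_,\state_k)$, and put $Z_k = \exact{\rank(\loc_k)}{\state_k}$ and $\hat Z_k = \max\{Z_k, 0\}$; note $\hat Z_0 = r$ since $r\ge 1$. The step $d_0\to d_1$ uses $g\in\GTG$, and, as long as the run stays in the current $\GTNI$-block, each subsequent step either uses a transition from $\bigcup(\GTNI\setminus\GTG)$, or is the next $\GTG$-step, or leaves $\GTNI$. Let $\sigma\in\natclosure$ be the index $k\ge 1$ such that $d_k$ is the last configuration before the run --- after the step $d_0\to d_1$ --- first takes a step that is not in $\bigcup(\GTNI\setminus\GTG)$ (and $\sigma=\infty$ if this never happens); then $R_{i,j+1}\le\hat Z_\sigma$ if $\sigma<\infty$ and $R_{i,j+1}=0$ if $\sigma=\infty$. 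The two key estimates, which I would prove by decomposing over the number $m$ of intermediate $\bigcup(\GTNI\setminus\GTG)$-steps, as carried out in \cref{lem:ranking_condition_A_m}, are: (i) the conditional expectation of $\hat Z_1$ given the prefix $f$ equals $\exp_{\rank,g,\tilde\state}$, which by \emph{Decrease} is at most $\tilde\state(\rank(\loc_f)) - 1 = r-1$; here I would use that $\rank$ is \emph{linear}, so that the expectation of $\rank$ over a probabilistic update is $\rank$ evaluated at the state in which every variable is replaced by its expected update, i.e.\ at $\tilde\state_\eta$; and (ii) on the event that the $(k{+}1)$-st step after $d_0$ uses $\bigcup(\GTNI\setminus\GTG)$, the conditional expectation of $\hat Z_{k+1}$ given the $\sigma$-field $\mathcal H_k$ generated by the run up to $d_k$ is at most $\hat Z_k$. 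Estimate (ii) is where \emph{Boundedness}\,(a) is essential: it forces the sign of $\rank$ after a $\bigcup(\GTNI\setminus\GTG)$-step to be the same along all branches, so that on $\{Z_k\ge 0\}$ either $\hat Z_{k+1} = Z_{k+1}$ surely (and then \emph{Non-Increase} gives $\expv{\pip}{\scheduler}{\initstate}(Z_{k+1}\mid\mathcal H_k)\le Z_k$) or $\hat Z_{k+1}=0$ surely, whereas on $\{Z_k<0\}$ the bound $\expv{\pip}{\scheduler}{\initstate}(Z_{k+1}\mid\mathcal H_k)\le Z_k<0$ forces $Z_{k+1}<0$ surely and hence $\hat Z_{k+1}=0$; and \emph{Boundedness}\,(b) gives $\hat Z_1 = Z_1$.

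Finally I would assemble the pieces. By (i) and (ii) the truncated process $(\hat Z_{\sigma\wedge k}\cdot\ind_{\Pre f})_{k\in\NN}$ is a non-negative supermartingale whose value at time $1$ has expectation at most $(r-1)\cdot\pipmeasure{\pip}{\scheduler}{\initstate}(\Pre f)$, so $\expv{\pip}{\scheduler}{\initstate}(\hat Z_{\sigma\wedge k}\cdot\ind_{\Pre f})\le (r-1)\cdot\pipmeasure{\pip}{\scheduler}{\initstate}(\Pre f)$ for every $k\ge 1$ (the conditional expectations involved exist by \cref{thm:existence_cond_exp_nonnegative}). Since $\ind_{\Pre f}\cdot R_{i,j+1}\le\liminf_k\,\hat Z_{\sigma\wedge k}\cdot\ind_{\Pre f}$, Fatou's lemma gives $\expv{\pip}{\scheduler}{\initstate}(\ind_{\Pre f}\cdot R_{i,j+1})\le (r-1)\cdot\pipmeasure{\pip}{\scheduler}{\initstate}(\Pre f) = (r-1)\cdot\pipmeasure{\pip}{\scheduler}{\initstate}(S_{i,j,f}(r))$, as claimed. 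The main obstacle I anticipate is estimate (ii) and its bookkeeping: that the $\max\{\cdot,0\}$ in $R_{i,j+1}$ is compatible with \emph{Non-Increase} --- which fails step by step for a general ranking function but holds here thanks to the sign-consistency imposed by \emph{Boundedness}\,(a) --- and, more minor, that expectations of $\rank$ may be pushed through probabilistic updates, which rests on the linearity of $\rank$.
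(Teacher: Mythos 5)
Your proposal is correct, and it rests on exactly the same ingredients as the paper's proof: one application of \emph{Decrease} at the $\GTG$-step right after $f$, iterated applications of \emph{Non-Increase} over the intermediate $\GTNI\setminus\GTG$-steps, \emph{Boundedness} to control signs, and a limiting argument over the (unbounded) number of intermediate steps. The packaging, however, is genuinely different. The paper truncates by intersecting $S_{i,j,f}(r)$ with the sets $A_m$ of runs that reach the next $\GTG$-step within $m$ steps, writes the expectation as an explicit sum over cylinder sets, propagates the value of $\rank$ \emph{backwards} from the configuration just before that $\GTG$-step to the end of $f$, and passes to the limit by continuity of measure (\cref{lem:ranking_condition_A_m} plus the limit step in the paper's proof of \cref{lemma:rank_cond_on_generators}); to do so it first needs an auxiliary backward induction showing that all intermediate $\rank$-values on such paths are non-negative. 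You instead run the argument \emph{forwards}: the truncated value $\hat Z_k=\max\{Z_k,0\}$, stopped at the first step leaving $\GTNI\setminus\GTG$, is a non-negative supermartingale, and Fatou's lemma replaces the continuity-of-measure step. Two features of your route are worth noting. First, the truncation $\max\{\cdot,0\}$ together with the sign-consistency of \emph{Boundedness}\,(a) lets you handle possibly negative intermediate values directly, so you avoid the paper's non-negativity induction; your case analysis in estimate (ii) is sound. Second, you make explicit that identifying $\expv{\pip}{\scheduler}{\initstate}(Z_{k+1}\mid\mathcal H_k)$ with $\exp_{\rank,g,\tilde\state}$ requires the \emph{linearity} of $\rank$ when updates are distribution functions; the paper uses this silently in its ``Non-Increase''/``Decrease'' steps. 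One small point you should spell out: as defined, $\sigma$ is determined only together with the \emph{next} configuration, so $\{\sigma>k\}$ is not literally $\mathcal H_k$-measurable; it is so only up to null sets, because the deterministic Markovian scheduler makes the next general transition a function of the current configuration $d_k$ — you implicitly use this (as does the paper when it argues that from each $c_k$ the run continues into $\CONF_{\GTG}$ either with probability $1$ or $0$), but the supermartingale/stopping bookkeeping should state it. This is a presentational fix, not a gap.
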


\begin{proof}
	Let $f \in \FPATH$ and $r \geq 1$ such that $\pipmeasure{\pip}{\scheduler}{\initstate}\left(S_{i,j,f}(r)\right) > 0$.
	We have
	\begin{align*}
		      & \expv{\pip}{\scheduler}{\initstate} (\ind_{S_{i,j,f}(r)} \cdot R_{i,j+1})                                                \\
		{}={} & \sum_{r'\in\RR_{\geq 0}}
		r' \cdot \pipmeasure{\pip}{\scheduler}{\initstate} (\ind_{S_{i,j,f}(r)} \cdot R_{i,j+1} = r') \tag{by definition of $\expvsign$} \\
	\end{align*}
	Note that the set of positive real numbers $\RR_{\geq 0}$ is uncountable but we have seen that all except countably many addends in the above sum are zero.
	This justifies why we can use the sum for the expected value and can avoid the integral.

	We are interested in the value of $R_{i,j+1}$.
	However, in a run $\run \in S_{i,j,f}(r)$ the position $I^{\succ}_{i,j+1}(\run)$ might occur arbitrarily far after $I^{\succ}_{i,j}(\run)$ (if it occurs at all), whereas in the meantime transitions from $\GTNI\setminus \GTG$ are evaluated.
	To simplify the analysis, our goal is to ignore all runs where this next evaluation happens after more than $m \geq 1$ steps.
	To achieve this, we consider the sets $A_m$:
	\[
		A_m \is S_{i,j,f}(r) \cap (I^{\succ}_{i,j+1})^{-1}([I^{\succ}_{i,j}(f)+1,I^{\succ}_{i,j}(f)+m]).
	\]
	So, $A_m$ contains all runs $\run \in S_{i,j,f}(r)$ such that the configuration with index $I^{\succ}_{i,j+1}(\run)$ occurs within the next $m \geq 1$ evaluation steps after $I^{\succ}_{i,j}(\run)$.
	By definition, we have $A_{m} \subseteq A_{m+1}$ for all $m \geq 1$.
	Therefore, $A_\infty= \lim_{m \to \infty} A_m = \bigcup_{m \geq 1} A_m = S_{i,j,f}(r) \cap (I^{\succ}_{i,j+1})^{-1}([I^{\succ}_{i,j}(f)+1,\infty))$.
	Since $I^{\succ}_{i,j+1}(\run) \neq 0$ already implies $I^{\succ}_{i,j+1}(\run) > I^{\succ}_{i,j}(\run)$ we observe
	\[
		\run \in S_{i,j,f}(r)\setminus A_{\infty} \implies I^{\succ}_{i,j+1}(\run) =0.
	\]
	But $I^{\succ}_{i,j+1}(\run) = 0$ implies $R_{i,j+1}(\run) = 0$, so we have
	\begin{align*}
		\expv{\pip}{\scheduler}{\initstate} (\ind_{S_{i,j,f}(r)} \cdot R_{i,j+1}) {}={} & \expv{\pip}{\scheduler}{\initstate} ((\ind_{A_{\infty}}+\ind_{S_{i,j,f}(r)\setminus A_{\infty}}) \cdot R_{i,j+1})                                                                         \\
		{}={}                                                                           & \expv{\pip}{\scheduler}{\initstate} (\ind_{A_{\infty}} \cdot R_{i,j+1}) + \expv{\pip}{\scheduler}{\initstate} (\underbrace{\ind_{S_{i,j,f}(r)\setminus A_{\infty}} \cdot R_{i,j+1}}_{=0}) \\
		{}={}                                                                           & \expv{\pip}{\scheduler}{\initstate} (\ind_{A_{\infty}} \cdot R_{i,j+1}).                                                                                                                  \\
	\end{align*}
	By continuity of the probability measure we can then conclude

	\begin{align}
		\lim_{m\to\infty} \expv{\pip}{\scheduler}{\initstate} (\ind_{A_m} \cdot R_{i,j+1}) & = \expv{\pip}{\scheduler}{\initstate} (\ind_{A_\infty} \cdot R_{i,j+1}) \nonumber                     \\
		                                                                                   & = \expv{\pip}{\scheduler}{\initstate} (\ind_{S_{i,j,f}(r)} \cdot R_{i,j+1})\label{eq:limit_reasoning}
	\end{align}
	By \cref{lem:ranking_condition_A_m} we have for all $m\geq 1$

	\[
		\expv{\pip}{\scheduler}{\initstate} (\ind_{A_m} \cdot R_{i,j+1}) \leq \pipmeasure{\pip}{\scheduler}{\initstate} \left(S_{i,j,f}(r)\right) \cdot (r-1). \]
	So using \cref{eq:limit_reasoning}, we have therefore proven
	\[
		\expv{\pip}{\scheduler}{\initstate} (\ind_{S_{i,j,f}(r)} \cdot R_{i,j+1}) \leq \pipmeasure{\pip}{\scheduler}{\initstate}(S_{i,j,f}(r)) \cdot (r -1),
	\]
	which concludes the proof.
\end{proof}

\begin{lemma}
	\label{lem:ranking_condition_A_m}
	Let $\scheduler$ be an arbitrary scheduler, $\initstate$ an arbitrary initial state, and $j \in \NN$.
	Let $f \in \FPATH$, $m \in \NN_{\geq 1}$, and $r \geq 1$ such that $\pipmeasure{\pip}{\scheduler}{\initstate}\left( S_{i,j,f}(r)\right) > 0$ and let $A_m \is S_{i,j,f}(r) \cap (I^{\succ}_{i,j+1})^{-1}([I^{\succ}_{i,j}(f)+1,I^{\succ}_{i,j}(f)+m]).$
	Then we have:
	\begin{gather}
		\expv{\pip}{\scheduler}{\initstate} (\ind_{A_m} \cdot R_{i,j+1}) \leq \pipmeasure{\pip}{\scheduler}{\initstate} \left(S_{i,j,f}(r)\right) \cdot (r-1).\label{eq:proofexptimebounds_prA}
	\end{gather}
\end{lemma}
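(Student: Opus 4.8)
The plan is to reduce the claim to an auxiliary ``excursion lemma'' and to prove that lemma by induction. Call a finite path $h$ ending in a configuration $(\loc_h,\_,\state_h)$ \emph{open} if its last transition lies in $\bigcup\GTNI$ and $h$ is consistent with lying inside some block $I_i$ after its $(j{+}1)$-th $\GTG$-step (so a continuation of $h$ may still perform the $(j{+}2)$-th $\GTG$-step of that block). For $k\in\NN$ let $Q_k(h)$ be the expected value, under $\pipmeasure{\pip}{\scheduler}{\initstate}$, of $\max\{\rank,0\}$ evaluated at the configuration immediately preceding that next $\GTG$-step, restricted to the event that this step occurs within $k$ further evaluation steps. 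The excursion lemma states: for every open $h$ with $\pipmeasure{\pip}{\scheduler}{\initstate}(\Pre h)>0$ and every $k$, we have $Q_k(h)\le\pipmeasure{\pip}{\scheduler}{\initstate}(\Pre h)\cdot\max\{\state_h(\rank(\loc_h)),0\}$. Throughout I would exploit that every $\rank(\loc')$ is a \emph{linear} polynomial without temporary variables, so the one-step expectation of $\rank$ after a transition $t=(\_,p,\tau,\eta,\loc')$ executed in a state $\tilde{\state}$ is exactly $\tilde{\state}_\eta(\rank(\loc'))$, the summand appearing in $\exp_{\rank,g,\tilde{\state}}$.

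Granting this, \cref{lem:ranking_condition_A_m} follows in one step. By \cref{lemma:decomposition}\,b), $S_{i,j,f}(r)=\biguplus_{c\in\CONF_{\GTG}}\Pre{fc}$, where the scheduler picks a fixed $g\in\GTG$ at the last configuration $(\loc_f,t_f,\state_f)$ of $f$ (had it picked something outside $\GTG$, then $S_{i,j,f}(r)$ would be a null set). Each $fc$ with $c=(\loc_c,\_,\state_c)\in\CONF_{\GTG}$ is open, and $A_m\cap\Pre{fc}$ is exactly the event measured by $Q_m(fc)$, so $\expv{\pip}{\scheduler}{\initstate}(\ind_{A_m}\cdot R_{i,j+1})=\sum_{c\in\CONF_{\GTG}}Q_m(fc)\le\sum_{c\in\CONF_{\GTG}}\pipmeasure{\pip}{\scheduler}{\initstate}(\Pre{fc})\cdot\max\{\state_c(\rank(\loc_c)),0\}$. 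Since $c$ is reached by a transition of $g\in\GTG$, ``Boundedness''\,(b) gives $\state_c(\rank(\loc_c))\ge 0$, so the $\max$ disappears; and by linearity of $\rank$, $\sum_{c\in\CONF_{\GTG}}\pipmeasure{\pip}{\scheduler}{\initstate}(\Pre{fc})\cdot\state_c(\rank(\loc_c))=\pipmeasure{\pip}{\scheduler}{\initstate}(\Pre f)\cdot\exp_{\rank,g,\tilde{\state}}$, where $\scheduler((\loc_f,t_f,\state_f))=(g,\tilde{\state})$. Now ``Decrease'' gives $\exp_{\rank,g,\tilde{\state}}\le\state_f(\rank(\loc_f))-1=r-1$ (using \cref{lemma:decomposition}\,a) for the last equality), and $\pipmeasure{\pip}{\scheduler}{\initstate}(\Pre f)=\pipmeasure{\pip}{\scheduler}{\initstate}(S_{i,j,f}(r))$ since $\sum_{t\in g}p=1$. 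Combining these yields \cref{eq:proofexptimebounds_prA}.

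The excursion lemma I would prove by induction on $k$. For $k=0$ no further step is allowed, so the restricting event is empty, $Q_0(h)=0$, and the bound holds as its right-hand side is non-negative. For $k+1$, consider the step the scheduler takes from the last configuration of $h$. If it uses a transition from $\GTG$, the configuration preceding the next $\GTG$-step is $h$'s own last one and $Q_{k+1}(h)\le\pipmeasure{\pip}{\scheduler}{\initstate}(\Pre h)\cdot\max\{\state_h(\rank(\loc_h)),0\}$ directly. If it uses a transition from $\GTNI\setminus\GTG$, decompose over the successor configurations; by ``Boundedness''\,(a) the value $\rank$ takes at all of them has the \emph{same} sign. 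If these are all non-negative, the $\max$ on each $hd$ is inert, the induction hypothesis $Q_k(hd)\le\pipmeasure{\pip}{\scheduler}{\initstate}(\Pre{hd})\cdot\state_d(\rank(\loc_d))$ applies, and summing and using ``Non-Increase'' together with linearity of $\rank$ bounds the total by $\pipmeasure{\pip}{\scheduler}{\initstate}(\Pre h)\cdot\state_h(\rank(\loc_h))$, which is itself $\ge 0$. If these are all negative, then — again by ``Boundedness''\,(a) and ``Non-Increase'' applied repeatedly, together with ``Decrease'' and ``Boundedness''\,(b), which forbid a $\GTG$-step from any configuration where $\rank$ is negative — no $\GTG$-step can occur along any continuation within the current block, so $Q_{k+1}(h)=0$. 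In all cases the bound holds.

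I expect the excursion lemma, and specifically its $\GTNI\setminus\GTG$ case, to be the main obstacle. The random variable $R_{i,j+1}$ is built from $\max\{\rank,0\}$, which is \emph{convex}, so a naive Jensen-type estimate across a probabilistic step would go the wrong way; the whole point of the somewhat unusual ``Boundedness'' conditions of \cref{def:Probabilistic Polynomial Ranking Functions} — the consistent-sign requirement (a) and the non-negativity requirement (b) — is to force each $\GTNI\setminus\GTG$-step to act either as the identity or as the zero map on the sign class of $\rank$-values that actually arises, so that linearity of $\rank$ and ``Non-Increase'' can be applied without the $\max$ interfering. Making the notion of ``open path'' precise and aligning the index-set bookkeeping (the sets $I_i$, the mappings $I^{\succ}_{i,j+1}$, and the cylinder-set decompositions) cleanly with this dichotomy is the delicate remaining work.
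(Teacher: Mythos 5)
Your proposal is sound and reaches the inequality by the same essential mechanism as the paper: propagate $\rank$ backwards through the excursion using ``Non-Increase'' (exchanging expectation and $\rank$ via linearity and the absence of temporary variables in $\rank$), handle the $\max\{\cdot,0\}$ via the sign dichotomy of ``Boundedness''~(a) together with the fact that ``Decrease'' and ``Boundedness''~(b) forbid a $\GTG$-step from a configuration with negative $\rank$, and finish with ``Decrease'' at the step out of $f$ plus \cref{lemma:decomposition}. The packaging, however, differs from the paper's. The paper decomposes $A_m$ directly into all excursion cylinders $\Pre{f c_0\cdots c_{m_1}}$ with $1\le m_1\le m$, first proves by a backward induction along each admissible excursion that every intermediate $\rank$-value is non-negative (so $R_{i,j+1}$ coincides with $\rank$ and summation restrictions can be dropped), and then folds the resulting sum inward level by level, starting from the longest excursions. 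You instead isolate a per-prefix excursion lemma for $Q_k(h)$ and prove it by a forward, first-step induction on the remaining budget $k$, carrying $\max\{\rank,0\}$ through the induction and resolving it at each step via the sign dichotomy; this buys a more modular argument in which the role of the nonstandard ``Boundedness'' conditions is local and explicit, at the cost of having to set up the notion of ``open'' prefix and check the indexing of $I^{\succ}_{i,j+1}$ against your budget (an off-by-one you correctly flag as the delicate part), whereas the paper's version keeps all bookkeeping inside one explicit sum. Two small points to tidy up: your case split at the inductive step should also cover the (trivial) possibility that the scheduler continues $h$ with $\gout$ or a general transition outside $\GTNI$, where the restricting event is empty and $Q_{k+1}(h)=0$; and in the all-non-negative subcase the conclusion $\state_h(\rank(\loc_h))\ge 0$ is not automatic but follows because ``Non-Increase'' gives $\state_h(\rank(\loc_h))\ge \exp_{\rank,g',\tilde{\state}}\ge 0$, the latter since the successor values are non-negative and average to $\exp_{\rank,g',\tilde{\state}}$ by linearity.
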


\begin{proof}
	When we reason about measurable sets, the easiest way is to write them as a disjoint union of cylinder sets (if possible).
	For $A_m$ this is indeed possible, since we know that the runs must have the prefix $f$, right after $f$ there must be the position of $I^{\succ}_{i,j}$ and then within at most $m$ steps we must reach the position of $I^{\succ}_{i,j+1}$.
	In other words

	\[
		A_m = \biguplus_{1 \leq m_1 \leq m} \biguplus_{\substack{c_0,c_{m_1} \in \CONF_{\GTG} \\
		c_1,\ldots,c_{m_1-1} \in \CONF_{\GTNI}\setminus \CONF_{\GTG} \\
		f' = \prefix{f c_0 c_1 \cdots c_{m_1}}}} \Pre{f'}
	\]
	Using this decomposition we then get
	\begin{align*}
		      & \expv{\pip}{\scheduler}{\initstate} (\ind_{A_{m}} \cdot R_{i,j+1}) \\
		{}={} & \sum_{m_1 = 1}^m \sum_{\substack{c_0,c_{m_1} \in \CONF_{\GTG}      \\
		c_1,\ldots,c_{m_1-1} \in \CONF_{\GTNI}\setminus \CONF_{\GTG}               \\
		f' = \prefix{f c_0 c_1 \cdots c_{m_1}}}}
		R_{i,j+1} (f') \cdot \ptransition{\pip}{\scheduler}{\initstate} (f') \tag{$\dagger$} \label{eq:important_sum}
		\\
		{}={} & \sum_{m_1 = 1}^{m-1} \sum_{\substack{c_0,c_{m_1} \in \CONF_{\GTG}  \\
		c_1,\ldots,c_{m_1-1} \in \CONF_{\GTNI}\setminus \CONF_{\GTG}               \\
		f' = \prefix{f c_0 c_1 \cdots c_{m_1}}}}
		R_{i,j+1} (f') \cdot \ptransition{\pip}{\scheduler}{\initstate} (f')       \\
		      & + \sum_{\substack{c_0,c_{m} \in \CONF_{\GTG}                       \\
		c_1,\ldots,c_{m-1} \in \CONF_{\GTNI}\setminus \CONF_{\GTG}                 \\
		f' = \prefix{f c_0 c_1 \cdots c_{m}}}}
		R_{i,j+1} (f') \cdot \ptransition{\pip}{\scheduler}{\initstate} (f')       \\
	\end{align*}
	Let us now consider the very last addend:
	\[
		\sum_{\substack{c_0,c_{m} \in \CONF_{\GTG} \\
		c_1,\ldots,c_{m-1} \in \CONF_{\GTNI}\setminus \CONF_{\GTG} \\
		f' = \prefix{f c_0 c_1 \cdots c_{m}}}}
		R_{i,j+1} (f') \cdot \ptransition{\pip}{\scheduler}{\initstate} (f'). \]
	Let $c_k = (\loc_k',t_k',\state_k')$ for every $k \in \{0,\dots,m\}$.

	Note that in the following sums, all addends including a term of the form $\exact{\rank (\loc')}{\state'}$ are non-negative.
	The reason is as follows: Let $f'$ be an \emph{admissible prefix} such that $f' = \prefix{f c_0 c_1\cdots c_{m}}$ with $c_0,c_{m} \in \CONF_{\GTG}$, $c_1,\ldots,c_{m-1} \in \CONF_{\GTNI}\setminus \CONF_{\GTG}$, and $c_k = (\loc'_k,t'_k,\state'_k)$.
	Then, any admissible path ending in one of the $c_0,\ldots,c_{m-1}$ is continued by the scheduler $\scheduler$ with some transition $t \in \bigcup \GTNI$.
	Moreover, from any of the $c_0,\ldots,c_{m-1}$ we can reach the configuration $c_{m}$ with a positive probability.
	By the properties ``Boundedness (b)'' and ``Decrease'', we must have $\exact{\rank (\loc_{m-1}')}{\state_{m-1}'}\geq 0$ (since $c_{m-1}$ is continued with a transition from $\bigcup\GTG$ by the scheduler $\scheduler$).
	Then, by the properties ``Boundedness (a)'' and ``Non-Increase'', we must have $\exact{\rank (\loc_{k}')}{\state_{k}'}\geq 0$ for $0 \leq k \leq m-2$, which can be proven by induction.
	Consequently, whenever for some $0 \leq k \leq m-2$ we have $\ptransition{\pip}{\scheduler}{\initstate}(c_k \to (\loc',t',\state')) > 0$, then $t' \in \bigcup \GTNI$ due to the scheduler $\scheduler$ and hence, $\exact{\rank (\loc')}{\state'}\geq 0$, again by the property ``Boundedness (a)''.
	\begin{align*}
		      & \smashoperator[r]{\sum_{\substack{c_0,c_{m} \in \CONF_{\GTG} \\
		c_1,\ldots,c_{m-1} \in \CONF_{\GTNI}\setminus \CONF_{\GTG}           \\
		f' = \prefix{f c_0 c_1 \cdots c_{m}}}}}
		R_{i,j+1} (f') \cdot \ptransition{\pip}{\scheduler}{\initstate} (f') \\
		{}={} & \smashoperator[r]{\sum_{\substack{c_0,c_{m} \in \CONF_{\GTG} \\
		c_1,\ldots,c_{m-1} \in \CONF_{\GTNI}\setminus \CONF_{\GTG}           \\
		f' = \prefix{f c_0 c_1 \cdots c_{m}}}}}
		\exact{ \rank (\loc_{m-1}')}{\state_{m-1}'}\cdot \ptransition{\pip}{\scheduler}{\initstate}
		(f') \tag{by the definition of $R_{i,{j+1}}$}                        \\
		{}={} & \smashoperator[r]{\sum_{\substack{c_0,c_{m} \in \CONF_{\GTG} \\
		c_1,\ldots,c_{m-1} \in \CONF_{\GTNI}\setminus \CONF_{\GTG} }}}
		\exact{ \rank (\loc_{m-1}')}{\state_{m-1}'}\cdot \ptransition{\pip}{\scheduler}{\initstate} (\prefix{f c_0 c_1 \cdots c_{m-1}}) \cdot \ptransition{\pip}{\scheduler}{\initstate} (c_{m-1} \to c_{m})
	\end{align*}
	Now the idea is to propagate the value of $\rank$ back to the last configuration of the finite path $f$.
	\begin{align*}
		      & \smashoperator[r]{\sum_{\substack{c_0,c_{m} \in \CONF_{\GTG}                                                                                                                                \\
		c_1,\ldots,c_{m-1} \in \CONF_{\GTNI}\setminus \CONF_{\GTG}}}}
		\exact{ \rank (\loc_{m-1}')}{\state_{m-1}'}\cdot \ptransition{\pip}{\scheduler}{\initstate} (\prefix{f c_0 c_1\cdots c_{m-1}}) \cdot \ptransition{\pip}{\scheduler}{\initstate} (c_{m-1} \to c_{m}) \\
		{}={} & \smashoperator[r]{\sum_{\substack{c_0,c_{m} \in \CONF_{\GTG}                                                                                                                                \\
		c_1,\ldots,c_{m-1} \in \CONF_{\GTNI}\setminus \CONF_{\GTG}}}}
		\exact{ \rank (\loc_{m-1}')}{\state_{m-1}'}\cdot \ptransition{\pip}{\scheduler}{\initstate} (\prefix{f c_0 c_1 \cdots c_{m-2}})\cdot                                                                \\&\qquad \phantom{\sum_{\substack{1\\2}}}\ptransition{\pip}{\scheduler}{\initstate} (c_{m-2} \to c_{m-1}) \cdot \ptransition{\pip}{\scheduler}{\initstate} (c_{m-1} \to c_{m}) \\
		{}={} & \smashoperator[r]{\sum_{\substack{c_0 \in \CONF_{\GTG}                                                                                                                                      \\
		c_1,\ldots,c_{m-2} \in \CONF_{\GTNI}\setminus \CONF_{\GTG}}}}
		\ptransition{\pip}{\scheduler}{\initstate} (\prefix{f c_0 c_1 \cdots c_{m-2}}) \cdot                                                                                                                \\
		      & \qquad \smashoperator[r]{\sum_{\substack{c_{m} \in \CONF_{\GTG}                                                                                                                             \\
					c_{m-1} \in \CONF_{\GTNI}\setminus \CONF_{\GTG}}}}
		\exact{ \rank (\loc_{m-1}')}{\state_{m-1}'}
		\cdot \ptransition{\pip}{\scheduler}{\initstate} (c_{m-2} \to c_{m-1}) \cdot \ptransition{\pip}{\scheduler}{\initstate} (c_{m-1} \to c_{m})                                                         \\
		{}={} & \smashoperator[r]{\sum_{\substack{c_0 \in \CONF_{\GTG}                                                                                                                                      \\
		c_1,\ldots,c_{m-2} \in \CONF_{\GTNI}\setminus \CONF_{\GTG}}}}
		\ptransition{\pip}{\scheduler}{\initstate} (\prefix{f c_0 c_1 \cdots c_{m-2}})\cdot                                                                                                                 \\
		      & \qquad \smashoperator[lr]{\sum_{\substack{c_{m-1} \in \CONF_{\GTNI}\setminus \CONF_{\GTG}}}}
		\exact{ \rank (\loc_{m-1}')}{\state_{m-1}'}
		\cdot \ptransition{\pip}{\scheduler}{\initstate} (c_{m-2} \to c_{m-1}) \cdot \left(\sum_{c_{m} \in \CONF_{\GTG} }\ptransition{\pip}{\scheduler}{\initstate} (c_{m-1} \to c_{m})\right)              \\
		{}={} & \smashoperator[r]{\sum_{\substack{c_0 \in \CONF_{\GTG}                                                                                                                                      \\
		c_1,\ldots,c_{m-2} \in \CONF_{\GTNI}\setminus \CONF_{\GTG}}}}
		\ptransition{\pip}{\scheduler}{\initstate} (\prefix{f c_0 c_1 \cdots c_{m-2}})\cdot \smashoperator[lr]{\sum_{\substack{c_{m-1} \in \CONF_{\GTNI}\setminus \CONF_{\GTG}                              \\
					\ptransition{\pip}{\scheduler}{\initstate} (c_{m-1} \to \CONF_{\GTG})>0}}}
		\exact{ \rank (\loc_{m-1}')}{\state_{m-1}'}
		\cdot \ptransition{\pip}{\scheduler}{\initstate} (c_{m-2} \to c_{m-1})
	\end{align*}
	The very last step holds, since from a configuration we either end with probability $1$ in $\CONF_{\GTG}$ or we end there with probability $0$ due to the definition of the scheduler.
	Here, we stored in the summation index-set the fact that we of course only take those configurations into account from which we can continue to a configuration in $\CONF_{\GTG}$ with a positive probability (which by definition directly implies that we do so with probability $1$ as explained before).

	In the sum above, we have $t_i'\in\bigcup\GTNI$ for every addend.
	Using these observations combined with the linearity of the expected value operator as well as the property ``Non-Increase'' of PLRFs one obtains:

	\begin{align*}
		         & \smashoperator[r]{\sum_{\substack{c_0 \in \CONF_{\GTG}                                                                                                      \\
		c_1,\ldots,c_{m-2} \in \CONF_{\GTNI}\setminus \CONF_{\GTG}}}}
		\ptransition{\pip}{\scheduler}{\initstate} (\prefix{f c_0 c_1 \cdots c_{m-2}})\cdot \smashoperator[lr]{\sum_{\substack{c_{m-1} \in \CONF_{\GTNI}\setminus \CONF_{\GTG} \\
					\ptransition{\pip}{\scheduler}{\initstate} (c_{m-1} \to \CONF_{\GTG})>0}}}
		\exact{ \rank (\loc_{m-1}')}{\state_{m-1}'}
		\cdot \ptransition{\pip}{\scheduler}{\initstate} (c_{m-2} \to c_{m-1})                                                                                                 \\
		{}\leq{} & \smashoperator[r]{\sum_{\substack{c_0 \in \CONF_{\GTG}                                                                                                      \\
		c_1,\ldots,c_{m-2} \in \CONF_{\GTNI}\setminus \CONF_{\GTG}}}}
		\ptransition{\pip}{\scheduler}{\initstate} (\prefix{f c_0 c_1 \cdots c_{m-2}})\cdot \smashoperator[lr]{\sum_{\substack{c_{m-1} \in \CONF_{\GTNI}\setminus \CONF_{\GTG}}}}
		\exact{ \rank (\loc_{m-1}')}{\state_{m-1}'}
		\cdot \ptransition{\pip}{\scheduler}{\initstate} (c_{m-2} \to c_{m-1})                                                                                                 \\
		{}\leq{} & \smashoperator[r]{\sum_{\substack{c_0 \in \CONF_{\GTG}                                                                                                      \\
		c_1,\ldots,c_{m-2} \in \CONF_{\GTNI}\setminus \CONF_{\GTG}                                                                                                             \\
		\ptransition{\pip}{\scheduler}{\initstate}(c_{m-2}\to \CONF_{\GTNI}\setminus \CONF_{\GTG}) > 0}}}
		\ptransition{\pip}{\scheduler}{\initstate} (\prefix{f c_0 c_1 \cdots c_{m-2}}) \cdot \exact{\rank (\loc_{m-2}') }{\state_{m-2}'}\tag{Non-Increasing property of $\rank$}
	\end{align*}
	Again, we stored in the summation index-set the fact that we only take those configurations into account from which we can continue to a configuration in $\CONF_{\GTNI}\setminus \CONF_{\GTG}$ with a positive probability (i.e., we continue to $\CONF_{\GTNI}\setminus \CONF_{\GTG}$ with probability $1$).

	Now let us have a look at the addend for $m_1 = m-1$ of \cref{eq:important_sum}.
	Similar to the reasoning for the summand with $m_1 = m$ we have

	\begin{align*}
		         & \sum_{\substack{c_0,c_{m-1} \in \CONF_{\GTG}                                                                                                                                               \\
		c_1,\ldots,c_{m-2} \in \CONF_{\GTNI}\setminus \CONF_{\GTG}                                                                                                                                            \\
		f' = \prefix{f c_0 c_1 \cdots c_{m-1}}}}
		R_{i,j+1} (f') \cdot \ptransition{\pip}{\scheduler}{\initstate} (f')                                                                                                                                  \\
		{}={}    & \smashoperator[r]{\sum_{\substack{c_0,c_{m-1} \in \CONF_{\GTG}                                                                                                                             \\
		c_1,\ldots,c_{m-2} \in \CONF_{\GTNI}\setminus \CONF_{\GTG}}}}
		\exact{ \rank (\loc_{m-2}')}{\state_{m-2}'}\cdot \ptransition{\pip}{\scheduler}{\initstate} (\prefix{f c_0 c_1\cdots c_{m-2}}) \cdot \ptransition{\pip}{\scheduler}{\initstate} (c_{m-2} \to c_{m-1}) \\
		{}\leq{} & \smashoperator[r]{\sum_{\substack{c_0 \in \CONF_{\GTG}                                                                                                                                     \\
		c_1,\ldots,c_{m-2} \in \CONF_{\GTNI}\setminus \CONF_{\GTG}                                                                                                                                            \\
		\ptransition{\pip}{\scheduler}{\initstate} (c_{m-2} \to \CONF_{\GTG})>0}}}
		\exact{ \rank (\loc_{m-2}')}{\state_{m-2}'}\cdot \ptransition{\pip}{\scheduler}{\initstate} (\prefix{f c_0 c_1 \cdots c_{m-2}})                                                                       \\
	\end{align*}

	So, when we combine these two steps we get the following for the sum in \cref{eq:important_sum}.

	\begin{align*}
		         & \expv{\pip}{\scheduler}{\initstate} (\ind_{A_{m}} \cdot R_{i,j+1})                                                   \\
		{}={}    & \sum_{m_1 = 1}^m \sum_{\substack{c_0,c_{m_1} \in \CONF_{\GTG}                                                        \\
		c_1,\ldots,c_{m_1-1} \in \CONF_{\GTNI}\setminus \CONF_{\GTG}                                                                    \\
		f' = \prefix{f c_0 c_1 \cdots c_{m_1}}}}
		R_{i,j+1} (f') \cdot \ptransition{\pip}{\scheduler}{\initstate} (f')                                                            \\
		{}\leq{} & \sum_{m_1 = 1}^{m-2} \sum_{\substack{c_0,c_{m_1} \in \CONF_{\GTG}                                                    \\
		c_1,\ldots,c_{m_1-1} \in \CONF_{\GTNI}\setminus \CONF_{\GTG}                                                                    \\
		f' = \prefix{f c_0 c_1 \cdots c_{m_1}}}}
		R_{i,j+1} (f') \cdot \ptransition{\pip}{\scheduler}{\initstate} (f')                                                            \\
		         & \qquad+ \smashoperator[r]{\sum_{\substack{c_0 \in \CONF_{\GTG}                                                       \\
		c_1,\ldots,c_{m-2} \in \CONF_{\GTNI}\setminus \CONF_{\GTG}}}}
		\exact{ \rank (\loc_{m-2}')}{\state_{m-2}'}\cdot \ptransition{\pip}{\scheduler}{\initstate} (\prefix{f c_0 c_1 \cdots c_{m-2}}) \\
	\end{align*}

	\noindent
	Proceeding in this manner, i.e., iteratively propagating the point where to evaluate $\rank$ back by using the property ``Non-Increase'' of $\rank$, we step by step over-approximate this sum by

	\[\smashoperator[r]{\sum_{\substack{c_0 \in \CONF_{\GTG}}}}
		\ptransition{\pip}{\scheduler}{\initstate} (\prefix{f c_0}) \cdot \exact{\rank (\loc'_{0}) }{\state'_{0}}.\]

	Finally, we can proceed with the crucial step in this part of the proof.
	To do so, let us assume, that the last configuration in $f$ is $c_f = (\loc_f,t_f,\state_f)$.

	\begin{align*}
		         & \smashoperator[r]{\sum_{\substack{c_0 \in \CONF_{\GTG}}}}
		\ptransition{\pip}{\scheduler}{\initstate} (\prefix{f c_0}) \cdot \exact{\rank (\loc'_{0}) }{\state'_{0}}                                                            \\
		{}={}    & \smashoperator[r]{\sum_{\substack{c_0 \in \CONF_{\GTG}}}}
		\ptransition{\pip}{\scheduler}{\initstate} (\prefix{f}) \cdot \ptransition{\pip}{\scheduler}{\initstate} (c_f \to c_0) \cdot \exact{\rank (\loc'_{0}) }{\state'_{0}} \\
		{}={}    & \ptransition{\pip}{\scheduler}{\initstate} (\prefix{f}) \cdot \smashoperator[r]{\sum_{\substack{c_0 \in \CONF_{\GTG}}}}
		\ptransition{\pip}{\scheduler}{\initstate} (c_f \to c_0) \cdot \exact{\rank (\loc'_{0}) }{\state'_{0}}                                                               \\
		{}\leq{} & \ptransition{\pip}{\scheduler}{\initstate} (\prefix{f}) \cdot (\exact{\rank (\loc_f) }{\state_f} -1) \tag{by property ``Decrease'' of $\rank$}            \\
		{}={}    & \ptransition{\pip}{\scheduler}{\initstate} (\prefix{f}) \cdot (r -1) \tag{by \cref{lemma:decomposition}}                                                  \\
		{}={}    & \pipmeasure{\pip}{\scheduler}{\initstate}(S_{i,j,f}(r)) \cdot (r -1).
	\end{align*}
	Let us elaborate on the last equation.
	We assumed that $\pipmeasure{\pip}{\scheduler}{\initstate}(S_{i,j,f}(r))>0$.
	Moreover, we have seen in \cref{lemma:decomposition} that

	\[
		S_{i,j,f}(r) = \biguplus_{c_0 \in \CONF_{\GTG}} \Pre{fc}
	\]
	By definition of the scheduler $\scheduler$ this means that it chooses to continue $f$ with a general transition from $\GTG$ (all other runs are not admissible).
	But this then means that
	\begin{align*}
		      & \pipmeasure{\pip}{\scheduler}{\initstate}(S_{i,j,f}(r))                                         \\
		{}={} & \pipmeasure{\pip}{\scheduler}{\initstate}\left(\biguplus_{c_0 \in \CONF_{\GTG}} \Pre{fc}\right) \\
		{}={} & \pipmeasure{\pip}{\scheduler}{\initstate}\left(\biguplus_{c_0 \in \CONF} \Pre{fc}\right)        \\
		{}={} & \pipmeasure{\pip}{\scheduler}{\initstate}\left(\Pre{f}\right)                                   \\
		{}={} & \ptransition{\pip}{\scheduler}{\initstate} (\prefix{f}).
	\end{align*}
	So, we have shown \cref{eq:proofexptimebounds_prA}.
\end{proof}

\begin{corollary}
	\label{ranking corollary}
	Let $(\RUNS,\mathcal{F},\pmeasure)$ be the probability space of a PIP $\pip$, $i \in \NN$, and $(\CF_{i,j})_{j \in \NN}$ the filtration from \cref{lemma:rsm_validfiltration}.
	Then, $(R_{i,j})_{j \in \NN}$ is a ranking supermartingale for $T_i'$.
\end{corollary}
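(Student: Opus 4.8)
The plan is to verify directly the three defining conditions of a ranking supermartingale (\cref{def:rsm}) for the process $(R_{i,j})_{j \in \NN}$, the filtration $(\CF_{i,j})_{j \in \NN}$, and the stopping time $T_i'$, and to observe that essentially all of the work has already been done in the preceding lemmas. Throughout, I fix the scheduler $\scheduler$ and initial state $\initstate$ underlying the measure $\pmeasure = \pipmeasure{\pip}{\scheduler}{\initstate}$, so that the lemmas \cref{lem:adaptedness}, \cref{lemma:rsm_validfiltration}, and \cref{lemma:rs_rankcond}, which hold for arbitrary schedulers and initial states, apply verbatim.

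First I would note that $T_i'$ is indeed a stopping time for $(\CF_{i,j})_{j \in \NN}$: by \cref{def:rsm_stoptime} we have $\left(T_i'\right)^{-1}(\{j\}) = R_{i,j}^{-1}(\{0\}) \cap \bigcap_{0 \le k < j} R_{i,k}^{-1}(\RR_{> 0})$, and every factor on the right lies in $\CF_{i,j}$, since $R_{i,k}$ is $\CF_{i,k}$-measurable by \cref{lem:adaptedness} and $\CF_{i,k} \subseteq \CF_{i,j}$ for $k \le j$ by \cref{lemma:rsm_validfiltration}. Condition (1) of \cref{def:rsm} (adaptedness of $R_{i,j}$ to $\CF_{i,j}$) is precisely \cref{lem:adaptedness}. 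Condition (2) ($R_{i,j}(\RUNS) \subseteq \RR_{\geq 0}$) is immediate from \cref{definition:rsm_stproc}: on each run $R_{i,j}$ is either $0$ or of the form $\max\{\exact{\rank(\loc)}{\state},0\}$, hence non-negative, and it is finite because $\rank(\loc)$ is a (linear) polynomial over $\PV$ and $\state$ assigns integers to all variables, so $\exact{\rank(\loc)}{\state} \in \RR$.

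It remains to establish condition (3), namely $\expvsign(R_{i,j+1} \cdot \ind_{(T_i')^{-1}((j,\infty))} \mid \CF_{i,j}) \leq (R_{i,j} - 1) \cdot \ind_{(T_i')^{-1}((j,\infty))}$. This is exactly \cref{lemma:rs_rankcond}: that lemma asserts $\expv{\pip}{\scheduler}{\initstate}(R_{i,j+1} \mid \CF_{i,j}) \leq R_{i,j} - 1$ on $(T_i')^{-1}((j,\infty))$, which, using that $(T_i')^{-1}((j,\infty)) \in \CF_{i,j}$ (so that $\ind_{(T_i')^{-1}((j,\infty))}$ is $\CF_{i,j}$-measurable) together with \cref{app_thm:propert_conditional_expectation} \ref{it:cond_exp_multiplicativity}, is equivalent to the displayed inequality with the indicator moved inside the conditional expectation — this is in fact how \cref{eq:proof_rankcond} is derived inside the proof of \cref{lemma:rs_rankcond}. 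Assembling (1)–(3) yields the claim. The proof of the corollary itself thus carries no real difficulty; the genuine obstacle is entirely upstream, in the intricate chain \cref{lemma:rs_rankcond}–\cref{lemma:rank_cond_on_generators}–\cref{lem:ranking_condition_A_m}, which establishes the ranking condition by writing the relevant generators $S_{i,j,f}(r)$ of $\CF_{i,j}$ as disjoint unions of cylinder sets, truncating to the events $A_m$, and then iteratively propagating the value of $\rank$ back along the stretch of transitions from $\GTNI \setminus \GTG$ via the ``Non-Increase'' property until a final application of ``Decrease'' produces the required drop of at least $1$.
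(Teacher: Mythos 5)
Your proposal is correct and follows the same route as the paper: the corollary is just an assembly of \cref{lem:adaptedness} (adaptedness), the non-negativity of $R_{i,j}$ by construction, and \cref{lemma:rs_rankcond} (the ranking condition), exactly as in the paper's own three-line proof. Your additional remarks — the explicit verification that $T_i'$ is a stopping time via $\left(T_i'\right)^{-1}(\{j\}) = R_{i,j}^{-1}(\{0\}) \cap \bigcap_{0 \le k < j} R_{i,k}^{-1}(\RR_{>0})$, and the equivalence of the two formulations of the ranking condition via \cref{app_thm:propert_conditional_expectation}\,\ref{it:cond_exp_multiplicativity} — are accurate and match observations the paper makes in the surrounding text rather than in the corollary's proof itself.
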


\begin{proof}
	First of all, we have seen in \cref{lem:adaptedness} that $(R_{i,j})_{j \in \NN}$ is adapted to $(\CF_{i,j})_{j \in \NN}$.
	Secondly, by definition, $R_{i,j}$ takes only non-negative real values.
	Thirdly, we have proven in \cref{lemma:rs_rankcond} that the third condition required in \cref{def:rsm} is satisfied.
	So $(R_{i,j})_{j \in \NN}$ is indeed a ranking supermartingale.
\end{proof}

\begin{theorem}[Ranking Supermartingale Bounds Stopping Time \protect{\textnormal{\cite[Thm.\ 7.2]{lexrsm}}}]
	\label{thm:chatt_lexrsm}
	Let $(\Omega, \mathcal{F}, \pmeasure)$ be a probability space with a filtration $(\mathcal{F}_j)_{j \in \NN}$ and $T$ a stopping time.
	If $(X_j)_{j \in \NN}$ is a ranking supermartingale for $T$ then

	\[\expvsign(T) \leq \expvsign(X_0).
	\]
\end{theorem}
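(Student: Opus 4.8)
The plan is to follow the standard stopped-process argument for ranking supermartingales (this is exactly \cite[Thm.~7.2]{LexRSM}, so I would essentially reproduce its proof). First I would introduce the auxiliary process $(Y_j)_{j \in \NN}$ defined by $Y_j = X_{\min(j,T)} + \min(j,T)$, i.e., the process $X$ stopped at time $T$ and shifted upward by the elapsed (stopped) time. Since $X_j(\Omega) \subseteq \RR_{\geq 0}$ and $\min(j,T) \geq 0$, each $Y_j$ is non-negative, and $Y_0 = X_0 + 0 = X_0$.

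The heart of the proof is to show that $(Y_j)_{j \in \NN}$ is a supermartingale w.r.t.\ $(\mathcal{F}_j)_{j \in \NN}$, i.e., $\expvsign(Y_{j+1} \mid \mathcal{F}_j) \leq Y_j$. One splits along the events $\{T \leq j\}$ and $\{T > j\} = T^{-1}((j,\infty))$, both of which lie in $\mathcal{F}_j$ because $T$ is a stopping time. On $\{T \leq j\}$ we have $Y_{j+1} = X_T + T = Y_j$, and $(X_T + T)\,\ind_{T \leq j} = \sum_{k=0}^{j}(X_k + k)\,\ind_{T = k}$ is $\mathcal{F}_j$-measurable since $(X_j)$ is adapted and $\{T=k\} \in \mathcal{F}_k \subseteq \mathcal{F}_j$; on $\{T > j\}$ we have $Y_{j+1} = X_{j+1} + (j+1)$. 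Hence, writing
\[
Y_{j+1} = X_{j+1}\,\ind_{T>j} + (j+1)\,\ind_{T>j} + (X_T + T)\,\ind_{T\leq j},
\]
linearity of conditional expectation together with the rule for pulling out $\mathcal{F}_j$-measurable factors (\cref{app_thm:propert_conditional_expectation}) gives $\expvsign(Y_{j+1}\mid\mathcal{F}_j) = \expvsign(X_{j+1}\,\ind_{T>j}\mid\mathcal{F}_j) + (j+1)\,\ind_{T>j} + (X_T + T)\,\ind_{T\leq j}$. Now the ranking condition (3) of \cref{def:rsm} for $(X_j)$ bounds the first summand by $(X_j - 1)\,\ind_{T>j}$, so $\expvsign(Y_{j+1}\mid\mathcal{F}_j) \leq (X_j + j)\,\ind_{T>j} + (X_T + T)\,\ind_{T\leq j} = Y_j$.

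From the supermartingale property I would take expectations, using the tower rule (\cref{lemma:expected_value_does_not_change}): $\expvsign(Y_{j+1}) = \expvsign(\expvsign(Y_{j+1}\mid\mathcal{F}_j)) \leq \expvsign(Y_j)$, hence $\expvsign(Y_j) \leq \expvsign(Y_0) = \expvsign(X_0)$ for every $j$. Since $Y_j \geq \min(j,T) \geq 0$, monotonicity of the integral (\cref{lemma:properties_of_integral}) yields $\expvsign(\min(j,T)) \leq \expvsign(X_0)$. Finally, $(\min(j,T))_{j \in \NN}$ increases monotonically and converges pointwise to $T$ (if $T(\omega) = \infty$ then $\min(j,T)(\omega) = j \to \infty$), so by monotone convergence $\expvsign(T) = \lim_{j \to \infty}\expvsign(\min(j,T)) \leq \expvsign(X_0)$.

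I expect the main obstacle to be the careful bookkeeping in the supermartingale step: one must verify $\mathcal{F}_j$-measurability of the stopped quantities $X_{\min(j,T)}$ and $\min(j,T)$ (which uses that $T$ is a stopping time and that $(X_j)$ is adapted) and match the event $T^{-1}((j,\infty))$ occurring in the ranking condition to the correct piece of the decomposition of $Y_{j+1}$. A minor point worth noting is that all conditional expectations here are well-defined because the $X_j$, and therefore the $Y_j$, are $\RR_{\geq 0}$-valued, so \cref{thm:existence_cond_exp_nonnegative} applies; and if $\expvsign(X_0) = \infty$ the claim is trivial.
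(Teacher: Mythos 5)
Your proof is correct. Note, however, that the paper does not prove this statement at all: it is imported verbatim as \cite[Thm.~7.2]{LexRSM}, so there is no in-paper argument to compare against. What you have written is the standard stopped-process proof: the shifted stopped process $Y_j = X_{\min(j,T)} + \min(j,T)$ is a non-negative supermartingale (the ranking condition supplies exactly the $-1$ needed to absorb the increment of $\min(j,T)$ on $\{T>j\}$, and on $\{T\le j\}$ the process is frozen), so $\expvsign(\min(j,T)) \le \expvsign(Y_j) \le \expvsign(Y_0) = \expvsign(X_0)$, and monotone convergence finishes the job. Your handling of the two delicate points — the $\mathcal{F}_j$-measurability of $(X_T+T)\ind_{T\le j} = \sum_{k=0}^{j}(X_k+k)\ind_{T=k}$, and the fact that the ranking condition's event $T^{-1}((j,\infty))$ is precisely the piece of the decomposition where $Y_{j+1} = X_{j+1}+(j+1)$ — is exactly right, and the well-definedness of the conditional expectations follows from non-negativity and finiteness of the $Y_j$ as you say.
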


Applying \cref{thm:chatt_lexrsm} to the process $(R_{i,j})_{j \in \NN}$ yields the following result.

\begin{corollary}
	\label{thm:rsm_timebound}
	Let $\scheduler$ be an arbitrary scheduler, $\initstate$ an arbitrary initial state, and $i \in \NN$.
	Then \[\expv{\pip}{\scheduler}{\initstate} (T'_i) \leq \expv{\pip}{\scheduler}{\initstate} (R_{i,0}).
	\]
\end{corollary}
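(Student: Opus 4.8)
The plan is to obtain this corollary as a direct instantiation of Theorem~\ref{thm:chatt_lexrsm}. Fix a scheduler $\scheduler$, an initial state $\initstate$, and $i \in \NN$, and consider the probability space $(\RUNS, \mathcal{F}, \pipmeasure{\pip}{\scheduler}{\initstate})$ together with the filtration $(\CF_{i,j})_{j \in \NN}$ of Lemma~\ref{lemma:rsm_validfiltration} and the stopping time $T'_i$ from Definition~\ref{def:rsm_stoptime}. By Corollary~\ref{ranking corollary}, the process $(R_{i,j})_{j \in \NN}$ is a ranking supermartingale for $T'_i$. Applying Theorem~\ref{thm:chatt_lexrsm} with $(\Omega, \mathcal F, \pmeasure) = (\RUNS, \mathcal F, \pipmeasure{\pip}{\scheduler}{\initstate})$, $(\mathcal F_j)_{j\in\NN} = (\CF_{i,j})_{j\in\NN}$, $T = T'_i$, and $(X_j)_{j\in\NN} = (R_{i,j})_{j\in\NN}$ then immediately gives
\[
	\expv{\pip}{\scheduler}{\initstate} (T'_i) \;\leq\; \expv{\pip}{\scheduler}{\initstate} (R_{i,0}),
\]
which is exactly the asserted inequality.

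There is essentially no new obstacle at this stage: all the substantive work lies in the earlier lemmas that verify the three conditions of Definition~\ref{def:rsm} for $(R_{i,j})_{j\in\NN}$, namely adaptedness to $(\CF_{i,j})_{j\in\NN}$ (Lemma~\ref{lem:adaptedness}), non-negativity of each $R_{i,j}$ (immediate from Definition~\ref{definition:rsm_stproc}), and the ranking condition (Lemma~\ref{lemma:rs_rankcond}, itself reduced via Lemmas~\ref{lemma:rank_cond_on_generators} and~\ref{lem:ranking_condition_A_m}), as well as in the fact that $T'_i$ is a stopping time for $(\CF_{i,j})_{j\in\NN}$, which is observed directly after Definition~\ref{def:rsm_stoptime}. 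Hence the only thing to check when writing this up is that the hypotheses of Theorem~\ref{thm:chatt_lexrsm} match our setting verbatim, after which the statement follows with no further computation.
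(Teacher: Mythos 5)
Your proposal is correct and matches the paper's own argument exactly: the paper also obtains this corollary by noting that $(R_{i,j})_{j\in\NN}$ is a ranking supermartingale for the stopping time $T'_i$ (via \cref{ranking corollary}) and then directly instantiating \cref{thm:chatt_lexrsm}. No further justification is needed.
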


However, so far we have only considered the case of $\GTG$ transitions.
While this is a crucial ingredient of the proof of \cref{theorem:exptimeboundsmeth}, we also need to take the position into account when $\GTNI$ is entered for the first time.
So we define a random variable similar to $R_{i,0}$, which we will call $E_i$ because it refers to the $(i+1)$-th \underline{e}ntry into $\GTNI$.
To ease the definition, we first define the index $I^{e}_{i}$, i.e., the index of the $(i+1)$-th entry into $\GTNI$.

\begin{definition}[Index for Entry Into $\GTNI$]
	For any $i \in \NN$ we define the mapping $I^{e}_{i}\colon \RUNS\rightarrow \NN$ as follows:

	\[
		I^{e}_{i} (\run) = \min \left(I_i(\run)\right)
	\]
	with $\min (\varnothing) = 0$.
	Again, this means that if the desired configuration does not exist, then the index defaults to $0$.
	For $f\in\FPATH$ we define:

	\[I^{e}_{i}(f) =
		\begin{cases}
			m, & \text{if for all } \run \in \Pre{f} \text{ we have } I^{e}_{i} (\run) = m \\
			0, & \text{otherwise}.
		\end{cases}
	\]
\end{definition}

In analogy to the process $R_{i,j}$ we now define the random variable $E_i$ based on $I^{e}_i$.

\begin{definition}[Stochastic Process $(E_{i})_{i \in \NN}$]
	\label{definition:entry_stproc}
	For any $i \in \NN$ we define the map $E_i: \RUNS \to \RR_{\geq 0}$ as follows.
	Let $\run = \prefix{(\initloc,t_0,\initstate)(\loc_1,t_1,\state_1)\cdots} \in \RUNS$.

	\begin{align*}
		E_i (\run) & =
		\begin{cases}
			\max \left\{\exact{\rank (\loc_{I^{e}_i(\run) - 1})}{\state_{I^{e}_i(\run) - 1}}, 0 \right\}, & \text{if } I^{e}_i(\run) > 0 \\
			0,                                                                                            & \text{otherwise}
		\end{cases}
	\end{align*}

	\noindent{}
	For $f\in\FPATH$ and $i\in \NN$ we define:

	\begin{align*}
		E_{i}(f) & =
		\begin{cases}
			r,    & \text{if for all } \run \in \Pre{f} \text{ we have } I^{e}_{i}(\run) > 0 \land E_{i} (\run) = r \\
			\bot, & \text{otherwise}
		\end{cases}
	\end{align*}
\end{definition}

Clearly, $E_i$ is a non-negative random variable.
We aim to prove \cref{lemma:entry_supermartingale} for a yet to define $\sigma$-field $\CG_i$.
Here, the idea is the following: as already mentioned, so far we have only considered the first occurrence of a general transition from $\GTG$ in a part of the run belonging to $\GTNI$.
However, before the first such occurrence there can occur general transitions from $\GTNI\setminus \GTG$ which is handled by the random variable $E_i$.
Then \cref{lemma:entry_supermartingale} implies that $\expv{\pip}{\scheduler}{\initstate}\left(E_i\right)$ is \emph{at least} $\expv{\pip}{\scheduler}{\initstate}\left(R_{i,0}\right)$.
This will be crucial for the proof of \cref{theorem:exptimeboundsmeth}.
Nevertheless, while proving \cref{lemma:entry_supermartingale} is cumbersome, it is still easier than proving the statement $\expv{\pip}{\scheduler}{\initstate}\left(R_{i,0}\right) \leq \expv{\pip}{\scheduler}{\initstate}\left(E_i\right)$ for the expected values directly, as with the conditional expectation we only have to take certain parts of a run into account whereas for the general expected value we would have to take the whole run into account.
\begin{lemma}[Relating $R_{i,0}$ and $E_i$]
	\label{lemma:entry_supermartingale}
	Let $\scheduler$ be an arbitrary scheduler, $\initstate$ an arbitrary initial state, and $i \in \NN$.
	Then
	\[
		\expv{\pip}{\scheduler}{\initstate}\left(R_{i,0} \mid \CG_i\right) \leq E_i.
	\]
\end{lemma}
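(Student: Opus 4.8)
The plan is to mirror the development that led to \cref{ranking corollary}, but now tracking the segment of a run between the entry into $\GTNI$ and the first execution of a transition from $\GTG$ in that sojourn, a segment along which only \emph{Non-Increase} (and not \emph{Decrease}) of $\rank$ is available. First I would fix $\CG_i$ to be the smallest $\sigma$-field containing the generators
\[
	S^e_{i,f}(r) \;=\; \Pre{f} \cap E_i^{-1}(\{r\}) \cap \left(I^{e}_i\right)^{-1}(\abs{f}), \qquad f \in \FPATH,\; r \geq 0,
\]
which is the exact analogue of $\CF_{i,0}$ with $I^{\succ}_{i,0}$ replaced by the entry index $I^{e}_i$. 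As in \cref{lemma:rsm_gensetscountable} and \cref{lem:adaptedness}, these sets form a countable partition of $\RUNS$ and $E_i$ is $\CG_i$-measurable; moreover $R_{i,0}$ never takes the value $\infty$, so the conditional expectation exists. By \cref{lem:upper_bound_cond_exp} it then suffices to show $\expv{\pip}{\scheduler}{\initstate}(\ind_{A}\cdot R_{i,0}) \leq \expv{\pip}{\scheduler}{\initstate}(\ind_{A}\cdot E_i)$ for every generator $A$, and by \cref{lemma:rsm_condexpconstatoms} only for $A$ with positive measure.

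For $A = S^e_{i,\varepsilon}(0)$ (the runs that never enter $\GTNI$ an $(i+1)$-th time) we have $I^{e}_i = 0$, hence $I^{\succ}_{i,0} = 0$ and $R_{i,0} = 0$, so the inequality is trivial. For $A = S^e_{i,f}(r)$ with $f$ ending in the configuration $c_f = (\loc_f,t_f,\state_f)$, the analogue of \cref{lemma:decomposition} shows $r = \exact{\rank(\loc_f)}{\state_f}$ whenever $r>0$, and that $A = \biguplus_{c \in \CONF_{\GTNI}} \Pre{fc}$ (the scheduler continues $f$ by a general transition in $\GTNI$), so $\pipmeasure{\pip}{\scheduler}{\initstate}(A) = \ptransition{\pip}{\scheduler}{\initstate}(f)$ and $\expv{\pip}{\scheduler}{\initstate}(\ind_A\cdot E_i) = \pipmeasure{\pip}{\scheduler}{\initstate}(A)\cdot r$. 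In the remaining case $r = 0$ with $\exact{\rank(\loc_f)}{\state_f} < 0$ one checks, using exactly the Boundedness/Decrease sign argument from \cref{lem:ranking_condition_A_m}, that no admissible run in $A$ can ever execute a transition from $\GTG$ during the sojourn (otherwise tracing $\rank$ backwards through the $\GTNI\setminus\GTG$-steps via Boundedness (a) and Non-Increase would force $\exact{\rank(\loc_f)}{\state_f}\geq 0$), hence $R_{i,0} = 0$ on $A$ and the inequality holds; so in all remaining cases it is enough to prove $\expv{\pip}{\scheduler}{\initstate}(\ind_{A}\cdot R_{i,0}) \leq \pipmeasure{\pip}{\scheduler}{\initstate}(A)\cdot r$ with $r = \exact{\rank(\loc_f)}{\state_f}\geq 0$.

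This last estimate is carried out verbatim as in the proof of \cref{lemma:rank_cond_on_generators} and \cref{lem:ranking_condition_A_m}: decompose $A$ according to how many steps after $f$ the first $\GTG$-configuration occurs, $A_m = A \cap (I^{\succ}_{i,0})^{-1}([\,\abs{f}+1,\abs{f}+m\,])$, observe that $R_{i,0} = 0$ outside $A_\infty = \bigcup_{m\geq 1} A_m$ so that $\lim_{m\to\infty}\expv{\pip}{\scheduler}{\initstate}(\ind_{A_m}\cdot R_{i,0}) = \expv{\pip}{\scheduler}{\initstate}(\ind_A\cdot R_{i,0})$ by continuity of the measure, and bound each term. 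For fixed $m$, writing $A_m$ as a disjoint union of cylinders $\Pre{f\, c_0 \cdots c_{m_1}}$ with $c_0,\dots,c_{m_1-1}\in\CONF_{\GTNI}\setminus\CONF_{\GTG}$ and $c_{m_1}\in\CONF_{\GTG}$, one uses non-negativity of $\rank$ along these configurations (Boundedness (b) together with Decrease at the $\GTG$-step, then Boundedness (a) together with Non-Increase propagated backwards), and then iteratively applies \emph{Non-Increase} to push the evaluation point of $\rank$ back through $c_{m_1-1},\dots,c_0$ and finally one further step from $\loc_f$, where $\exact{\rank(\loc_f)}{\state_f} = r$. Since this concluding step uses Non-Increase rather than Decrease (and it is applicable because the entry transition lies in some $g\in\GTNI$ with start location $\loc_f$), the telescoping yields $\expv{\pip}{\scheduler}{\initstate}(\ind_{A_m}\cdot R_{i,0}) \leq \ptransition{\pip}{\scheduler}{\initstate}(f)\cdot r = \pipmeasure{\pip}{\scheduler}{\initstate}(A)\cdot r$ — the bound $r$ here instead of the $r-1$ obtained in \cref{lem:ranking_condition_A_m}.

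The main obstacle is organizational rather than conceptual: one must re-establish the whole measure-theoretic scaffolding (that $\CG_i$ is generated by a countable partition, that $E_i$ is $\CG_i$-measurable, and the decomposition of $S^e_{i,f}(r)$ into cylinders) for the entry index $I^{e}_i$, and then re-run the delicate backward-propagation-with-non-negativity computation from \cref{lem:ranking_condition_A_m}, being careful that every configuration appearing in it lies in $\CONF_{\GTNI}$ so that the Non-Increase inequality is legitimately applicable, and that the very last propagation step at $\loc_f$ is made with Non-Increase. Once this is in place, combining it with \cref{lem:upper_bound_cond_exp} and \cref{lemma:rsm_condexpconstatoms} as above gives $\expv{\pip}{\scheduler}{\initstate}(R_{i,0} \mid \CG_i) \leq E_i$.
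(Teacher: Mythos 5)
Your proposal follows the paper's proof essentially step for step: the same $\sigma$-field $\CG_i$ generated by the sets $S^{e}_{i,f}(r)$, the same reduction via \cref{lem:upper_bound_cond_exp} to the (countably many, pairwise disjoint) generators, the same cylinder decomposition of the generators followed by backward propagation of $\rank$ using Non-Increase, and the same key observation that the final propagation step at $\loc_f$ uses Non-Increase rather than Decrease, which is why one obtains the bound $r$ instead of the $r-1$ of \cref{lem:ranking_condition_A_m}.

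There is one concrete slip that breaks a stated step. You set $A_m = A \cap (I^{\succ}_{i,0})^{-1}([\abs{f}+1,\abs{f}+m])$, copying the interval from \cref{lem:ranking_condition_A_m}, but here the first $\GTG$-configuration of the sojourn can coincide with the entry configuration, i.e., $I^{\succ}_{i,0}(\run)=I^{e}_i(\run)=\abs{f}$ is possible (the transition entering $\GTNI$ may itself belong to $\bigcup\GTG$). Such runs lie in none of your $A_m$, yet $R_{i,0}=r$ on them, so your claim that $R_{i,0}=0$ outside $A_\infty$ is false and the identity $\lim_{m\to\infty}\expv{\pip}{\scheduler}{\initstate}(\ind_{A_m}\cdot R_{i,0})=\expv{\pip}{\scheduler}{\initstate}(\ind_A\cdot R_{i,0})$ fails as written. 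The repair is exactly what the paper does in \cref{lem:essential_proof_entry_supermartingale}: start the interval at $\abs{f}$, equivalently allow $m_1=0$ in the cylinder decomposition, where the single appended configuration $c_0$ lies in $\CONF_{\GTG}$ and contributes $\sum_{c_0\in\CONF_{\GTG}}\ptransition{\pip}{\scheduler}{\initstate}(\prefix{fc_0})\cdot r\leq\ptransition{\pip}{\scheduler}{\initstate}(\prefix{f})\cdot r$ directly. The paper explicitly flags this inclusion of the case $m=0$ as the one point where the entry lemma differs from \cref{lem:ranking_condition_A_m}. With that repaired, the rest of your argument, including your treatment of the $r=0$ generators via Boundedness (a) and Non-Increase, is sound and matches the paper's.
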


The construction of $\CG_i$ will be similar to the construction of $\CF_{i,0}$, except that we concentrate on $E_i$ instead of $R_{i,0}$.
So, in analogy to $S_{i,0,f}(r)$ we define the set

\[
	S^{e}_{i,f}(r) = \Pre{f} \cap E_i^{-1}(\{r\}) \cap \left(I^{e}_i\right)^{-1}(\{|f|\}), \]
where $i \in \NN$, $f \in \FPATH$, and $r \geq 0$.
Clearly, there are only countably many non-empty $S^{e}_{i,f}(r)$.
By definition, these sets are also pairwise disjoint.
We make the following observation similar to \cref{lemma:decomposition}.

\begin{corollary}[Observations on $S^{e}_{i,f}$]
	\label{coro:decomposition}
	Let $f \in \FPATH$ with last configuration $(\loc_f,\_,\state_f)$, $r > 0$, and $i,j \in \NN$.
	If $S^{e}_{i,f} (r) \neq \emptyset$ then
	\begin{enumerate}[a)]
		\item $r = \exact{\rank(\loc_f)}{\state_f}$.
		\item
		      \[
			      S^{e}_{i,f} (r) = \biguplus_{c \in \CONF_{\GTNI}} \Pre{fc}.
		      \]
	\end{enumerate}
\end{corollary}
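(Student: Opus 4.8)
This corollary is the exact analogue, for ``the $(i{+}1)$-th entry into $\GTNI$'', of Lemma~\ref{lemma:decomposition}, which treats ``the first execution of a transition from $\GTG$ inside the $(i{+}1)$-th $\GTNI$-block''. So the plan is to reuse the structure of that proof, changing only which transition index $f$ points at. Throughout I would fix a witness $\run^\ast = \prefix{(\initloc,t_0,\initstate)(\loc_1,t_1,\state_1)\cdots}\in S^{e}_{i,f}(r)$, which exists since $S^{e}_{i,f}(r)\neq\emptyset$; by definition of $S^{e}_{i,f}(r)$ we have $I^{e}_i(\run^\ast)=|f|$ and $E_i(\run^\ast)=r$, and since $\run^\ast\in\Pre{f}$, the path $f$ is exactly the initial segment $\run^\ast[0]\cdots\run^\ast[|f|-1]$.

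For part~a) I would argue as follows. Since $r>0$, the ``otherwise''-branch of Def.~\ref{definition:entry_stproc} is excluded, so $I^{e}_i(\run^\ast)>0$ (in particular $|f|\geq 1$) and $r = E_i(\run^\ast) = \max\{\exact{\rank(\loc_{|f|-1})}{\state_{|f|-1}},0\}$; as $r>0$ this forces $r = \exact{\rank(\loc_{|f|-1})}{\state_{|f|-1}}$. But $(\loc_{|f|-1},t_{|f|-1},\state_{|f|-1})$ is precisely the last configuration of $f$, i.e.\ $(\loc_f,\_,\state_f)$, so $r = \exact{\rank(\loc_f)}{\state_f}$.

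For part~b) I would prove the two inclusions separately. ``$\subseteq$'': if $\run\in S^{e}_{i,f}(r)$ then $I^{e}_i(\run)=|f|$, so index $|f|$ is the first index of the $(i{+}1)$-th maximal block of consecutive $\bigcup\GTNI$-transitions of $\run$; in particular the $|f|$-th transition of $\run$ lies in $\bigcup\GTNI$, hence the $|f|$-th configuration $c$ of $\run$ satisfies $c\in\CONF_{\GTNI}$, and since $f$ is a length-$|f|$ prefix of $\run$ we get $\run\in\Pre{fc}\subseteq\biguplus_{c'\in\CONF_{\GTNI}}\Pre{fc'}$ (the union is disjoint because distinct one-step extensions of $f$ yield disjoint cylinder sets). ``$\supseteq$'': given $c\in\CONF_{\GTNI}$ and $\run\in\Pre{fc}$, the runs $\run$ and $\run^\ast$ share the prefix $f$, hence agree on the configurations at indices $0,\dots,|f|-1$, so on the transitions $t_0,\dots,t_{|f|-1}$, and both have a $\bigcup\GTNI$-transition at index $|f|$ (for $\run$ because $c\in\CONF_{\GTNI}$, for $\run^\ast$ because $I^{e}_i(\run^\ast)=|f|$). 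The value $I^{e}_i$ is a function of the transition sequence alone; from $I^{e}_i(\run^\ast)=|f|$ one reads off that the $i$ maximal $\bigcup\GTNI$-blocks preceding index $|f|$ are all contained in $\{0,\dots,|f|-1\}$ (hence already determined by $f$) and that $t_{|f|-1}\notin\bigcup\GTNI$, so index $|f|$ genuinely opens the $(i{+}1)$-th block. Therefore $I^{e}_i(\run)=I^{e}_i(\run^\ast)=|f|$, and consequently $E_i(\run)=\max\{\exact{\rank(\loc_{|f|-1})}{\state_{|f|-1}},0\}=E_i(\run^\ast)=r$, since on $(I^{e}_i)^{-1}(\{|f|\})$ the value of $E_i$ depends only on the last configuration of $f$. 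Hence $\run\in\Pre{f}\cap E_i^{-1}(\{r\})\cap(I^{e}_i)^{-1}(\{|f|\})=S^{e}_{i,f}(r)$.

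The step I expect to be the main obstacle is the inclusion ``$\supseteq$'': one must make precise that $f$ alone already determines the entire block structure up to index $|f|-1$, and that placing a $\bigcup\GTNI$-transition at index $|f|$ opens the $(i{+}1)$-th block rather than prolonging the $i$-th one. The hypothesis $S^{e}_{i,f}(r)\neq\emptyset$ is exactly what delivers this: the witness $\run^\ast$ certifies both that precisely $i$ maximal $\bigcup\GTNI$-blocks have been completed within $f$ and that $f$ ends ``outside $\GTNI$'' ($t_{|f|-1}\notin\bigcup\GTNI$), and both are properties of $f$ itself, so they transfer to every element of $\Pre{fc}$ with $c\in\CONF_{\GTNI}$. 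The rest is routine bookkeeping and mirrors the proof of Lemma~\ref{lemma:decomposition}.
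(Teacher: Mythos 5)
Your proof is correct and takes essentially the same route as the paper, which states this corollary without its own proof as the direct analogue of \cref{lemma:decomposition} and whose argument there you faithfully mirror (part a) via the last configuration of $f$, part b) via the one-step cylinder decomposition). Your extra care on the ``$\supseteq$'' inclusion — spelling out that $f$ already fixes the first $i$ maximal $\bigcup\GTNI$-blocks and that $t_{|f|-1}\notin\bigcup\GTNI$, so appending any $c\in\CONF_{\GTNI}$ opens the $(i{+}1)$-th block — is exactly the bookkeeping the paper leaves implicit.
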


We then give the following definition.

\begin{definition}
	For $i \in \NN$ we define
	\[
		\CG_i = \sigmagen{S^{e}_{i,f}(r) \mid f \in \FPATH, r \geq 0}.
	\]
\end{definition}
By \cref{coro:decomposition} b), we directly obtain the following corollary.

\begin{corollary}
	For every $i \in \NN$ the $\sigma$-field $\CG_i$ is a sub-$\sigma$-field of $\CF$.
\end{corollary}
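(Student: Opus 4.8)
The plan is to prove $\CG_i \subseteq \CF$ (the fact that $\CG_i$ is itself a $\sigma$-field over $\RUNS$ is immediate from its definition as $\sigmagen{\cdot}$). Since $\CF$ is a $\sigma$-field and $\CG_i = \sigmagen{S^{e}_{i,f}(r) \mid f \in \FPATH, r \geq 0}$ is by definition the \emph{smallest} $\sigma$-field containing all the sets $S^{e}_{i,f}(r)$, it suffices to check that each generator $S^{e}_{i,f}(r)$ already lies in $\CF$.

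First I would record that $I^{e}_i$ and $E_i$ are $\CF$-measurable. This follows by exactly the same reasoning as for $I^{\succ}_{i,j}$ and $R_{i,j}$ (cf.\ the discussion following \cref{lemma:rsm_stprocesslongprefix}): a \emph{positive} value of $I^{e}_i$ on a run depends only on a finite prefix of that run, so $(I^{e}_i)^{-1}(\{m\})$ is a countable union of cylinder sets for every $m \geq 1$, and $(I^{e}_i)^{-1}(\{0\})$ is the complement of such a countable union; the same argument applies to $E_i$ since $E_i(\run)$ is $0$ whenever $I^{e}_i(\run)=0$ and otherwise is determined by the prefix up to index $I^{e}_i(\run)$. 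Consequently, for every $f \in \FPATH$ and $r \geq 0$ the set $S^{e}_{i,f}(r) = \Pre{f} \cap E_i^{-1}(\{r\}) \cap \left(I^{e}_i\right)^{-1}(\{|f|\})$ is an intersection of three $\CF$-measurable sets ($\Pre{f}$ being a cylinder set) and hence lies in $\CF$. Therefore $\CG_i \subseteq \CF$.

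Alternatively — and this is presumably what the phrase ``by \cref{coro:decomposition} b)'' alludes to — for $r>0$ one can argue even more directly: if $S^{e}_{i,f}(r)=\emptyset$ it is trivially in $\CF$, and otherwise \cref{coro:decomposition} b) gives $S^{e}_{i,f}(r) = \biguplus_{c \in \CONF_{\GTNI}} \Pre{fc}$, a countable union of cylinder sets (as $\CONF$ is countable), hence an element of $\CF$. For $r=0$ one notes that, on a nonempty set of the form $\Pre{f} \cap (I^{e}_i)^{-1}(\{|f|\})$, $E_i$ is constant with value $\max\{\exact{\rank(\loc_f)}{\state_f},0\}$ (with $(\loc_f,\_,\state_f)$ the last configuration of $f$), so $S^{e}_{i,f}(0)$ is either empty or again equals $\biguplus_{c\in\CONF_{\GTNI}}\Pre{fc}$. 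The only mild obstacle is precisely this $r=0$ bookkeeping, since \cref{coro:decomposition} is stated only for $r>0$; I expect to dispatch it via the uniform $\CF$-measurability route of the previous paragraph, which mirrors the development already carried out for the processes $(R_{i,j})_{j}$ and requires no new work.
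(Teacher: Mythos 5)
Your proposal is correct and follows essentially the same route as the paper, whose entire proof is the one-line observation that \cref{coro:decomposition}~b) writes each generator $S^{e}_{i,f}(r)$ as a countable disjoint union $\biguplus_{c \in \CONF_{\GTNI}} \Pre{fc}$ of cylinder sets, hence an element of $\CF$. You are in fact slightly more careful than the paper: you correctly note that \cref{coro:decomposition} is stated only for $r>0$ and supply the missing (easy) argument for the generators $S^{e}_{i,f}(0)$, a case the paper silently glosses over.
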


We can now prove \cref{lemma:entry_supermartingale}.

\begin{proof}[Proof of \cref{lemma:entry_supermartingale}]
	Let $f\in \FPATH$ and $r \geq 0$ such that $\pipmeasure{\pip}{\scheduler}{\initstate}\left(S^{e}_{i,f}(r)\right) > 0$.
	Due to \cref{lem:upper_bound_cond_exp}, it is enough to prove
	\begin{equation}
		\expv{\pip}{\scheduler}{\initstate}\left(R_{i,0} \cdot \ind_{S^{e}_{i,f}(r)}\right) \leq \expv{\pip}{\scheduler}{\initstate}\left(E_i \cdot \ind_{S^{e}_{i,f}(r)}\right). \label{eq:cond_exp_entry}
	\end{equation}
	By definition of $R_{i,0}$, for any admissible run $\run \in \RUNS$ we have $R_{i,0}(\run) = 0$ whenever $I^{\succ}_{i,0}(\run) = 0$, since $0 \not \in I_i(\run)$.
	So to prove \eqref{eq:cond_exp_entry}, it is enough to consider the runs in $S^{e}_{i,f}(r) \cap I_{i,0}^{-1}\left(\NN\setminus\{0\}\right)$ (where $I_{i,0}^{-1}$ abbreviates $(I^{\succ}_{i,0})^{-1}$), i.e., we have
	\[
		\expv{\pip}{\scheduler}{\initstate}\left(R_{i,0} \cdot \ind_{S^{e}_{i,f}(r)}\right) = \expv{\pip}{\scheduler}{\initstate}\left(R_{i,0} \cdot \ind_{S^{e}_{i,f}(r) \cap I_{i,0}^{-1}\left(\NN\setminus\{0\}\right)}\right). \]
	By definition of $S^{e}_{i,f}(r)$ and the simple fact that $I^{e}_i \leq I^{\succ}_{i,0}$ we have $S^{e}_{i,f}(r) \cap I_{i,0}^{-1}\left(\NN\setminus\{0\}\right) = S^{e}_{i,f}(r) \cap I_{i,0}^{-1}\left([|f|,\infty)\right)$.

	So we consider the sets $S^{e}_{i,f}(r) \cap I_{i,0}^{-1}\left([|f|,|f| + m]\right)$ for $m \in \NN$ which satisfy

	\begin{equation}
		\expv{\pip}{\scheduler}{\initstate}\left(R_{i,0} \cdot \ind_{S^{e}_{i,f}(r)\cap I_{i,0}^{-1}\left([|f|,|f| + m]\right)}\right) \leq \expv{\pip}{\scheduler}{\initstate}\left(E_i \cdot \ind_{S^{e}_{i,f}(r)}\right),
	\end{equation}
	as we will prove in \cref{lem:essential_proof_entry_supermartingale}.
	In contrast to \cref{lem:ranking_condition_A_m}, here, we include the case $m = 0$ since $I^{e}_i(\run) = I^{\succ}_{i,0}(\run)$ is indeed possible.

	Since $S^{e}_{i,f}(r) \cap I_{i,0}^{-1}\left([|f|,|f| + m]\right) \subseteq S^{e}_{i,f}(r) \cap I_{i,0}^{-1}\left([|f|,|f| + m + 1]\right)$, we can then conclude

	\begin{align*}
		         & \expv{\pip}{\scheduler}{\initstate}\left(R_{i,0} \cdot \ind_{S^{e}_{i,f}(r)}\right)                                                                \\
		{}={}    & \expv{\pip}{\scheduler}{\initstate}\left(R_{i,0} \cdot \ind_{S^{e}_{i,f}(r) \cap I_{i,0}^{-1}\left(\NN\setminus\{0\}\right)}\right)                \\
		{}={}    & \expv{\pip}{\scheduler}{\initstate}\left(\lim_{m \to \infty} R_{i,0} \cdot \ind_{S^{e}_{i,f}(r)\cap I_{i,0}^{-1}\left([|f|,|f| + m]\right)}\right) \\
		{}={}    & \lim_{m \to \infty} \expv{\pip}{\scheduler}{\initstate}\left(R_{i,0} \cdot \ind_{S^{e}_{i,f}(r)\cap I_{i,0}^{-1}\left([|f|,|f| + m]\right)}\right) \\
		{}\leq{} & \expv{\pip}{\scheduler}{\initstate}\left(E_i \cdot \ind_{S^{e}_{i,f}(r)}\right).
	\end{align*}
	This proves \eqref{eq:cond_exp_entry}.
\end{proof}

\begin{lemma}
	\label{lem:essential_proof_entry_supermartingale}
	Let $\scheduler$ be an arbitrary scheduler, $\initstate$ an arbitrary initial state, and $i\in \NN$.
	Let $f\in \FPATH$ and $r \geq 0$ such that $\pipmeasure{\pip}{\scheduler}{\initstate}\left(S^{e}_{i,f}(r)\right) > 0$.
	Then for any $m \in \NN$ we have
	\[
		\expv{\pip}{\scheduler}{\initstate}\left(R_{i,0} \cdot \ind_{S^{e}_{i,f}(r)\cap I_{i,0}^{-1}\left([|f|,|f| + m]\right)}\right) \leq \expv{\pip}{\scheduler}{\initstate}\left(E_i \cdot \ind_{S^{e}_{i,f}(r)}\right).
	\]
\end{lemma}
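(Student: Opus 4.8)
The plan is to evaluate both sides explicitly and then re-run the backward-propagation of the proof of \cref{lem:ranking_condition_A_m}, adapted to the present situation where $f$ ends just \emph{before} the entry into $\GTNI$ rather than before a transition from $\GTG$. Write $c_f=(\loc_f,t_f,\state_f)$ for the last configuration of $f$. Since $\pipmeasure{\pip}{\scheduler}{\initstate}(S^{e}_{i,f}(r))>0$, the scheduler continues $f$ with some $g\in\GTNI$, so by \cref{coro:decomposition}~b) we get $\pipmeasure{\pip}{\scheduler}{\initstate}(S^{e}_{i,f}(r))=\pipmeasure{\pip}{\scheduler}{\initstate}(\Pre{f})=\ptransition{\pip}{\scheduler}{\initstate}(f)$, and as $E_i\equiv r$ on $S^{e}_{i,f}(r)$ the right-hand side equals $\ptransition{\pip}{\scheduler}{\initstate}(f)\cdot r$. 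So it suffices to show $\expv{\pip}{\scheduler}{\initstate}(R_{i,0}\cdot\ind_{S^{e}_{i,f}(r)\cap I_{i,0}^{-1}([|f|,|f|+m])})\leq\ptransition{\pip}{\scheduler}{\initstate}(f)\cdot r$.

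First I would dispose of the degenerate case $r=0$. Here $\rank(\loc_f)(\state_f)\leq 0$, for otherwise $E_i>0$ on $S^{e}_{i,f}(0)$. Using the PLRF conditions one checks that along every admissible path continuing $c_f$ inside $\GTNI$, the value of $\rank$ stays $\leq 0$: ``Boundedness~(a)'' together with ``Non-Increase'' propagate non-positivity forward, and from a configuration where $\rank$ is $\leq 0$ the scheduler cannot choose a transition from $\GTG$, since ``Boundedness~(b)'' and ``Decrease'' would force $\rank\geq 1$ there (as in the base-case reasoning inside \cref{lem:ranking_condition_A_m}). Hence no admissible run of $S^{e}_{i,f}(0)$ executes a transition from $\bigcup\GTG$ within $I_i$, so $S^{e}_{i,f}(0)\cap I_{i,0}^{-1}([|f|,|f|+m])$ is a null set and the left-hand side is $0=\ptransition{\pip}{\scheduler}{\initstate}(f)\cdot 0$.

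For $r>0$ (so $r=\rank(\loc_f)(\state_f)$ by \cref{coro:decomposition}~a)) I would decompose $S^{e}_{i,f}(r)\cap I_{i,0}^{-1}([|f|,|f|+m])$ into the disjoint cylinders $\Pre{f c_0\cdots c_{m_1}}$ indexed by the offset $m_1\in\{0,\dots,m\}$ of the first transition from $\GTG$, where $c_{m_1}\in\CONF_{\GTG}$ and $c_0,\dots,c_{m_1-1}\in\CONF_{\GTNI}\setminus\CONF_{\GTG}$; on such a cylinder $R_{i,0}$ equals $r$ if $m_1=0$ and $\rank(\loc'_{m_1-1})(\state'_{m_1-1})$ if $m_1\geq 1$ (the maximum with $0$ in the definition of $R_{i,0}$ being harmless, since this value is non-negative on admissible prefixes and non-admissible prefixes contribute $0$). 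There is a clean dichotomy: if $t_{|f|}\in\bigcup\GTG$ then only the term $m_1=0$ occurs, all outcomes $c_0$ of the entry transition lie in $\CONF_{\GTG}$, and the left-hand side is immediately $\ptransition{\pip}{\scheduler}{\initstate}(f)\cdot r$; otherwise only terms with $m_1\geq 1$ occur, and the situation is that of \cref{lem:ranking_condition_A_m} with one extra leading step through $\CONF_{\GTNI}\setminus\CONF_{\GTG}$ (namely the entry transition) and with ``Decrease'' replaced by ``Non-Increase'' at the final step. I would then carry out the same backward propagation: sum out the last configuration of each cylinder, then repeatedly invoke ``Non-Increase'' to move the point at which $\rank$ is evaluated one configuration backwards, merging each propagated contribution with the next lower-$m_1$ term, until everything is over-approximated by $\ptransition{\pip}{\scheduler}{\initstate}(f)\cdot\exp_{\rank,g,\tilde{\state}_f}\leq\ptransition{\pip}{\scheduler}{\initstate}(f)\cdot\rank(\loc_f)(\state_f)=\ptransition{\pip}{\scheduler}{\initstate}(f)\cdot r$, the last inequality being ``Non-Increase'' applied to the entry transition $g\in\GTNI\setminus\GTG$. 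Two features of \cref{def:Probabilistic Polynomial Ranking Functions} keep all estimates in the right direction, exactly as in \cref{lem:ranking_condition_A_m}: on every admissible contributing prefix all the values $\rank(\loc'_k)(\state'_k)$ are non-negative (``Boundedness~(b)'' and ``Decrease'' at the configuration just before the $\GTG$-transition, then ``Boundedness~(a)'' and ``Non-Increase'' backward by induction), and whenever a sum over the contributing successor configurations is enlarged to all reachable ones, the new summands are non-negative because ``Boundedness~(a)'' assigns the \emph{same} relation ${\bowtie_{g,c}}$ to all outcomes of $g$ at a given $c$.

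The main obstacle is exactly this bookkeeping in the backward propagation: tracking which configurations still ``contribute'' (i.e., from which a transition of $\GTG$ is reachable within the remaining offset budget) and checking, at every merge step, that the relevant $\rank$-values are non-negative, so that discarding sub-probabilities and enlarging index sets are sound over-approximations. Structurally this is identical to the heaviest part of the proof of \cref{lem:ranking_condition_A_m}; the only genuinely new points are the boundary term $m_1=0$ and the $r=0$ degeneracy, both dealt with above.
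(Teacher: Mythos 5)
Your proposal is correct and follows essentially the same route as the paper: the same decomposition of $S^{e}_{i,f}(r)\cap I_{i,0}^{-1}\left([|f|,|f|+m]\right)$ into cylinders indexed by the offset $m_1$ of the first $\GTG$-configuration, the same backward propagation of $\rank$ via ``Non-Increase'' (with the non-negativity bookkeeping from ``Boundedness''), and the same final application of ``Non-Increase'' at the entry step to reach $\ptransition{\pip}{\scheduler}{\initstate}(f)\cdot r=\expv{\pip}{\scheduler}{\initstate}\left(E_i \cdot \ind_{S^{e}_{i,f}(r)}\right)$, exactly as in the paper's adaptation of \cref{lem:ranking_condition_A_m}. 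Your explicit treatment of the boundary cases $m_1=0$ and $r=0$ is in fact more careful than the paper's, which handles the first only implicitly and the second not at all.
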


\begin{proof}
	This proof is very similar to the proof of \cref{lem:ranking_condition_A_m}.
	However, it is not exactly the same, so we state it here for the sake of completeness.
	We have
	\[
		S^{e}_{i,f}(r)\cap I_{i,0}^{-1}\left([|f|,|f| + m]\right) = \biguplus_{0 \leq m_1 \leq m} \biguplus_{\substack{c_{m_1} \in \CONF_{\GTG} \\
		c_0, c_1,\ldots,c_{m_1-1} \in \CONF_{\GTNI}\setminus \CONF_{\GTG} \\
		f' = \prefix{f c_0 c_1 \cdots c_{m_1}}}} \Pre{f'}
	\]
	Using this decomposition we then get

	\begin{align*}
		      & \expv{\pip}{\scheduler}{\initstate} (\ind_{S^{e}_{i,f}(r)\cap I_{i,0}^{-1}\left([|f|,|f| + m]\right)} \cdot R_{i,0}) \\
		{}={} & \sum_{m_1 = 0}^m \sum_{\substack{c_{m_1} \in \CONF_{\GTG}                                                            \\
		c_0,c_1,\ldots,c_{m_1-1} \in \CONF_{\GTNI}\setminus \CONF_{\GTG}                                                             \\
		f' = \prefix{f c_0 c_1 \cdots c_{m_1}}}}
		R_{i,0} (f') \cdot \ptransition{\pip}{\scheduler}{\initstate} (f') \tag{$\dagger$} \label{eq:important_sum_2}
		\\
		{}={} & \sum_{m_1 = 0}^{m-1} \sum_{\substack{c_{m_1} \in \CONF_{\GTG}                                                        \\
		c_0,c_1,\ldots,c_{m_1-1} \in \CONF_{\GTNI}\setminus \CONF_{\GTG}                                                             \\
		f' = \prefix{f c_0 c_1 \cdots c_{m_1}}}}
		R_{i,0} (f') \cdot \ptransition{\pip}{\scheduler}{\initstate} (f')                                                           \\
		      & + \sum_{\substack{c_{m} \in \CONF_{\GTG}                                                                             \\
		c_0,c_1,\ldots,c_{m-1} \in \CONF_{\GTNI}\setminus \CONF_{\GTG}                                                               \\
		f' = \prefix{f c_0 c_1 \cdots c_{m}}}}
		R_{i,0} (f') \cdot \ptransition{\pip}{\scheduler}{\initstate} (f')
	\end{align*}

	Let us now consider the very last addend:
	\[
		\sum_{\substack{c_{m} \in \CONF_{\GTG} \\
		c_0,c_1,\ldots,c_{m-1} \in \CONF_{\GTNI}\setminus \CONF_{\GTG} \\
		f' = \prefix{f c_0 c_1 \cdots c_{m}}}}
		R_{i,0} (f') \cdot \ptransition{\pip}{\scheduler}{\initstate} (f'). \]
	Let $c_k = (\loc_k',t_k',\state_k')$ for every $k \in \{0,\dots,m\}$.

	Again, in the following sums, all addends including a term of the form $\exact{\rank (\loc')}{\state'}$ are non-negative.
	The reason is as follows: Let $f'$ be an \emph{admissible prefix} such that $f' = \prefix{f c_0 c_1\cdots c_{m}}$ with $c_{m} \in \CONF_{\GTG}$, $c_0,\ldots,c_{m-1} \in \CONF_{\GTNI}\setminus \CONF_{\GTG}$, and $c_k = (\loc'_k,t'_k,\state'_k)$.
	Then, any admissible path ending in one of the $c_0,\ldots,c_{m-1}$ is continued by the scheduler $\scheduler$ with some transition $t \in \bigcup \GTNI$.
	Moreover, from any of the $c_0,\ldots,c_{m-1}$ we can reach the configuration $c_{m}$ with a positive probability.
	By the properties ``Boundedness (b)'' and ``Decrease'', we must have $\exact{\rank (\loc_{m-1}')}{\state_{m-1}'}\geq 0$ (since $c_{m-1}$ is continued with a transition from $\bigcup\GTG$ by the scheduler $\scheduler$).
	Then, by the properties ``Boundedness (a)'' and ``Non-Increase'', we must have $\exact{\rank (\loc_{k}')}{\state_{k}'}\geq 0$ for all $0 \leq k \leq m-2$, which can be proven by induction.
	Consequently, whenever for some $0 \leq k \leq m-2$ we have $\ptransition{\pip}{\scheduler}{\initstate}(c_k \to (\loc',t',\state')) > 0$, then $t' \in \bigcup \GTNI$ due to the scheduler $\scheduler$ and hence, $\exact{\rank (\loc')}{\state'}\geq 0$, again by the property ``Boundedness (a)''.
	\begin{align*}
		      & \smashoperator[r]{\sum_{\substack{c_{m} \in \CONF_{\GTG}                                                                                                                                     \\
		c_0,c_1,\ldots,c_{m-1} \in \CONF_{\GTNI}\setminus \CONF_{\GTG}                                                                                                                                       \\
		f' = \prefix{f c_0 c_1 \cdots c_{m}}}}}
		R_{i,0} (f') \cdot \ptransition{\pip}{\scheduler}{\initstate} (f')                                                                                                                                   \\
		{}={} & \smashoperator[r]{\sum_{\substack{c_{m} \in \CONF_{\GTG}                                                                                                                                     \\
		c_0,c_1,\ldots,c_{m-1} \in \CONF_{\GTNI}\setminus \CONF_{\GTG}                                                                                                                                       \\
		f' = \prefix{f c_0 c_1 \cdots c_{m}}}}}
		\exact{ \rank (\loc_{m-1}')}{\state_{m-1}'}\cdot \ptransition{\pip}{\scheduler}{\initstate}
		(f') \tag{by the definition of $R_{i,0}$}                                                                                                                                                            \\
		{}={} & \smashoperator[r]{\sum_{\substack{c_{m} \in \CONF_{\GTG}                                                                                                                                     \\
		c_0,c_1,\ldots,c_{m-1} \in \CONF_{\GTNI}\setminus \CONF_{\GTG} }}}
		\exact{ \rank (\loc_{m-1}')}{\state_{m-1}'}\cdot \ptransition{\pip}{\scheduler}{\initstate} (\prefix{f c_0 c_1 \cdots c_{m-1}}) \cdot \ptransition{\pip}{\scheduler}{\initstate} (c_{m-1} \to c_{m}) \\
	\end{align*}
	Now the idea is to propagate the value of $\rank$ back to the last configuration of the finite path $f$.
	\begin{align*}
		      & \smashoperator[r]{\sum_{\substack{c_m \in \CONF_{\GTG}                                                                                                                                      \\
		c_0,c_1,\ldots,c_{m-1} \in \CONF_{\GTNI}\setminus \CONF_{\GTG}}}}
		\exact{ \rank (\loc_{m-1}')}{\state_{m-1}'}\cdot \ptransition{\pip}{\scheduler}{\initstate} (\prefix{f c_0 c_1\cdots c_{m-1}}) \cdot \ptransition{\pip}{\scheduler}{\initstate} (c_{m-1} \to c_{m}) \\
		{}={} & \smashoperator[r]{\sum_{\substack{c_{m} \in \CONF_{\GTG}                                                                                                                                    \\
		c_0,c_1,\ldots,c_{m-1} \in \CONF_{\GTNI}\setminus \CONF_{\GTG}}}}
		\exact{ \rank (\loc_{m-1}')}{\state_{m-1}'}\cdot \ptransition{\pip}{\scheduler}{\initstate} (\prefix{f c_0 c_1 \cdots c_{m-2}}) \cdot                                                               \\
		      & \qquad\phantom{\sum{\substack{1                                                                                                                                                             \\2}}}\ptransition{\pip}{\scheduler}{\initstate} (c_{m-2} \to c_{m-1}) \cdot \ptransition{\pip}{\scheduler}{\initstate} (c_{m-1} \to c_{m}) \\
		{}={} & \smashoperator[r]{\sum_{c_0, c_1,\ldots,c_{m-2} \in \CONF_{\GTNI}\setminus \CONF_{\GTG}}}
		\ptransition{\pip}{\scheduler}{\initstate} (\prefix{f c_0 c_1 \cdots c_{m-2}}) \cdot                                                                                                                \\
		      & \qquad \smashoperator[r]{\sum_{\substack{c_{m} \in \CONF_{\GTG}                                                                                                                             \\
					c_{m-1} \in \CONF_{\GTNI}\setminus \CONF_{\GTG}}}}
		\exact{ \rank (\loc_{m-1}')}{\state_{m-1}'}
		\cdot \ptransition{\pip}{\scheduler}{\initstate} (c_{m-2} \to c_{m-1}) \cdot \ptransition{\pip}{\scheduler}{\initstate} (c_{m-1} \to c_{m})                                                         \\
		{}={} & \smashoperator[r]{\sum_{c_0,c_1,\ldots,c_{m-2} \in \CONF_{\GTNI}\setminus \CONF_{\GTG}}}
		\ptransition{\pip}{\scheduler}{\initstate} (\prefix{f c_0 c_1 \cdots c_{m-2}})\cdot                                                                                                                 \\
		      & \qquad \smashoperator[r]{\sum_{\substack{c_{m-1} \in \CONF_{\GTNI}\setminus \CONF_{\GTG}}}}
		\exact{ \rank (\loc_{m-1}')}{\state_{m-1}'}
		\cdot \ptransition{\pip}{\scheduler}{\initstate} (c_{m-2} \to c_{m-1}) \cdot                                                                                                                        \\&\qquad\left(\sum_{c_{m} \in \CONF_{\GTG} }\ptransition{\pip}{\scheduler}{\initstate} (c_{m-1} \to c_{m})\right) \\
		{}={} & \smashoperator[r]{\sum_{c_0,c_1,\ldots,c_{m-2} \in \CONF_{\GTNI}\setminus \CONF_{\GTG}}}
		\ptransition{\pip}{\scheduler}{\initstate} (\prefix{f c_0 c_1 \cdots c_{m-2}})\cdot                                                                                                                 \\&\qquad\smashoperator[r]{\sum_{\substack{c_{m-1} \in \CONF_{\GTNI}\setminus \CONF_{\GTG} \\
					\ptransition{\pip}{\scheduler}{\initstate} (c_{m-1} \to \CONF_{\GTG})>0}}}
		\exact{ \rank (\loc_{m-1}')}{\state_{m-1}'}
		\cdot \ptransition{\pip}{\scheduler}{\initstate} (c_{m-2} \to c_{m-1})
	\end{align*}
	Again, the very last step holds, since from a configuration we either end with probability $1$ in $\CONF_{\GTG}$ or we end there with probability $0$ due to the definition of the scheduler.
	Here, we stored in the summation index-set the fact we of course only take those configurations into account from which we can continue to a configuration in $\CONF_{\GTG}$ with a positive probability (which by definition directly implies that we do so with probability $1$ as explained before).
	We then obtain:

	\begin{align*}
		         & \smashoperator[r]{\sum_{c_0,c_1,\ldots,c_{m-2} \in \CONF_{\GTNI}\setminus \CONF_{\GTG}}}
		\ptransition{\pip}{\scheduler}{\initstate} (\prefix{f c_0 c_1 \cdots c_{m-2}})\cdot \smashoperator[lr]{\sum_{\substack{c_{m-1} \in \CONF_{\GTNI}\setminus \CONF_{\GTG} \\
					\ptransition{\pip}{\scheduler}{\initstate} (c_{m-1} \to \CONF_{\GTG})>0}}}
		\exact{ \rank (\loc_{m-1}')}{\state_{m-1}'}
		\cdot \ptransition{\pip}{\scheduler}{\initstate} (c_{m-2} \to c_{m-1})                                                                                                 \\
		{}\leq{} & \smashoperator[r]{\sum_{c_0,c_1,\ldots,c_{m-2} \in \CONF_{\GTNI}\setminus \CONF_{\GTG}}}
		\ptransition{\pip}{\scheduler}{\initstate} (\prefix{f c_0 c_1 \cdots c_{m-2}})\cdot \smashoperator[lr]{\sum_{\substack{c_{m-1} \in \CONF_{\GTNI}\setminus \CONF_{\GTG}}}}
		\exact{ \rank (\loc_{m-1}')}{\state_{m-1}'}
		\cdot \ptransition{\pip}{\scheduler}{\initstate} (c_{m-2} \to c_{m-1})                                                                                                 \\
		{}\leq{} & \smashoperator[r]{\sum_{\substack{c_0,c_1,\ldots,c_{m-2} \in \CONF_{\GTNI}\setminus \CONF_{\GTG}                                                            \\
		\ptransition{\pip}{\scheduler}{\initstate}(c_{m-2}\to \CONF_{\GTNI}\setminus \CONF_{\GTG}) > 0}}}
		\ptransition{\pip}{\scheduler}{\initstate} (\prefix{f c_0 c_1 \cdots c_{m-2}}) \cdot \exact{\rank (\loc_{m-2}') }{\state_{m-2}'}\tag{Non-Increasing property of $\rank$}
	\end{align*}
	Again, we stored in the summation index-set the fact that we only take those configurations into account from which we can continue to a configuration in $\CONF_{\GTNI}\setminus \CONF_{\GTG}$ with a positive probability.

	Now let us have a look at the addend for $m_1 = m-1$ of \cref{eq:important_sum_2}.
	Similar to the reasoning for the summand with $m_1 = m$ we have

	\begin{align*}
		         & \sum_{\substack{c_{m-1} \in \CONF_{\GTG}                                                                                                                                                   \\
		c_0,c_1,\ldots,c_{m-2} \in \CONF_{\GTNI}\setminus \CONF_{\GTG}                                                                                                                                        \\
		f' = \prefix{f c_0 c_1 \cdots c_{m-1}}}}
		R_{i,0} (f') \cdot \ptransition{\pip}{\scheduler}{\initstate} (f')                                                                                                                                    \\
		{}={}    & \smashoperator[r]{\sum_{\substack{c_{m-1} \in \CONF_{\GTG}                                                                                                                                 \\
		c_0,c_1,\ldots,c_{m-2} \in \CONF_{\GTNI}\setminus \CONF_{\GTG}}}}
		\exact{ \rank (\loc_{m-2}')}{\state_{m-2}'}\cdot \ptransition{\pip}{\scheduler}{\initstate} (\prefix{f c_0 c_1\cdots c_{m-2}}) \cdot \ptransition{\pip}{\scheduler}{\initstate} (c_{m-2} \to c_{m-1}) \\
		{}\leq{} & \smashoperator[r]{\sum_{\substack{c_0,c_1,\ldots,c_{m-2} \in \CONF_{\GTNI}\setminus \CONF_{\GTG}                                                                                           \\
		\ptransition{\pip}{\scheduler}{\initstate} (c_{m-2} \to \CONF_{\GTG})>0}}}
		\exact{ \rank (\loc_{m-2}')}{\state_{m-2}'}\cdot \ptransition{\pip}{\scheduler}{\initstate} (\prefix{f c_0 c_1 \cdots c_{m-2}})                                                                       \\
	\end{align*}

	So, when we combine these two steps we get the following for the sum in \cref{eq:important_sum_2}.

	\begin{align*}
		         & \expv{\pip}{\scheduler}{\initstate} (\ind_{S^{e}_{i,f}(r)\cap I_{i,0}^{-1}\left([|f|,|f| + m]\right)} \cdot R_{i,0}) \\
		{}={}    & \sum_{m_1 = 1}^m \sum_{\substack{c_{m_1} \in \CONF_{\GTG}                                                            \\
		c_0,c_1,\ldots,c_{m_1-1} \in \CONF_{\GTNI}\setminus \CONF_{\GTG}                                                                \\
		f' = \prefix{f c_0 c_1 \cdots c_{m_1}}}}
		R_{i,0} (f') \cdot \ptransition{\pip}{\scheduler}{\initstate} (f')                                                              \\
		{}\leq{} & \sum_{m_1 = 1}^{m-2} \sum_{\substack{c_{m_1} \in \CONF_{\GTG}                                                        \\
		c_0,c_1,\ldots,c_{m_1-1} \in \CONF_{\GTNI}\setminus \CONF_{\GTG}                                                                \\
		f' = \prefix{f c_0 c_1 \cdots c_{m_1}}}}
		R_{i,0} (f') \cdot \ptransition{\pip}{\scheduler}{\initstate} (f')                                                              \\
		         & \qquad+ \smashoperator[r]{\sum_{c_0,c_1,\ldots,c_{m-2} \in \CONF_{\GTNI}\setminus \CONF_{\GTG}}}
		\exact{ \rank (\loc_{m-2}')}{\state_{m-2}'}\cdot \ptransition{\pip}{\scheduler}{\initstate} (\prefix{f c_0 c_1 \cdots c_{m-2}}) \\
	\end{align*}

	\noindent
	Proceeding in this manner, i.e., iteratively propagating the point where to evaluate $\rank$ back by using the property ``Non-Increase'' of $\rank$, we step by step over-approximate this sum by

	\[\smashoperator[r]{\sum_{\substack{c_0 \in \CONF_{\GTNI}}}}
		\ptransition{\pip}{\scheduler}{\initstate} (\prefix{f c_0}) \cdot \exact{\rank (\loc'_{0}) }{\state'_{0}}.\]

	Finally, we can proceed with the crucial step in this part of the proof.
	To do so, let us assume, that the last configuration in $f$ is $c_f = (\loc_f,t_f,\state_f)$.

	\begin{align*}
		         & \smashoperator[r]{\sum_{\substack{c_0 \in \CONF_{\GTNI}}}}
		\ptransition{\pip}{\scheduler}{\initstate} (\prefix{f c_0}) \cdot \exact{\rank (\loc'_{0}) }{\state'_{0}}                                                            \\
		{}={}    & \smashoperator[r]{\sum_{\substack{c_0 \in \CONF_{\GTNI}}}}
		\ptransition{\pip}{\scheduler}{\initstate} (\prefix{f}) \cdot \ptransition{\pip}{\scheduler}{\initstate} (c_f \to c_0) \cdot \exact{\rank (\loc'_{0}) }{\state'_{0}} \\
		{}={}    & \ptransition{\pip}{\scheduler}{\initstate} (\prefix{f}) \cdot \smashoperator[r]{\sum_{\substack{c_0 \in \CONF_{\GTNI}}}}
		\ptransition{\pip}{\scheduler}{\initstate} (c_f \to c_0) \cdot \exact{\rank (\loc'_{0}) }{\state'_{0}}                                                               \\
		{}\leq{} & \ptransition{\pip}{\scheduler}{\initstate} (\prefix{f}) \cdot (\exact{\rank (\loc_f) }{\state_f}) \tag{by property ``Non-Increase'' of $\rank$}           \\
		{}={}    & \ptransition{\pip}{\scheduler}{\initstate} (\prefix{f}) \cdot r \tag{by \cref{coro:decomposition}}                                                        \\
		{}={}    & \pipmeasure{\pip}{\scheduler}{\initstate}(S^{e}_{i,f}(r)) \cdot r \tag{see proof of \cref{lem:ranking_condition_A_m}}                                     \\
		{}={}    & \expv{\pip}{\scheduler}{\initstate}\left(E_i\cdot \ind_{S^{e}_{i,f}(r)}\right).
	\end{align*}
	This proves the statement of the lemma.
\end{proof}
Finally, we have all information at hand to prove \cref{theorem:exptimeboundsmeth}.
This proof will combine the results we have obtained on the connection between $R_{i,0}$ and $T'_i$ (\cref{thm:rsm_timebound}) and the just proved connection between $R_{i,0}$ and $E_i$ (\cref{lemma:entry_supermartingale}).

In fact, we will even prove the soundness of a refinement of \cref{theorem:exptimeboundsmeth} which allows us to use \emph{expected}
instead of \emph{non-probabilistic} runtime bounds on the entry transitions of $\GTNI$ in special cases.
If $\rank(\loc)$ is \emph{constant} then its value is \emph{independent} of the values of the program variables at the location $\loc$.
In this special case, for every $h \in \ET_{\GTNI}(\loc)$ instead of the non-probabilistic runtime bounds $\tbound(t)$ for $t \in h$, we can use the \emph{expected}
runtime bound $\tbounde (h)$.
To this end we consider a partition $\{\ENTRYLOC_{\mathrm{c}, \GTNI, \rank}, \ENTRYLOC_{\mathrm{nc}, \GTNI, \rank}\} $ of the entry locations $\ENTRYLOC_\GTNI$.
\begin{definition}[Constant \& Non-Constant Entry Locations]
	Let $\rank$ be a PLRF and $\GTNI \subseteq \GT$.
	Then we define the set of constant (resp.\ non-constant) entry locations $\ENTRYLOC_{\mathrm{c}, \GTNI, \rank}$ (resp.\ $\ENTRYLOC_{\mathrm{nc},\GTNI,\rank}$) by
	\begin{align}
		\ENTRYLOC_{\mathrm{c}, \GTNI, \rank}
		 & = \{\loc \mid \loc \in \ENTRYLOC_{\GTNI}, \rank(\loc) \in \RR\} \quad \text{ and
		}                                                                                   \\
		\ENTRYLOC_{\mathrm{nc}, \GTNI, \rank}
		 & = \{\loc \mid \loc \in \ENTRYLOC_{\GTNI}, \rank(\loc) \not\in \RR\}.
	\end{align}
	If the ranking function $\rank$ is clear from the context we omit it in the index.
\end{definition}

This special treatment of locations $\loc$ where $\rank(\loc)$ is constant can allow the propagation of already computed expected runtime bounds.
For example, let $\GTNI=\GTG=\{g\}$ where $g = \{t\}$ with $t = (\loc,\_,\_,\_,\loc')$ and $\loc \neq \loc'$.
Then a trivial ranking function $\rank$ with $\rank(\loc) = 1$, and $\rank (\loc') = 0$ allows us to compute an expected time bound for $g$ by adding the expected time bounds of $\GTNI$'s entry transitions.

Nevertheless, constant ranking functions are only of limited use.
For example, if $g$ has the same start and target location one can\emph{not} decrease $g$ with a constant ranking function.
Similarly, if one uses constant ranking functions for the transitions $g$ and $g'$, where $g$ is an entry transition w.r.t.
$\GTNI'=\{g'\}$ and $g'$ is an entry transition w.r.t.
$\GTNI=\{g\}$, then this propagation of expected bounds cannot turn infinite expected runtime bounds for $g$ and $g'$ into finite ones.

\begin{theorem}[Expected Time Bounds]
	\label{Expected Time Bounds Refined}
	Let $(\tbounde,\sbounde)$ be an expected bound pair, $\tbound$ a (non-probabilistic) runtime bound, and $\rank$ a PLRF for some $\GTG \subseteq \GTNI \subseteq \GT$.
	Then $\tbounde'\colon \GT \rightarrow \BOUND$ is an expected runtime bound where
	\[
		\tbounde' (g) =
		\begin{cases}
			\tbounde (g), & \text{if } g \not \in \GTG                              \\
			\sum\limits_{\substack{\loc \in \ENTRYLOC_{\mathrm{nc},\GTNI}           \\h \in \ET_{\GTNI}(\loc)}} \left(\sum\limits_{t=(\_,\_,\_,\_,\loc) \in h} \tbound(t)\right)\cdot \left(\exact{\overapprox{\rank(\loc)}}{\sbounde(h,\loc,\cdot)}\right)\\
			\quad \quad+ \sum\limits_{\substack{\loc\in\ENTRYLOC_{\mathrm{c},\GTNI} \\
					h\in\ET_\GTNI (\loc)}} \tbounde (h) \cdot \overapprox{\rank(\loc)}
			,             & \text{if } g \in \GTG
		\end{cases}
	\]
\end{theorem}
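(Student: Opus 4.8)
The plan is to reduce the claim to the ranking‑supermartingale machinery developed above, using the auxiliary objects $(R_{i,j})_{j\in\NN}$, $E_i$, $T'_i$ and the sub‑$\sigma$‑fields $\CG_i$. The case $g\notin\GTG$ is vacuous, since then $\tbounde'(g)=\tbounde(g)$ and $\tbounde$ is assumed to be an expected runtime bound, so I would fix $g\in\GTG$, a scheduler $\scheduler$ and an initial state $\initstate$, and aim for the chain
\[\expv{\pip}{\scheduler}{\initstate}(\timervar(g))\;\le\;\sum_{i\in\NN}\expv{\pip}{\scheduler}{\initstate}(T'_i)\;\le\;\sum_{i\in\NN}\expv{\pip}{\scheduler}{\initstate}(E_i)\;\le\;\eval{\tbounde'(g)}{\initstate}.\]
For the first inequality, note $\timervar(g)\le\sum_{g'\in\GTG}\timervar(g')$, and that for an admissible run $\run$ the index set $\{k:t_k\in\bigcup\GTG\}$ is partitioned by the blocks $I_i(\run)$, so $\sum_{g'\in\GTG}\timervar(g')(\run)=\sum_i N_i(\run)$, where $N_i(\run)$ is the number of $\GTG$‑steps during the $(i{+}1)$‑st visit to $\GTNI$. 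I would then check $N_i=T'_i$ almost surely: while $j<N_i(\run)$ the $(j{+}1)$‑st $\GTG$‑step still occurs, the guard of the relevant general transition from $\GTG$ holds at the preceding configuration, and ``Decrease'' together with ``Boundedness~(b)'' — which forces $\exp_{\rank,\cdot,\cdot}\ge0$ there, as every positive‑probability successor is an admissible $\GTG$‑configuration — gives $R_{i,j}(\run)\ge1$; hence $T'_i(\run)\ge N_i(\run)$, and $R_{i,N_i(\run)}(\run)=0$ yields equality. Since non‑admissible runs form a null set (cf.\ the proof of \cref{theorem:liftingofbounds}), monotone convergence gives $\expv{\pip}{\scheduler}{\initstate}(\timervar(g))\le\sum_i\expv{\pip}{\scheduler}{\initstate}(T'_i)$. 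The second inequality is then immediate: $\expv{\pip}{\scheduler}{\initstate}(T'_i)\le\expv{\pip}{\scheduler}{\initstate}(R_{i,0})$ by \cref{thm:rsm_timebound} (i.e.\ \cref{thm:chatt_lexrsm} applied to the ranking supermartingale of \cref{ranking corollary}), and $\expv{\pip}{\scheduler}{\initstate}(R_{i,0})=\expv{\pip}{\scheduler}{\initstate}(\expv{\pip}{\scheduler}{\initstate}(R_{i,0}\mid\CG_i))\le\expv{\pip}{\scheduler}{\initstate}(E_i)$ by \cref{lemma:expected_value_does_not_change,lemma:entry_supermartingale} and monotonicity of the expectation.

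The only genuinely new step is the last inequality $\sum_i\expv{\pip}{\scheduler}{\initstate}(E_i)\le\eval{\tbounde'(g)}{\initstate}$. The key observation is that $E_i$ evaluates $\max\{\rank(\cdot),0\}$ at the configuration reached by the $(i{+}1)$‑st entry transition $h$ of $\GTNI$, whose location is the corresponding entry location $\loc\in\ENTRYLOC_\GTNI$ and whose state is the post‑transition state. Hence, grouping by pairs $(\loc,h)$ with $h\in\ET_\GTNI(\loc)$ and relaxing the sum over $i$ to the sum over \emph{all} indices $k$ with $t_k\in h$ and $\loc_k=\loc$ (legitimate since all summands are non‑negative), one obtains $\sum_i E_i(\run)\le\sum_{\loc,h}\sum_{k:\,t_k\in h,\,\loc_k=\loc}\max\{\exact{\rank(\loc)}{\state_k},0\}$. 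For $\loc\in\ENTRYLOC_{\mathrm{nc},\GTNI}$ I would bound each inner summand by $\eval{\overapprox{\rank(\loc)}}{\state_k}$, replace each variable $x$ by $\sizervar(h,\loc,x)(\run)\ge|\state_k(x)|$ using weak monotonicity of $\overapprox{\rank(\loc)}$, bound the number of such $k$ by the constant $\eval{\sum_{t=(\_,\_,\_,\_,\loc)\in h}\tbound(t)}{\initstate}$ via \cref{def:time_bounds}, take expectations, and then — since $\overapprox{\rank(\loc)}$ is \emph{linear} because $\rank(\loc)$ is — pull the expectation inside and replace $\sizervar(h,\loc,x)$ by $\sbounde(h,\loc,x)\ge\expv{\pip}{\scheduler}{\initstate}(\sizervar(h,\loc,x))$; this produces exactly the $\ENTRYLOC_{\mathrm{nc}}$‑summand of $\tbounde'(g)$. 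For $\loc\in\ENTRYLOC_{\mathrm{c},\GTNI}$, $\rank(\loc)$ is a constant $r_\loc$, the contribution of $(\loc,h)$ equals $\max\{r_\loc,0\}\cdot|\{k:t_k\in h,\loc_k=\loc\}|\le|r_\loc|\cdot\timervar(h)(\run)$, and taking expectations and using $\eval{\tbounde(h)}{\initstate}\ge\expv{\pip}{\scheduler}{\initstate}(\timervar(h))$ with $\eval{\overapprox{\rank(\loc)}}{\initstate}=|r_\loc|$ yields the $\ENTRYLOC_{\mathrm{c}}$‑summand $\eval{\tbounde(h)\cdot\overapprox{\rank(\loc)}}{\initstate}$. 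Summing over $(\loc,h)$ closes the chain, and taking $\ENTRYLOC_{\mathrm{c},\GTNI}=\emptyset$ recovers \cref{theorem:exptimeboundsmeth}.

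The heavy lifting — that $(R_{i,j})_j$ is a ranking supermartingale for $T'_i$, and $\expv{\pip}{\scheduler}{\initstate}(R_{i,0}\mid\CG_i)\le E_i$ — is already in place, so the remaining work is mostly bookkeeping, but two points need care. First, the ``expected vs.\ non‑probabilistic'' split in the last step: the number of entry applications of $h$ and the sizes of the variables at those entries are \emph{dependent}, so their expectations must not be multiplied, which is why the non‑probabilistic runtime bound $\sum_t\tbound(t)$ is needed for counting (cf.\ \cref{ex:entry_transition_unsound}), whereas the \emph{expected} size bounds $\sbounde(h,\loc,x)$ may be substituted into $\overapprox{\rank(\loc)}$ only because that bound is linear, hence concave (\cref{ex:concavity_in_expected_size_bounds} shows what goes wrong otherwise); the constant‑$\rank$ refinement is precisely the degenerate case in which $\overapprox{\rank(\loc)}$ contains no variable, so no product of dependent random variables arises and the \emph{expected} runtime bound $\tbounde(h)$ becomes admissible. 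Second, establishing $N_i=T'_i$ — and, inside \cref{lemma:entry_supermartingale}, the back‑propagation of $\rank$ through $\GTNI\setminus\GTG$ — is exactly where the non‑standard ``Boundedness'' conditions (a) and (b) of \cref{def:Probabilistic Polynomial Ranking Functions} enter, ensuring that the truncations $\max\{\cdot,0\}$ in $R_{i,j}$ and $E_i$ never cut a visit short. I expect this final bounding step, together with carefully justifying the linearity/concavity substitution, to be the main obstacle.
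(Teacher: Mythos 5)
Your proof is correct and rests on exactly the same pillars as the paper's: the chain $\timervar(g)\le\sum_i T'_i$ (the paper's step $(\dagger\dagger)$, for which only $T'_i\ge N_i$ is needed, not the equality you propose to verify), then $\expv{\pip}{\scheduler}{\initstate}(T'_i)\le\expv{\pip}{\scheduler}{\initstate}(R_{i,0})\le\expv{\pip}{\scheduler}{\initstate}(E_i)$ via \cref{thm:rsm_timebound}, \cref{lemma:entry_supermartingale} and \cref{lemma:expected_value_does_not_change}, the split into constant and non-constant entry locations with the non-probabilistic count $\sum_t\tbound(t)$ in the latter case and $\tbounde(h)$ in the former, and finally the linearity of $\overapprox{\rank(\loc)}$ to push the expectation through the size substitution. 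The one place where you genuinely diverge is the organization of the last bound on $\sum_i\expv{\pip}{\scheduler}{\initstate}(E_i)$: the paper first argues that (unless the claimed bound is trivially infinite) almost every run has only finitely many visits to $\GTNI$ with $E_i>0$, and then decomposes $\RUNS$ into the countable almost-sure partition $B_{N,M,(t_i)_{i\in N}}$ recording which visits have positive $E_i$, which of them execute $\GTG$, and through which entry transitions they are entered, carrying out the counting cell by cell; your pathwise relaxation of $\sum_i E_i(\run)$ to a sum over \emph{all} indices $k$ with $t_k\in h$ and $\loc_k=\loc$ reaches the same estimate without that partition, and since every summand is dominated by $\overapprox{\rank(\loc)}$ applied to $\sizervar(h,\loc,\cdot)(\run)$, it also absorbs the paper's preliminary case analysis for infinite non-probabilistic entry bounds into the usual $0\cdot\infty=0$ convention. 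This is a mild streamlining rather than a different proof; all the genuinely hard content (\cref{ranking corollary}, \cref{lemma:entry_supermartingale}) is invoked identically, and you correctly locate the two delicate points --- dependence forcing non-probabilistic counts, and linearity licensing the substitution of expected size bounds.
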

\begin{proof}
	Let $\sbounde$, $\tbound$, $\tbounde$, and $\rank$ be according to the theorem.
	For $g\not\in\GTG$ we have $\tbounde' (g) = \tbounde (g)$ and since $\tbounde$ is required to be a valid expected time bound, $\tbounde'$ itself is a valid expected time bound for $g\not\in\GTG$.
	So, assume $g\in\GTG$ and fix some $\initstate \in \STATE$.

	Moreover, consider

	\[
		n = \eval{\sum_{\substack{\loc \in \ENTRYLOC_{\mathrm{nc},\GTNI} \\h \in \ET_{\GTNI}(\loc)}} \sum_{t = (\_,\_,\_,\_,\loc)\in h} \tbound (t)}{\initstate} \in \natclosure. \]
	If $n = \infty$, then, as the sum has only finitely many addends, there has to be some $\loc \in \ENTRYLOC_{\mathrm{nc},\GTNI}$, some $h \in \ET_{\GTNI}(\loc)$, and some $t' = (\_,\_,\_,\_,\loc)\in h$ with $\tbound(t')=\infty$.
	If $\eval{\tbounde'(g)}{\initstate}=\infty$, we infer a valid expected time bound.
	Thus, let us consider the case where $\eval{\tbounde'(g)}{\initstate} < \infty$.
	This is only possible if $\eval{\exact{\overapprox{\rank(\loc)}}{\sbounde(h,\loc,\cdot)}}{\initstate} = 0$.
	Since $\rank(\loc)$ is a non-constant linear polynomial with real coefficients in the program variables (as $\loc \in \ENTRYLOC_{\mathrm{nc},\GTNI}$), by definition of $\overapprox{\cdot}$ we can only have $\eval{\exact{\overapprox{\rank(\loc)}}{\sbounde(h,\loc,\cdot)}}{\initstate}
		= 0$ if $\rank(\loc) = \sum_{x \in \PV} r_x \cdot x$ for real numbers $r_x \in \RR$ for $x \in \PV$ (i.e., the constant coefficient is zero), such that $r_x \neq 0$ implies $\eval{\sbounde(h,\loc,x)}{\initstate} = 0$.
	By definition of $\sbounde$ and the random variable $\sizervar(h,\loc,x)$, this means that in any admissible run for all $x \in \PV$ with $r_x \neq 0$, the value of $x$ is zero after entering $\loc$ with the general transition $h$ where the program was started in the initial state $\initstate$.
	Thus, in any admissible run the following holds: if a part belonging to $\GTNI$ is entered via the location $\loc$ by $h$, then, since $\rank(\loc)$ evaluates to zero in this configuration, no transition from $\GTG$ occurs in this part (as to execute such a general transition, $\rank$ must evaluate to a positive value, but only non-increasing transitions precede the one from $\GTG$).
	Thus, if $\eval{\tbounde'(g)}{\initstate} \neq \infty$ but $n = \infty$, then the probability is zero that $\GTNI$ is entered infinitely often and a general transition from $\GTG$ is executed in each of these infinite parts belonging to $\GTNI$.
	Similarly, we can assume $\eval{\sum_{\loc\in\ENTRYLOC_{\mathrm{c},\GTNI}}\sum_{h\in \ET_\GTNI(\loc)}\tbounde(h)}{\initstate} < \infty$.
	Combining these two assumptions implies that the set of all runs entering $\GTNI$ \emph{infinitely} often such that the value of $E_i$ (i.e., the value of the ranking function when entering $\GTNI$) is positive for infinitely many $i$ has probability $0$.
	Therefore, we will not consider those runs in the following, i.e., in the following for an admissible run there exist only finitely many $i$ such that $E_i$ is positive.
	Note that if $E_i = 0$ for some visit to $\GTNI$, then in this visit of $\GTNI$ \emph{no} general transition from $\GTG$ can be executed.

	Now we will show that for every $g \in \GTG$ we have
	\begin{align}
		\label{eq:soundness_timebound}
		         & \expv{\pip}{\scheduler}{\initstate} (\timervar(g))\nonumber                    \\
		{}\leq{} & \eval{
		\sum_{\substack{\loc \in \ENTRYLOC_{\mathrm{nc},\GTNI}                                    \\h \in \ET_\GTNI(\loc)}} \left(\sum_{t=(\_,\_,\_,\_,\loc) \in h} \tbound(t)\right) \cdot \left( \exact{\overapprox{\rank(\loc)}}{ \sbounde(h,\loc,\cdot)}\right)\right.\nonumber\\
		         & \quad \quad \quad + \left.\sum_{\substack{\loc\in\ENTRYLOC_{\mathrm{c}, \GTNI} \\
					h \in\ET_\GTNI(\loc)}} \tbounde(h) \cdot \overapprox{\rank(\loc)}}{\initstate}
	\end{align}
	To ease notation we define $\T^{in}_{\GTNI}
		= \bigcup_{\loc\in \ENTRYLOC_\GTNI} \bigcup_{h \in \ET_{\GTNI}(\loc)}
		h \subseteq \T$.
	To prove \eqref{eq:soundness_timebound}, we will partition the set $\RUNS$ such that for each set in the partition we have all information at hand to prove the desired statement.
	The crucial information is in which of the visits of $\GTNI$ the value of $E_i$ was positive, in which of these visits at least one transition from $\GTG$ was used, and which transition was used to enter $\GTNI$ in this case.
	We store this information in sets of the form $B_{N,M,(t_i)_{i \in N}}$, where $M \subseteq N \subseteq \NN$ are \emph{finite}, and $t_i \in \T^{in}_{\GTNI}$ for all $i \in N$.
	If $N = \emptyset$, then $M = \emptyset$ and by default $(t_i)_{i \in N}=()$.
	In this case, we define
	\[
		B_{N,M,(t_i)_{i \in N}} = B_{\emptyset,\emptyset,()} = \{\run \in \RUNS \mid E_i(\run) = 0 \text{ for all } i \in \NN\}. \]
	For example, this set contains all admissible runs that never visit $\GTNI$.

	For $N \neq \emptyset$ where $M \subseteq N \subseteq \NN$ and $t_i \in \T^{in}_{\GTNI}$ for $i \in N$, we define
	\begin{align*}
		B_{N,M,(t_i)_{i \in N}} = \bigl \{\run \in \RUNS \mid \quad & E_i(\run)> 0 \land \run[I^{e}_i(\run)-1] = (\_,t_i,\_) \text{ for } i \in N \\
		{}\land{}                                                   & E_i(\run) = 0 \text{ for } i \not \in N                                     \\
		{}\land{}                                                   & I^{\succ}_{i,0}(\run) > 0 \iff i \in M \bigr\},
	\end{align*}
	i.e., this set consists of all admissible runs where the visits to $\GTNI$ where $E_i$ evaluates to a positive value are exactly the ones with $i \in N$, the transition used directly \emph{before} the $(i+1)$-th visit is $t_i$, and the indices in $M$ are exactly those visits in which $\GTG$ is executed at least once.
	In total, we have \emph{countably} many sets $B_{N,M,(t_i)_{i \in N}}$, since $N$ and $M$ are from the countable set of \emph{finite} subsets of $\NN$, and $(t_i)_{i \in N}$, when interpreted as a finite sequence, is from the countable set $\T^{*}$.
	We have
	\begin{enumerate}
		\item $B_{N,M,(t_i)_{i \in N}}$ is measurable for all $M \subseteq N \subseteq \NN$ and $t_i \in \T^{in}_{\GTNI}$ for all $i \in N$.
		\item The sets $B_{N,M,(t_i)_{i \in N}}$ are pairwise disjoint.
		\item \label{it:covering}
		      \[\biguplus_{\substack{M \subseteq N \subseteq \NN \\
					      |N| < \infty\\
					      t_i \in \T^{in}_{\GTNI} \text{ for all } i \in N}} B_{N,M,(t_i)_{i \in N}}= \RUNS.
		      \]
	\end{enumerate}
	Here, \cref{it:covering}.\ holds only almost surely, i.e., there are runs which are not contained in any of the $B_{N,M,(t_i)_{i \in N}}$ but these runs are \emph {not} admissible under the assumption that $E_i$ only evaluates to a positive value finitely often.

	Let us fix some $N,M,(t_i)_{i \in N}$.
	Moreover, we assume $t_i=(\_,\_,\_,\_,\loc_i)$ with $\loc_i \in \ENTRYLOC_\GTNI$ and let $g_i$ be the unique general transition with $t_i \in g_i$.
	We then have
	\begin{align*}
		         & \expv{\pip}{\scheduler}{\initstate} \left(\ind_{B_{N,M,(t_i)_{i \in N}}} \cdot \sum_{i \in N} E_i\right)                                                                                                    \\
		{}={}
		         & \expv{\pip}{\scheduler}{\initstate} \left(\ind_{B_{N,M,(t_i)_{i \in N}}} \cdot \sum_{i \in N} \max \{\exact{\rank(\loc_{I^{e}_i-1})}{\state_{I^{e}_i-1}},0\}\right) \tag{by \cref{definition:entry_stproc}} \\
		{}\leq{} & \expv{\pip}{\scheduler}{\initstate} \left(\ind_{B_{N,M,(t_i)_{i \in N}}} \cdot\sum_{i \in N} \eval{\overapprox{\rank(\loc_i)}}{\state_{I^{e}_i-1}}\right)                                                   \\
		{}\leq{} & \expv{\pip}{\scheduler}{\initstate} \left(\ind_{B_{N,M,(t_i)_{i \in N}}} \cdot \sum_{i \in N} \eval{\exact{\overapprox{\rank(\loc_i)}}{\sizervar (g_i,\loc_i,\cdot)}}{\initstate}\right)                    \\
		{}={}    & \expv{\pip}{\scheduler}{\initstate} \left(\ind_{B_{N,M,(t_i)_{i \in N}}} \cdot \sum_{\substack{\loc\in\ENTRYLOC_{\GTNI}                                                                                     \\
				h \in \ET_\GTNI(\loc)}}
		\left(\sum_{t = (\_,\_,\_,\_,\loc) \in h} \mathcal C_{(t_i)_{i \in N}} (t)\right) \cdot \eval{\exact{\overapprox{\rank(\loc)}}{\sizervar (h,\loc, \cdot)}}{\initstate}\right)                                          \\
	\end{align*}
	where $\mathcal C_{(t_i)_{i \in N}} (t)$ denotes the number of occurrences of transition $t$ in $(t_i)_{i \in N}$.
	We continue by splitting the sum:

	\begin{align*}
		         & \expv{\pip}{\scheduler}{\initstate} \left(\ind_{B_{N,M,(t_i)_{i \in N}}} \cdot \sum_{\substack{\loc\in\ENTRYLOC_{\GTNI}                                              \\
				h \in \ET_\GTNI(\loc)}}
		\left(\sum_{t=(\_,\_,\_,\_,\loc)\in h}\mathcal C_{(t_i)_{i \in N}}(t) \right) \cdot \eval{\exact{\overapprox{\rank(\loc)}}{\sizervar (h,\loc,\cdot)}}{\initstate}\right)        \\
		{}={}    & \expv{\pip}{\scheduler}{\initstate} \left(\ind_{B_{N,M,(t_i)_{i \in N}}} \cdot \left(\sum_{\substack{\loc\in\ENTRYLOC_{\mathrm{nc},\GTNI}                            \\
				h \in \ET_\GTNI(\loc)}}
		\left(\sum_{t=(\_,\_,\_,\_,\loc)\in h}\mathcal C_{(t_i)_{i \in N}}(t)\right) \cdot \eval{\exact{\overapprox{\rank(\loc)}}{\sizervar (h,\loc,\cdot)}}{\initstate} \right.\right. \\
		         & \left.\left. + \sum_{\substack{\loc\in\ENTRYLOC_{\mathrm{c},\GTNI}                                                                                                   \\
				h\in\ET_\GTNI(\loc)}}
		\left(\sum_{t=(\_,\_,\_,\_,\loc)\in h} \mathcal C_{(t_i)_{i \in N}}(t)\right)\cdot \eval{\exact{\overapprox{\rank(\loc)}}{\sizervar(h,\loc, \cdot)}}{\initstate}
		\right)\right)                                                                                                                                                                  \\
		{}\leq{} & \expv{\pip}{\scheduler}{\initstate} \left(\ind_{B_{N,M,(t_i)_{i \in N}}} \cdot \left(\sum_{\substack{\loc\in\ENTRYLOC_{\mathrm{nc},\GTNI}                            \\
				h \in \ET_\GTNI(\loc)}}
		\left(\sum_{t=(\_,\_,\_,\_,\loc)\in h} \eval{\tbound (t)}{\initstate}\right) \cdot \eval{\exact{\overapprox{\rank(\loc)}}{\sizervar (h,\loc,\cdot)}}{\initstate} \right.\right. \\
		         & \left.\left. + \sum_{\substack{\loc\in\ENTRYLOC_{\mathrm{c},\GTNI}                                                                                                   \\
		h \in\ET_\GTNI(\loc)}}\timervar(h) \cdot(\overapprox{\rank(\loc)}) \right)\right)                                                                                               \\
		{}={}    & \sum_{\substack{\loc \in \ENTRYLOC_{\mathrm{nc},\GTNI}                                                                                                               \\h \in \ET_\GTNI(\loc)}} \left(\sum_{t=(\_,\_,\_,\_,\loc) \in h} \eval{\tbound(t)}{\initstate}\right)\cdot \expv{\pip}{\scheduler}{\initstate} \left(\ind_{B_{N,M,(t_i)_{i \in N}}} \cdot \eval{\exact{\overapprox{\rank(\loc)}}{\sizervar (h,\loc,\cdot)}}{\initstate}\right) \\
		         & + \sum_{\substack{\loc\in\ENTRYLOC_{\mathrm{c},\GTNI}                                                                                                                \\
				h \in\ET_\GTNI(\loc)}} \overapprox{\rank(\loc)} \cdot \expv{\pip}{\scheduler}{\initstate} \left(\ind_{B_{N,M,(t_i)_{i \in N}}} \cdot \timervar(h)\right) \tag{$\ddagger$}
	\end{align*}
	Here, in the last step we have used that $\overapprox{\rank(\loc)}$ is constant in the second part of the sum.
	Note that the penultimate inequation holds since $\run \in B_{N,M,(t_i)_{i \in N}}$ implies that the number of positions at which a transition $(\_,\_,\_,\_,\loc)\in h$ appears in $\run$, and therefore also in the tuple $(t_i)_{i \in N}$, is bounded by the minimum of $\sum_{t=(\_,\_,\_,\_,\loc)} (\eval{\tbound (t)}{\initstate})$ and $\timervar(h)(\run)$.
	Hence, it is bounded by both values.

	Recall the definition of $T'_i$ from \cref{def:rsm_stoptime}:
	\[
		T'_i (\run) = \min \{j\in \NN\mid R_{i,j} (\run) = 0\} \]
	By definition we have for every $\run \in \RUNS$
	\[
		I^{\succ}_{i,0}(\run) = 0 \implies R_{i,0}(\run) = 0 \implies T'_i (\run) = 0. \]
	Moreover, the properties ``Boundedness'', ``Non-Increase'', and ``Decrease'' of PLRFs together imply that the set of runs where a transition $g\in\GTG$ is executed when $\rank$ is already $0$ or less is a null set.
	So, for all $i\in M$ and all admissible runs $\run\in B_{N,M,(t_i)_{i \in N}}$ we have

	\[
		R_{i,0} (\run) > 0. \]
	Therefore $T'_i (\run) \geq |I_i (\run) \cap \{m\in \NN \mid t'_m \in \bigcup\GTG\}|$ for any admissible run $\run = (\initloc,t'_0,\initstate)\cdots\in B_{N,M,(t_i)_{i \in N}}$.
	Then for $g\in\GTG$ we have
	\begin{align*}
		         & \ind_{B_{N,M,(t_i)_{i \in N}}}(\run) \cdot \timervar (g)(\run)                                                      \\
		{}={}    & \ind_{B_{N,M,(t_i)_{i \in N}}}(\run) \cdot\abs{\biguplus_{i \in M} (I_i (\run) \cap \{m\in\NN \mid t'_{m} \in g\})} \\
		{}\leq{} & \ind_{B_{N,M,(t_i)_{i \in N}}}(\run) \cdot\sum_{i \in M} T'_i (\run)                                                \\
		{}\leq{} & \ind_{B_{N,M,(t_i)_{i \in N}}}(\run) \cdot \sum_{i \in \NN}
		T'_i (\run) \tag{$\dagger\dagger$}
	\end{align*}
	The last step holds because $M \subseteq N \subseteq \NN$.

	Since the sets $B_{N,M,(t_i)_{i \in N}}$ form (almost surely) a countable partition of $\RUNS$ we have
	\begin{align*}
		         & \expv{\pip}{\scheduler}{\initstate} (\timervar(g))                                                                                                    \\
		{}\leq{} & \expv{\pip}{\scheduler}{\initstate}\left(\sum_{i \in \NN}
		T'_i\right) \tag{by $(\dagger\dagger)$}                                                                                                                          \\
		{}={}    & \sum_{i \in \NN} \expv{\pip}{\scheduler}{\initstate}\left(T'_i\right)                                                                                 \\
		{}\leq{}
		         & \sum_{i \in \NN} \expv{\pip}{\scheduler}{\initstate}\left(R_{i,0}\right) \tag{by \cref{thm:rsm_timebound}}                                            \\
		{}\leq{}
		         & \sum_{i \in \NN} \expv{\pip}{\scheduler}{\initstate}\left(E_i\right) \tag{by \cref{lemma:expected_value_does_not_change,lemma:entry_supermartingale}} \\
		{}={}    & \expv{\pip}{\scheduler}{\initstate}\left(\sum_{i \in \NN}
		E_i\right) . \tag{$\ddagger\ddagger$}
	\end{align*}

	\noindent
	With $(\ddagger)$ and $(\ddagger\ddagger)$ we can conclude \eqref{eq:soundness_timebound}:
	\begin{align*}
		         & \expv{\pip}{\scheduler}{\initstate} (\timervar(g))                                                                                                                                                                                    \\
		{}\leq{} & \expv{\pip}{\scheduler}{\initstate} \left(\sum_{i \in \NN}
		E_i\right) \tag{by $(\ddagger\ddagger)$}                                                                                                                                                                                                         \\
		{}={}    & \expv{\pip}{\scheduler}{\initstate} \left(\ind_{\biguplus_{M \subseteq N \subseteq \NN,\, |N|<\infty, \, t_i \in \T^{in}_{\GTNI} \text{ for all } i \in N} B_{N,M,(t_i)_{i \in N}}} \cdot \sum_{i \in \NN} E_i\right)                 \\
		{}={}    & \sum_{\substack{M \subseteq N \subseteq \NN                                                                                                                                                                                           \\
		|N|<\infty                                                                                                                                                                                                                                       \\
		t_i \in \T^{in}_{\GTNI} \text{ for all } i \in N}} \expv{\pip}{\scheduler}{\initstate} \left(\ind_{B_{N,M,(t_i)_{i \in N}}} \cdot \sum_{i \in N} E_i\right)                                                                                      \\
		{}={}    & \sum_{\substack{M \subseteq N \subseteq \NN                                                                                                                                                                                           \\
		0<|N|<\infty                                                                                                                                                                                                                                     \\
		t_i \in \T^{in}_{\GTNI} \text{ for all } i \in N}} \expv{\pip}{\scheduler}{\initstate} \left(\ind_{B_{N,M,(t_i)_{i \in N}}} \cdot \sum_{i \in N} E_i\right)\tag{as $\ind_{B_{\emptyset,\emptyset,()}} \cdot \sum_{i\in \emptyset} E_i \equiv 0$} \\
		{}\leq{} & \sum_{\substack{M \subseteq N \subseteq \NN                                                                                                                                                                                           \\
		0<|N|<\infty                                                                                                                                                                                                                                     \\
				t_i \in \T^{in}_{\GTNI} \text{ for all } i \in N}}
		\left(\sum_{\substack{\loc \in \ENTRYLOC_{\mathrm{nc},\GTNI}                                                                                                                                                                                     \\
					h \in \ET_\GTNI(\loc)}}
		\left(\sum_{t=(\_,\_,\_,\_,\loc) \in h} \eval{\tbound(t)}{\initstate}\right)\cdot \right.                                                                                                                                                        \\
		         & \left.\expv{\pip}{\scheduler}{\initstate} \left(\ind_{B_{N,M,(t_i)_{i \in N}}}
		\cdot \eval{\exact{\overapprox{\rank(\loc)}}{\sizervar (h,\loc,\cdot)}}{\initstate}\right) + \right. \tag{by $(\ddagger)$}                                                                                                                       \\
		         & \left. \sum_{\substack{\loc\in\ENTRYLOC_{\mathrm{c},\GTNI}                                                                                                                                                                            \\h\in\ET_\GTNI(\loc)}} \overapprox{\rank(\loc)}\cdot \expv{\pip}{\scheduler}{\initstate} \left(\ind_{B_{N,M,(t_i)_{i \in N}}}\cdot \timervar(h)\right)\right) \\
		{}={}    & \sum_{\substack{\loc \in \ENTRYLOC_{\mathrm{nc},\GTNI}                                                                                                                                                                                \\h \in \ET_\GTNI(\loc)}} \left(\sum_{t=(\_,\_,\_,\_,\loc) \in h} \eval{\tbound(t)}{\initstate}\right)\cdot \expv{\pip}{\scheduler}{\initstate} \left(\eval{\exact{\overapprox{\rank(\loc)}}{\sizervar (h,\loc,\cdot)}}{\initstate}\right) \\
		         & + \sum_{\substack{\loc\in\ENTRYLOC_{\mathrm{c},\GTNI}                                                                                                                                                                                 \\
				h\in\ET_\GTNI(\loc)}}
		\expv{\pip}{\scheduler}{\initstate} (\timervar (h)) \cdot \overapprox{\rank(\loc)}                                                                                                                                                               \\
		{}\leq{} & \sum_{\substack{\loc \in \ENTRYLOC_{\mathrm{nc},\GTNI}                                                                                                                                                                                \\h \in \ET_\GTNI(\loc)}} \left(\sum_{t=(\_,\_,\_,\_,\loc) \in h} \eval{\tbound(t)}{\initstate}\right)\cdot \left(\exact{\overapprox{\rank(\loc)}}{\expv{\pip}{\scheduler}{\initstate}\left(\sizervar(h,\loc,\cdot)\right)}\right) \\
		         & + \sum_{\substack{\loc\in\ENTRYLOC_{\mathrm{c}, \GTNI}                                                                                                                                                                                \\
				h \in\ET_\GTNI(\loc)}} \eval{\tbounde(h)}{\initstate} \cdot \overapprox{\rank(\loc)}
		\tag{since $\overapprox{\rank(\loc)}$ is linear}                                                                                                                                                                                                 \\
		{}\leq{} & \sum_{\substack{\loc \in \ENTRYLOC_{\mathrm{nc},\GTNI}                                                                                                                                                                                \\h \in \ET_\GTNI(\loc)}} \left(\sum_{t=(\_,\_,\_,\_,\loc) \in h} \eval{\tbound(t)}{\initstate}\right)\cdot \left(\exact{\overapprox{\rank(\loc)}}{ \eval{\sbounde(h,\loc,\cdot)}{\initstate}}\right) 	\tag{since $\overapprox{\rank(\loc)}$ is linear} \\
		         & + \sum_{\substack{\loc\in\ENTRYLOC_{\mathrm{c}, \GTNI}                                                                                                                                                                                \\
				h \in\ET_\GTNI(\loc)}} \eval{\tbounde(h)}{\initstate} \cdot \overapprox{\rank(\loc)}
		\\
		{}={}    & \eval{
		\sum_{\substack{\loc \in \ENTRYLOC_{\mathrm{nc},\GTNI}                                                                                                                                                                                           \\h \in \ET_\GTNI(\loc)}} \left(\sum_{t=(\_,\_,\_,\_,\loc) \in h} \tbound(t)\right)\cdot \left( \exact{\overapprox{\rank(\loc)}}{ \sbounde(h,\loc,\cdot)}\right) \right. \\
		&\qquad\left.+ \sum_{\substack{\loc\in\ENTRYLOC_{\mathrm{c}, \GTNI}\\
					h \in\ET_\GTNI(\loc)}} \tbounde(h) \cdot \overapprox{\rank(\loc)}}{\initstate}
	\end{align*}
	i.e., $\tbounde'$ is indeed a valid expected time bound.
\end{proof}

\begin{example}[Constant Ranking Functions]
	\begin{figure}[t]
  \begin{tikzpicture}[shorten >=1pt,node distance=3.5cm,auto,main node/.style={circle,draw,font=\sffamily\Large\bfseries}]
      \node[main node] (f) {$\ell_0$};
      \node[main node] (g) [right of = f] {$\ell_1$};
      \node[main node] (h) [right of = g] {$\ell_2$};

      \path[->] (f)
      edge[align=left]
      node [above] {$\tau = x > 0$}
        node [below] {$t_0 \in g_0$}
      (g)

      (g) edge[loop above, align=left]
        node [above] {$p = \tfrac{1}{2}$ \\ $\tau = x > 0$ \\ $t_1 \in g_1$}
      ()

      (g) edge[align=left, bend left = 45]
        node{$p = \tfrac{1}{4}$ \\ $\tau = x > 0$ \\ $t_2 \in g_1$}
      (h)
      
      (g) edge[align=left]
        node [above] {$p = \tfrac{1}{4}$ \\ $\eta(x) = x + 1$}
        node [below] {$\tau = x > 0$ \\$t_3 \in g_1$}
      (h)
      
      (h) edge[align=left, bend left = 45]
        node [below] {$\eta(x) = x - y$ \\$\tau = y > 0$ \\ $t_4 \in g_2$}
      (g);
  \end{tikzpicture}
  \caption{The \texttt{nondet\_countdown} example}
  \label{fig:nondet_countdown}
\end{figure}

	{\sl Consider the example \textnormal{\texttt{nondet\_count\-down}}
		from our additional benchmark suite.
		Its graph is depicted in \cref{fig:nondet_countdown}.
		Here, we have $\PV = \{
			x \}$, i.e., $y$ is a temporary variable.
		As usual, we omit trivial updates and probabilities that are $1$.

		We start with finding a PLRF for $g_1$.
		Here, we compute the ranking function $\rank$ with $\rank(\ell_1)= 4 \cdot x$ and $\rank(\ell_2) = 4 \cdot x - 4$ for $\GTG = \{g_1\}$ and $\GTNI = \{g_1,g_2\}$.
		First of all, $g_1$ satisfies the condition ``Boundedness (b)'', since for any admissible configuration $(\loc,t,\state)$ with $t\in g_1$ we have $\exact{\rank(\ell)}{\state} \geq 0$.
		Moreover, $g_1$ satisfies the condition ``Decrease'' since for any state $\state$ such that $\state(\tau_{g_1}) = \true$, i.e., $\state(x) > 0$, we have
		\begin{align*}
			         & \exp_{\rank,g_1,\state}                                                                                                                                  \\
			{}={}    & \tfrac{1}{2}\cdot 4 \cdot \state(x) + \tfrac{1}{4}\cdot (4 \cdot (\state(x) + 1) - 4) + \tfrac{1}{4}\cdot (4 \cdot \state(x) - 4) = 4 \cdot \state(x) -1 \\
			{}\leq{} & \exact{\rank(\loc_1)}{\state} -1.
		\end{align*}
		The transition $g_2$ trivially satisfies ``Boundedness (a)'' since it only consists of one transition.
		More precisely, for any admissible path $\, \cdots \, c' \, c$ there is a unique $\bowtie_{g,c'} \in \{<,\geq\}$ such that for all $c = (\loc,t,\state)$ with $t \in g_2$, i.e., $t = t_4$ and $\loc = \loc_1$, we have $\state(\rank(\loc_1)) \bowtie_{g,c'} 0$, since the value for temporary variable $y$ which the scheduler chooses depends only on $c'$.

		The transition $g_2$ also satisfies ``Non-Increase'', since for any state $\state$ such that $\state(\tau_{g_2}) = \true$, i.e., $\state(y) > 0$, we have
		\begin{align*}
			      & \exp_{\rank,g_2,\state}         \\
			{}={} & 4 \cdot (\state(x) - \state(y)) \\
			{}<{} & 4 \cdot \state(x)               \\
			{}={} & \exact{\rank(\loc_2)}{\state}.
		\end{align*}

		Here, $\ENTRYLOC_{\GTNI} = \{\loc_1\}$ and $\ET_{\GTNI}(\loc_1) = \{g_0\}$.
		So let us assume that we have already computed $\tbound(t_0) = 1$ and $\sbounde(g_0,\loc_1,x) = x$.
		By using \cref{theorem:exptimeboundsmeth}, we infer
		\[
			\tbounde(g_1) = \tbound(t_0) \cdot \exact{\overapprox{\rank(\ell_1)}}{\sbounde(g_0,\loc_1,\cdot)} = 4 \cdot x. \]

		However, there is no PLRF $\rank'$ for $\GTG' = \{g_2\}$ and $\GTNIpr = \{g_1,g_2\}$ since $t_3$ increases the value of $x$, and $t_1$ and $t_2$ leave the value of $x$ unchanged whereas $t_4$ decreases the value of $x$.
		Thus, we consider $\GTG' = \{g_2\} = \GTNIpr$.
		In this case, $\ENTRYLOC_{\GTNI'} = \{\loc_2\}$ and $\ET_{\GTNI}(\loc_2) = \{g_1\}$.
		Unfortunately, we cannot compute finite \emph{non-probabilistic} runtime bounds for $t_2$ and $t_3$, since they might occur arbitrarily often in an admissible run, and we cannot compute a finite expected size bound $\sbounde(g_1,\loc_2,x)$ without a finite expected runtime bound for $g_2$, as $\sbounde(g_1,\loc_2,x)$ and $\sbounde(g_2,\loc_1,x)$ are members of the same non-trivial SCC of the general result variable graph (see \cref{theorem:expectednontrivialsizeboundsmeth}).
		Thus, by using \cref{theorem:exptimeboundsmeth}, we would always obtain
		\[
			\tbounde(g_2) = \left(\sum_{t \in g_1} \tbound(t)\right) \cdot \exact{\overapprox{\rank'(\ell_2)}}{\sbounde(g_1,\loc_2,\cdot)} = \infty, \]
		since $\rank'(\ell_2)$ is non-zero.
		However, we can infer the \emph{constant} ranking function $\rank'$ with $\rank'(\loc_2) = 1$ and $\rank'(\loc_1) = 0$ for $\GTG' = \{g_2\} = \GTNIpr$.
		Using \cref{Expected Time Bounds Refined} we obtain
		\[
			\tbounde(g_2) = \tbounde(g_1) \cdot \rank(\ell_2) = 4 \cdot x. \]
	}
\end{example}

        As discussed in \cref{Inferring Expected Time Bounds}, if bounds $b_1,\ldots,b_n$ are \emph{substituted} into another bound $b$, then it is sound to use ``expected versions'' of the bounds $b_1,\ldots,b_n$ if $b$ is \emph{concave}.
The essential reason is that if a function $f$ is concave, then the expected value of the random variable $f(X_1,\ldots,X_n)$ is greater or equal than applying the function $f$ to the expected values of the random variables $X_1,\ldots,X_n$, see
\cite[Lemma 3.5]{kallenberg_foundations_2002}.

A unary function $f:\RR \to \RR$ is \emph{concave} if for any two numbers $v, v' \in \RR$ and any $r$ between $v$ and $v'$ (i.e., $r = t \cdot v + (1-t) \cdot v'$ for $t \in [0,1]$), the point $(r, f(r))$ is above the line connecting the points $(v, f(v))$ and $(v', f(v'))$.
The following definition extends this concept to functions with several arguments.

\begin{definition}[Concavity]
	\label{def:concavity}
	A finite bound $b \in \BOUND$ is \emph{concave} if for any $w,w':\PV \to \RR_{\geq 0}$ and $t \in [0,1]$, we have $(t \cdot w + (1-t) \cdot w')(b) \geq t \cdot w(b) + (1-t) \cdot w'(b)$.
\end{definition}

It turns out that a finite bound is concave iff it is a \emph{linear} polynomial.
(The ``only if'' direction is due to the fact that bounds from $\BOUND$ do not contain negative coefficients, i.e., they do not contain non-linear concave polynomials like $- x^2$.)

\begin{lemma}
	\label{lemma:concavity_implies_linearity}
	A finite bound is concave iff it is a linear polynomial.
\end{lemma}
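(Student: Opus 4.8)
The plan is to prove the two directions separately. The direction ``$b$ is a linear polynomial $\Rightarrow$ $b$ is concave'' is immediate: writing $b = c_0 + \sum_{x \in \PV} c_x\cdot x$ with $c_0, c_x \in \RR_{\geq 0}$, distributing an assignment over this linear expression gives, for all $w,w'\colon \PV \to \RR_{\geq 0}$ and $t \in [0,1]$, that $(t\cdot w + (1-t)\cdot w')(b) = t\cdot w(b) + (1-t)\cdot w'(b)$, so the inequality of \cref{def:concavity} holds even with equality. The converse rests on the observation that finite bounds, although not concave in general, are always \emph{convex} when restricted to line segments along which all variables move in the same direction.

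The key auxiliary claim for the ``only if'' direction is: for every finite $b \in \BOUND$ and all $w \leq w'$ (pointwise) in $\RR_{\geq 0}^{\PV}$, the map $\phi_b(t) = ((1-t)\cdot w + t \cdot w')(b)$ is non-negative, weakly increasing, and \emph{convex} on $[0,1]$. I would prove this by structural induction over simplified finite bounds (so that all proper subterms are finite, cf.\ the footnote on simplification): $b \in \PV$ and $b \in \RR_{\geq 0}$ give an affine resp.\ constant $\phi_b$; for $b_1 + b_2$ all three properties are additive; for $b_1 \cdot b_2$ one uses that a product of two non-negative, weakly increasing, convex functions on an interval is again non-negative, weakly increasing and convex; and for $v^{b_1}$ with $v \in \RR_{\geq 1}$ one composes the non-negative, weakly increasing, convex $\phi_{b_1}$ with the weakly increasing convex map $s \mapsto v^s$. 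It is crucial here that $\BOUND$ has no negative coefficients (\cref{def:bounds}): this is exactly what keeps every intermediate function non-negative and weakly increasing, which is what the product and composition steps require.

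Now assume $b$ is concave. For $w \leq w'$, \cref{def:concavity} gives $\phi_b(t) \geq t\cdot\phi_b(1) + (1-t)\cdot\phi_b(0)$, whereas the auxiliary claim gives $\phi_b(t) \leq t\cdot\phi_b(1) + (1-t)\cdot\phi_b(0)$; hence $\phi_b$ is affine on $[0,1]$, i.e.\ $b$ is affine along every segment of $\RR_{\geq 0}^{\PV}$ whose direction is coordinatewise non-negative. From this I reconstruct $b$ as a function: affineness along axis-parallel segments gives $b(\mathbf{x} + h\, e_i) = b(\mathbf{x}) + h\cdot g_i(\mathbf{x})$ for $h \geq 0$, and affineness along the two-dimensional coordinate-monotone segments forces, via a short mixed-difference (``diagonal'') computation, each slope $g_i$ to be constant in every coordinate direction, hence $g_i \equiv c_i$ with $c_i \geq 0$ (using weak monotonicity of bounds). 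Telescoping over $\PV$ then yields $b(\mathbf{x}) = b(0) + \sum_{i} c_i\cdot x_i$, so the function denoted by $b$ is an affine linear polynomial; as we work with simplified bounds, $b$ is itself a linear polynomial.

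The main obstacle is the auxiliary claim together with the reconstruction step: one must choose the induction invariant (non-negativity and weak monotonicity \emph{together with} convexity) so that the product and exponentiation cases close, and one must exploit that concavity in the sense of \cref{def:concavity} is quantified over \emph{all} $w,w'$ --- in particular over coordinate-monotone pairs, along which finite bounds are \emph{convex} --- which is precisely what makes the two chord inequalities collapse to affineness. Everything else (the product-of-convex-functions fact, the telescoping) is routine.
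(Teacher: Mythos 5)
Your proof is correct, but it takes a genuinely different route from the paper's. The paper argues analytically: a finite bound is a smooth function on the positive orthant, so concavity there is equivalent to negative semi-definiteness of the Hessian $H$; if $b$ is not a linear polynomial, $H$ is a non-zero matrix of expressions built only from non-negative constants, variables, $+$, $\cdot$ and $v^{\bullet}$, so evaluating $H$ at the all-ones assignment yields a matrix $A$ with non-negative entries and at least one positive entry, whence $\vec{v}^{T} A \vec{v} > 0$ for the all-ones vector $\vec{v}$ and $b$ is not concave. You instead prove by structural induction that every finite bound is non-negative, weakly increasing and \emph{convex} along every coordinate-monotone segment (the product and exponentiation cases close precisely because of the non-negativity and monotonicity invariants, as you note), play this convexity off against \cref{def:concavity} to conclude that a concave bound is affine along all such segments, and then reconstruct the linear form. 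Your route is more elementary --- no differentiability or Hessian criterion is needed --- and it works directly on the closed orthant $\PV \to \RR_{\geq 0}$ over which \cref{def:concavity} quantifies; the paper's argument is shorter. Two small remarks: in your reconstruction step, the diagonal (midpoint) computation as such only yields $g_i(\mathbf{x}+k e_j)=g_i(\mathbf{x}+2k e_j)$; to finish, observe that $k \mapsto g_i(\mathbf{x}+k e_j)$ is itself affine (being a difference of two functions that are affine along the $e_j$-segment), so this identity already forces it to be constant, with no limit argument needed. Finally, both your proof and the paper's rely on the same implicit convention that a simplified bound denoting an affine function is syntactically a linear polynomial, so this is not a gap relative to the paper.
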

\begin{proof}
	Every linear polynomial is clearly concave.
	For the other direction, by \cref{def:bounds}, if $b \in \BOUND$ does not contain $\infty$, then $b$ is built from variables in $\PV$, non-negative constants in $\RR_{\geq_0}$, and the operators $+$, $\cdot$, and $v^{\bullet}$ for $v \geq 1$.
	Thus, $b$ defines a function which is infinitely often continously differentiable.

	It is a standard result from analysis (see, e.g., \cite{boyd_vandenberghe_2004}), that such a $b$ is concave as a function $(\PV \to \RR_{\geq 0}) \to \RR$ iff its Hessian matrix $H$, i.e., the matrix of its second derivatives, is negative semi-definite on $(\PV \to \RR_{>0})$.
	A quadratic real matrix $A \in \RR^{n \times n}$ is \emph{negative semi-definite} if for all $\vec{v} \in \RR^n$ we have $\vec{v}^{T} \cdot A \cdot \vec{v} \leq 0$.

	So, let $b$ be a bound which is not a linear polynomial.
	Then its Hessian matrix $H$ contains only terms built from variables in $\PV$, constants in $\RR_{\geq_0}$, and the operators $+$, $\cdot$, and $v^{\bullet}$ for $v \geq 1$.
	Moreover, $H$ is not the zero matrix.
	Now let $A$ be $H$ evaluated at the assignment from $(\PV \to \RR_{>0})$ which assigns every program variable the value $1$.
	Then each entry of $A$ is non-negative and at least one entry is positive (since only non-negative numbers, addition, multiplication and exponentiation with strictly positive basis are involved, no expressions can cancel each other out).
	Now let $\vec{v}$ be the vector where each entry is $1$.
	Then $ A \cdot \vec{v}$ is a vector with only non-negative entries and at least one positive entry.

	Thus
	\[
		\vec{v}^{T} \cdot A \cdot \vec{v} > 0, \]
	since this value is just the sum of the entries of $ A \cdot \vec{v}$.

	Hence, $A$ is \emph{not} negative semi-definite, i.e., $H$ is \emph{not} negative semi-definite on $(\PV \to \RR_{>0})$.
	Thus, $b$ is not concave.
\end{proof}

	\subsection{Proofs for \cref{sec:Computing Size Bounds}}
	\label{app:Proofsforsec:Computing Size Bounds}

	\chboundebasic*

\begin{proof}
	Let $\state'_{\scheduler}$ be the state chosen by the scheduler to continue $c'$, i.e.,
	$\scheduler (c') = (\_,\state'_{\scheduler})$, and $\state'(x)=\state'_{\scheduler}(x)$ for all $x \in \PV$.
	By definition of the underlying probability mass function $\ptransition{\pip}{\scheduler}{\initstate}$ (see \cref{appendix:prob_space_construction}) we have:
	\begin{align*}
		         & \sum_{{\substack{c = (\loc,t,\state) \in \CONF                              \\
						t\in g}}}
		\ptransition{\pip}{\scheduler}{\initstate} (c' \to c) \cdot |\state (x) - \state' (x)| \\
		{}={}    & \sum_{{\substack{c = (\loc,t,\state) \in \CONF                              \\
						t = (\_,p,\_,\_,\loc)\in g}}}
		|\state (x) - \state' (x)| \cdot p \cdot \prod_{{y\in\PV}} \pr{\state'_{\scheduler}}{t}{\state(y)}{y}
		\cdot \prod_{{u \in \VV \setminus \PV}} \delta_{\state'_{\scheduler}(u),\state(u)}     \\
		{}={}    & \sum_{{\substack{v\in\mathbb Z                                              \\
						t = (\_,p,\_,\eta,\loc)\in g}}}
		|v| \cdot p \cdot
		\begin{cases}
			\delta_{\state'_{\scheduler}(\eta (x)) - \state' (x), v}, & \text{if } \eta(x) \in \POLYBOUND \\
			d(\state'_{\scheduler})(v),                               & \text{if } \eta(x)= d \in \DIST   \\
		\end{cases}
		\\
		{}={}    & \sum_{{\substack{v\in\NN                                                    \\
						t = (\_,p,\_,\eta,\loc)\in g}}}
		v \cdot p \cdot
		\begin{cases}
			\delta_{|\state'_{\scheduler}(\eta (x)) - \state' (x)|, v}, & \text{if } \eta(x) \in \POLYBOUND \\
			d(\state'_{\scheduler})(v)+d(\state'_{\scheduler})(-v),     & \text{if } \eta(x) = d \in \DIST  \\
		\end{cases}
		\\
		{}={}    & \sum_{t = (\_,p,\_,\eta,\loc)\in g} p \cdot \left(\sum_{v\in\NN} v \cdot
		\begin{cases}
			\delta_{|\state'_{\scheduler}(\eta (x)) - \state' (x)|, v}, & \text{if } \eta(x) \in \POLYBOUND \\
			d(\state'_{\scheduler})(v)+d(\state'_{\scheduler})(-v),     & \text{if } \eta(x) = d \in \DIST  \\
		\end{cases}
		\right)                                                                                \\
		{}\leq{} & \sum_{{\substack{t = (\_,p,\_,\eta,\loc)\in g}}}
		p \cdot
		\begin{cases}
			\eval{\overapprox{\eta(x) - x}}{\state'_{\scheduler}}, & \text{if } \eta(x) \in \POLYBOUND \\
			\eval{\expvdist(d)}{\state'_{\scheduler}},             & \text{if } \eta(x) = d \in \DIST  \\
		\end{cases}
		\tag{since $\state'_{\scheduler}(x)=\state'(x)$ for all $x \in \PV$ and $\eval{\overapprox{\eta(x) - x}}{\state'_{\scheduler}} \geq \eval{|\eta(x) - x|}{\state'_{\scheduler}}$
		}                                                                                      \\
		{}\leq{} & \eval{\chbounde(g,\loc,x)}{\state'_{\scheduler}}                            \\
		{}={}    & \eval{\chbounde(g,\loc,x)}{\state'}
		\tag{since $\state'_{\scheduler}(x)=\state'(x)$ for all $x \in \PV$}
	\end{align*}
\end{proof}

		\subsubsection{Proofs for Expected Size Bounds of Trivial SCCs}

        \expectedtrivialsizeboundsmethod*
\begin{proof}
	First of all, for a general result variable $\beta \neq \alpha$ there is nothing to prove, since $\sbounde$ is already a valid expected size bound.

	\paragraph{Initial Transition:}
	We now consider the case
	$g\in \GTINIT$.
	It suffices to consider admissible runs $\run = \prefix{(\initloc,\tin,\initstate) (\loc_1,t_1,s_1)\cdots}$ with $t_1\in g$ and $\loc_1 = \loc$, since there is no transition leading back to the initial location.
	Therefore the transition $g$ will always be the general transition that is executed first (if it is executed at all).
	Hence, for all considered runs $\run$ we have:
	\[
		\sizervar (\alpha) (\run) = \sup \{|\state_1 (x)|\} = |\state_1 (x)| \]

	Thus, we will use \cref{theorem:elsb_basic} with the (trivially) admissible configuration $c_0 = (\initloc,\tin,\initstate)$.
	Note that $\prefix{c_0}$ also forms an admissible finite path.
	Moreover, any admissible finite path starts with $c_0$.
	So we have
	\begin{align*}
		         & \expv{\pip}{\scheduler}{\initstate} (\sizervar (\alpha))                                                                                                   \\
		{}={}    & \ptransition{\pip}{\scheduler}{\initstate}(c_0) \cdot \sum_{{\substack{c = (\loc,t,\state') \in \CONF                                                      \\
						t\in g}}}
		\ptransition{\pip}{\scheduler}{\initstate}(c_0 \to c) \cdot |\state' (x)|                                                                                             \\
		{}={}    & \ptransition{\pip}{\scheduler}{\initstate}
		(\prefix{c_0}) \cdot \sum_{{\substack{c = (\loc,t,\state') \in \CONF                                                                                                  \\
						t\in g}}}
		\ptransition{\pip}{\scheduler}{\initstate}(c_0 \to c) \cdot |\state' (x) - \initstate (x) + \initstate (x)|                                                           \\
		{}\leq{} & \ptransition{\pip}{\scheduler}{\initstate} (\prefix{c_0}) \cdot \sum_{{\substack{c = (\loc,t,\state') \in \CONF                                            \\
						t\in g}}}
		\ptransition{\pip}{\scheduler}{\initstate} (c_0 \to c) \cdot (|\initstate (x)| + |\state' (x) - \initstate (x)|)                                                      \\
		{}\leq{} & |\initstate (x)| + \ptransition{\pip}{\scheduler}{\initstate} (\prefix{c_0}) \cdot \sum_{{\substack{c = (\loc,t,\state') \in \CONF                         \\
						t\in g}}}
		\ptransition{\pip}{\scheduler}{\initstate} (c_0 \to c) \cdot |\state' (x) - \initstate (x)|                                                                           \\
		{}\leq{} & |\initstate (x)| + \ptransition{\pip}{\scheduler}{\initstate} (\prefix{c_0}) \cdot \eval{\chbounde (\alpha)}{\initstate}\tag{by \cref{theorem:elsb_basic}} \\
		{}\leq{} & \eval{x + \chbounde (\alpha)}{\initstate} \tag{since $\ptransition{\pip}{\scheduler}{\initstate} (\prefix{c_0}) = 1$}                                      \\
		{}={}    & \eval{\sbounde' (\alpha)}{\initstate}
	\end{align*}

	\paragraph{Non-Initial Transition:}
	Now let
	$g\not\in\GTINIT$.
	This case is similar to the previous case except that we will look at all finite paths $f$ leading to an execution of $g$.
	This means we have to look for all $f$'s ending in a configuration $c = (\loc_g,t_n,\state_n)$ with $t_n\in h$ and $h \in \pre (g)$.

	Recall that the active variables consist of all variables that influence $x + \chbounde (\alpha)$.
	So the set of active variables of $\alpha = (g,\loc,x)$ is never empty, since we always have $x \in \ACTV(\alpha)$.
	Hence, as $g$ is not an initial general transition, it has a predecessor in the general result variable graph.
	Since $\alpha$ forms a trivial SCC of the program, this means that $g$ cannot be applied more than once in any program run.
	In other words, we can restrict ourselves to finite paths $f$ where $g$ has not been applied before the last evaluation step.
	Therefore, for all considered runs $\run$ that start with the prefix $\prefix{f c}$ for some configuration $c = (\loc', t, s)$ with $t \in g$, we get
	\[
		\sizervar (\alpha) (\run) = \sup \{|\state (x)|\} = |\state (x)|. \]

	Moreover, we define $\sizervar_{in}^\alpha(\cdot) = \sum\limits_{h \in \pre(g)} \sizervar(h,\loc_g,\cdot)$, i.e., for every variable $y \in \PV$ we have
	\[
		\sizervar_{in}^\alpha(y) = \sum\limits_{h \in \pre(g)} \sizervar(h,\loc_g,y).
	\]
	To ease notation, we allow to apply $\sizervar_{in}^\alpha(y)$ also to finite prefixes instead of just runs.

	So we get
	\begin{align*}
		         & \expv{\pip}{\scheduler}{\initstate} (\sizervar (\alpha))                                                                                \\
		{}={}    & \sum_{\substack{f = \prefix{(\loc_0,t_0,\state_0) \cdots (\loc_n,t_n,\state_n)}                                                         \\
		t_n \in h                                                                                                                                          \\
				h \in \pre (g)}}
		\ptransition{\pip}{\scheduler}{\initstate} (f)                                                                                                     \\
		         & \quad\quad \cdot \sum_{\substack{c = (\loc',t,\state) \in \CONF                                                                         \\
		t\in g}} \ptransition{\pip}{\scheduler}{\initstate} ((\loc_n,t_n,\state_n) \to c) \cdot |\state (x)|                                               \\
		{}={}    & \sum_{\substack{f = \prefix{(\loc_0,t_0,\state_0) \cdots (\loc_n,t_n,\state_n)}                                                         \\
		t_n \in h                                                                                                                                          \\
				h \in \pre (g)}}
		\ptransition{\pip}{\scheduler}{\initstate} (f)                                                                                                     \\
		         & \quad\quad \cdot \sum_{\substack{c = (\loc',t,\state) \in \CONF                                                                         \\
		t\in g}} \ptransition{\pip}{\scheduler}{\initstate} ((\loc_n,t_n,\state_n) \to c) \cdot |\state (x)-\state_n(x)+\state_n(x)|                       \\
		{}\leq{} & \sum_{\substack{f = \prefix{(\loc_0,t_0,\state_0) \cdots (\loc_n,t_n,\state_n)}                                                         \\
		t_n \in h                                                                                                                                          \\
				h \in \pre (g)}}
		\ptransition{\pip}{\scheduler}{\initstate} (f)                                                                                                     \\
		         & \quad\quad \cdot \sum_{\substack{c = (\loc',t,\state) \in \CONF                                                                         \\
				t\in g}} \ptransition{\pip}{\scheduler}{\initstate}
		((\loc_n,t_n,\state_n) \to c) \cdot (|\state_n(x)| + |\state (x)-\state_n(x)|)                                                                     \\
		{}\leq{} & \sum_{\substack{f = \prefix{(\loc_0,t_0,\state_0) \cdots (\loc_n,t_n,\state_n)}                                                         \\
		t_n \in h                                                                                                                                          \\
				h \in \pre (g)}}
		\ptransition{\pip}{\scheduler}{\initstate} (f) \cdot ( |\state_n (x)| + \eval{\chbounde (\alpha)}{\state_n}) \tag{by \cref{theorem:elsb_basic}}    \\
		{}\leq{} & \sum_{\substack{f = \prefix{(\loc_0,t_0,\state_0) \cdots (\loc_n,t_n,\state_n)}                                                         \\
		t_n \in h                                                                                                                                          \\
				h \in \pre (g)}}
		\ptransition{\pip}{\scheduler}{\initstate} (f) \cdot ( |\sizervar_{in}^\alpha(x)(f)| + \eval{\chbounde (\alpha)}{\sizervar_{in}^\alpha(\cdot)(f)}) \\
		{}={}    & \expv{\pip}{\scheduler}{\initstate}\left(\sizervar_{in}^\alpha(x) + \eval{\chbounde (\alpha)}{\sizervar_{in}^\alpha(\cdot)}\right)
	\end{align*}

	Now, Jensen's inequality allows us to yield better bounds by considering expected bounds on the sizes of the variables \emph{if $\chbounde (\alpha)$ is concave}:
	\begin{align*}
		         & \expv{\pip}{\scheduler}{\initstate}\left(\sizervar_{in}^\alpha(x) + \eval{\chbounde (\alpha)}{\sizervar_{in}^\alpha(\cdot)}\right)                                                 \\
		{}\leq{} & \expv{\pip}{\scheduler}{\initstate}\left(\sizervar_{in}^\alpha(x)\right) + \eval{\chbounde (\alpha)}{\expv{\pip}{\scheduler}{\initstate}\left(\sizervar_{in}^\alpha(\cdot)\right)} \\
		{}\leq{} & \eval{\incsize^\alpha_{\mathbb{E}}(x + \chbounde (\alpha))}{\initstate}                                                                                                            \\
		{}={}    & \eval{\sbounde' (\alpha)}{\initstate}
	\end{align*}
	This corresponds to the second case of \cref{thm:Inferring Expected Size Bounds for Trivial SCCs}, i.e.,
	$g\not\in\GTINIT$, $\beta = \alpha$, and $\chbounde (\alpha)$ is concave.

	\bigskip

	If $\chbounde (\alpha)$ \emph{is not concave}, then we get:
	\begin{align*}
		         & \expv{\pip}{\scheduler}{\initstate}\left(\sizervar_{in}^\alpha(x) + \eval{\chbounde (\alpha)}{\sizervar_{in}^\alpha(\cdot)}\right)                                                 \\
		{}\leq{} & \expv{\pip}{\scheduler}{\initstate}\left(\sizervar_{in}^\alpha(x)\right) + \expv{\pip}{\scheduler}{\initstate}\left(\eval{\chbounde (\alpha)}{\sizervar_{in}^\alpha(\cdot)}\right) \\
		{}\leq{} & \eval{\incsize^\alpha_{\mathbb{E}}(x) + \incsize^\alpha(\chbounde (\alpha))}{\initstate}                                                                                           \\
		{}={}    & \eval{\sbounde' (\alpha)}{\initstate}
	\end{align*}
\end{proof}

	\subsubsection{Proofs for Expected Size Bounds of Non-Trivial SCCs}
        It remains to prove \cref{theorem:expectednontrivialsizeboundsmeth} on expected
        size bounds for non-trivial SCCs of the general result variable graph.
	Let $C \subseteq \GRV$ form a fixed non-trivial SCC of the general result variable graph.
Furthermore, we fix an initial state $\initstate$.
To abbreviate notation, we write $\GT_C = \{g \in \GT \mid (g,\_,\_) \in C\}$, $\CONF_{C} = \{(\loc,t,\_) \mid t \in g, (g, \loc,\_) \in C\}$ and $\CONF_{\neg C} = \CONF \setminus \CONF_{C}$.

Moreover, for any $x \in \PV$ we define the random variable
\[
	\sizervar_C(x)\is \sup_{\alpha' = (\_,\_,x) \in C} \sizervar(\alpha').
\]
So for any $\run \in \RUNS$, $\sizervar_C(x)(\run)$ is the maximal absolute value that $x$ takes in a location $\loc$ in $\run$ that was entered with $g$ for general result variables $(g, \loc, x)$ from $C$.
Then we immediately get $\sizervar(\alpha') \leq \sizervar_C(x)$ for each $\alpha' = (\_,\_,x) \in C$, so in particular
\begin{equation}
	\label{observe3}
	\expv{\pip}{\scheduler}{\initstate}\left(\sizervar(\alpha')\right) \leq \expv{\pip}{\scheduler}{\initstate}\left(\sizervar_C(x)\right) \quad \text{for all $\alpha' = (\_,\_,x) \in C$,}
\end{equation}
by monotonicity of the expected value operator.

We will fix $x \in \PV$ from now on.
To over-approximate the expected value of $\sizervar_C(x)$, in analogy to the size random variable $\sizervar(\alpha)$ for $\alpha \in \GRV$, we define a random variable $S(x)_n$ corresponding to the maximal size of the variable $x$ after evaluating general transitions $g \in \GT_C$ at most $n$ times.
By definition of $\CONF_C$, we also take the locations into account, i.e., we only consider evaluations of $g \in \GT_C$ ending in a location $\loc$ such that $(g,\loc,\_) \in C$.
The special case $S(x)_0$ defaults to the value before the first such general transition is executed.
The underlying idea is similar to the stochastic process $R_{i,j}$ which we defined in \cref{definition:rsm_stproc} to prove the correctness of our technique for inferring expected time bounds.

By $I(\run)\subseteq \NN$ we denote the set containing all indices corresponding to a configuration from $\CONF_C$ in a run $\run\in\RUNS$ which we will make more formal in the following definition.

\begin{definition}[Indices for $\CONF_C$]
	Let $\run=\prefix{c_0 \, c_1 \,\cdots} \in\RUNS$.
	Then we define
	\[
		I (\run) = \{i \in \NN \mid c_i \in \CONF_C\}. \]
	The $k$-th index (for $k>0$) is then given by:
	\[
		I_k (\run) =
		\begin{cases}
			\min (\cuts^{k-1} (I (\run))), & \text{if $\cuts^{k-1} (I (\run)) \neq \emptyset$} \\
			0,                             & \text{otherwise}
		\end{cases}
	\]
	with $\cuts(X) = X\setminus\{\min X\}$, i.e., $I_k (\run)$ is the $k$-th-smallest element of $I (\run)$.

	As usual, for some finite path $f$ we define $I_k(f)=i_k \in \NN$ iff for all runs $ \run = \prefix{f \cdots}$ we have $I_k(\run)=i_k$.
\end{definition}
It is worth noting that in any admissible run $\run$ we have $t_0=\tin$, i.e., if $I(\run)$ is non-empty, then its minimal element is \emph{positive}.

Moreover, we can now define the function $S(x)_n$ mapping a run $\run$ to the maximal absolute value that $x$ takes in the first $n$ configurations from $\CONF_C$ and the function $S(x)_0$ returning the size immediately before entering the SCC $C$.

\begin{definition}[$S(x)_0,S(x)_n$ for Runs]
	\label{definition:expnontrivsnfinitepath}
	For a run $\run = \prefix{(\initloc,t_0,\initstate) \cdots}$ and $n \in \NN_{\geq 1}$ we define:
	\begin{align*}
		S(x)_0 (\run) & =
		\begin{cases}
			\abs{\state_{I_1 (\run) - 1}(x)}, & \text{if } I_1(\run) > 0 \\
			0,                                & \text{otherwise}
		\end{cases}
		\\
		S(x)_n (\run) & =
		\begin{cases}
			\max (S_{n-1}(x)(\run),\abs{\state_{I_n(\run)}(x)}), & \text{if } I_n(\run) > 0 \\
			S_{n-1} (x) (\run),                                  & \text{otherwise}         \\
		\end{cases}
	\end{align*}
\end{definition}
By construction, for every run $\run$ we have
\begin{align}
	S(x)_n(\run) \leq S_{n+1}(x)(\run) \label{eq:monotonicity}
\end{align}
and
\begin{align}
	\sizervar_C(x)(\run) \leq \lim_{n \to \infty} S(x)_n(\run) \label{eq:limit_size}
	.
\end{align}
Note that in general we do not have equality in \eqref{eq:limit_size}
since $\sizervar_C(x)$ does \emph{not} take the value \emph{before} entering $\CONF_C$ into account.

We generalize the just defined random variables to finite paths in the standard way.
\begin{definition}[$S(x)_n$ for Finite Paths]
	\label{definition:expnontrivsnruns}
	The functions $S(x)_n$ for $n \in \NN$ are generalized from runs to finite paths $f\in \FPATH$ as follows:

	\begin{align*}
		S(x)_n (f) & =
		\begin{cases}
			z, & \text{if } S(x)_n (\run) = z \neq 0 \text{ for all runs } \run = \prefix{f \cdots} \\
			0, & \text{otherwise}
		\end{cases}
	\end{align*}
	for all $n \in \NN$.
\end{definition}
We will now define a filtration $(\CF_n)_{n \in \NN}$ such that the stochastic process $(S(x)_n)_{n \in \NN}$ becomes adapted to this filtration.
However, there is one thing which we have to keep in mind when doing so.

By \cref{def:general_result_variable_graph}, the set $\CONF_C$ can only be entered once.
So, if in an \emph{admissible} run $\run$ a configuration $c' \in \CONF_C$ is followed by some configuration $c \in \CONF_{\neg C}$ then we know that in the remainder of $\run$ there cannot occur another configuration from $\CONF_C$ afterwards.
Hence, in an admissible run $\run$, the part related to executions in $\CONF_C$ is an infix of $\run$.
So, we now distinguish the runs by the ``shape'' of the infix belonging to the first execution in $\CONF_C$ (remember that admissibility is related to a probability measure which we do not know yet).
Therefore, for some $n>0$ and $f = c_1\cdots c_n \in \CONF^n$ consider the set $\Inf{f}$ defined by
\[
	\Inf{f} \is \left\{ \run \in \RUNS \mid \run = \prefix{\cdots c_1 \cdots c_n \cdots}\right\}
\]
i.e., the set $\Inf{f}$ consists exactly of the runs in which $f$ occurs as an infix.

However, taking $f\neq f' \in \CONF_{C}^n$ then does \emph{not} yield $\Inf{f} \cap \Inf{f'} = \emptyset$.
The problem that arises here is that whereas in admissible runs $\CONF_C$ is only entered once, there are still non-admissible runs in which $\CONF_C$ can be entered multiple times which then are contained in the intersection above.
Furthermore, this intersection contains (for example) $\Inf{ff'}$, which can also contain admissible runs.
Nevertheless, we are only interested in the \emph{first} occurrence of $\CONF_C$, i.e., for $f = c_1\cdots c_n \in \CONF_{C}^n$ we are interested in
\begin{equation}
	\label{PrefC old}
	\PrefC{f}{C} \is \left\{ \run \in \Inf{f} \mid c_i = \run[I_1(\run) + i - 1] \text{ for } 1 \leq i \leq n\right\}
\end{equation}
Then, taking $f\neq f' \in \CONF_{C}^n$ indeed yields $\PrefC{f}{C} \cap \PrefC{f'}{C} = \emptyset$.
Thus, it seems natural to define $\CF_n$ as the smallest $\sigma$-field containing all $\PrefC{f}{C}$ for $f \in \CONF_{C}^n$ for some fixed $n>0$.

However, in this case we ignore that the complement of $\biguplus_{f \in \CONF_{C}^n} \PrefC{f}{C}$ is more fine-grained, since it does not only contain those runs where $\CONF_C$ is not visited at all but it also contains the runs in which the infix belonging to the first occurrence of $\CONF_C$ has a length $0 < k < n$.
Hence, we also include these runs by adapting our definition of $\PrefC{f}{C}$.
Moreover, the definition of $\PrefC{f}{C}$ in \eqref{PrefC old} also has the drawback that we do \emph{not} know the configuration before entering $\CONF_C$, i.e., we cannot determine the value of $S(x)_0$ and thus, we cannot determine the value of any of the $S(x)_n$.
To abbreviate notation, we first define
\[
	A_{\neg C}\is \{\run \in \RUNS \mid \run=\prefix{c_0 c_1 \cdots} \land c_k \in \CONF_{\neg C} \text{ for all } k \in \NN\}, \]
i.e., $A_{\neg C}$ is the set of runs that never visit $\CONF_C$.

\begin{definition}[Prefix Sets for $\CONF_C$]
	\label{Prefix Sets for CONF}
	Let $c_0 \in \CONF_{\neg C}$ and let $f=c_1 \cdots c_n \in \CONF_{C}^k \, \CONF_{\neg C}^{n-k}$ for some $0 < k \leq n$.
	We then define
	\[
		\PrefC{c_0f}{C} \is \bigl\{ \run \in \Inf{c_0f} \mid \\
		c_i = \run[I_1(\run)+i-1] \text{ for } 0 \leq i \leq n \bigr\}
	\]
\end{definition}
Intuitively, $\PrefC{c_0f}{C}$ is the set of runs where the first visit of $\CONF_C$ is entered from $c_0$ and right after $c_0$ the infix part $f$ occurs.
Now we have the right concepts at hand to define the filtration.
\begin{definition}[Filtration for Size Process]
	\label{def:size_filtration}
	For $n>0$ we define
	\[
		\CF_{n} \is \sigmagen{\left\{\PrefC{c_0f}{C} ~\middle \vert~ f \in \CONF_{C}^k\,\CONF_{\neg C}^{n-k}, 0 < k \leq n \land c_0 \in \CONF_{\neg C}\right\} \cup \{A_{\neg C}\}}
	\]
	and
	\[
		\CF_0 \is \sigmagen{\left\{\biguplus_{c_1 \in \CONF_{C}}\PrefC{c_0c_1}{C} \mid c_0 \in \CONF_{\neg C} \right\} \cup \{A_{\neg C}\}}.
	\]
\end{definition}
In this way, whenever $f ,f' \in \CONF_{C}^k\,\CONF_{\neg C}^{n-k}$ and $c_0, c'_0 \in \CONF$ with $(f,c_0) \neq (f',c'_0)$ we have $\PrefC{c_0f}{C} \cap \PrefC{c'_0f'}{C} = \emptyset$.
Furthermore, we have for any $n > 0$
\[
	\biguplus_{k=1}^{n}\quad \biguplus_{f \in \CONF_{C}^k\,\CONF_{\neg C}^{n-k}} \PrefC{f}{C} = \RUNS\setminus A_{\neg C}. \]

For $n \in \NN$, the $\sigma$-field $\CF_n$ captures exactly the information that we need to evaluate $S_n(x)$ on a given $\run$.
Intuitively, two runs $\run_1, \run_2 \in \RUNS$ are distinguishable in $\CF_n$ for $n > 0$, if they do not coincide on the infix of length $n+1$ starting from the point directly before entering $\CONF_C$.
Here, $\CF_0$ is a special case: it can only distinguish between two runs whenever their first visits of $\CONF_C$ are entered from different configurations.

\begin{lemma}[$(S(x)_n)_{n \in \NN}$ is Adapted to the Filtration $(\CF_n)_{n \in \NN}$]
	\label{S(x) Adapted}
	Consider $(\CF_n)_{n \in \NN}$ as defined in \cref{def:size_filtration}.

	\begin{enumerate}[a)]
		\item The sequence of $\sigma$-fields $(\CF_n)_{n \in \NN}$ forms a filtration of $\CF$.

		\item The random variable $S(x)_n$ is $\CF_n$-measurable.
	\end{enumerate}
\end{lemma}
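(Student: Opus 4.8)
The plan is to reduce both claims to combinatorics on the countable generating systems, via the fact that a $\sigma$-field generated by a countable partition of the sample space equals the collection of countable unions of partition blocks, and has exactly those blocks as its atoms (\cref{lemma:sigma_field_countable_cover}, \cref{lemma:countable_cover_atoms}). So first I would record that, for every $n$, the sets $\PrefC{c_0 f}{C}$ from \cref{def:size_filtration} together with $A_{\neg C}$ and the ``remainder'' set $G_n$ of all runs that visit $\CONF_C$ but whose first visit is non-contiguous within the window of length $n$ (these are precisely non-admissible runs, since admissible runs enter $\CONF_C$ only once) are countable in number, pairwise disjoint, and cover $\RUNS$. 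Hence every element of $\CF_n$ is a countable disjoint union of these blocks, and the blocks, $A_{\neg C}$, and $G_n$ are exactly the atoms of $\CF_n$.

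For part (a): the inclusion $\CF_n\subseteq\CF$ is immediate, since each $\PrefC{c_0 f}{C}$ and $A_{\neg C}$ is built from cylinder sets by countably many Boolean operations and thus lies in $\CF$. For $\CF_n\subseteq\CF_{n+1}$ I would show that every atom of $\CF_n$ is a countable disjoint union of atoms of $\CF_{n+1}$: the set $A_{\neg C}$ is already an atom of $\CF_{n+1}$; a block $\PrefC{c_0 f}{C}$ is split according to the value of the $(n+1)$-st configuration $\run[I_1(\run)+n]$ of the first $\CONF_C$-visit, and each piece is either again a block $\PrefC{c_0\,(f\, c_{n+1})}{C}$ of $\CF_{n+1}$ or is absorbed into $G_{n+1}$; finally $G_n\in\CF_{n+1}$ because $G_{n+1}$ decomposes as the disjoint union of $G_n$ with precisely the pieces absorbed in the previous step. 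This mirrors the filtration proof for the processes $R_{i,j}$ in \cref{lemma:rsm_validfiltration}.

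For part (b): as in \cref{lemma:rsm_stprocesslongprefix}, $S(x)_n$ takes only countably many values, and on any run that visits $\CONF_C$ contiguously its value is already determined by the finite infix $\run[I_1(\run)-1]\cdots\run[I_1(\run)+n-1]$ (cf.\ \cref{definition:expnontrivsnfinitepath}); hence $S(x)_n$ is constant on every block $\PrefC{c_0 f}{C}$ -- equal to the maximum of the pertinent absolute variable values read off from $c_0$ and $f$ -- and $0$ on $A_{\neg C}$. Consequently, for each $z>0$ the preimage $S(x)_n^{-1}(\{z\})$ is the countable union of the blocks on which $S(x)_n\equiv z$, and $S(x)_n^{-1}(\{0\})$ is the complement of the union over $z>0$; all of these lie in $\CF_n$, which gives $\CF_n$-measurability, i.e.\ adaptedness (the analogue of \cref{lem:adaptedness}). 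The base case $n=0$ is trivial: on the $\CF_0$-generator $\biguplus_{c_1}\PrefC{c_0 c_1}{C}$ one has $S(x)_0\equiv|c_0(x)|$, and $S(x)_0\equiv 0$ on $A_{\neg C}$.

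The step I expect to be the main obstacle is not any single estimate but getting the bookkeeping right for runs that are \emph{not} admissible -- concretely, those that leave $\CONF_C$ and re-enter it -- on which $S(x)_n$ ceases to be a function of the length-$(n+1)$ infix. One has to make sure these runs are consistently gathered into the single remainder atom $G_n$ (equivalently, one may restrict attention to the set of runs visiting $\CONF_C$ contiguously, which has full measure under every $\pipmeasure{\pip}{\scheduler}{\initstate}$ and hence is harmless for the expected-value estimates that follow). Once this is handled, the argument is elementary manipulation of finite paths, entirely parallel to the earlier treatment of the processes $R_{i,j}$ and $E_i$.
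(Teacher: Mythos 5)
Your proposal follows essentially the same route as the paper's proof: part (a) is reduced to checking the generators, with $A_{\neg C}\in\CF_{n+1}$ by construction and each block $\PrefC{c_0f}{C}$ refined by the next configuration via $\PrefC{c_0f}{C}=\biguplus_{c\in\CONF}\PrefC{c_0fc}{C}$, and part (b) follows because $S(x)_n$ is determined by (indeed constant on) each block, so its preimages are countable unions of generators, with $S(x)_0$ handled separately exactly as you describe. The only difference is that you make the remainder set $G_n$ of runs re-entering $\CONF_C$ explicit, whereas the paper silently treats the corresponding pieces as null sets of non-admissible runs; this is a bookkeeping refinement of the same argument rather than a different proof.
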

\begin{proof}
	We prove the two statements separately.

	\begin{enumerate}[a)]
		\item Obviously, it is enough to prove $\CF_n \subseteq \CF_{n+1}$ for all $n>0$ since the case $n=0$ is trivial by construction.
		      It is enough to prove that each of the generators given in \cref{def:size_filtration} for $\CF_{n}$ is contained in $\CF_{n+1}$.
		      By construction, $A_{\neg C} \in \CF_{n+1}$.
		      So, let $c_1\cdots c_n \in \CONF_{C}^k \,\CONF_{\neg C}^{n-k}$ for some $0 < k \leq n$ and $c_0 \in \CONF_{\neg C}$.
		      Then we have
		      \[
			      \PrefC{c_0f}{C} = \biguplus_{c \in \CONF} \PrefC{c_0fc}{C} \in \CF_{n+1}, \]
		      so we indeed have $\CF_n \subseteq \CF_{n+1}$.

		\item To prove that $S(x)_n$ is $\CF_n$-measurable we proceed as follows.
		      First of all, $S(x)_n$ is a discrete random variable, i.e., its image is countable.

		      Secondly, if $n = 0$ then for any $z \in \NN$ we have

		      \begin{align*}
			      S(x)_0^{-1}(\{z\}) = \biguplus_{(\_,\_,\state_0)=c_0 \in \CONF_{\neg C}, \abs{\state_0(x)} = z} \left(\biguplus_{c_1 \in \CONF_C} \PrefC{c_0c_1}{C}\right) \in \CF_0.
		      \end{align*}

		      Thirdly, to evaluate $S(x)_n$ for $n>0$ on some $\run \in \RUNS$ we need to know the states in the run which occur at the positions $I_1(\run)-1, I_1(\run),\ldots, I_1(\run)+n-1$.
		      But this is exactly the information provided by $\CF_n$.
		      So, we have for $z \in \NN_{\geq 1}$
		      \begin{align*}
			            & (S(x)_n)^{-1}(\{z\})                                                                                                                        \\
			      {}={} & \biguplus_{c_0 \in \CONF_{\neg C}}\quad\biguplus_{k = 1}^n\quad\biguplus_{\substack{f = c_1\cdots c_n \in \CONF_{C}^k\,\CONF_{\neg C}^{n-k} \\
					      \max\{\abs{\state_0(x)},\ldots,\abs{\state_k(x)}\}=z}} \PrefC{c_0f}{C} \in \CF_n.
		      \end{align*}
	\end{enumerate}
\end{proof}
Still, in this section we want to over-approximate the expected value of $\sizervar_C(x)$.
We will do this by using the stochastic process $(S(x)_n)_{n \in \NN}$ and its adaptedness to the filtration $(\CF_{n})_{n \in \NN}$.

In \cref{eq:monotonicity} we have seen that the $S(x)_n$ form a monotonic sequence whose limit over-approximates $\sizervar_C(x)$ (see
\cref{eq:limit_size}).
So, to achieve our goal, we use the famous monotonic convergence theorem, which we will restate here.

\begin{theorem}[Monotonic Convergence Theorem, see e.g., \protect{\cite[Thm.\ 5.6.12]{grimmett2001probability}}]
	Let $(X_n)_{n \in \NN}$ be a non-negative, monotonic sequence of random variables on some probability space $(\Omega, \mathcal{A},\mathbb{P})$. Then \[\lim_{n \to \infty} \expvsign(X_n) = \expvsign\left( \lim_{n \to \infty} X_n \right).
	\]
\end{theorem}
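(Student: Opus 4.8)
The plan is to prove the two inequalities $\lim_{n\to\infty}\expvsign(X_n) \leq \expvsign(\lim_{n\to\infty} X_n)$ and $\lim_{n\to\infty}\expvsign(X_n) \geq \expvsign(\lim_{n\to\infty} X_n)$ separately, treating the case of a non-decreasing sequence (the only case used in our development, since the processes $(S(x)_n)_{n}$ and the sets $(A_m)_m$ satisfy $X_n \leq X_{n+1}$ pointwise). Write $X = \lim_{n\to\infty} X_n = \sup_{n \in \NN} X_n$; this is again a non-negative measurable function $\Omega \to \realclosure$, so $\expvsign(X) = \int X \, d\mathbb{P}$ is well defined.

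For the easy direction I would argue by monotonicity of the integral. Since $X_n \leq X_{n+1} \leq X$ pointwise, and since $X = X_n + (X - X_n)$ with both summands non-negative and measurable, \cref{lemma:properties_of_integral} (Linearity) yields $\expvsign(X) = \expvsign(X_n) + \expvsign(X - X_n) \geq \expvsign(X_n)$, and likewise $\expvsign(X_n) \leq \expvsign(X_{n+1})$. Hence $(\expvsign(X_n))_{n}$ is a non-decreasing sequence in $\realclosure$ bounded above by $\expvsign(X)$; its limit exists and satisfies $\lim_{n\to\infty}\expvsign(X_n) \leq \expvsign(X)$.

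The reverse inequality is the main obstacle. Here I would exploit that, by the theorem on \emph{Representation by Elementary Functions}, the integral of a non-negative measurable function equals the supremum of the integrals of \emph{any} monotonic sequence of elementary functions converging to it. First, for each $n$ I would fix a non-decreasing sequence of elementary functions $(f_{n,k})_{k \in \NN}$ with $\sup_k f_{n,k} = X_n$. Then I would form the diagonal $g_k = \max\{f_{1,k}, \ldots, f_{k,k}\}$, which is again elementary, as a maximum of finitely many elementary functions takes only finitely many finite values. I would then verify three facts: (i) $(g_k)_k$ is non-decreasing, using $f_{i,k} \leq f_{i,k+1}$; (ii) $g_k \leq X_k$, using $f_{i,k} \leq X_i \leq X_k$ for $i \leq k$; and (iii) $\sup_k g_k = X$, where ``$\geq$'' follows since for each fixed $i$ one has $\sup_k g_k \geq \sup_{k \geq i} f_{i,k} = X_i$, hence $\sup_k g_k \geq \sup_i X_i = X$, while ``$\leq$'' follows from (ii). By (iii) and the definition of the integral, $\expvsign(X) = \sup_k \expvsign(g_k)$, and by (ii) together with monotonicity $\expvsign(g_k) \leq \expvsign(X_k)$. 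Combining, $\expvsign(X) = \sup_k \expvsign(g_k) \leq \sup_k \expvsign(X_k) = \lim_{k\to\infty}\expvsign(X_k)$.

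Together the two directions give $\lim_{n\to\infty}\expvsign(X_n) = \expvsign(X) = \expvsign(\lim_{n\to\infty}X_n)$, as claimed. The delicate points on which I expect to spend the most care are the well-definedness step --- namely that I am entitled to use the single diagonal sequence $(g_k)$ when evaluating $\int X \, d\mathbb{P}$, which is exactly what the \emph{Representation by Elementary Functions} theorem licenses --- and the interchange of the two suprema in (iii). The latter is precisely where the monotonicity hypothesis is indispensable: for a merely pointwise-convergent (non-monotonic) sequence the diagonal would fail to satisfy (iii), and indeed the statement itself would be false.
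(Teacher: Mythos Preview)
Your proof is correct, but note that the paper does not provide its own proof of this theorem: it is stated as a standard result from measure theory with a citation to \cite[Thm.~5.6.12]{grimmett2001probability} and used as a black box to derive \cref{coro:size_limit}. There is therefore nothing in the paper to compare your argument against.

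That said, your argument is one of the classical proofs of the Monotone Convergence Theorem and is entirely sound. The diagonal construction $g_k = \max\{f_{1,k},\ldots,f_{k,k}\}$ produces a single non-decreasing sequence of elementary functions with $\sup_k g_k = X$, which is precisely what the \emph{Representation by Elementary Functions} theorem requires to evaluate $\int X\,d\mathbb{P}$; your verification of (i)--(iii) is correct, and the two points you flag as delicate (the legitimacy of using $(g_k)_k$ to compute $\expvsign(X)$, and the interchange of suprema in (iii)) are handled properly. Since the paper only needs the result, not its proof, your contribution here goes beyond what the paper requires.
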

Using this theorem, from \eqref{eq:limit_size} we directly obtain the following corollary for our process $(S(x)_n)_{n \in \NN}$.

\begin{corollary}[Over-Approximating the Expected Value of $\sizervar_C(x)$]
	\label{coro:size_limit}
	Let $\initstate \in \STATE$ and $\scheduler$ a scheduler. Then we have \[\expv{\pip}{\scheduler}{\initstate}\left(\sizervar_C(x)\right) \leq \lim_{n \to \infty} \expv{\pip}{\scheduler}{\initstate}\left(S(x)_n\right).
	\]
\end{corollary}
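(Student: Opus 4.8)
The plan is to derive the inequality directly from the Monotonic Convergence Theorem together with the two pointwise facts \eqref{eq:monotonicity} and \eqref{eq:limit_size} that have already been established. First I would verify the hypotheses of the Monotonic Convergence Theorem for the sequence $(S(x)_n)_{n\in\NN}$ on the probability space $(\RUNS,\CF,\pipmeasure{\pip}{\scheduler}{\initstate})$: non-negativity of each $S(x)_n$ is immediate from \cref{definition:expnontrivsnfinitepath,definition:expnontrivsnruns}; $\CF$-measurability follows from \cref{S(x) Adapted}~b), since each $S(x)_n$ is $\CF_n$-measurable and $(\CF_n)_{n\in\NN}$ is a filtration of $\CF$ by \cref{S(x) Adapted}~a); and monotonicity of the sequence on every run is exactly \eqref{eq:monotonicity}. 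Hence $(S(x)_n)_{n\in\NN}$ is a non-negative monotonic sequence of random variables, its pointwise limit $\lim_{n\to\infty}S(x)_n \colon \RUNS \to \realclosure$ exists and is again $\CF$-measurable, and the Monotonic Convergence Theorem yields
\[
\lim_{n\to\infty}\expv{\pip}{\scheduler}{\initstate}\bigl(S(x)_n\bigr) \;=\; \expv{\pip}{\scheduler}{\initstate}\Bigl(\lim_{n\to\infty}S(x)_n\Bigr).
\]

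Next I would combine this with \eqref{eq:limit_size}, which states $\sizervar_C(x)(\run)\leq \lim_{n\to\infty}S(x)_n(\run)$ for every $\run\in\RUNS$. By monotonicity of the expected value operator, this pointwise inequality gives $\expv{\pip}{\scheduler}{\initstate}(\sizervar_C(x)) \leq \expv{\pip}{\scheduler}{\initstate}(\lim_{n\to\infty}S(x)_n)$, and chaining the two relations proves the claim. It is worth recording that $\expv{\pip}{\scheduler}{\initstate}(\sizervar_C(x))$ is well defined because $\sizervar_C(x) = \sup_{\alpha'=(\_,\_,x)\in C}\sizervar(\alpha')$ is a finite supremum of the measurable size random variables $\sizervar(\alpha')$, hence itself measurable; and the limit on the right-hand side exists in $\realclosure$ since the sequence of expectations $\bigl(\expv{\pip}{\scheduler}{\initstate}(S(x)_n)\bigr)_{n\in\NN}$ is monotonically increasing, again by \eqref{eq:monotonicity} and monotonicity of the expectation.

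I expect no substantial obstacle: the bulk of the work has already been invested in setting up the filtration and the adaptedness of $(S(x)_n)_{n\in\NN}$ (\cref{S(x) Adapted}) and in the two pointwise bounds, so the corollary is only a bookkeeping application of the Monotonic Convergence Theorem. The single point that deserves a sentence of care is the exchange of limit and expectation, which is legitimate here precisely because the sequence is monotonic and non-negative — for a general, non-monotonic sequence such an exchange would in principle fail.
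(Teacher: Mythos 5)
Your proposal is correct and follows exactly the route the paper takes: the paper derives this corollary directly from the Monotonic Convergence Theorem applied to the non-negative monotonic sequence $(S(x)_n)_{n\in\NN}$ together with the pointwise bound \eqref{eq:limit_size}, which is precisely your argument. The extra care you take in checking measurability via \cref{S(x) Adapted} and non-negativity is sound bookkeeping that the paper leaves implicit.
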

We will now over-approximate the expected value $\expv{\pip}{\scheduler}{\initstate}\left(S(x)_n\right)$ in such a way that we can over-approximate the limit $\lim_{n \to \infty} \expv{\pip}{\scheduler}{\initstate}\left(S(x)_n\right)$.
Since $S(x)_n$ is $\CF_n$-measurable, we get the following relation between $S(x)_n$ and $S(x)_{n+1}$ for all $n \geq 0$:

\begin{align}
	\nonumber      & \expv{\pip}{\scheduler}{\initstate}\left(S(x)_{n+1}\right)                                                                                                                                        \\
	\nonumber	{}={} & \expv{\pip}{\scheduler}{\initstate}\left(\expv{\pip}{\scheduler}{\initstate}\left(S(x)_{n+1} \mid \CF_n\right)\right)\tag{by \cref{lemma:expected_value_does_not_change}}                         \\
	\nonumber	{}={} & \expv{\pip}{\scheduler}{\initstate}\left(\expv{\pip}{\scheduler}{\initstate}\left(S(x)_{n+1} + S(x)_n - S(x)_n \mid \CF_n\right)\right)                                                           \\
	\nonumber	{}={} & \expv{\pip}{\scheduler}{\initstate}\left(\expv{\pip}{\scheduler}{\initstate}\left(S(x)_n\mid \CF_n\right) + \expv{\pip}{\scheduler}{\initstate}\left(S(x)_{n+1} - S(x)_n \mid \CF_n\right)\right) \\
	{}={}          & \expv{\pip}{\scheduler}{\initstate}\left(S(x)_n + \expv{\pip}{\scheduler}{\initstate}\left(S(x)_{n+1} - S(x)_n \mid \CF_n\right)\right). \label{reformulation1}
\end{align}
The last step is done by \cref{app_thm:propert_conditional_expectation} (b), since $\expv{\pip}{\scheduler}{\initstate}(S(x)_n)$ is $\CF_n$-measurable by \cref{S(x) Adapted} (b).

So, to begin with, we over-approximate $\expv{\pip}{\scheduler}{\initstate}\left(S(x)_{n+1} - S(x)_n \mid \CF_n\right)$.
Note that this conditional expected value exists since $S(x)_{n+1} \geq S(x)_n$, i.e., $S(x)_{n+1} - S(x)_n$ is non-negative and $S(x)_n$ always takes a finite value for every $n \in \NN$ (see \cite[Prop.\ 3.1]{lexrsm}).

To do so, we introduce the following random variables.
Let $n > 0$ and let $\run \in \RUNS$ be some run.
We define
\begin{align*}
	      & \left(\sum_{\alpha = (g_{I_{n+1}},\_,x) \in C} \eval{\eff{\alpha}}{\initstate}\right) (\run) \\
	{}={} &
	\begin{cases}
		\sum_{\alpha = (g,\_,x) \in C} \eval{\eff{\alpha}}{\initstate}, & \text{if
		} \run = \prefix{(\_,\_,\state_0) \cdots},I_{n+1}(\run) > 0,                                                            \\
		                                                                & \run[I_{n+1}(\run)] = (\_,t,\_) \text{ with } t \in g \\
		0,                                                              & \text{otherwise}
	\end{cases}
\end{align*}
This expression defines a sequence of random variables.
It approximates the expected change of $x$ that can happen when entering the $(n+1)$-th configuration of $C$ during the run $\run$.
Using this definition we can state the following lemma.

\begin{lemma}
	\label{lem:cond_exp_diff}
	For any $n \in \NN$ we have
	\[
		\expv{\pip}{\scheduler}{\initstate}\left(S(x)_{n+1} - S(x)_n \mid \CF_n\right) \leq \expv{\pip}{\scheduler}{\initstate}\left(\sum_{\alpha = (g_{I_{n+1}},\_,x) \in C} \eval{\eff{\alpha}}{\initstate} \middle\vert\ \CF_n\right)
	\]
\end{lemma}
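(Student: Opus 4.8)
The plan is to invoke \cref{lem:upper_bound_cond_exp}. In the construction of the filtration $(\CF_n)_{n\in\NN}$ it was shown that each $\CF_n$ is generated by a countable partition of $\RUNS$: for $n>0$ the sets $\PrefC{c_0f}{C}$ with $c_0\in\CONF_{\neg C}$ and $f\in\CONF_{C}^{k}\,\CONF_{\neg C}^{n-k}$ for some $0<k\le n$, together with $A_{\neg C}$; for $n=0$ the analogous sets $\biguplus_{c_1\in\CONF_{C}}\PrefC{c_0c_1}{C}$ together with $A_{\neg C}$. Hence it suffices to prove, for every such generator $G$,
\[
  \expv{\pip}{\scheduler}{\initstate}\!\bigl((S(x)_{n+1}-S(x)_n)\cdot\ind_G\bigr)\;\le\;\expv{\pip}{\scheduler}{\initstate}\!\Bigl(\bigl(\textstyle\sum_{\alpha=(g_{I_{n+1}},\_,x)\in C}\eval{\eff{\alpha}}{\initstate}\bigr)\cdot\ind_G\Bigr).
\]
For the right-hand random variable I will use that it is constant (equal to $\sum_{\alpha=(g,\_,x)\in C}\eval{\eff{\alpha}}{\initstate}$) on every run whose $(n{+}1)$-st visit to $\CONF_C$ uses a transition from $g$, and $0$ on runs with $I_{n+1}=0$.

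First I would dispose of the generators on which nothing happens. On $A_{\neg C}$ no run ever enters $\CONF_C$, so $I_{n+1}=0$ and both sides vanish. On $G=\PrefC{c_0f}{C}$ with $f\in\CONF_{C}^{k}\,\CONF_{\neg C}^{n-k}$ and $k<n$, every admissible run has already left $\CONF_C$ after at most $n$ steps; since by \cref{def:general_result_variable_graph} $\CONF_C$ can be entered at most once, as a contiguous infix, we get $I_{n+1}(\run)=0$ for admissible $\run\in G$, so $S(x)_{n+1}=S(x)_n$ on $G$ up to a null set and the right-hand random variable vanishes on $G$; both sides are $0$.

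The main case is $G=\PrefC{c_0f}{C}$ with $f=c_1\cdots c_n\in\CONF_{C}^{n}$ (for $n=0$ the generator $\biguplus_{c_1\in\CONF_C}\PrefC{c_0c_1}{C}$ is handled verbatim with $c_0$ playing the role of $c_n$); write $c_n=(\loc_n,t_n,\state_n)$. I would decompose $G=\biguplus_{c\in\CONF}\PrefC{c_0fc}{C}$, the summands with $c\in\CONF_{\neg C}$ again contributing $0$. For $c=(\loc,t,\state)\in\CONF_C$, on $\PrefC{c_0fc}{C}$ the configurations $c_1,\dots,c_n$ occupy the consecutive positions $I_1(\run),\dots,I_n(\run)$ and $c$ occupies $I_{n+1}(\run)=I_n(\run)+1$; since the position $I_{n+1}(\run)-1=I_n(\run)$ is counted in $S(x)_n$, we have $S(x)_n\ge|\state_n(x)|$, so $S(x)_{n+1}-S(x)_n=\max\{0,|\state(x)|-S(x)_n\}\le|\state(x)-\state_n(x)|$, and the right-hand random variable equals the constant $\sum_{\alpha=(g,\_,x)\in C}\eval{\eff{\alpha}}{\initstate}$ with $t\in g$. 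Writing $\PrefC{c_0fc}{C}=\biguplus_h\Pre{h\,c_1\cdots c_n\,c}$ over finite paths $h\in\CONF_{\neg C}^{*}$ ending in $c_0$, we get $\pipmeasure{\pip}{\scheduler}{\initstate}(\PrefC{c_0fc}{C})=\bigl(\sum_h\ptransition{\pip}{\scheduler}{\initstate}(h\,c_1\cdots c_n)\bigr)\cdot\ptransition{\pip}{\scheduler}{\initstate}(c_n\to c)$. Summing over $c\in\CONF_C$, factoring out $\sum_h\ptransition{\pip}{\scheduler}{\initstate}(h\,c_1\cdots c_n)$, and grouping the configurations $c$ by their location $\loc$ (the deterministic scheduler picks a unique general transition $g=g_{I_{n+1}}$ at $c_n$), \cref{theorem:elsb_basic} applied with the admissible configuration $c'=c_n$ bounds $\sum_{c\in\CONF_C}\ptransition{\pip}{\scheduler}{\initstate}(c_n\to c)\,|\state(x)-\state_n(x)|$ by $\sum_\loc\eval{\chbounde(g,\loc,x)}{\state_n}$.

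It remains to replace $\eval{\chbounde(g,\loc,x)}{\state_n}$ by $\eval{\eff{(g,\loc,x)}}{\initstate}$, and here I would use that a non-trivial SCC $C$ contains no initial general transition (a node $(g,\loc,x)$ with $g\in\GTINIT$ has no incoming edge, since $\pre(g)=\emptyset$), so $\loc_g$ has incoming transitions and $c_n$ — which the scheduler extends by $g$, whence $\loc_n=\loc_g$ — was reached by a transition $t_n=(\_,\_,\_,\_,\loc_g)$ of some $g_n\in\pre(g)$. Thus, by definition of the non-probabilistic size bound, $|\state_n(y)|\le\eval{\sbound(t_n,y)}{\initstate}\le\eval{\incsize^{(g,\loc,x)}(y)}{\initstate}$ for all $y\in\PV$, and since $\chbounde(g,\loc,x)\in\BOUND$ is weakly monotonically increasing, $\eval{\chbounde(g,\loc,x)}{\state_n}\le\eval{\incsize^{(g,\loc,x)}(\chbounde(g,\loc,x))}{\initstate}=\eval{\eff{(g,\loc,x)}}{\initstate}$. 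Re-summing the per-$c$ estimates, restoring the factor $\sum_h\ptransition{\pip}{\scheduler}{\initstate}(h\,c_1\cdots c_n)$ and using that on $\biguplus_{c\in\CONF_C}\PrefC{c_0fc}{C}$ the right-hand random variable equals the constant $\sum_{\alpha=(g,\_,x)\in C}\eval{\eff{\alpha}}{\initstate}$, yields the required inequality on $G$, and \cref{lem:upper_bound_cond_exp} finishes the proof. I expect the main obstacle to be precisely this measure-theoretic bookkeeping: expressing the sets $\PrefC{c_0fc}{C}$ as disjoint unions of cylinder sets so that their probabilities factor through $\ptransition{\pip}{\scheduler}{\initstate}$, correctly identifying which configuration is the $(n{+}1)$-st entry into $\CONF_C$, matching the index set of the sum over $\alpha=(g_{I_{n+1}},\_,x)\in C$ with the locations delivered by \cref{theorem:elsb_basic}, and the state-bounding step $|\state_n(y)|\le\eval{\incsize^{(g,\loc,x)}(y)}{\initstate}$ which requires that every admissible path entering $\CONF_C$ does so through a pre-transition of the general transition applied next.
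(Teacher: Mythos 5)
Your proof follows the paper's argument essentially step for step: reduction to the countable atomic generators of $\CF_n$ via \cref{lem:upper_bound_cond_exp}, disposal of the generators $A_{\neg C}$ and those with $k<n$ on which both sides vanish, decomposition of the remaining generator by the next configuration $c$, the bound $S(x)_{n+1}-S(x)_n\le\abs{\state(x)-\state_n(x)}$, factorization of the cylinder probabilities so that $\ptransition{\pip}{\scheduler}{\initstate}(c_n\to c)$ splits off, an application of \cref{theorem:elsb_basic} at the admissible configuration $c_n$, and finally the lift from $\eval{\chbounde(\alpha)}{\state_n}$ to $\eval{\eff{\alpha}}{\initstate}$ via the non-probabilistic size bounds of the pre-transitions. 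The only differences are cosmetic: you treat $n=0$ and $n>0$ uniformly where the paper separates them into two cases, and you spell out explicitly the final lifting step (using that a non-trivial SCC contains no initial general transition, so $c_n$ is reached by a transition of some $h\in\pre(g_{n+1})$), which the paper performs tacitly in the inequality from $\sum_{\alpha}\eval{\chbounde(\alpha)}{\state'_n}$ to $\sum_{\alpha}\eval{\eff{\alpha}}{\initstate}$.
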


\begin{proof}
	First note that the conditional expectation \[\expv{\pip}{\scheduler}{\initstate}\left(\sum_{\alpha = (g_{I_{n+1}},\_,x) \in C} \eval{\eff{\alpha}}{\initstate} \middle\vert\ \CF_n\right)
	\]
	is $\CF_n$-measurable by definition.
	We prove the result by analyzing the cases $n=0$ and $n>0$ separately.

	\paragraph{Case $n = 0$:}

	By \cref{lem:upper_bound_cond_exp}, we only need to consider the sets
	\[
		A_{c_0}\is \biguplus_{c_1 \in \CONF_{C}}\PrefC{c_0c_1}{C}
	\]
	for some fixed $(\_,\_,\state)=c_0 \in \CONF_{\neg C}$ as both sides are zero everywhere else.
	Moreover, w.l.o.g.\ we can assume that $A_{c_0}$ is not a null-set w.r.t.\ the chosen initial state $\initstate$ and scheduler $\scheduler$, i.e., $\pipmeasure{\pip}{\scheduler}{\initstate}\left(A_{c_0}\right) > 0$.
	We have to prove
	\begin{eqnarray*}
		\lefteqn{\expv{\pip}{\scheduler}{\initstate}\left(\left(S(x)_{1} - S(x)_0\right) \cdot \ind_{A_{c_0}}\right) \;\; \leq}\\
		&& \expv{\pip}{\scheduler}{\initstate}\left( \expv{\pip}{\scheduler}{\initstate}\left(\sum_{\alpha = (g_{I_{1}},\_,x) \in C} \eval{\eff{\alpha}}{\initstate} \middle\vert\ \CF_0\right)\quad \cdot \quad \ind_{A_{c_0}} \right).
	\end{eqnarray*}
	By the definition of conditional expected values (\cref{defQQQconditional_expectation}), $A_{c_0} \in \CF_0$ implies
	\begin{align*}
		      & \expv{\pip}{\scheduler}{\initstate}\left( \expv{\pip}{\scheduler}{\initstate}\left(\sum_{\alpha = (g_{I_{1}},\_,x) \in C} \eval{\eff{\alpha}}{\initstate} \middle\vert\ \CF_0\right)\quad \cdot \quad \ind_{A_{c_0}} \right) \\
		{}={} & \expv{\pip}{\scheduler}{\initstate}\left(\left(\sum_{\alpha = (g_{I_1},\_,x) \in C}
			\eval{\eff{\alpha}}{\initstate}\right) \cdot \ind_{A_{c_0}}\right).
	\end{align*}
	Thus, we now prove that
	\begin{eqnarray*}
		\lefteqn{\expv{\pip}{\scheduler}{\initstate}\left(\left(S(x)_{1} - S(x)_0\right) \cdot \ind_{A_{c_0}}\right) \;\;	\leq}\\
		&& \expv{\pip}{\scheduler}{\initstate}\left(\left(\sum_{\alpha = (g_{I_1},\_,x) \in C}
			\eval{\eff{\alpha}}{\initstate}\right) \cdot \ind_{A_{c_0}}\right).
	\end{eqnarray*}

	We have
	\begin{align*}
		         & \expv{\pip}{\scheduler}{\initstate}\left(\left(S(x)_{1} - S(x)_0\right) \cdot \ind_{A_{c_0}}\right)                                                 \\
		{}={}    & \expv{\pip}{\scheduler}{\initstate}\left(\left(S(x)_{1} - S(x)_0\right) \cdot \ind_{\biguplus_{c_{1} \in \CONF_C} \PrefC{c_0c_1}{C}}\right)         \\
		{}={}    & \expv{\pip}{\scheduler}{\initstate}\left(\sum_{c_{1} \in \CONF_C}\left(S(x)_{1} - S(x)_0\right) \cdot \ind_{\PrefC{c_0c_1}{C}}\right)               \\
		{}={}    & \expv{\pip}{\scheduler}{\initstate}\left(\sum_{c_{1}=(\_,\_,s_{1}) \in \CONF_C}\left(S(x)_{1} - S(x)_0\right) \cdot \ind_{\PrefC{c_0c_1}{C}}\right) \\
		{}={}    & \expv{\pip}{\scheduler}{\initstate}\left(\sum_{\substack{c_{1}=(\_,\_,s_{1}) \in \CONF_C                                                            \\
		\abs{s_{1}(x)} \geq \abs{\state(x)}}}\left(S(x)_{1} - S(x)_0\right) \cdot \ind_{\PrefC{c_0c_1}{C}}\right)                                                      \\
		{}={}    & \expv{\pip}{\scheduler}{\initstate}\left(\sum_{\substack{c_{1}=(\_,\_,s_{1}) \in \CONF_C                                                            \\
		\abs{s_{1}(x)} \geq \abs{\state(x)}}}\left(\abs{s_{1}(x)} - \abs{\state(x)}\right) \cdot \ind_{\PrefC{c_0c_1}{C}}\right)                                       \\
		{}={}    & \expv{\pip}{\scheduler}{\initstate}\left(\sum_{\substack{c_{1}=(\_,\_,s_{1}) \in \CONF_C                                                            \\
		\abs{s_{1}(x)} \geq \abs{\state(x)}}}\left(\abs{\abs{s_{1}(x)} - \abs{\state(x)}}\right) \cdot \ind_{\PrefC{c_0c_1}{C}}\right)                                 \\
		{}\leq{} & \expv{\pip}{\scheduler}{\initstate}\left(\sum_{\substack{c_{1}=(\_,\_,s_{1}) \in \CONF_C                                                            \\
		\abs{s_{1}(x)} \geq \abs{\state(x)}}}\left(\abs{s_{1}(x) - \state(x)}\right) \cdot \ind_{\PrefC{c_0c_1}{C}}\right)                                             \\
		{}={}    & \sum_{\substack{c_{1}=(\_,\_,s_{1}) \in \CONF_C                                                                                                     \\
		\abs{s_{1}(x)} \geq \abs{\state(x)}}}\left(\abs{s_{1}(x) - \state(x)}\right) \cdot \expv{\pip}{\scheduler}{\initstate}\left(\ind_{\PrefC{c_0c_1}{C}}\right)    \\
		{}={}    & \sum_{\substack{c_{1}=(\_,\_,s_{1}) \in \CONF_C                                                                                                     \\
		\abs{s_{1}(x)} \geq \abs{\state(x)}}}\left(\abs{s_{1}(x) - \state(x)}\right) \cdot \pipmeasure{\pip}{\scheduler}{\initstate}\left(\PrefC{c_0c_1}{C}\right)
	\end{align*}

	Now we would like to ``split'' the set $\PrefC{c_0c_1}{C}$.
	We are only interested in runs where $c_0$ is the configuration directly before entering $\CONF_C$.
	However, if we know that a run $\run$ satisfies $\run[I_1(\run)-1] = c_0$ it could still be the case that $c_0$ occurs multiple times in $\run$ and, e.g., the third or fourth occurrence leads into $\CONF_C$.
	Thus, we now explain why this cannot happen in any admissible run before we continue with the proof.

	So let $\run$ be an admissible run which enters $C$ and let $\run[I_1(\run) - 1] = c_0 = (\loc,t,\state)$ such that $\run$ contains multiple occurrences of $c_0$.
	Moreover, let $\run[I_1(\run)] = c_1 = (\loc',t_1,\_)$.
	By definition of the scheduler, we have $\scheduler(c_0)=(g_1,\_)$, where $t_1 \in g_1$.
	Since from $c_0$ we can visit $C$ but do not need to, there are locations $\loc'\neq \loc''$ such that $(g_1,\loc',x) \in C$ for some variable $x$ but $(g_1,\loc'',y) \not \in C$ for every $y \in \PV$.
	Since $(g_1,\loc',x) \in C$ and $C$ is an SCC, there is some general transition $g_2$ reachable from $g_1$ (in possibly several steps) such that after $g_2$ the general transition $g_1$ can be executed again.
	Thus, $\loc$ is a target location of $g_2$.
	Hence there is an edge from $(g_2,\loc,x)$ to $(g_1,\loc',x)$ in the general result variable graph.
	Moreover, there is also an edge from $(g_2,\loc,x)$ to $(g_1,\loc'',x)$, since $g_2$ can be executed before $g_1$.
	So there is a path from $(g_1,\loc',x)$ via $(g_2,\loc,x)$ to $(g_1,\loc'',x)$ in the graph.

	Since $c_0$ occurs multiple times in the run before $C$ is entered, there is also a general transition $g_3$ with target location $\loc$ which is reachable from $g_1$ when ending with $g_1$ in $\loc''$.
	Thus, there is some path from $(g_1,\loc'',x)$ via $(g_3,\loc,x)$ to $(g_1,\loc',x)$.
	Combining both parts, there is a path from $(g_1,\loc',x)$ to $(g_1,\loc'',x)$ in the general result variable graph and vice versa, i.e., $(g_1,\loc'',x) \in C$, which is a contradiction.
	Therefore, $c_0$ can occur at most once in any admissible run.

	So, we can now continue our reasoning from above.
	Again, $g_1$ is the unique general transition which the scheduler chooses in $c_0$, i.e., $\scheduler(c_0)=(g_1,\_)$.
	\begin{align*}
		         & \sum_{\substack{c_{1}=(\_,\_,s_{1}) \in \CONF_C                                                                                                                                                                        \\
		\abs{s_{1}(x)} \geq \abs{\state(x)}}}\left(\abs{s_{1}(x) - \state(x)}\right) \cdot \pipmeasure{\pip}{\scheduler}{\initstate}\left(\PrefC{c_0c_1}{C}\right)                                                                        \\
		{}={}    & \sum_{
		\substack{c_{1}=(\_,\_,s_{1}) \in \CONF_C                                                                                                                                                                                         \\
		\abs{s_{1}(x)} \geq \abs{\state(x)}}
		}\left(\abs{s_{1}(x) - \state(x)}\right) \cdot \ptransition{\pip}{\scheduler}{\initstate}(\prefix{c_0 \to c_1}) \cdot \pipmeasure{\pip}{\scheduler}{\initstate} (A_{c_0})                                                         \\
		{}={}    & \left(\sum_{\substack{c_{1}=(\_,\_,s_{1}) \in \CONF_C                                                                                                                                                                  \\
		\abs{s_{1}(x)} \geq \abs{\state(x)}}}\left(\abs{s_{1}(x) - \state(x)}\right) \cdot \ptransition{\pip}{\scheduler}{\initstate}\left(\prefix{c_0 \to c_{1}}\right)\right) \cdot \pipmeasure{\pip}{\scheduler}{\initstate} (A_{c_0}) \\
		{}={}    & \left(\sum_{\substack{\loc\in \LOC                                                                                                                                                                                     \\c_{1}=(\loc,\_,s_{1}) \in \CONF_C \\
		\abs{s_{1}(x)} \geq \abs{\state(x)}}}\left(\abs{s_{1}(x) - \state(x)}\right) \cdot \ptransition{\pip}{\scheduler}{\initstate}\left(\prefix{c_0 \to c_{1}}\right)\right) \cdot \pipmeasure{\pip}{\scheduler}{\initstate}
		(A_{c_0})                                                                                                                                                                                                                         \\
		{}\leq{} & \sum\limits_{\loc \in \LOC}\; \left(\sum_{\substack{c_{1}=(\loc,t,s_{1}) \in \CONF_C                                                                                                                                   \\
		t \in g_1}}
		\ptransition{\pip}{\scheduler}{\initstate}\left(\prefix{c_0 \to c_{1}}\right) \cdot \left(\abs{s_{1}(x) - \state(x)}\right) \right) \quad \cdot \quad \pipmeasure{\pip}{\scheduler}{\initstate}
		(A_{c_0}) \tag{$\dagger$}                                                                                                                                                                                                         \\
		{}\leq{} & \left(\sum_{\alpha = (g_{1},\_,x) \in C} \eval{\eff{\alpha}}{\initstate}\right) \cdot \pipmeasure{\pip}{\scheduler}{\initstate}
		(A_{c_0}) \tag{by \cref{theorem:elsb_basic}} \label{eq:step_condeff}
		\\
		{}={}    & \expv{\pip}{\scheduler}{\initstate}\left(\left(\sum_{\alpha = (g_{I_1},\_,x) \in C}
			\eval{\eff{\alpha}}{\initstate}\right) \cdot \ind_{A_{c_0}}\right)\tag{by definition }
	\end{align*}

	Note that in $(\dagger)$ we have used that the general transition $g_1$ which is used to continue $c_0$ is unique.
	This proves the case $n = 0$.

	\paragraph{Case $n > 0$:}

	Since we have constructed $\CF_n$ using a generating system of countably many atoms which cover the sample space $\RUNS$, by \cref{lem:upper_bound_cond_exp} it is sufficient to only consider the generators of $\CF_n$, similar to the case $n = 0$.
	Thus, let $0< k \leq n$, $c_0 \in \CONF_{\neg C}$, and $c_1 \, \cdots \, c_n=f \in \CONF_{C}^k\,\CONF_{\neg C}^{n-k}$.
	We have to prove that
	\begin{align*}
		         & \expv{\pip}{\scheduler}{\initstate}\left(\left(S(x)_{n+1} - S(x)_n\right) \cdot \ind_{\PrefC{c_0f}{C}}\right)                                                                                                                          \\
		{}\leq{} & \expv{\pip}{\scheduler}{\initstate}\left(\expv{\pip}{\scheduler}{\initstate}\left(\sum_{\alpha = (g_{I_{n+1}},\_,x) \in C} \eval{\eff{\alpha}}{\initstate} \middle\vert\ \CF_n\right) \quad \cdot \quad \ind_{\PrefC{c_0f}{C}}\right).
	\end{align*}
	By the definition of conditional expected values (\cref{defQQQconditional_expectation}), $\PrefC{c_0f}{C} \in \CF_n$ implies
	\begin{align*}
		      & \expv{\pip}{\scheduler}{\initstate}\left(\expv{\pip}{\scheduler}{\initstate}\left(\sum_{\alpha = (g_{I_{n+1}},\_,x) \in C} \eval{\eff{\alpha}}{\initstate} \middle\vert\ \CF_n\right) \quad \cdot \quad \ind_{\PrefC{c_0f}{C}}\right) \\
		{}={} & \expv{\pip}{\scheduler}{\initstate}\left(\left(\sum_{\alpha = (g_{I_{n+1}},\_,x) \in C} \eval{\eff{\alpha}}{\initstate}\right) \cdot \ind_{\PrefC{c_0f}{C}}\right).
	\end{align*}
	Hence, it suffices to prove that
	\begin{align*}
		         & \expv{\pip}{\scheduler}{\initstate}\left(\left(S(x)_{n+1} - S(x)_n\right) \cdot \ind_{\PrefC{c_0f}{C}}\right)                                                       \\
		{}\leq{} & \expv{\pip}{\scheduler}{\initstate}\left(\left(\sum_{\alpha = (g_{I_{n+1}},\_,x) \in C} \eval{\eff{\alpha}}{\initstate}\right) \cdot \ind_{\PrefC{c_0f}{C}}\right).
	\end{align*}
	First of all, we left out the generator $A_{\neg C}$ since here the claim holds trivially, as both sides are zero.
	Secondly, whenever $k < n$, then $S(x)_{n+1} - S(x)_n$ is zero on all admissible runs in $\PrefC{c_0f}{C}$, i.e., the left-hand side of the inequation is zero and the claim holds trivially.
	So, w.l.o.g., we can assume that $n = k$.
	Let $c_i=(\_,\_,\state'_i)$ for $0 \leq i \leq n$.
	Due to the chosen scheduler $\scheduler$, the transition that is executed to continue in an admissible run from $c_n$ is an element of a \emph{unique} general transition $g_{n+1}$.
	Furthermore, if a run $\run$ leaves $\CONF_C$ after $c_n$ then $S(x)_{n+1} (\run) - S(x)_n (\run)$ is zero as well.

	\begin{align*}
		         & \expv{\pip}{\scheduler}{\initstate}\left(\left(S(x)_{n+1} - S(x)_n\right) \cdot \ind_{\PrefC{c_0f}{C}}\right)                                                                                                                                                                \\
		{}={}    & \expv{\pip}{\scheduler}{\initstate}\left(\left(S(x)_{n+1} - S(x)_n\right) \cdot \ind_{\biguplus_{c_{n+1} \in \CONF} \PrefC{c_0fc_{n+1}}{C}}\right)                                                                                                                           \\
		{}={}    & \expv{\pip}{\scheduler}{\initstate}\left(\sum_{c_{n+1} \in \CONF}\left(S(x)_{n+1} - S(x)_n\right) \cdot \ind_{\PrefC{c_0fc_{n+1}}{C}}\right)                                                                                                                                 \\
		{}={}    & \expv{\pip}{\scheduler}{\initstate}\left(\sum_{c_{n+1}=(\_,\_,\state'_{n+1}) \in \CONF_C}\left(S(x)_{n+1} - S(x)_n\right) \cdot \ind_{\PrefC{c_0fc_{n+1}}{C}}\right)                                                                                                         \\
		{}={}    & \expv{\pip}{\scheduler}{\initstate}\left(\sum_{\substack{c_{n+1}=(\_,\_,\state'_{n+1}) \in \CONF_C                                                                                                                                                                           \\
		\abs{\state'_{n+1}(x)} \geq \max_{0 \leq i \leq n}\abs{\state'_i(x)}}}\left(S(x)_{n+1} - S(x)_n\right) \cdot \ind_{\PrefC{c_0fc_{n+1}}{C}}\right)                                                                                                                                       \\
		{}={}    & \expv{\pip}{\scheduler}{\initstate}\left(\sum_{\substack{c_{n+1}=(\_,\_,\state'_{n+1}) \in \CONF_C                                                                                                                                                                           \\
		\abs{\state'_{n+1}(x)} \geq \max_{0 \leq i \leq n}\abs{\state'_i(x)}}}\left(\abs{\state'_{n+1}(x)} - \max_{0 \leq i \leq n}\abs{\state'_i(x)}\right) \cdot \ind_{\PrefC{c_0fc_{n+1}}{C}}\right)                                                                                         \\
		{}\leq{} & \expv{\pip}{\scheduler}{\initstate}\left(\sum_{\substack{c_{n+1}=(\_,\_,\state'_{n+1}) \in \CONF_C                                                                                                                                                                           \\
		\abs{\state'_{n+1}(x)} \geq \max_{0 \leq i \leq n}\abs{\state'_i(x)}}}\left(\abs{\state'_{n+1}(x)} - \abs{\state'_n(x)}\right) \cdot \ind_{\PrefC{c_0fc_{n+1}}{C}}\right)                                                                                                               \\
		{}={}    & \expv{\pip}{\scheduler}{\initstate}\left(\sum_{\substack{c_{n+1}=(\_,\_,\state'_{n+1}) \in \CONF_C                                                                                                                                                                           \\
		\abs{\state'_{n+1}(x)} \geq \max_{0 \leq i \leq n}\abs{\state'_i(x)}}}\left(\abs{\abs{\state'_{n+1}(x)} - \abs{\state'_n(x)}}\right) \cdot \ind_{\PrefC{c_0fc_{n+1}}{C}}\right)                                                                                                         \\
		{}\leq{} & \expv{\pip}{\scheduler}{\initstate}\left(\sum_{\substack{c_{n+1}=(\_,\_,\state'_{n+1}) \in \CONF_C                                                                                                                                                                           \\
		\abs{\state'_{n+1}(x)} \geq \max_{0 \leq i \leq n}\abs{\state'_i(x)}}}\left(\abs{\state'_{n+1}(x) - \state'_n(x)}\right) \cdot \ind_{\PrefC{c_0fc_{n+1}}{C}}\right)                                                                                                                     \\
		{}={}    & \sum_{\substack{c_{n+1}=(\_,\_,\state'_{n+1}) \in \CONF_C                                                                                                                                                                                                                    \\
		\abs{\state'_{n+1}(x)} \geq \max_{0 \leq i \leq n}\abs{\state'_i(x)}}}\left(\abs{\state'_{n+1}(x) - \state'_n(x)}\right) \cdot \expv{\pip}{\scheduler}{\initstate}\left(\ind_{\PrefC{c_0fc_{n+1}}{C}}\right)                                                                            \\
		{}={}    & \sum_{\substack{c_{n+1}=(\_,\_,\state'_{n+1}) \in \CONF_C                                                                                                                                                                                                                    \\
		\abs{\state'_{n+1}(x)} \geq \max_{0 \leq i \leq n}\abs{\state'_i(x)}}}\left(\abs{\state'_{n+1}(x) - \state'_n(x)}\right) \cdot \pipmeasure{\pip}{\scheduler}{\initstate}\left(\PrefC{c_0fc_{n+1}}{C}\right)                                                                             \\
		{}={}    & \left(\sum_{\substack{c_{n+1}=(\_,\_,\state'_{n+1}) \in \CONF_C                                                                                                                                                                                                              \\
		\abs{\state'_{n+1}(x)} \geq \max_{0 \leq i \leq n}\abs{\state'_i(x)}}}\left(\abs{\state'_{n+1}(x) - \state'_n(x)}\right) \cdot \ptransition{\pip}{\scheduler}{\initstate} (\prefix{c_n \to c_{n+1}})\right) \cdot \pipmeasure{\pip}{\scheduler}{\initstate}\left(\PrefC{c_0f}{C}\right) \\
		{}\leq{} & \left(\sum_{\alpha = (g_{n+1},\_,x) \in C} \eval{\chbounde(\alpha)}{\state'_n}\right) \cdot \pipmeasure{\pip}{\scheduler}{\initstate}\left(\PrefC{c_0f}{C}\right) \tag{by \cref{theorem:elsb_basic}}                                                                         \\
		{}\leq{} & \left(\sum_{\alpha = (g_{n+1},\_,x) \in C} \eval{\eff{\alpha}}{\initstate}\right) \cdot \pipmeasure{\pip}{\scheduler}{\initstate}\left(\PrefC{c_0f}{C}\right)                                                                                                                \\
		{}={}    & \expv{\pip}{\scheduler}{\initstate}\left(\left(\sum_{\alpha = (g_{n+1},\_,x) \in C} \eval{\eff{\alpha}}{\initstate}\right) \cdot \ind_{\PrefC{c_0f}{C}}\right)                                                                                                               \\
		{}={}    & \expv{\pip}{\scheduler}{\initstate}\left(\left(\sum_{\alpha = (g_{I_{n+1}},\_,x) \in C} \eval{\eff{\alpha}}{\initstate}\right) \cdot \ind_{\PrefC{c_0f}{C}}\right),                                                                                                          \\
	\end{align*}
	which proves the case $n>0$.
\end{proof}
Moreover, we can over-approximate the expected value
\[
	\expv{\pip}{\scheduler}{\initstate}\left(\sum_{\alpha = (g_{I_{n+1}},\_,x) \in C} \eval{\eff{\alpha}}{\initstate}\right). \]
For this expected value, we can ignore the set of non-admissible runs w.r.t.\ the chosen scheduler $\scheduler$ and the chosen initial state $\initstate$, since this set has probability zero.
Furthermore, we can ignore those runs, in which the random variable takes the value $0$, i.e., especially those ones, which visit $\CONF_C$ less then $n+1$ times.
Thus, to compute the expected value we only need to consider the set $\biguplus_{c_0 \in \CONF_{\neg C},\ f \in \CONF_C^{n+1}} \PrefC{c_0f}{C}$, since $\sum_{\alpha = (g_{I_{n+1}},\_,x) \in C} \eval{\eff{\alpha}}{\initstate}$ takes the value zero on any admissible run in the complement.
Then we have
\begin{align}
	\nonumber      & \expv{\pip}{\scheduler}{\initstate}\left(\sum_{\alpha = (g_{I_{n+1}},\_,x) \in C} \eval{\eff{\alpha}}{\initstate}\right)                                                        \\
	\nonumber	{}={} & \expv{\pip}{\scheduler}{\initstate}\left(\ind_{\biguplus_{\substack{c_0 \in \CONF_{\neg C},                                                                                     \\
	f = c_1 \cdots c_{n+1} \in \CONF_C^{n+1}} }\PrefC{c_0f}{C}} \cdot \sum_{\alpha = (g_{I_{n+1}},\_,x) \in C} \eval{\eff{\alpha}}{\initstate}\right)                                                \\
	\nonumber	{}={} & \sum_{\substack{c_0 \in \CONF_{\neg C},                                                                                                                                         \\
	f = c_1 \cdots c_{n+1} \in \CONF_C^{n+1}}}\expv{\pip}{\scheduler}{\initstate}\left(\ind_{ \PrefC{c_0f}{C}} \cdot \sum_{\alpha = (g_{I_{n+1}},\_,x) \in C} \eval{\eff{\alpha}}{\initstate}\right) \\
	\nonumber	{}={} & \sum_{\substack{c_0 \in \CONF_{\neg C},                                                                                                                                         \\
	f = c_1 \cdots c_{n+1} \in \CONF_C^{n+1}}}\expv{\pip}{\scheduler}{\initstate}\left(\ind_{ \PrefC{c_0f}{C}} \cdot \sum_{\alpha = (g_{n+1},\_,x) \in C} \eval{\eff{\alpha}}{\initstate}\right)     \\
	\nonumber	{}={} & \sum_{\substack{c_0 \in \CONF_{\neg C},                                                                                                                                         \\
	f = c_1 \cdots c_{n+1} \in \CONF_C^{n+1}}}\expv{\pip}{\scheduler}{\initstate}\left(\ind_{ \PrefC{c_0f}{C}}\right) \cdot \sum_{\alpha = (g_{n+1},\_,x) \in C} \eval{\eff{\alpha}}{\initstate}     \\
	{}={}          & \sum_{\substack{c_0 \in \CONF_{\neg C},                                                                                                                                         \\
			f = c_1 \cdots c_{n+1} \in \CONF_C^{n+1}}}\pipmeasure{\pip}{\scheduler}{\initstate}\left( \PrefC{c_0f}{C}\right) \cdot \sum_{\alpha = (g_{n+1},\_,x) \in C} \eval{\eff{\alpha}}{\initstate} \label{reformulation2}
\end{align}

Finally, we have all information at hand to prove \cref{theorem:expectednontrivialsizeboundsmeth}.

\expectednontrivialsizeboundsmethod*
\begin{proof}
	Let $n \geq 0$.
	Then using our results obtained so far in this subsection we have
	\begin{align*}
		         & \expv{\pip}{\scheduler}{\initstate}\left(S(x)_{n+1}\right)                                                                                                                                                                                     \\
		{}={}    & \expv{\pip}{\scheduler}{\initstate}\left(S(x)_{n}
		+ \expv{\pip}{\scheduler}{\initstate}\left(S(x)_{n+1}-S(x)_n \middle\vert\ \CF_n\right)\right) \tag{by \eqref{reformulation1}}                                                                                                                            \\
		{}\leq{} & \expv{\pip}{\scheduler}{\initstate}\left(S(x)_{n}
		+\expv{\pip}{\scheduler}{\initstate}\left(\sum_{\alpha = (g_{I_{n+1}},\_,x) \in C} \eval{\eff{\alpha}}{\initstate} \middle\vert\ \CF_n\right)\right) \tag{by \cref{lem:cond_exp_diff}}                                                                    \\
		{}={}    & \expv{\pip}{\scheduler}{\initstate}\left(S(x)_{n}\right) \; + \; \expv{\pip}{\scheduler}{\initstate}\left(\sum_{\alpha = (g_{I_{n+1}},\_,x) \in C} \eval{\eff{\alpha}}{\initstate}\right) \tag{by \cref{lemma:expected_value_does_not_change}} \\
		{}={}    & \expv{\pip}{\scheduler}{\initstate}\left(S(x)_{n}\right)                                                                                                                                                                                       \\
		         & + \sum_{\substack{c_0 \in \CONF_{\neg C},                                                                                                                                                                                                      \\
		f = c_1\cdots c_{n+1} \in \CONF_C^{n+1}}}\pipmeasure{\pip}{\scheduler}{\initstate}\left( \PrefC{c_0f}{C}\right) \cdot \sum_{\alpha = (g_{n+1},\_,x) \in C} \eval{\eff{\alpha}}{\initstate} \tag{by \eqref{reformulation2}}                                \\
	\end{align*}

	Iterating this argument $(n+1)$-times then yields:

	\begin{align}
		\nonumber & \expv{\pip}{\scheduler}{\initstate}\left(S(x)_{n+1}\right)         \\
		{}\leq{}  & \expv{\pip}{\scheduler}{\initstate}\left(S(x)_{0}\right) \nonumber \\&+ \sum_{i = 1}^{n+1}\sum_{\substack{c_0 \in \CONF_{\neg C},\\
				f = c_1\cdots c_{i} \in \CONF_C^{i}}}\pipmeasure{\pip}{\scheduler}{\initstate}\left( \PrefC{c_0f}{C}\right) \cdot \sum_{\alpha = (g_{i},\_,x) \in C} \eval{\eff{\alpha}}{\initstate}. \label{observe1}
	\end{align}

	Remember that with $g_i$ we denote the general transition which leads to configuration $c_i$, i.e., $c_i = (t_i,\_,\_)$ and $t_i \in g_i$.
	This notation will be used again in the next calculations.

	Let us elaborate on
	\[
		\sum_{i = 1}^{n+1} \; \sum_{c_0 \in \CONF_{\neg C},\ f = c_1\cdots c_{i} \in \CONF_C^{i}}\pipmeasure{\pip}{\scheduler}{\initstate}\left( \PrefC{c_0f}{C}\right) \cdot \sum_{\alpha = (g_{i},\_,x) \in C} \eval{\eff{\alpha}}{\initstate}. \]

	To increase readability of the following reasoning, we introduce a shorthand notation.

	\[
		\chsum^C (g) =
		\begin{cases}
			\sum_{\alpha = (g,\_,x)\in C} \eval{\eff{\alpha}}{\initstate}, & \text{if $g\in \GT_C$} \\
			0,                                                             & \text{otherwise}
		\end{cases}
		. \]
	Then we have
	\begin{align*}
		      & \sum_{i = 1}^{n+1}\; \sum_{\substack{c_0 \in \CONF_{\neg C},                                                                                                                                                            \\
				f = c_1\cdots c_{i} \in \CONF_C^{i}}}
		\pipmeasure{\pip}{\scheduler}{\initstate}\left( \PrefC{c_0f}{C}\right) \cdot \sum_{\alpha = (g_{i},\_,x) \in C} \eval{\eff{\alpha}}{\initstate}                                                                                 \\
		{}={} & \sum_{i = 1}^{n+1} \; \sum_{\substack{c_0 \in \CONF_{\neg C},                                                                                                                                                           \\
				f = c_1\cdots c_{i} \in \CONF_C^{i}}}
		\pipmeasure{\pip}{\scheduler}{\initstate}\left( \PrefC{c_0f}{C}\right) \cdot \chsum^C (g_i)                                                                                                                                     \\
		{}={} & \sum_{i = 1}^{n+1}\; \sum_{\substack{c_0 \in \CONF_{\neg C},                                                                                                                                                            \\
				f = c_1\cdots c_{i} \in \CONF_C^{i}}}
		\pipmeasure{\pip}{\scheduler}{\initstate}\left( \biguplus_{c_{i+1}\cdots c_{n+1} \in \CONF^{n+1-i}}\PrefC{c_0fc_{i+1}\cdots c_{n+1}}{C}\right) \cdot \chsum^C (g_i)                                                             \\
		{}={} & \sum_{i = 1}^{n+1}\; \sum_{\substack{c_0 \in \CONF_{\neg C},                                                                                                                                                            \\
				f = c_1\cdots c_{i} \in \CONF_C^{i}}}
		\;	\sum_{c_{i+1}\cdots c_{n+1} \in \CONF^{n+1-i}}\pipmeasure{\pip}{\scheduler}{\initstate}\left(\PrefC{c_0fc_{i+1}\cdots c_{n+1}}{C}\right) \cdot \chsum^C (g_i)                                                                 \\
		{}={} & \sum_{i = 1}^{n+1}\; \sum_{\substack{c_0 \in \CONF_{\neg C},                                                                                                                                                            \\
				f = c_1\cdots c_{i-1} \in \CONF_C^{i-1}}}
		\;	\sum_{c_i\cdots c_{n+1} \in \CONF^{n+1-(i-1)}}\pipmeasure{\pip}{\scheduler}{\initstate}\left(\PrefC{c_0fc_i\cdots c_{n+1}}{C}\right) \cdot \chsum^C (g_i) \tag{$\chsum^C (g_i) = 0$ if $g_i \not \in \GT_C$}                  \\
		{}={} & \sum_{i = 1}^{n+1}\; \sum_{c_0 \in \CONF_{\neg C},\ c_1 \in \CONF_C}
		\;	\sum_{c_2\cdots c_{n+1} \in \CONF^{n}}\pipmeasure{\pip}{\scheduler}{\initstate}\left(\PrefC{c_0c_1c_2\cdots c_{n+1}}{C}\right) \cdot \chsum^C (g_i) \tag{since $\CONF_C$ can only be entered once in an admissible run}       \\
		{}={} & \sum_{c_0 \in \CONF_{\neg C},\ c_1 \in \CONF_C}\; \sum_{c_2\cdots c_{n+1} \in \CONF^{n}}\pipmeasure{\pip}{\scheduler}{\initstate}\left(\PrefC{c_0c_1c_2\cdots c_{n+1}}{C}\right) \cdot \sum_{i = 1}^{n+1}\chsum^C (g_i)
	\end{align*}

	Since we fixed $c_1\in\CONF_C$, for $f \neq f' \in \CONF^n$ the intersection $\PrefC{c_0c_1f}{C} \cap \PrefC{c_0c_1f'}{C}$ only contains non-admissible runs.
	This is the reason for the slight abuse of notation of $\PrefC{}{C}$ in the reasoning above which increases readability.
	Here, if for no $k\geq 0$ we have $f \in \CONF_C^k\CONF_{\neg C}^{n-k}$, then the set $\PrefC{f}{C}$ just contains non-admissible runs, i.e., it is a null-set.
	Again, we assume that any configuration $c_i$ has the form $(\_,t_i,\_)$.
	Hence, we conclude
	\begin{align}
		\nonumber         & \sum_{c_0 \in \CONF_{\neg C},\ c_1 \in \CONF_C} \; \sum_{c_2\cdots c_{n+1} \in \CONF^{n}}\pipmeasure{\pip}{\scheduler}{\initstate}\left(\PrefC{c_0c_1c_2\cdots c_{n+1}}{C}\right)\cdot \sum_{i = 1}^{n+1}\chsum^C (g_i) \\
		\nonumber		{}={}    & \sum_{\substack{c_0\in\CONF_{\neg C}                                                                                                                                                                                    \\
		c_1\in\CONF_C                                                                                                                                                                                                                               \\
				c_2\cdots c_{n+1}\in\CONF^n}}
		\left( \pipmeasure{\pip}{\scheduler}{\initstate} (\PrefC{c_0\cdots c_{n+1}}{C}) \cdot \sum_{g\in\GT_C} \sum_{1\leq i\leq n+1, t_i \in g} \chsum^C(g) \right)                                                                                \\
		\nonumber	{}\leq{} & \sum_{\substack{c_0\in\CONF_{\neg C}                                                                                                                                                                                    \\
		c_1\in\CONF_C                                                                                                                                                                                                                               \\
				c_2\cdots c_{n+1}\in\CONF^n}}
		\left( \pipmeasure{\pip}{\scheduler}{\initstate} (\PrefC{c_0\cdots c_{n+1}}{C}) \cdot \sum_{g\in\GT_C} \min \{\timervar (g)(\run) \mid \run \in \PrefC{c_0\cdots c_{n+1}}{C}\} \cdot \chsum^C(g) \right)                                    \\
		\nonumber		{}={}    & \sum_{\substack{c_0\in\CONF_{\neg C}                                                                                                                                                                                    \\
		c_1\in\CONF_C                                                                                                                                                                                                                               \\
				c_2\cdots c_{n+1}\in\CONF^n}}
		\left(\sum_{g\in\GT_C} \left( \pipmeasure{\pip}{\scheduler}{\initstate} (\PrefC{c_0\cdots c_{n+1}}{C}) \cdot \min \{\timervar (g)(\run) \mid \run \in \PrefC{c_0\cdots c_{n+1}}{C}\}\right) \cdot \chsum^C(g) \right)                       \\
		\nonumber	{}\leq{} & \sum_{\substack{c_0\in\CONF_{\neg C}                                                                                                                                                                                    \\
		c_1\in\CONF_C                                                                                                                                                                                                                               \\
				c_2\cdots c_{n+1}\in\CONF^n}}
		\left( \sum_{g\in\GT_C} \expv{\pip}{\scheduler}{\initstate} \left(\timervar (g)\cdot \ind_{\PrefC{c_0\cdots c_{n+1}}{C}}\right) \cdot \chsum^C(g) \right)                                                                                   \\
		\nonumber	{}={}    & \sum_{g\in\GT_C} \sum_{\substack{c_0\in\CONF_{\neg C}                                                                                                                                                                   \\
		c_1\in\CONF_C                                                                                                                                                                                                                               \\
				c_2\cdots c_{n+1}\in\CONF^n}}
		\left(\expv{\pip}{\scheduler}{\initstate} \left(\timervar (g)\cdot \ind_{\PrefC{c_0\cdots c_{n+1}}{C}}\right) \cdot \chsum^C(g) \right)                                                                                                     \\
		\nonumber		{}={}    & \sum_{g\in\GT_C}
		\left(\expv{\pip}{\scheduler}{\initstate} \left(\timervar (g)\cdot \ind_{\biguplus_{\substack{c_0\in\CONF_{\neg C}                                                                                                                          \\
		c_1\in\CONF_C                                                                                                                                                                                                                               \\
		c_2\cdots c_{n+1}\in\CONF^n}}\PrefC{c_0\cdots c_{n+1}}{C}}\right) \cdot \chsum^C(g) \right)                                                                                                                                                 \\
		\nonumber	{}\leq{} & \sum_{g\in \GT_C} (\chsum^C (g) \cdot \expv{\pip}{\scheduler}{\initstate} (\timervar (g)))                                                                                                                              \\
		{}={}             & \sum_{g\in \GT_C} \left( \expv{\pip}{\scheduler}{\initstate} (\timervar (g)) \cdot \sum_{\alpha = (g,\_,x)\in C} \eval{\eff{\alpha}}{\initstate} \right) \label{observe2}
	\end{align}

	Let us abbreviate $\sum_{(\beta,\alpha)\in \mathcal{GRVE}, \; \beta \notin C, \; \alpha \in C, \; \beta = (\_,\_,x)} \; \sbounde (\beta)$ by $\incsize^C_{\mathbb{E}}(x)$.
	The value $ \expv{\pip}{\scheduler}{\initstate}(S (x)_0)$ is bounded by $\eval{\incsize^C_{\mathbb{E}}(x)}{\initstate}$, since $\sbounde$ is a valid expected size bound.
	By combining our previous observations we get:
	\begin{align*}
		         & \expv{\pip}{\scheduler}{\initstate}(S(x)_{n+1})                                                                                                                                                 \\
		{}\leq{} & \expv{\pip}{\scheduler}{\initstate} (S(x)_{0}) + \sum_{i = 1}^{n+1}
		\sum_{\substack{c_0 \in \CONF_{\neg C}                                                                                                                                                                     \\
				f = c_1\cdots c_{i} \in \CONF_C^{i}}}
		\pipmeasure{\pip}{\scheduler}{\initstate}\left( \PrefC{c_0f}{C}\right) \cdot \sum_{\alpha = (g_{i},\_,x) \in C} \eval{\eff{\alpha}}{\initstate} \tag{by \eqref{observe1}}                                  \\
		{}\leq{} & \expv{\pip}{\scheduler}{\initstate} (S(x)_{0}) + \sum_{g\in\GT_C} \left ( \expv{\pip}{\scheduler}{\initstate}(\timervar (g)) \cdot \sum_{\alpha = (g,\_,x)\in C}\eval{\eff{\alpha}}{\initstate}
		\right) \tag{by \eqref{observe2}}                                                                                                                                                                          \\
		{}\leq{} & \eval{\incsize^C_{\mathbb{E}}(x)}{\initstate}
		+ \eval{\sum_{g \in \GT_C} \left( \tbounde (g)\cdot \sum_{\alpha = (g,\_,x)\in C}\eff{\alpha} \right)}	{\initstate}
		\tag{as $(\tbounde, \sbounde)$ is an expected bound pair}                                                                                                                                                  \\
		{}={}    & \eval{\incsize^C_{\mathbb{E}}(x) + \sum_{g \in \GT_C} \left( \tbounde (g)\cdot \sum_{\alpha = (g,\_,x)\in C}\eff{\alpha} \right)}{\initstate}
	\end{align*}
	So we have
	\[
		\expv{\pip}{\scheduler}{\initstate} (S(x)_{n+1}) \leq \eval{\incsize^C_{\mathbb{E}}(x) + \sum_{g \in \GT_C}\tbounde (g)\cdot \sum_{\alpha = (g,\_,x)\in C} \eff{\alpha}}{\initstate}. \]
	Hence, we also have
	\[
		\lim_{n \to \infty} \expv{\pip}{\scheduler}{\initstate} (S(x)_{n+1}) \leq \eval{\incsize^C_{\mathbb{E}}(x) + \sum_{g \in \GT_C}\tbounde (g) \cdot \sum_{\alpha = (g,\_,x)\in C} \eff{\alpha}}{\initstate}. \]
	By \cref{coro:size_limit} this means that \[\expv{\pip}{\scheduler}{\initstate}\left(\sizervar_C(x)\right) \leq \eval{\incsize^C_{\mathbb{E}}(x) + \sum_{g \in \GT_C}\tbounde (g) \cdot \sum_{\alpha = (g,\_,x)\in C} \eff{\alpha}}{\initstate}.\]
	So for any $\alpha' = (\_,\_,x) \in C$, \eqref{observe3} implies
	\begin{align*}
		         & \expv{\pip}{\scheduler}{\initstate}\left(\sizervar(\alpha')\right)                                                              \\
		{}\leq{} & \eval{\incsize^C_{\mathbb{E}}(x) + \sum_{g \in \GT_C}\tbounde (g) \cdot \sum_{\alpha = (g,\_,x)\in C} \eff{\alpha}}{\initstate} \\
		{}={}    & \eval{\sbounde'(\alpha')}{\initstate}
	\end{align*}
	Hence, $\sbounde'$ is indeed an expected size bound.
\end{proof}

\end{appendix}
}
\end{document}